\algnewcommand\LeftComment[1]{\State\(\triangleright\)~\textit{#1}}
\algrenewcommand\algorithmiccomment[1]{\hfill\(\triangleright\)~\textit{#1}}
\setlist{nolistsep}
\titlespacing{\paragraph}{%
0pt%
}{%
0.35\baselineskip plus 0.25\baselineskip minus 0.1\baselineskip%
}{%
1em%
}
\newcommand{\OO}{\mathcal{O}}
\newcommand{\idx}{\operatorname{index}}
\DeclarePairedDelimiter{\abs}{\lvert}{\rvert}
\DeclarePairedDelimiter{\set}{\lbrace}{\rbrace}
\newcommand{\suchthat}{\mathrel{}\mathclose{}\ifnum\currentgrouptype=16\middle\fi\vert\mathopen{}\mathrel{}}
\newcommand{\dual}[1]{{#1}^\ast}
\newcommand{\Emb}{\operatorname{Emb}}
\newcommand{\EmbGood}{\Emb^\star}
\newcommand{\flipdist}{\operatorname{dist}}
\newcommand{\critcost}{\operatorname{critical-cost}}
\newcommand{\solidcost}{\operatorname{solid-cost}}
\newcommand{\offstruts}{\operatorname{off-critical-struts}}
\newcommand{\critstruts}{\operatorname{critical-struts}}
\newcommand{\solidstruts}{\operatorname{solid-struts}}
\newcommand{\struts}{\operatorname{struts}}
\newcommand{\meet}{\operatorname{meet}}
\declaretheorem[style=plain,name={Theorem}]{theorem}
\declaretheorem[style=plain,name={Lemma},sibling=theorem]{lemma}
\declaretheorem[style=plain,name={Corollary},sibling=theorem]{corollary}
\declaretheorem[style=plain,name={Observation},sibling=theorem]{observation}
\declaretheorem[style=definition,name={Definition},sibling=theorem]{definition}
\newcommand{\Patrascu}{P\v{a}tra\c{s}cu}
\title{Fully-dynamic Planarity Testing in Polylogarithmic Time}
\author[1]{Jacob Holm\thanks{Partially supported by the VILLUM Foundation grant 16582, Basic Algorithms Research Copenhagen (BARC).}}
\author[2]{Eva Rotenberg\thanks{Partially supported by Independent Research Fund Denmark grant ``AlgoGraph'' 2018-2021 (8021-00249B).}}
\affil[1]{University of Copenhagen \hspace{1em}{\small \href{mailto:jaho@di.ku.dk}{jaho@di.ku.dk}}}
\affil[2]{Technical University of Denmark \hspace{1em}{\small \href{mailto:eva@rotenberg.dk}{erot@dtu.dk}}}
\date{}
\begin{document}
\thispagestyle{empty}
\setcounter{page}{0}
\maketitle
\begin{abstract}
Given a dynamic graph subject to insertions and deletions of edges, a natural question is whether the graph presently admits a planar embedding. 
We give a deterministic fully-dynamic algorithm for general graphs, running in amortized $\OO(\log^3 n)$ time per edge insertion or deletion, that maintains a bit indicating whether or not the graph is presently planar. This is an exponential improvement over the previous best algorithm [Eppstein, Galil, Italiano, Spencer, 1996] which spends amortized $\OO(\sqrt{n})$ time per update.

\end{abstract}

\thispagestyle{empty}
\newpage
\setcounter{page}{1}

\section{Introduction}
A linear time algorithm for determining whether a graph is planar was found by Hopcroft and Tarjan~\cite{DBLP:journals/jacm/HopcroftT74}.
For the partially dynamic case, where %
one only allows insertion of edges, the problem was solved by La
Poutr\'{e}~\cite{DBLP:conf/stoc/Poutre94}, who improved on work by Di Battista,
Tamassia, and Westbrook~\cite{DBLP:journals/algorithmica/BattistaT96,DBLP:journals/jal/Tamassia96,DBLP:conf/icalp/Westbrook92}, to obtain a amortized running time of $\OO(\alpha(q,n))$
where $q$ is the number of operations, and where $\alpha$ is the inverse-Ackermann function. 
Galil, Italiano, and Sarnak~\cite{DBLP:journals/jacm/GalilIS99} made a data 
structure for fully dynamic planarity testing with $\OO(n^{\frac{2}{3}})$ amortized time per update, which was improved to $\OO(\sqrt{n})$ by
Eppstein et al.~\cite{Eppstein:1996}. 

While there has for a long time been no improvements upon~\cite{Eppstein:1996}, there have been different approaches in works that address the task of maintaining an embedded graph.
In \cite{DBLP:conf/esa/ItalianoPR93}, Italiano, La Poutr\'{e}, and Rauch 
present a data structure 
for maintaining a planar embedded graph while allowing insertions that do not violate the embedding, but allowing arbitrary deletions; its update time is $\OO(\log^2 n)$.  
Eppstein~\cite{DBLP:conf/soda/Eppstein03} presents a data
structure for maintaining %
a dynamic embedded graph, which handles updates in $\OO(\log
n)$ time if the embedding remains plane---this data structure maintains the genus of the embedding, but does not answer whether another embedding 
of the same graph 
with a lower genus exists.

\Patrascu{} and Demaine~\cite{DBLP:conf/stoc/PatrascuD04} give a lower bound of $\Omega(\log n)$ for fully-dynamic planarity testing. For other natural questions about fully-dynamic graphs, such as fully dynamic shortest paths, even on planar graphs, there are conditional lower bounds based on popular conjectures that indicate that subpolynomial update bounds are unlikely~\cite{AW14,AD16}. 

In this paper, we show that planarity testing does indeed admit a subpolynomial update time algorithm, thus exponentially improving the state of the art for fully-dynamic planarity testing. 
We give a deterministic fully-dynamic algorithm for general graphs, running in amortized $\OO(\log^3 n)$ time per edge insertion or deletion, that explicitly maintains a single bit indicating whether the graph is presently planar, and that given any vertex can answer whether the connected component containing that vertex is planar in worst case $\OO(\log n/\log\log n)$ time.

In fact, our algorithm not only maintains \emph{whether} the graph is
presently planar, but also implicitly maintains \emph{how} the graph
may be embedded in the plane, in the affirmative case.
Specifically, we give a deterministic algorithm that maintains a
planar embedding of a fully dynamic planar graph in amortized
$\OO(\log^3 n)$ time per edge insertion, and worst case $\OO(\log^2
n)$ time per edge deletion.  In this algorithm, attempts to insert
edges that would violate planarity are detected and rejected, but may
still change the embedding.
The algorithm for fully-dynamic general graphs then follows by a
simple reduction.

Our main result consists of two parts which may be of independent interest. Our analysis goes via a detailed understanding of \emph{flips}, i.e.\@ local changes to the embedding, to be defined in Section~\ref{sec:results}. 
Firstly, we consider any algorithm for maintaining an embedding that lazily makes no changes to the embedding upon edge deletion, and that for each (attempted) insertion greedily only does the minimal (or close to minimal) number of flips necessary to accommodate the edge. We prove that any such algorithm will do amortized $\OO(\log n)$ flips. Secondly, we show how to find such a sufficiently small set of flips in worst case $\OO(\log^2 n)$ time per flip.%

The idea of focusing on flips is not new: 
In \cite{DBLP:journals/mst/HolmR17}, we use insights from Eppstein~\cite{DBLP:conf/soda/Eppstein03} to improve upon the data structure by Italiano, La Poutr\'{e}, and Rauch~\cite{DBLP:conf/esa/ItalianoPR93}, so that it also facilitates \emph{flips}, i.e.\@ local changes to the embedding, and, so that it may handle edge-insertions that only require one such flip.
In \cite{HR20}, we analyze these \emph{flips} further and show that there exists a class of embeddings where only $\Theta(\log n)$ flips are needed to accommodate any one edge insertion that preserves planarity.

The core idea of our analysis of the lazy greedy algorithm is to
define a potential function based on how far the current embedding is
from being in this class. This is heavily inspired by the analysis of
Brodal and Fagerberg's algorithm for fully-dynamic bounded outdegree
orientation~\cite{DBLP:conf/wads/BrodalF99}.

\subsection{Maintaining an embedding if it exists} \label{sec:results}
Before stating our results in detail, we will define some crucial but natural  terminology for describing changeable embeddings of dynamic graphs. 

\paragraph{Planar graphs} are graphs that can be drawn in the plane without edge crossings. A planar graph may admit many planar embeddings, and we use the term \emph{plane graph} to denote a planar graph equipped with a given planar embedding. 
Given a plane graph, its drawing in the plane defines \emph{faces},
and the faces together with the edges form its \emph{dual
  graph}. Related, one may consider the bipartite \emph{vertex-face
  (multi-)graph} whose nodes are the vertices and faces, and which has
an edge for each time a vertex is incident to a face. Through the
paper, we will use the term \emph{corner} to denote an edge in the
vertex-face graph, reflecting that it corresponds to a corner of the
face in the planar drawing of the graph.

If a planar graph has no vertex cut-sets of size $\le 2$, its embedding is unique up to reflection. On the other hand, if a plane graph has an articulation point (cut vertex) or a separation pair ($2$-vertex cut), then it may be possible to alter the embedding by \emph{flipping}~\cite{Whitney,DBLP:journals/mst/HolmR17,HR20} the embedding in that point or pair (see figure~\ref{fig:flips}). Given two embeddings of the same graph, the flip-distance between them is the minimal number of flips necessary to get from one to the other. Intuitively, a flip can be thought of as cutting out a subgraph by cutting along a $2$-cycle or $4$-cycle in the vertex-face graph, possibly mirroring its planar embedding, and then doing the inverse operation of cutting along a $2$- 
or $4$-cycle in the vertex-face graph. The initial cutting and the final gluing involve the same vertices but not necessarily the same faces, thus, the graph but not the embedding is preserved.

\begin{figure}[htb]
	\begin{minipage}[t]{.45\textwidth}
		\centering
\begin{tikzpicture}[
    vertex/.style={
      draw,
      circle,
      fill=white,
      minimum size=1mm,
      inner sep=0pt,
    },
  ]
  \begin{scope}[shift={(2.5,0)},rotate=-36.87,scale=.45]
    \path[use as bounding box] (0,-1.5) rectangle (4,2);
    \node[vertex,label={[label distance=-2mm]105:$y$}] (x) at (0,0) {};
    \node[vertex,label={[label distance=-2mm]-15:$z$}] (y) at (5,0) {};

    \coordinate (x1) at (0,2.5);
    \coordinate (y1) at (5,2.5);
    \coordinate (x2) at (-1,3);
    \coordinate (y2) at (6,3);

    \coordinate (c1) at (2,-.35) {};
    \coordinate (x3) at (1.25,-.5) {};
    \coordinate (y3) at (3.75,-.5) {};
    \coordinate (c2) at (2,.5) {};

    \coordinate (x4) at (2,1.5);
    \coordinate (y4) at (3,1.5);
    \coordinate (x5) at (1,2);
    \coordinate (y5) at (4,2);

    \begin{pgfonlayer}{background}
      \draw[fill=gray!20] (x) .. controls (x1) and (y1) .. (y)
      .. controls (y2) and (x2) .. (x);

      \draw[fill=gray!40,pattern=dots] (x) .. controls (x4) and (y4) .. (y)
      .. controls (y5) and (x5) .. (x);

      \draw[fill=gray!70] plot[smooth] coordinates {(x) (x3) (c1) (y3) (y)}
      plot [smooth] coordinates {(y) (c2) (x)};
    \end{pgfonlayer}
  \end{scope}

  \begin{scope}[shift={(0,0)},rotate=-36.87,scale=.45]
    \path[use as bounding box] (0,-1.5) rectangle (4,2);
    \node[vertex,label={[label distance=-2mm]105:$y$}] (x) at (0,0) {};
    \node[vertex,label={[label distance=-2mm]-15:$z$}] (y) at (5,0) {};

    \coordinate (x1) at (0,2.5);
    \coordinate (y1) at (5,2.5);
    \coordinate (x2) at (-1,3);
    \coordinate (y2) at (6,3);

    \coordinate (c1) at (2,.35) {};
    \coordinate (x3) at (1.25,.5) {};
    \coordinate (y3) at (3.75,.5) {};
    \coordinate (c2) at (2,-.5) {};

    \coordinate (x4) at (2,1.5);
    \coordinate (y4) at (3,1.5);
    \coordinate (x5) at (1,2);
    \coordinate (y5) at (4,2);

    \begin{pgfonlayer}{background}
      \draw[fill=gray!20] (x) .. controls (x1) and (y1) .. (y)
      .. controls (y2) and (x2) .. (x);

      \draw[fill=gray!40,pattern=dots] (x) .. controls (x4) and (y4) .. (y)
      .. controls (y5) and (x5) .. (x);

      \draw[fill=gray!70] plot[smooth] coordinates {(x) (x3) (c1) (y3) (y)}
      plot [smooth] coordinates {(y) (c2) (x)};
    \end{pgfonlayer}

  \end{scope}

  \begin{scope}[shift={(5,0.1)},rotate=-36.87,scale=.45]
    \path[use as bounding box] (0,-1.7) rectangle (4,1.8);
    \node[vertex,label={[label distance=-2mm]105:$y$}] (x) at (0,-.2) {};
    \node[vertex,label={[label distance=-2mm]-15:$z$}] (y) at (5,-.2) {};

    \coordinate (x1) at (0,2.5);
    \coordinate (y1) at (5,2.5);
    \coordinate (x2) at (-1,3);
    \coordinate (y2) at (6,3);

    \coordinate (c1) at (2,-.35) {};
    \coordinate (x3) at (1.25,-.5) {};
    \coordinate (y3) at (3.75,-.5) {};
    \coordinate (c2) at (2,.5) {};

    \coordinate (x4) at (1,1.5);
    \coordinate (y4) at (4,1.5);
    \coordinate (x5) at (0,2);
    \coordinate (y5) at (5,2);

    \coordinate (x6) at (2,-1.5);
    \coordinate (y6) at (3,-1.5);
    \coordinate (x7) at (1,-2);
    \coordinate (y7) at (4,-2);

    \begin{pgfonlayer}{background}
      \draw[fill=gray!20] (x) .. controls (x4) and (y4) .. (y)
      .. controls (y5) and (x5) .. (x);

      \draw[fill=gray!40,pattern=dots] (x) .. controls (x6) and (y6) .. (y)
      .. controls (y7) and (x7) .. (x);

      \draw[fill=gray!70] plot[smooth] coordinates {(x) (x3) (c1) (y3) (y)}
      plot [smooth] coordinates {(y) (c2) (x)};
    \end{pgfonlayer}
  \end{scope}

\end{tikzpicture}
 		\vspace{.1em}	
		\subcaption{Separation flips: reflect and  slide.}\label{fig:separation-flip}
	\end{minipage}  
	\hfill
	\begin{minipage}[t]{.45\textwidth}
		\centering
\begin{tikzpicture}[
    vertex/.style={
      draw,
      circle,
      fill=white,
      minimum size=1mm,
      inner sep=0pt,
    },
  ]
  \begin{scope}[shift={(2.5,0)},rotate=-36.87, scale=.45]
    \path[use as bounding box] (-2,0) rectangle (2,4);
    \node[vertex,label={180:$x$}] (x) at (0,0) {};

    \coordinate (a1) at (-3,2);
    \coordinate (a2) at (0,4);
    \coordinate (a3) at (3,2);
    \coordinate (b1) at (2.5,2);
    \coordinate (b2) at (1,3);
    \coordinate (c1) at (-1,3);
    \coordinate (c2) at (-2.5,2);

    \coordinate (d1) at (-1.2,2);
    \coordinate (d2) at (-1.5,2.8);
    \coordinate (d3) at (-2.1,1.9);

    \begin{pgfonlayer}{background}
      \draw[fill=gray!20]
      plot[smooth,tension=1] coordinates {
        (x) (a1) (a2) (a3) (x)
      }
      plot[smooth,tension=1] coordinates {
        (x) (b1) (b2) (x)
      }
      plot[smooth,tension=1] coordinates {
        (x) (c1) (c2) (x)
      };

      \draw[fill=gray!70]
      plot[smooth,tension=.5] coordinates {
        (x) (d1) (d2) (d3) (x)
      };
    \end{pgfonlayer}
  \end{scope}

  \begin{scope}[shift={(0,0)},rotate=-36.87, scale=.45]
    \path[use as bounding box] (-2,0) rectangle (2,4);
    \node[vertex,label={180:$x$}] (x) at (0,0) {};

    \coordinate (a1) at (-3,2);
    \coordinate (a2) at (0,4);
    \coordinate (a3) at (3,2);
    \coordinate (b1) at (2.5,2);
    \coordinate (b2) at (1,3);
    \coordinate (c1) at (-1,3);
    \coordinate (c2) at (-2.5,2);

    \begin{scope}[rotate=72,xscale=-1]
    \coordinate (d1) at (-1.2,2);
    \coordinate (d2) at (-1.5,2.8);
    \coordinate (d3) at (-2.1,1.9);
    \end{scope}

    \begin{pgfonlayer}{background}
      \draw[fill=gray!20]
      plot[smooth,tension=1] coordinates {
        (x) (a1) (a2) (a3) (x)
      }
      plot[smooth,tension=1] coordinates {
        (x) (b1) (b2) (x)
      }
      plot[smooth,tension=1] coordinates {
        (x) (c1) (c2) (x)
      };

      \draw[fill=gray!70]
      plot[smooth,tension=.5] coordinates {
        (x) (d1) (d2) (d3) (x)
      };
    \end{pgfonlayer}
  \end{scope}

  \begin{scope}[shift={(5,0)},rotate=-36.87,scale=.45]
    \path[use as bounding box] (-2,0) rectangle (2,4);
    \node[vertex,label={180:$x$}] (x) at (0,0) {};

    \coordinate (a1) at (-3,2);
    \coordinate (a2) at (0,4);
    \coordinate (a3) at (3,2);
    \coordinate (b1) at (2.5,2);
    \coordinate (b2) at (1,3);
    \coordinate (c1) at (-1,3);
    \coordinate (c2) at (-2.5,2);

    \begin{scope}[rotate=-75]
    \coordinate (d1) at (-1.2,2);
    \coordinate (d2) at (-1.5,2.8);
    \coordinate (d3) at (-2.1,1.9);
    \end{scope}

    \begin{pgfonlayer}{background}
      \draw[fill=gray!20]
      plot[smooth,tension=1] coordinates {
        (x) (a1) (a2) (a3) (x)
      }
      plot[smooth,tension=1] coordinates {
        (x) (b1) (b2) (x)
      }
      plot[smooth,tension=1] coordinates {
        (x) (c1) (c2) (x)
      };

      \draw[fill=gray!70]
      plot[smooth,tension=1] coordinates {
        (x) (d1) (d2) (d3) (x)
      };
    \end{pgfonlayer}
  \end{scope}
\end{tikzpicture}
 		\vspace{.1em}	
		\subcaption{Articulation flips: reflect and slide.}\label{fig:articulation-flip}
	\end{minipage}
	\caption{Local changes to the embedding of a graph~\cite{HR20}.\label{fig:flips}}
\end{figure}
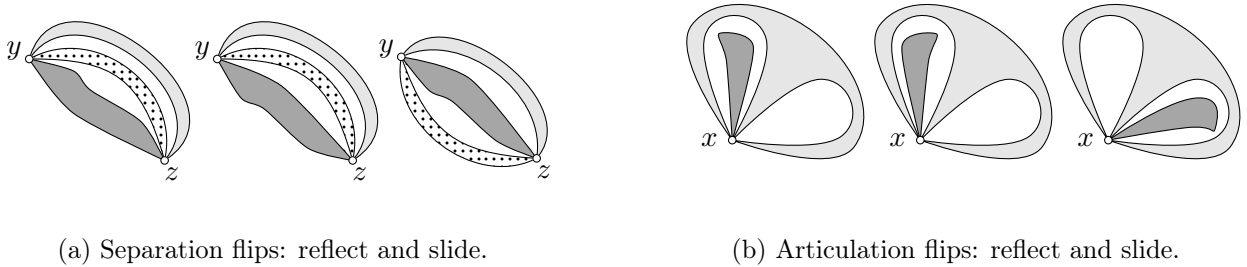

We use the following terminology: To distinguish between whether the
subgraph being flipped is connected to the rest of the graph by a
separation pair or an articulation point, we use the terms
\emph{separation flip} and \emph{articulation flip}, respectively. To
indicate whether the subgraph was mirrored, moved to a different location,
or both, we use the terms \emph{reflect}, \emph{slide}, and
\emph{reflect-and-slide}. Note that for articulation flips, only slide
and reflect-and-slide change which edges are insertable across
a face. For separation flips, note that any slide operation or
reflect-and-slide operation may be obtained by doing $3$ or $2$
reflect operations, respectively.  %

\paragraph{Results.} Let $n$ denote the number of vertices of our fully-dynamic graph. %
\begin{restatable}{theorem}{mainalggeneral}\label{thm:mainalggeneral}
  There is a data structure for fully-dynamic planarity testing that
  handles edge-insertions and edge-deletions in amortized $\OO(\log^3
  n)$ time, answers queries to planarity-compatibility of an edge in
  amortized $\OO(\log^3 n)$ time, and answers queries to whether the
  graph is presently planar in worst case $\OO(1)$ time, or to whether
  the component of a given vertex is presently planar in worst case
  $\OO(\log n / \log \log n)$ time. It maintains an implicit
  representation of an embedding that is planar on each planar 
  connected component, and may answer
  queries to the neighbors of a given existing edge in this current
  embedding, in $\OO(\log^2 n)$ time.
\end{restatable}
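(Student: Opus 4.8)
The plan is to obtain Theorem~\ref{thm:mainalggeneral} by layering three ingredients: \textbf{(a)} the dynamic plane-graph data structure of \cite{DBLP:conf/esa/ItalianoPR93,DBLP:journals/mst/HolmR17}, which already supports arbitrary edge deletions and the flips of Figure~\ref{fig:flips} in $\OO(\log^2 n)$ time each and exposes the incident-corner (vertex--face) structure; \textbf{(b)} the result announced in the introduction that a \emph{lazy greedy} embedding-maintenance scheme performs amortized $\OO(\log n)$ flips per insertion; \textbf{(c)} the result that a near-minimal such flip set can be found in worst case $\OO(\log^2 n)$ time per flip; together with a reduction from general graphs to planar ones. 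From \textbf{(a)}--\textbf{(c)} I would first build the \emph{plane-graph layer}: a structure maintaining a planar embedding of a dynamically changing planar graph that rejects insertions destroying planarity. On a deletion it changes nothing in the embedding (lazy), costing only the $\OO(\log^2 n)$ deletion primitive of \textbf{(a)}. On an attempted insertion of $uv$ it invokes the flip-finding routine of \textbf{(c)} to locate a near-minimal set of flips making $u$ and $v$ co-facial, or to certify none exists (then $uv$ is rejected); each flip and the work to find it cost $\OO(\log^2 n)$ by \textbf{(c)}, and the number of flips, amortized over the insertion sequence, is $\OO(\log n)$ by \textbf{(b)}. Multiplying gives amortized $\OO(\log^3 n)$ per insertion and worst case $\OO(\log^2 n)$ per deletion, and neighbor-in-embedding queries are answered through \textbf{(a)} in $\OO(\log^2 n)$.

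Second, I would add the \emph{reduction layer}. Alongside $G$ we maintain a planar subgraph $H$ that contains a spanning forest of $G$ and is handled by the plane-graph layer, together with the multiset $R = E(G)\setminus E(H)$ of rejected edges, under the invariant that for every connected component $C$ of $G$, $H$ restricted to $C$ is a \emph{maximal} planar subgraph of $G[C]$; then $G$ is planar iff $R=\emptyset$, and $G[C]$ is planar iff $R$ has no edge inside $C$, while $H$ itself is the required implicit embedding, equal to $G[C]$ on planar components. Insertions are easy: adding $uv$ to $H$ if the plane-graph layer accepts it and to $R$ otherwise preserves maximality, since any edge addable to $H+uv$ was already addable to $H$. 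Deletions from $R$ are trivial. A deletion of $e\in H$ is the delicate case: after the lazy deletion we re-establish maximality by repeatedly asking the plane-graph layer for an edge of $R$ that has become insertable and inserting it, until none remains, implementing ``is there a currently insertable edge of $R$'' as a query on the vertex--face structure of \textbf{(a)} rather than a scan of $R$, and charging each successful re-insertion to the amortized flip budget of \textbf{(b)}. For the component query we augment the forest structure tracking $G$'s components so that each component also stores the number of edges of $R$ inside it; using a balanced structure with $\Theta(\log^{\varepsilon} n)$-ary nodes, a vertex's component and whether that count is zero are read in $\OO(\log n/\log\log n)$ time, while the global bit ``$R=\emptyset$?'' is a counter answered in $\OO(1)$; planarity-compatibility of a queried edge is answered by running the same accept/reject test as an insertion, in amortized $\OO(\log^3 n)$.

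The main obstacle, and the step I expect to require the most care, is the interface between the flip-finding routine and the amortized analysis: the potential argument of \textbf{(b)} (modeled on Brodal--Fagerberg, with potential measuring distance from the ``good'' class of embeddings of \cite{HR20}) yields the $\OO(\log n)$ bound only if the routine of \textbf{(c)} always performs a number of flips within a constant factor of the true minimum for that insertion, so one must check that this routine meets exactly the ``close to minimal'' hypothesis under which \textbf{(b)} is proved, and that re-inserting edges of $R$ during a deletion never hurts the potential (intuitively it can only push the embedding toward the good class, since it only adds edges). A secondary obstacle is to confirm that the deletion re-establishment step terminates after a number of re-insertions absorbable by the flip budget, and that the ``find some insertable edge of $R$'' query is genuinely polylogarithmic. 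Beyond these points, threading the $R$-counter through the forest structure and accounting for the several query times is routine.
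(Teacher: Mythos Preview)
Your plane-graph layer is essentially the paper's Theorem~\ref{thm:mainalgplanar}, and your accounting there matches. The gap is in the reduction layer.

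You insist that $H$ be a \emph{maximal} planar subgraph of each component of $G$, and to restore maximality after a deletion you propose a query ``find an edge of $R$ that is currently insertable into $H$'' implemented on the vertex--face structure. That query is not available: insertability here means \emph{planarity-compatibility}, which is itself the amortized $\OO(\log^3 n)$ flip-search test, and the vertex--face structure of~\cite{DBLP:journals/mst/HolmR17} exposes only the \emph{current} embedding, not all embeddings reachable by flips. An edge of $R$ may be insertable into $H$ while its endpoints share no face in the current embedding of $H$, so no corner-based search finds it. Your ``secondary obstacle'' is therefore the real obstacle, and you have not solved it; without it you also cannot efficiently detect that \emph{no} edge of $R$ is insertable, which is precisely when your re-establishment loop must terminate.

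The paper avoids this by taking the Eppstein--Galil--Italiano--Spencer reduction verbatim and \emph{not} maintaining maximality. After a deletion it pulls \emph{arbitrary} not-yet-inserted edges of the affected component---located in $\OO(\log n/\log\log n)$ time via the edge-marking mechanism of the auxiliary dynamic-connectivity structure~\cite{Wulff-Nilsen16}, exactly as one locates candidate replacement edges there---and attempts to insert each via the plane-graph layer, \emph{stopping at the first failure}. The invariant is only ``the pile is empty iff the component is planar'': a nonempty pile always contains the last failing edge as a non-planarity certificate. Total successful attempts over the whole run are bounded by total user insertions (each edge leaves the pile at most once); failed attempts are at most one per deletion. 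Both are covered by the amortized $\OO(\log^3 n)$ per attempt already established for the plane-graph layer. So the fix is simple---drop maximality and stop at the first failure---but as stated your proposal rests on a query you have no way to implement.

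Your worry that re-insertions from $R$ might ``hurt the potential'' is also misplaced: they are ordinary insertions into the plane-graph layer, and the potential argument of Theorem~\ref{thm:lazygreedy} already charges each (attempted) insertion $\OO(\log n)$ flips regardless of who requested it.
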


The result follows by applying a simple extension of the reduction by
Eppstein~et~al.~\cite[Corollary~1]{Eppstein:1996} to the following
theorem:
\begin{restatable}{theorem}{mainalgplanar}\label{thm:mainalgplanar}
  There is a data structure for maintaining a planar embedding of a
  fully-dynamic planar graph that handles edge-updates and
  planarity-compatibility queries in amortized $\OO(\log^3 n)$ time,
  edge deletions in worst-case $\OO(\log^2 n)$ time, and queries to the
  neighbors of a given existing edge in the current embedding in
  worst-case $\OO(\log^2 n)$ time.
\end{restatable}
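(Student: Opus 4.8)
}
The plan is to keep the current plane graph in the dynamic embedding data structure of Holm and Rotenberg~\cite{DBLP:journals/mst/HolmR17}, together with the BC-tree and SPQR-tree decomposition it maintains, so that a single flip, an edge deletion, the insertion of an edge that is already compatible with the current embedding, and a query to the rotation-neighbours of an existing edge each cost worst-case $\OO(\log^2 n)$. Two of the three guarantees are then immediate: an edge deletion is handled \emph{lazily} by deleting the edge and changing nothing else, which inherits the $\OO(\log^2 n)$ worst-case bound; and a neighbour query is answered verbatim by the underlying structure in worst-case $\OO(\log^2 n)$. It remains to realise insertions and planarity-compatibility queries in amortized $\OO(\log^3 n)$.

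For an (attempted) insertion of $uv$, or a compatibility query for $uv$, the operation runs in two phases. First, decide feasibility and, if feasible, compute a flip sequence $F$ whose length is \emph{near-minimal} --- within a constant factor, and an additive $\OO(\log n)$, of the true flip-distance from the current embedding to the set of embeddings in which $uv$ is insertable --- after which $u$ and $v$ lie on a common face. I would produce $F$ by walking the unique path ``between'' $u$ and $v$ in the decomposition and classifying, at each articulation point and separation pair encountered, which of a reflect, slide, or reflect-and-slide (cf.~Section~\ref{sec:results}) is needed, driving the walk with the structure's navigation primitives so that it costs worst-case $\OO(\log^2 n)$ per flip emitted and $\OO(\log^2 n)$ in total if $uv$ turns out to be infeasible (in which case the insertion is rejected). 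Second, perform the flips of $F$ one at a time --- each $\OO(\log^2 n)$ --- and insert $uv$, which is now a single compatible insertion; a compatibility query performs the same flips but stops short of adding the edge, so it too may change the embedding, which is why its bound is amortized. Thus an insertion or a compatibility query costs $\OO(\log^2 n)\cdot(|F|+1)$, and everything reduces to bounding the total number of flips performed over an arbitrary sequence of operations.

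The bound I would establish is that over any sequence of $q$ operations the total number of flips is $\OO(q\log n)$, i.e.~amortized $\OO(\log n)$ flips per update, by an analysis modelled on Brodal and Fagerberg's for bounded-outdegree orientations~\cite{DBLP:conf/wads/BrodalF99}. Let $\mathcal{C}$ be the class of embeddings of~\cite{HR20} in which \emph{every} planarity-preserving insertion can be accommodated with only $\OO(\log n)$ flips. I would measure how far the maintained embedding is from $\mathcal{C}$ by a potential $\Phi$ that is a sum over the articulation points and separation pairs of a steeply convex ``crowding'' cost --- the analogue of weighting a vertex by a steeply convex function of its out-degree --- calibrated so that the near-minimal flip sequence $F$ from the previous paragraph, although it may be long, changes $\Phi$ only at the two ends of the decomposition path it traverses: at the far end (the analogue of the under-loaded vertex) $\Phi$ grows by $\OO(\log n)$, while at the near end $\Phi$ drops by at least $|F|$. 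Then the amortized flip count $|F| + \Delta\Phi$ of an insertion is $\OO(\log n)$; a lazy deletion only removes constraints and I would argue it changes $\Phi$ by at most $\OO(\log n)$. Summing telescopes to amortized $\OO(\log n)$ flips per operation, and multiplying by the $\OO(\log^2 n)$ cost per flip gives the amortized $\OO(\log^3 n)$ for insertions and compatibility queries, which together with the first paragraph proves the theorem.

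The main obstacle is pinning down $\Phi$ --- the exact notion of ``crowding'' at an articulation point and at a separation pair, and the convex weighting --- so that three things hold simultaneously: that a long, near-minimal insertability flip sequence disturbs $\Phi$ at only $\OO(1)$ places with the drop at the loaded end absorbing its whole length (this is the structural heart of the argument, and is where the guarantee of~\cite{HR20} that $\mathcal{C}$ always offers short insertion paths is used); that inserting an edge along such a short path keeps the embedding within $\OO(\log n)$ of $\mathcal{C}$ for the enlarged graph, so that potential is not leaked as the graph grows; and that a lazy deletion perturbs $\Phi$ by only $\OO(\log n)$ despite changing both the maintained embedding and the target class $\mathcal{C}$. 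Once $\Phi$ is fixed, the remaining pieces --- the $\OO(\log^2 n)$-per-flip navigation of the decomposition and the completely lazy handling of deletions and neighbour queries --- follow comparatively routinely from the data structure of~\cite{DBLP:journals/mst/HolmR17}.
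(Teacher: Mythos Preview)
Your high-level plan matches the paper's: keep the embedding in the structure of~\cite{DBLP:journals/mst/HolmR17} so that deletions, single flips, and neighbour queries cost worst-case $\OO(\log^2 n)$; handle deletions lazily; for an (attempted) insertion find a near-minimal flip sequence; and bound the amortized flip count via a potential measuring distance to the good embeddings of~\cite{HR20}, with Brodal--Fagerberg as inspiration.

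The gap is in the potential and the mechanism by which it pays for the flips. You propose $\Phi$ as a sum over articulation points and separation pairs of a ``steeply convex crowding cost'', arranged so that a long flip sequence $F$ changes $\Phi$ only at the two \emph{ends} of the decomposition path --- by direct analogy with out-degrees in Brodal--Fagerberg, where intermediate vertices along a flip path are unchanged. That analogy does not transfer: there is no local ``load'' at a cut that the flips merely redistribute, and I do not see a convex crowding cost with this endpoint-only property. The paper's potential is instead literally $\Phi(H)=\flipdist(H,\EmbGood(G))$, realised as $\min_{u,v}\solidcost(H;u,v)$, where $\solidcost$ is a sum over a set of \emph{struts} --- one auxiliary edge per heavy path in the pre-split BC/SPQR forest --- of the flip-distance to make each strut admissible. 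The key structural fact (Lemma~\ref{lem:costs-delta}) is that \emph{every} critical flip changes $\solidcost$ by exactly the same $\pm 1$ as it changes $\critcost$, so $k$ greedy flips drop $\Phi(H;u,v)$ by $k$ step-by-step, not only at the ends; the $\OO(\log n)$ slack enters separately via $\Phi(H;u,v)\le\Phi(H)+\OO(\log n)$ (Corollary~\ref{cor:embgood-sandwich}), which is where~\cite{HR20} is invoked. Your three desiderata for $\Phi$ are the right ones, but the strut construction and the per-flip coupling of $\critcost$ to $\solidcost$ are the missing ideas.

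A secondary point: the structure of~\cite{DBLP:journals/mst/HolmR17} does \emph{not} maintain BC or SPQR trees --- it maintains interdigitating primal and dual spanning trees with a mark-and-search primitive. The paper's flip-finding (Section~\ref{sec:findflips}) works directly on that representation, and locating each maximal separation flip in $\OO(\log^2 n)$ without an explicit SPQR tree is the bulk of the algorithmic effort (see \textproc{find-first-separation-flip} and its six subroutines); it is considerably more than ``walk the decomposition path and classify each cut''.
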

The underlying properties in the data structure above include that the queries may change the embedding, but the deletions do not change anything aside from the mere deletion of the edge itself.

To arrive at these theorems, we prove some technical lemmas which may be of independent interest. A weak form of these that is easy to state is the following:
\begin{restatable}{lemma}{lemlazy}\label{lem:lazy}
  Any algorithm for maintaining a fully dynamic planar embedding that
  for each attempted edge-insertion greedily does the minimal number
  of flips, and that for each edge deletion lazily does nothing, will
  do amortized $\OO(\log n)$ flips per insertion when starting with an
  empty graph (or amortized over $\Omega(n/\log n)$ operations).
\end{restatable}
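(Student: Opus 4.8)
The plan is to run an amortized potential-function analysis in the style of Brodal and Fagerberg~\cite{DBLP:conf/wads/BrodalF99}, with the ``balanced'' embeddings of~\cite{HR20} playing the role that low-outdegree orientations play there. Fix the current vertex set, and for a graph $G$ on it write $\mathcal{G}(G)$ for the class of embeddings from~\cite{HR20} for which every planarity-preserving edge insertion can be accommodated with $\OO(\log n)$ flips (and for which the result stays $\OO(\log n)$ flips away from $\mathcal{G}(G+e)$). For an embedding $\emb$ of $G$ define the potential $\Phi(\emb) := \min_{\psi \in \mathcal{G}(G)} \flipdist(\emb, \psi)$, the flip-distance from $\emb$ to the nearest balanced embedding of the \emph{same} graph. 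Then $\Phi \ge 0$, it is $0$ on the (unique) embedding of the empty graph, and $\Phi(\emb) = \OO(n)$ always, since any two embeddings of an $n$-vertex planar graph differ by $\OO(n)$ flips (each of the $\OO(n)$ nodes of the SPQR-type decomposition contributes $\OO(1)$). By the standard telescoping identity it then suffices to prove: (i) each successful insertion performs some number $f$ of flips with $f + \Delta\Phi = \OO(\log n)$; and (ii) each deletion, which performs no flip, increases $\Phi$ by at most $\OO(\log n)$. Summing (i) over insertions and (ii) over deletions, using $\Phi_{\mathrm{final}} \ge 0$ and $\Phi_{\mathrm{init}} = 0$ (or $\OO(n)$, which is absorbed once there are $\Omega(n/\log n)$ operations), and using that a sequence starting from the empty graph has at most as many deletions as insertions, bounds the total number of flips by $\OO(\log n)$ per insertion. (Attempted insertions violating planarity are detected; they can either be rejected before any flip, or shown to cost $\OO(\log n)$ flips by the same machinery.)

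For step (i), let $e = uv$ be inserted into the current embedding $\emb$ of $G$, and let $\mathcal{F}_{uv}$ be the set of embeddings of $G$ in which $u$ and $v$ are cofacial; the greedy algorithm moves to a \emph{closest} $\emb_0 \in \mathcal{F}_{uv}$, doing $f = \flipdist(\emb, \mathcal{F}_{uv})$ flips, and then adds $e$, reaching $\emb_1 = \emb_0 + e$. Pick $\psi^\star \in \mathcal{G}(G)$ with $\flipdist(\emb, \psi^\star) = \Phi(\emb)$, and let $\psi_0 \in \mathcal{F}_{uv}$ be obtained from $\psi^\star$ by the $\OO(\log n)$ flips guaranteed by~\cite{HR20}, with $\psi_1 := \psi_0 + e$ within $\OO(\log n)$ flips of $\mathcal{G}(G+e)$. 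The upper bound $f \le \Phi(\emb) + \OO(\log n)$ is immediate, since $\psi_0 \in \mathcal{F}_{uv}$ and $\flipdist(\emb, \psi_0) \le \Phi(\emb) + \OO(\log n)$ by the triangle inequality. For the potential drop we need the structural fact that the minimal set of flips bringing $u$ and $v$ onto a common face is a ``canonical prefix'': those flips must be performed (in the appropriate orientations) to reach \emph{any} embedding in $\mathcal{F}_{uv}$, so $\flipdist(\emb_0, \psi_0) \le \flipdist(\emb, \psi_0) - f$; together with a lemma that inserting $e$ into corresponding faces perturbs flip-distances by at most $\OO(\log n)$, this yields
$\Phi(\emb_1) \le \flipdist(\emb_1, \psi_1) + \OO(\log n) \le \flipdist(\emb_0, \psi_0) + \OO(\log n) \le \Phi(\emb) - f + \OO(\log n)$,
i.e.\ $f + \Delta\Phi = \OO(\log n)$.

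For step (ii), deleting an edge performs no flip but merges up to two faces and may locally unbalance the decomposition (e.g.\ by splitting an R-node). Here we invoke a robustness lemma---essentially~\cite{HR20} read ``in reverse''---stating that if $\psi$ is balanced for $G$ then $\psi - e$ is within $\OO(\log n)$ flips of a balanced embedding of $G - e$; hence $\Phi$ rises by at most $\OO(\log n)$.

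The main obstacle is the ``canonical prefix'' claim underlying step (i): turning the \emph{minimality} of the greedy flip set into a genuine decrease of $\Phi$. One must show that the flips greedy is forced to perform to make $u$ and $v$ cofacial are precisely the ones that also progress toward a balanced embedding---that greedy never ``wastes'' flips on parts of the decomposition irrelevant to $e$---and separately control how edge insertion and deletion perturb both flip-distances and the balanced class. This is exactly what the paper's stronger (non-weak) technical lemmas are designed to supply; given them, the potential bookkeeping above is routine.
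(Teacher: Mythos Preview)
Your high-level strategy matches the paper's exactly: a Brodal--Fagerberg-style amortization with potential equal to the flip-distance from the current embedding to the nearest ``good'' embedding of the same graph, where good is the class from~\cite{HR20}. The paper's proof of Lemma~\ref{lem:lazy} is literally ``set $p=q=0$, $r=1$ in Theorem~\ref{thm:lazygreedy}'', and your outline is precisely the skeleton of that theorem's proof.

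There is, however, one technical point where your write-up and the paper's execution diverge, and it is worth flagging because it is exactly the place you identify as ``the main obstacle''. You track distance to a \emph{fixed} witness $\psi_0$ and want $\flipdist(\emb_0,\psi_0)\le\flipdist(\emb,\psi_0)-f$. Even granting the ``canonical prefix'' intuition, this inequality for a specific $\psi_0$ does not follow from $\emb_0$ being the nearest point of $\mathcal{F}_{uv}$ to $\emb$; projection to a subset need not move you closer to every point of the subset. The paper sidesteps this by never fixing a witness. It defines the potential $\Phi(H;u,v)=\solidcost_{\text{clean}}(H;u,v)$ as a \emph{sum} $\sum_{(x,y)\in\solidstruts(G;u,v)}\flipdist(H,\Emb(G;x,y))$ over a carefully chosen set of ``struts'', with the key independence property (S2) that any single flip changes at most one term of the sum. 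The ``canonical prefix'' phenomenon then becomes Lemma~\ref{lem:costs-delta}: every critical flip changes $\solidcost$ by exactly the same $\pm 1$ as it changes $\critcost$. Hence the $f$ greedy flips drop $\Phi(H;u,v)$ by exactly $f$, with no triangle-inequality slack. Your $\psi_0$-based chain of inequalities can be made to work, but only after you have this additive/independence structure, at which point the set-valued formulation is simpler.

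Similarly, your ``inserting $e$ perturbs flip-distances by $\OO(\log n)$'' is stronger than needed and harder to prove than what the paper uses: with the struts definition, Lemma~\ref{lem:emb-insert} gives $\solidcost(H;u,v)=\solidcost(H\cup(u,v);u,v)$ \emph{exactly}, because inserting the edge just removes the one strut $(u,v)$, which already contributes zero once $H\in\Emb(G;u,v)$. The same identity, read backwards, handles deletion cleanly without a separate robustness lemma.
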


\subsection{Article outline}
In Section~\ref{sec:prelim}, we introduce some of the concepts and data structures that we use to prove our result. In Section~\ref{sec:numflips} we give the proof of Lemma~\ref{lem:lazy} conditioned on insights and details deferred to Section~\ref{sec:define-costs}. In Section~\ref{sec:findflips}, we prove Theorem~\ref{thm:mainalgplanar} by giving an algorithm (given an edge to insert and an embedded graph) for greedily finding flips that bring us closer to an embedding that is compatible with the edge we are trying to insert. In Section~\ref{sec:reduction}, we show how the reduction that extends this result to general graphs, proving Theorem~\ref{thm:mainalggeneral}.

\section{Preliminaries}\label{sec:prelim}

Since each $3$-connected component of a planar graph has a unique embedding up to reflection, the structure of $3$-connected components play an important role in the analysis of embeddings. Namely, the two-vertex cuts (also known as separation pairs) point to places where there is a choice in how to embed the graph. Similarly, any cutvertex (also known as articulation point) points to a freedom in the choice of embedding. 
In the following, we will define the BC tree and the SPQR tree which are structures that reflect the $2$-connected components of a connected graph and the $3$-connected components of a $2$-connected graph, respectively. Then, related to the understanding of a combinatorial embedding, we will define \emph{flips} which are local changes to the embedding, and the notion of \emph{flip distance} between embedded graphs.%

\paragraph{BC trees,} as described in~\cite[p. 36]{Harary69}, reflect the $2$-connected components and their relations. 
\begin{definition}
	Let $x$ be a vertex in a connected loopless multigraph
	$G$. Then $x$ is an \emph{articulation point} if $G-\set{x}$ is not
	connected.
\end{definition}
\begin{definition}
	A (strict) BC tree for a connected loopless multigraph
        $G=(V,E)$ with at least $1$ edge is a tree with nodes labelled
        $B$ and $C$, where each node $v$ has an associated
        \emph{skeleton graph} $\Gamma(v)$ with the following
        properties:
	\begin{itemize}
		\item For every node $v$ in the BC tree, $V(\Gamma(v))\subseteq V$.
		\item For every edge $e\in E$ there is a unique node $v=b(e)$ in the
		BC tree such that $e\in E(\Gamma(v))$.
		\item For every edge $(u,v)$ in the BC tree, $V(\Gamma(u))\cap
		V(\Gamma(v))\neq\emptyset$ and either $u$ or $v$ is a $C$ node.
		\item If $v$ is a $B$ node, $\Gamma(v)$ is either a single edge or a
		biconnected graph.
		\item If $v$ is a $C$ node, $\Gamma(v)$ consists of a single vertex,
		which is an articulation point in $G$.
		\item No two $B$ nodes are neighbors.
		\item No two $C$ nodes are neighbors.
	\end{itemize}
\end{definition}
The BC tree for a connected graph is unique. The (skeleton graphs
associated with) the $B$ nodes are sometimes referred to as $G$'s
biconnected components.
In this paper, we use the term \emph{relaxed} BC tree as defined in \cite{HR20} to denote a tree that satisfies all but the last condition.  Unlike the strict BC tree, the relaxed BC tree is not unique.

\paragraph{SPQR trees} Reflecting the structure of the $3$-connected components, we rely on the SPQR tree.

\begin{definition}[Hopcroft and Tarjan~{\cite[p. 6]{DBLP:journals/siamcomp/HopcroftT73}}]\label{def:separationpair}
	Let $\set{a,b}$ be a pair of vertices in a biconnected multigraph $G$. Suppose the edges of $G$ are divided into equivalence classes $E_1,E_2,\ldots,E_k$, such that two edges which lie on a common path not containing any vertex of $\set{a,b}$ except as an end-point are in the same class. The classes $E_i$ are called the \emph{separation classes} of $G$ with respect to $\set{a,b}$. If there are at least two separation classes, then $\set{a,b}$ is a \emph{separation pair} of $G$ unless (i) there are exactly two separation classes, and one class consists of a single edge\footnote{So in a triconnected graph, the endpoints of an edge do not constitute a separation pair.}, or (ii) there are exactly three classes, each consisting of a single edge\footnote{So the graph consisting of two vertices connected by $3$ parallel edges is triconnected.}
\end{definition}

\begin{definition}[\cite{DBLP:conf/esa/HolmIKLR18}]\label{def:SPQR}
	The \emph{(strict) SPQR tree} for a biconnected multigraph $G=(V,E)$ with at least $3$ edges is a tree with nodes labeled S, P, or R, where each node $x$ has an associated \emph{skeleton graph} $\Gamma(x)$ with the following properties:
	\begin{itemize}
		\item For every node $x$ in the SPQR tree, $V(\Gamma(x))\subseteq V$.
		\item For every edge $e\in E$ there is a unique node $x=b(e)$ in the SPQR tree such that $e\in E(\Gamma(x))$.
		\item For every edge $(x,y)$ in the SPQR tree, $V(\Gamma(x))\cap V(\Gamma(y))$ is a separation pair $\set{a,b}$ in $G$, and~there is a \emph{virtual edge} $ab$ in each of $\Gamma(x)$ and $\Gamma(y)$ that corresponds to $(x,y)$.
		\item For every node $x$ in the SPQR tree, every edge in $\Gamma(x)$ is either in $E$ or a virtual edge.
		\item If $x$ is an S node, $\Gamma(x)$ is a simple cycle with at least
		$3$ edges.
		\item If $x$ is a P node, $\Gamma(x)$ consists of a pair of vertices with at least $3$ parallel edges.
		\item If $x$ is an R node, $\Gamma(x)$ is a simple %
		triconnected graph.
		\item No two S nodes are neighbors, and no two P nodes are neighbors.
	\end{itemize}
\end{definition}
The SPQR tree for a biconnected graph is unique (see e.g.~\cite{DBLP:journals/algorithmica/BattistaT96}). 
In this paper, we use the term \emph{relaxed SPQR tree} as defined in \cite{HR20} to denote a tree that satisfies all but the last condition. Unlike the strict SPQR tree, the relaxed SPQR tree is not unique. 

\paragraph{Pre-split BC trees and SPQR trees.} \label{par:presplit}
Once the BC tree or the SPQR is rooted, one may form a path decomposition \cite{DBLP:journals/jcss/SleatorT83} over them. 
Given a connected component, one may form its BC tree and an SPQR tree for each block.
In \cite{HR20}, we show how to obtain a balanced combined tree for each component, inspired by \cite{DBLP:journals/siamcomp/BentST85}, where the heavy paths reflect not only the local SPQR tree of a block, but also the weight of the many other blocks.

Given a heavy path decomposition, \cite{HR20} introduced \emph{presplit} versions of BC trees and SPQR trees, in which cutvertices, P nodes and S nodes that lie internal on heavy paths have been split in two, thus transforming the strict BC tree or SPQR tree into a relaxed BC or SPQR tree.

\paragraph{Flip-finding.} In \cite{DBLP:journals/mst/HolmR17} we give a structure for maintaining a planar embedded graph subject to edge deletions, insertions across a face, and flips changing the embedding. It operates using the tree-cotree decomposition of a connected plane graph; for any spanning tree, the non-tree edges form a spanning tree of the dual graph where faces and vertices swap roles. The data structure allows the following interesting operation: Mark a constant number of faces, and search for vertices along a path in the spanning tree that are incident to all marked faces. Or, dually, mark a number of vertices and search for faces along a path in the cotree. This mark-and-search operation is supported in $\OO(\log^2 n)$ time. Originally, in \cite{DBLP:journals/mst/HolmR17}, we use the mark-and-search operation to detect single flips necessary to bring a pair of vertices to the same face. It turns out this operation is more powerful than previously assumed, and we will use it as part of the machinery that finds all the possibly many flips necessary to bring a pair of vertices to the same face.

\paragraph{Good embeddings}
In \cite{HR20}, a class of good embeddings are provided that share the property that any edge that can be added without violating planarity only requires $\OO(\log n)$ flips to the current embedding. The good embeddings relate to the dynamic balanced heavy path decompositions of BC trees and SPQR trees in the following way: For all the articulation points and separation pairs that lie internal on a heavy path, the embedding of the graph should be favorable to the possibility of an incoming edge connecting the endpoints of the heavy path, if possible. No other choices to the embedding matter; the properties of the heavy path decomposition ensure that only $\OO(\log n)$ such choices may be unfavorable to accommodating any planarity-preserving edge insertion.

\paragraph{Projections and meets}
For vertices $x,y$, and $z$ on a tree, we use $\meet(x,y,z)$ to denote
the unique common vertex on all $3$ tree paths between $x$, $y$, and
$z$. Alternatively $\meet(x,y,z)$ can be seen as the projection of $x$
on the tree path from $y$ to $z$.  For a vertex $x$ and a fundamental
cycle $C$ in a graph with some implied spannning tree, let $\pi_C(x)$
denote the projection of $x$ on $C$. Note that if $(y,z)$ is the non-tree
edge closing the cycle $C$ then $\pi_C(x)=\meet(x,y,z)$.

\section{Analyzing the number of flips in the lazy greedy algorithm}\label{sec:numflips}

This section is dedicated to the proof of Lemma~\ref{lem:lazy}. We define some distance measures and a concept of good embeddings, and single out exactly the properties of these that would be sufficient for the lazy analysis to go through. All proofs that these sufficient conditions indeed are met are deferred to Section \ref{sec:define-costs}, thus enabling us to give an overview within the first limited number of pages.

\begin{definition}
  We will consider only two kinds of flips:
  \begin{itemize}
  \item An articulation flip at $a$ takes a single contiguous
    subsequence of the edges incident to $a$ out, possibly reverses
    their order, and inserts them again, possibly in a different
    position.%
  \item A separation flip at $s,t$ reverses a contiguous subsequence
    of the edges incident to each of $s,t$.%
  \end{itemize}
\end{definition}

\begin{definition}
  We will distinguish between two types of separation flips at $s,t$:
  \begin{itemize}
  \item In a P flip, each of the two subgraphs consist of at least $2$
    $\set{s,t}$-separation classes.
  \item In an SR flip, at least one of the two subgraphs consist of a
    single $\set{s,t}$-separation class.
  \end{itemize}
\end{definition}

\begin{definition}
  A separation flip at $s,t$ is \emph{clean} if the first and last
  edges in each of the subsequences being reversed are biconnected,
  and \emph{dirty} otherwise. A clean separation flip preserves the
  sets of edges that could participate in an articulation flip.  All
  articulation flips are considered clean.
\end{definition}

\begin{definition}
  Given vertices $u,v$, a flip is called \emph{critical} (for $u,v$)
  if exactly one of $u$ and $v$ is in the subgraph being flipped.
\end{definition}

\begin{definition}
  For a planar graph $G$, let $\Emb(G)$ denote the graph whose nodes
  are planar embeddings of $G$, and $(H,H')$ is an edge if $H'$ is
  obtained from $H$ by applying a single flip.  As a slight abuse of
  notation we will also use $\Emb(G)$ to denote the set of all planar
  embeddings of $G$.  Furthermore, for vertices $u,v$ in $G$ let
  $\Emb(G;u,v)$ denote the (possibly empty) set of embeddings of $G$
  that admit insertion of $(u,v)$.
\end{definition}
\noindent{}We will often need to discuss distances in some
(pseudo)metric between a particular embedding $H\in\Emb(G)$ and some
particular subset of embeddings $S\subseteq\Emb(G)$. For this, we
define (for any metric or pseudometric $\flipdist$)
\begin{align*}
  \flipdist(H, S) = \flipdist(S, H) &:= \min_{H'\in S}\flipdist(H,H')
\end{align*}

\begin{definition}
  For any two embeddings $H,H'$ of the planar graph $G$, that is,  
  $H,H'\in\Emb(G)$, we say that a
  path from $H$ to $H'$ in $\Emb(G)$ is clean if every flip on the
  path is clean.
  \begin{itemize}
  \item Let $\flipdist_{\text{clean}}(H,H')$ be the length of a
    shortest clean path from $H$ to $H'$.
  \item Let $\flipdist_{\text{sep}}(H,H')$ be the minimum number of
    separation flips on a clean path from $H$ to $H'$.
  \item Let $\flipdist_P(H,H')$ be the minimum number of P flips on a
    clean path from $H$ to $H'$.
  \end{itemize}
\end{definition}
\begin{observation}
  $\flipdist_{\text{clean}}$ is a metric, and $\flipdist_{\text{sep}}$
  and $\flipdist_{\text{P}}$ are pseudometrics, on $\Emb(G)$.
\end{observation}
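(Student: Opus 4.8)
The plan is to verify the (pseudo)metric axioms directly from the definitions, the one step needing genuine work being finiteness (that any two embeddings are joined by \emph{some} clean path). First note that $\Emb(G)$ is a finite graph, since a finite graph admits only finitely many rotation systems and hence finitely many embeddings; thus each of $\flipdist_{\text{clean}}$, $\flipdist_{\text{sep}}$, $\flipdist_P$ is a minimum of nonnegative integers and so nonnegative, once the relevant set of clean paths is nonempty. For nonemptiness I would invoke the classical structure of planar embeddings~\cite{Whitney}: any two embeddings of $G$ differ by a finite sequence of articulation flips and reflections at separation pairs. Articulation flips are clean by fiat, and every separation reflection can be realized by boundedly many clean flips: a reflection whose four boundary edges (first and last of the reversed subsequence at each of $s,t$) are non-bridges is already clean, and the degenerate configurations collapse either to articulation flips or to the identity. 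Hence $\flipdist_{\text{clean}}(H,H')<\infty$, and since any clean path has at least as many flips total as it has separation flips or P flips, also $\flipdist_{\text{sep}}(H,H')\le\flipdist_{\text{clean}}(H,H')<\infty$ and $\flipdist_P(H,H')\le\flipdist_{\text{clean}}(H,H')<\infty$.

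For the identity axioms, the length-$0$ path from $H$ to $H$ is clean and contains no separation or P flip, so all three quantities vanish on the diagonal; conversely $\flipdist_{\text{clean}}(H,H')=0$ forces $H=H'$ (a walk of length $0$ does not move), whereas taking $H\neq H'$ related by articulation flips alone shows that $\flipdist_{\text{sep}}$ and $\flipdist_P$ need not separate points, which is exactly why they are only pseudometrics. For symmetry, reversing a clean path from $H$ to $H'$ produces a walk from $H'$ to $H$ each of whose steps is the reverse of a flip; the reverse of an articulation flip is an articulation flip, and the reverse of a separation flip re-reverses the same positions, so it is a separation flip of the same P/SR type involving the same edges (in particular the same boundary edges), hence clean iff the original was. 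Thus reversal maps clean paths to clean paths of the same length and with the same numbers of separation flips and of P flips, so all three functions are symmetric.

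For the triangle inequality, concatenating a clean path from $H$ to $H''$ with a clean path from $H''$ to $H'$ gives a clean path from $H$ to $H'$ whose length, number of separation flips, and number of P flips are the respective sums; optimizing the two halves yields subadditivity for each of the three functions. Combining everything, $\flipdist_{\text{clean}}$ satisfies the metric axioms and $\flipdist_{\text{sep}},\flipdist_P$ satisfy the pseudometric axioms. The main (and essentially only non-mechanical) obstacle is the connectivity claim behind finiteness: one must check that dirty separation flips are never indispensable for moving between embeddings of $G$; once that is granted, the rest is bookkeeping with the definitions.
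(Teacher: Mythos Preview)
Your proof is correct. The paper itself gives no proof for this Observation---it is stated without argument---so there is no ``paper's approach'' to compare against; you have simply filled in the routine verification the authors omitted. The one substantive step you flag, that clean paths always exist (so the minima are over nonempty sets), is precisely the content of the paper's later Lemma~\ref{lem:dirty-ok}, which asserts that any dirty separation flip decomposes as a clean separation flip plus at most four articulation flips; your informal ``degenerate configurations collapse to articulation flips or the identity'' is a slightly looser paraphrase of the same fact, and it suffices here. The remaining axioms (symmetry via path reversal preserving cleanness and flip type, triangle inequality via concatenation, identity of indiscernibles for $\flipdist_{\text{clean}}$ but not for the other two) are exactly as you say.
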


In Section~\ref{sec:define-costs}, we define two families of functions
related to these particular (pseudo) metrics. Intuitively, for each
$\tau\in\set{\text{clean},\text{sep},\text{P}}$, any planar graph $G$ containing vertices $u,v$, and any embedding $H\in\Emb(G)$:

\begin{itemize}
\item $\critcost_\tau(H; u,v)$ is the number of flips of type
  $\tau$ needed to accommodate $(u,v)$ (if possible).
\item $\solidcost_\tau(H; u,v)$ is the number of flips of type $\tau$
  needed to reach a ``good'' embedding that accommodates $(u,v)$ (if
  possible).
\end{itemize}
We will state the properties we need for these functions here, in the
form of Lemmas (to be proven once the actual definition has been given).
\begin{lemma}\label{lem:costs-nonneg}
  For any planar graph $G$ with vertices $u,v$, and any embedding
  $H\in\Emb(G)$,
  \begin{align*}
    \solidcost_\tau(H; u,v) &\geq \critcost_\tau(H; u,v)\geq 0
    \\
    \solidcost_{\text{clean}}(H; u,v) &\geq
    \solidcost_{\text{sep}}(H; u,v) \geq
    \solidcost_{\text{P}}(H; u,v)
    \\
    \critcost_{\text{clean}}(H; u,v) &\geq
    \critcost_{\text{sep}}(H; u,v) \geq
    \critcost_{\text{P}}(H; u,v)
  \end{align*}
  And if $G\cup(u,v)$ is planar,
  \begin{align*}
    \critcost_{\text{clean}}(H; u,v) = 0 \iff H\in\Emb(G; x,y)
  \end{align*}
\end{lemma}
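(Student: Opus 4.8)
The plan is to read off all four statements directly from the definitions of $\critcost_\tau$ and $\solidcost_\tau$ to be given in Section~\ref{sec:define-costs}. In that section each of these quantities will be put in the normal form ``the minimum, over clean paths in $\Emb(G)$ that start at $H$ and end in a designated target set $S_\tau$, of the number of flips of the relevant (distinguished) kind of type $\tau$ used by the path'', where for $\critcost$ the target set is $\Emb(G;u,v)$ and for $\solidcost$ it is the set of \emph{good} embeddings admitting $(u,v)$. Once the costs are in this form, each displayed inequality is an instance of one of two elementary monotonicity facts: (a) a minimum over a smaller feasible set is at least a minimum over a larger one (with the convention $\min\emptyset=\infty$); and (b) if $f\ge g$ pointwise on a family of paths, then $\min f\ge\min g$ over that family.

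Non-negativity of every cost is then immediate, since each cost is a minimum of cardinalities; this also covers $\critcost_\tau(H;u,v)\ge 0$. The inequality $\solidcost_\tau(H;u,v)\ge\critcost_\tau(H;u,v)$ is an instance of (a): the target set for $\solidcost_\tau$ (good embeddings admitting $(u,v)$) is contained in the target set for $\critcost_\tau$ (all embeddings admitting $(u,v)$), while the two are optimized over clean paths with the same counting function. For the two chains $\critcost_{\text{clean}}\ge\critcost_{\text{sep}}\ge\critcost_{\text{P}}$ and the analogous one for $\solidcost$, observe that on any clean path every $\mathrm{P}$ flip is a separation flip and every flip is clean, so path-by-path the number of clean flips dominates the number of separation flips, which dominates the number of $\mathrm{P}$ flips; since for fixed $\critcost$ (resp. $\solidcost$) the three costs minimize these three pointwise-ordered functions over the \emph{same} family of clean paths with the \emph{same} target set, fact (b) yields the chain. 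This disposes of the first three displays.

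For the equivalence, assume $G\cup(u,v)$ is planar, so that $\Emb(G;u,v)\ne\emptyset$ and $\critcost_{\text{clean}}(H;u,v)$ is finite, making the statement non-vacuous. If $H\in\Emb(G;u,v)$ then the trivial length-$0$ path from $H$ to itself already lies in the relevant family and uses no flips, so $\critcost_{\text{clean}}(H;u,v)=0$. Conversely, if $\critcost_{\text{clean}}(H;u,v)=0$ then some clean path from $H$ to a target $H'\in\Emb(G;u,v)$ uses zero clean flips; but on a clean path every flip is clean, so this path has length $0$, and hence $H=H'\in\Emb(G;u,v)$.

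The only part that is not pure bookkeeping — and the reason these properties are isolated here as a lemma to be discharged in Section~\ref{sec:define-costs} rather than proved on the spot — is making sure the eventual definitions really do have the ``minimum over nested families of clean paths'' shape used above: in particular that $\critcost$ and $\solidcost$ for the three values of $\tau$ differ only in which flips are \emph{counted} and not in which paths or targets are \emph{allowed}, and that restricting to good target embeddings does not secretly shrink the set of admissible paths. If, moreover, the definition counts only critical flips (those moving exactly one of $u,v$), the converse direction of the last equivalence will additionally need the structural fact that a clean path using no critical flips between $H$ and an embedding admitting $(u,v)$ forces $H$ itself to admit $(u,v)$ — i.e.\@ that non-critical clean flips preserve insertability of $(u,v)$ — which must be extracted from the analysis of clean flips. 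Granting these alignment facts, the lemma follows exactly as sketched above.
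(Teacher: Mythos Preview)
Your proof rests on a guess about the shape of the definitions in Section~\ref{sec:define-costs} that turns out to be wrong. The costs are \emph{not} defined as a minimum over clean paths to some target set. They are defined as \emph{sums} over a set of auxiliary ``struts'':
\[
\critcost_\tau(H;u,v)=\sum_{(x,y)\in\critstruts(G;u,v)}\flipdist_\tau(H,\Emb(G;x,y)),
\]
and likewise $\solidcost_\tau$ sums over the larger set $\solidstruts(G;u,v)\supseteq\critstruts(G;u,v)$. With this in hand, the paper's proof is: the first inequality is sum-over-superset of nonnegative terms (Property~\ref{it:struts-subset}); the two chains follow from the termwise inequalities $\flipdist_{\text{clean}}\ge\flipdist_{\text{sep}}\ge\flipdist_{\text{P}}$; and the equivalence uses that when $G\cup(u,v)$ is planar one has $\critstruts(G;u,v)=\{(u,v)\}$ (Property~\ref{it:struts-admissible}), so the single remaining term is $\flipdist_{\text{clean}}(H,\Emb(G;u,v))$, which vanishes iff $H\in\Emb(G;u,v)$.

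Your monotonicity instincts are right in spirit --- ``subset of index set'' and ``pointwise domination'' are exactly the two mechanisms in play --- but they are applied to the wrong object (sums of nonnegative terms, not minima over feasible sets). In particular, your argument for $\solidcost_\tau\ge\critcost_\tau$ via containment of target sets is not what is going on: the two costs use the \emph{same} metric $\flipdist_\tau$ but over nested strut sets, so it is a sum-over-subset inequality, not a min-over-superset one. Your hedging paragraph correctly flags the risk, and that risk materializes: the characterization of the costs as distances to target sets (Corollaries~\ref{cor:cost-is-distance} and~\ref{cor:cost-is-distance2}) is a \emph{consequence} of this lemma together with Lemmas~\ref{lem:costs-delta} and~\ref{lem:costs-exist-decreasing-flip}, so assuming that normal form here would be circular.
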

\begin{lemma}\label{lem:costs-delta}
  Let $u,v$ be vertices in a planar graph $G$, let $H\in\Emb(G)$ and
  let $H'\in\Emb(G)$ be the result of a single flip $\sigma$ in $H$. Define
  \begin{align*}
    \Delta\critcost_\tau &:= \critcost_\tau(H'; u,v)-\critcost_\tau(H; u,v)
    \\
    \Delta\solidcost_\tau &:= \solidcost_\tau(H'; u,v)-\solidcost_\tau(H; u,v)
  \end{align*}
  then $\Delta\critcost_\tau \in \set{-1,0,1}$,
  $\Delta\solidcost_\tau \in \set{-1,0,1}$, and
  \begin{align*}
    \Delta\critcost_\tau&\neq 0 &&\implies&  \text{$\sigma$ is a critical flip}& &&\iff& \Delta\solidcost_\tau &= \Delta\critcost_\tau
  \end{align*}
\end{lemma}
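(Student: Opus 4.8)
The plan is to prove Lemma~\ref{lem:costs-delta} directly from the (forthcoming) definitions of $\critcost_\tau$ and $\solidcost_\tau$, which we will arrange to have two equivalent forms. First, as $\flipdist_\tau$-distances to nested targets: $\critcost_\tau(H;u,v)=\flipdist_\tau\bigl(H,\Emb(G;u,v)\bigr)$ and $\solidcost_\tau(H;u,v)=\flipdist_\tau\bigl(H,\EmbSolid(G;u,v)\bigr)$, where $\EmbSolid(G;u,v)\subseteq\Emb(G;u,v)$ is the set of good embeddings that admit $(u,v)$ (and when $G\cup(u,v)$ is non-planar the targets become the analogous closest-approach sets, without affecting anything below). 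Second, as direct counts, along the relevant BC/SPQR-tree path between the blocks of $u$ and $v$, of the currently \emph{unfavorable} type-$\tau$ local configurations: $\critcost_\tau$ counts those whose state bears on whether $(u,v)$ is insertable (the \emph{critical} ones), and $\solidcost_\tau=\critcost_\tau+\offcost_\tau$ with $\offcost_\tau\ge0$ additionally counting the \emph{off-critical} ones that a good embedding pins down favorably. The distance form makes the metric estimates clean, and the counting form makes the ``at most one change per flip'' phenomenon transparent: a single flip --- clean or dirty --- edits only a contiguous block of this configuration around at most two vertices, so it toggles at most one counted entry; hence $\Delta\critcost_\tau,\Delta\solidcost_\tau\in\set{-1,0,1}$ (equivalently, $\flipdist_\tau$ is a pseudometric and $\flipdist_\tau(H,H')\le1$ for a clean flip, so $\abs{\flipdist_\tau(H,S)-\flipdist_\tau(H',S)}\le1$ for each target $S$).

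For ``$\Delta\critcost_\tau\neq0$ implies $\sigma$ is critical'' I prove the contrapositive. A non-critical flip reflects-or-slides a subgraph $X$ with either $\set{u,v}\subseteq X$ or $\set{u,v}\cap X=\emptyset$: in the first case $\sigma$ moves a piece carrying both endpoints rigidly, leaving their mutual arrangement unchanged; in the second case the articulation point or separation pair realizing $\sigma$ separates $X$ from both $u$ and $v$, so the internal state of $X$ cannot affect insertability of $(u,v)$. Either way $\sigma$ edits only off-critical configuration entries, none counted by $\critcost_\tau$, so $\critcost_\tau(H';u,v)=\critcost_\tau(H;u,v)$. Equivalently in the distance form: $\Emb(G;u,v)$ is closed under non-critical flips, and $\sigma$ commutes with (can be pushed past) the flips of a shortest clean path witnessing $\critcost_\tau(H;u,v)$, yielding a clean path of the same type-$\tau$ length from $H'$, and symmetrically.

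Finally, for ``$\sigma$ critical $\iff\Delta\solidcost_\tau=\Delta\critcost_\tau$'', set $\offcost_\tau=\solidcost_\tau-\critcost_\tau\ge0$ (nonnegative by Lemma~\ref{lem:costs-nonneg}); the claim is $\Delta\offcost_\tau=0\iff\sigma$ critical. The forward direction is immediate from the counting form --- a critical flip toggles only a critical entry, which $\offcost_\tau$ ignores. The converse --- every non-critical $\sigma$ has $\Delta\offcost_\tau\neq0$, hence (by the previous paragraph, where $\Delta\critcost_\tau=0$) $\Delta\solidcost_\tau=\pm1\neq0=\Delta\critcost_\tau$ --- is the crux, and it forces the Section~\ref{sec:define-costs} definitions of $\EmbSolid(G;u,v)$, of ``favorable'' at each internal-on-heavy-path articulation point and separation pair, and of the family of considered flips, to be calibrated so that the non-critical flips are \emph{exactly} the off-critical moves, each toggling one counted off-critical entry between favorable and unfavorable. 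I expect this calibration --- ensuring no non-critical flip is ``wasted'', while it meshes with the bookkeeping separating clean from dirty, P from SR, and articulation from separation flips --- to be the main obstacle; given it, the metric and conjugation arguments for the other three claims are routine.
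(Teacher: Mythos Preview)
Your approach is close to the paper's in spirit --- the paper defines $\critcost_\tau$ and $\solidcost_\tau$ as sums $\sum_{(x,y)}\flipdist_\tau(H,\Emb(G;x,y))$ over nested families $\critstruts(G;u,v)\subseteq\solidstruts(G;u,v)$ of struts (essentially your ``counting form''), and derives Lemma~\ref{lem:costs-delta} in one line from the properties that a single clean flip changes at most one such term (property~\ref{it:struts-independent}) and that the critical struts are a subset of the solid ones (property~\ref{it:struts-subset}). Your distance-to-target formulation is then recovered as a consequence (Corollaries~\ref{cor:cost-is-distance} and~\ref{cor:cost-is-distance2}), not taken as the definition.

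But you have misread the displayed chain. The converse you flag as the ``crux'' --- that every non-critical flip has $\Delta\offcost_\tau\neq0$ --- is \emph{false}, and no calibration of the definitions can make it true. A concrete counterexample in the paper's own framework: let $X$ be an R-node leaf of some block's SPQR tree, off the $u,v$-critical path, alone on its solid path $\beta$ (so $\struts(\beta)=\emptyset$ by the single-node case in Section~\ref{sec:struts-biconn}), with a non-P parent. The SR flip that reflects $X$ is non-critical and touches no strut at all, so $\Delta\solidcost_\tau=\Delta\critcost_\tau=0$. Even more bluntly, a pure-reflect articulation flip never changes which edges are insertable across any face and hence never changes any term in either sum. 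The display should be read either as the chain $\Delta\critcost_\tau\neq0\implies\sigma\text{ critical}\implies\Delta\solidcost_\tau=\Delta\critcost_\tau$, or with the biconditional scoped under the hypothesis $\Delta\critcost_\tau\neq0$ (under which both sides trivially hold); only these left-to-right implications are ever used downstream (Theorem~\ref{thm:lazygreedy} and Corollaries~\ref{cor:cost-is-distance}--\ref{cor:cost-is-distance2}). Once you drop the spurious direction, your remaining arguments --- that a flip toggles at most one counted entry, that non-critical flips leave critical entries alone, and that critical flips leave off-critical entries alone --- are precisely what properties~\ref{it:struts-independent} and~\ref{it:struts-subset} encode, and they suffice.
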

\begin{lemma}\label{lem:costs-exist-decreasing-flip}
  Let $u,v$ be vertices in a planar graph $G$, and let $H\in\Emb(G)$.
  \begin{itemize}
  \item If $\critcost_\tau(H; u,v)>0$ then there exists a clean flip
    in $H$ such that the resulting $H'\in\Emb(G)$ has
    $\critcost_\tau(H'; u,v)<\critcost_\tau(H; u,v)$.
  \item If $\solidcost_\tau(H; u,v)>0$ then there exists a clean flip
    in $H$ such that the resulting $H'\in\Emb(G)$ has
    $\solidcost_\tau(H'; u,v)<\solidcost_\tau(H; u,v)$.
  \end{itemize}
\end{lemma}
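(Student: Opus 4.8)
The plan is to prove both bullets from the structural description of $\critcost_\tau$ and $\solidcost_\tau$ given in Section~\ref{sec:define-costs}. Both quantities will have the shape ``the minimum number of $\tau$-flips on a clean path from $H$ to a target set $T\subseteq\Emb(G)$'', where $T=\Emb(G;u,v)$ for $\critcost$ and $T$ is the set of \emph{good} embeddings that admit $(u,v)$ for $\solidcost$. So fix an optimal clean path $\pi\colon H=H_0\to H_1\to\dots\to H_k$ to $T$ witnessing the cost $m>0$, and let $\sigma_i$ be the flip from $H_{i-1}$ to $H_i$.

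For $\tau=\text{clean}$ the cost is simply $\flipdist_{\text{clean}}(H,T)=k$, and since $\pi$ is a shortest path in the graph $\Emb(G)$ we have $\flipdist_{\text{clean}}(H_1,T)=k-1$; thus $\sigma_1$ is already the desired clean flip. For $\tau\in\{\text{sep},\text{P}\}$ the path $\pi$ may begin with non-$\tau$ flips, and by Lemma~\ref{lem:costs-delta} a non-critical flip does not change the cost, so we cannot simply take $\sigma_1$. Here the key step is an exchange argument: let $\sigma_j$ be the first $\tau$-flip on $\pi$ (which exists because $m>0$); using the structural characterization that the $\tau$-flips needed to accommodate $(u,v)$ act on a chain of separators in the BC/SPQR forest determined by $u$ and $v$, together with the fact that flips at disjoint or nested separators commute, one argues that $\sigma_j$ can be applied directly at $H$, yielding some $H'$, and that $\sigma_1,\dots,\sigma_{j-1},\sigma_{j+1},\dots,\sigma_k$ then take $H'$ to $T$ with only $m-1$ flips of type $\tau$. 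Hence $\critcost_\tau(H';u,v)\le m-1$, and since the change is nonzero, Lemma~\ref{lem:costs-delta} certifies that this flip is critical and clean and that $\Delta\critcost_\tau=-1$. This proves the first bullet; moreover, since the flip is critical, Lemma~\ref{lem:costs-delta} gives $\Delta\solidcost_\tau=\Delta\critcost_\tau=-1$, so the same flip proves the second bullet whenever $\critcost_\tau(H;u,v)>0$.

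It remains to treat the second bullet when $\critcost_\tau(H;u,v)=0<\solidcost_\tau(H;u,v)$, i.e.\ when $H$ fails to be good (in the $\tau$-relevant sense) although it is already as close to admitting $(u,v)$ as $\tau$-flips can bring it. By the definition of good embeddings (Section~\ref{sec:define-costs}, cf.\ the ``good embeddings'' paragraph above), some articulation point or separation pair $p$ that is internal on a heavy path of the balanced BC/SPQR decomposition is embedded unfavorably with respect to the endpoints of that heavy path and requires a $\tau$-flip to fix. Performing that flip at $p$ is clean, it is non-critical for $u,v$ (so $(u,v)$ stays insertable and $\critcost_\tau$ remains $0$, consistent with $\Delta\critcost_\tau\ge 0$), and because the favorability of distinct internal separators is decided independently, it strictly decreases the number of unfavorable ones, hence $\solidcost_\tau$.

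The step I expect to be the main obstacle is the exchange/reordering argument for $\tau\in\{\text{sep},\text{P}\}$: showing that a $\tau$-flip can always be pulled to the front of an optimal clean path without increasing the $\tau$-flip count, and that the flip so obtained is itself clean (the ``clean'' condition constrains the endpoints of the reversed subsequences to be biconnected, which must be verified for the witness). A secondary difficulty is checking, in the case $\critcost_\tau=0<\solidcost_\tau$, that a single ``make-$p$-favorable'' flip neither destroys insertability of $(u,v)$ nor creates a new unfavorable internal separator elsewhere, so that the potential genuinely drops.
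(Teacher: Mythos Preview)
Your plan rests on a wrong guess about the definitions. In Section~\ref{sec:define-costs} the costs are \emph{not} defined as distances from $H$ to a target set; they are sums over a set of auxiliary struts,
\[
\critcost_\tau(H;u,v)=\sum_{(x,y)\in\critstruts(G;u,v)}\flipdist_\tau(H,\Emb(G;x,y)),
\]
and likewise $\solidcost_\tau$ sums over the larger set $\solidstruts(G;u,v)\supseteq\critstruts(G;u,v)$. The identity $\critcost_\tau(H;u,v)=\flipdist_\tau(H,\Emb(G;u,v))$ that you take as your starting point is Corollary~\ref{cor:cost-is-distance} (and the analogous statement for $\solidcost$ is Corollary~\ref{cor:cost-is-distance2}); both are \emph{derived from} the present lemma, so assuming them here is circular.

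With the actual definition the paper's proof is two lines: if the sum is positive, some individual term $\flipdist_\tau(H,\Emb(G;x,y))$ is positive, hence there is a clean flip decreasing that term; by Property~\ref{it:struts-independent} at most one term of the sum can change under any single flip, so the whole sum drops. This handles both bullets uniformly --- there is no case split on whether $\critcost_\tau$ is already zero, and no need to verify that fixing one separator does not ``create a new unfavorable internal separator elsewhere'', because independence of the strut terms is precisely what Property~\ref{it:struts-independent} provides. The global reordering/exchange argument you flag as the main obstacle is an artifact of the wrong definition; to the extent any such reasoning is needed at all, it is only for a single term $\flipdist_\tau(H,\Emb(G;x,y))$ with one fixed admissible strut, which is much more local than the version you sketched.
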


\noindent{}The main motivation for defining $\critcost$ comes from the following
\begin{corollary}\label{cor:cost-is-distance}
  Let $G$ be a planar graph, let $u,v$ be vertices in $G$ such that
  $G\cup(u,v)$ is planar, and let $H\in\Emb(G)$. Then
  \begin{align*}
    \flipdist_\tau(H, \Emb(G; u,v)) &= \critcost_\tau(H; u,v).
  \end{align*}
\end{corollary}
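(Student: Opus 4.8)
The plan is to establish the two inequalities $\flipdist_\tau(H,\Emb(G;u,v))\le\critcost_\tau(H;u,v)$ and $\flipdist_\tau(H,\Emb(G;u,v))\ge\critcost_\tau(H;u,v)$ for each $\tau\in\set{\text{clean},\text{sep},\text{P}}$, treating Lemmas~\ref{lem:costs-nonneg}, \ref{lem:costs-delta}, and~\ref{lem:costs-exist-decreasing-flip} as black boxes. Note that, since $\flipdist_{\text{sep}}$ and $\flipdist_{\text{P}}$ count only the separation flips, respectively the P flips, along a clean path, both directions must keep track of \emph{which type} each step is, not merely of how many steps there are; the only input from the concrete definitions that one needs beyond the three lemmas is that a flip which changes $\critcost_\tau$ is itself a flip of type $\tau$ (an articulation flip cannot change the number of separation or P flips required, and an SR flip cannot change the number of P flips required).

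For the upper bound, start at $H$ and descend greedily. While $\critcost_{\text{P}}$ of the current embedding is positive, Lemma~\ref{lem:costs-exist-decreasing-flip} supplies a clean flip that strictly decreases it, and by Lemma~\ref{lem:costs-delta} the decrease is exactly $1$ and the flip is critical; by the type remark this flip is a P flip. Once $\critcost_{\text{P}}=0$, do the same for $\critcost_{\text{sep}}$ (the flips produced are separation flips, hence leave $\critcost_{\text{P}}$ at $0$), and finally for $\critcost_{\text{clean}}$ (the remaining flips are articulation flips, leaving $\critcost_{\text{sep}},\critcost_{\text{P}}$ at $0$). When $\critcost_{\text{clean}}$ of the current embedding reaches $0$, Lemma~\ref{lem:costs-nonneg} --- here using that $G\cup(u,v)$ is planar --- tells us the current embedding lies in $\Emb(G;u,v)$, so the process terminates. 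The resulting clean path uses exactly $\critcost_{\text{P}}(H;u,v)$ P flips, exactly $\critcost_{\text{sep}}(H;u,v)$ separation flips, and has total length $\critcost_{\text{clean}}(H;u,v)$, which yields all three upper bounds simultaneously.

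For the lower bound, fix any $H'\in\Emb(G;u,v)$ and any clean path $H=H_0,H_1,\dots,H_k=H'$ in $\Emb(G)$. By Lemma~\ref{lem:costs-nonneg}, $\critcost_{\text{clean}}(H';u,v)=0$, and hence $\critcost_\tau(H';u,v)=0$ for every $\tau$ by the monotonicity in that lemma. By Lemma~\ref{lem:costs-delta}, each step changes $\critcost_\tau$ by at most $1$, and a step with $\Delta\critcost_\tau\neq 0$ is of type $\tau$ by the type remark; telescoping $\critcost_\tau$ from $\critcost_\tau(H;u,v)$ down to $0$ along the path, it therefore contains at least $\critcost_\tau(H;u,v)$ flips of type $\tau$, i.e. $\flipdist_\tau(H,H')\ge\critcost_\tau(H;u,v)$. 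Taking the minimum over $H'\in\Emb(G;u,v)$ (a finite set) gives $\flipdist_\tau(H,\Emb(G;u,v))\ge\critcost_\tau(H;u,v)$, and together with the upper bound this proves the corollary.

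The step I expect to be the real obstacle is the ``type remark'' used above: that changing $\critcost_{\text{sep}}$ forces a separation flip and changing $\critcost_{\text{P}}$ forces a P flip. This is intuitively forced by the way articulation points, P nodes, and S/R nodes sit on the path between $u$ and $v$ in the BC/SPQR structure, but it is not literally one of the three stated lemmas, so an honest proof must either read it off the explicit definitions of $\critcost_\tau$ given in Section~\ref{sec:define-costs} or fold it into a slightly strengthened form of Lemma~\ref{lem:costs-delta}. Everything else is a short greedy-descent / telescoping argument.
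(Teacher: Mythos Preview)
Your approach is the same as the paper's two-line proof (Lemmas~\ref{lem:costs-nonneg} and~\ref{lem:costs-delta} for ``$\geq$''; Lemma~\ref{lem:costs-exist-decreasing-flip} for ``$\leq$''), with the type bookkeeping made explicit. You correctly flag the ``type remark'' as the missing ingredient for $\tau\in\{\text{sep},\text{P}\}$; it is stated later in the paper as Lemma~\ref{lem:fliptypes-unchanged}, and the paper's proof of the corollary simply does not spell this out.

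One slip in your three-phase upper bound: you write that the phase-2 flips ``are separation flips, hence leave $\critcost_{\text{P}}$ at $0$,'' but P~flips \emph{are} separation flips, and Lemma~\ref{lem:fliptypes-unchanged} only says that SR and articulation flips fix $\critcost_{\text{P}}$. What you actually need is the stronger fact that when $\critcost_{\text{P}}=0$, a flip decreasing $\critcost_{\text{sep}}$ must be an SR~flip (and analogously, when $\critcost_{\text{sep}}=0$, a flip decreasing $\critcost_{\text{clean}}$ must be an articulation flip). That does not follow from Lemmas~\ref{lem:costs-nonneg}--\ref{lem:costs-exist-decreasing-flip} together with Lemma~\ref{lem:fliptypes-unchanged}. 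It \emph{does} follow once the concrete definition in Section~\ref{sec:define-costs} is opened: there, whenever $G\cup(u,v)$ is planar one has $\critstruts(G;u,v)=\{(u,v)\}$, so $\critcost_\tau(H;u,v)=\flipdist_\tau(H,\Emb(G;u,v))$ \emph{by definition}, and the corollary is a tautology with no phased descent needed. In short, the purely abstract derivation from the three lemmas is airtight only for $\tau=\text{clean}$, both in your write-up and in the paper's; your diagnosis that the real content lives in Section~\ref{sec:define-costs} is exactly right.
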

\begin{proof}
  ``$\geq$'' follows from Lemmas~\ref{lem:costs-nonneg}
  and~\ref{lem:costs-delta}, and ``$\leq$'' follows from
  Lemma~\ref{lem:costs-exist-decreasing-flip}.
\end{proof}

With these properties in hand, we can now redefine what we mean by a \emph{good} embedding in a quantifiable way:
\begin{definition}
  Given a planar graph $G$ with vertices $u,v$, the \emph{good}
  embeddings of $G$ with respect to $u,v$ is the set
  \begin{align*}
    \EmbGood(G; u,v) &:= \set{
      H\in\Emb(G)\suchthat\solidcost_{\text{clean}}(H; u,v)=0
    }
    \intertext{And the set of all good embeddings of $G$ is}
    \EmbGood(G) &:= \bigcup_{u,v}\EmbGood(G; u,v) = \set{
      H\in\Emb(G)\suchthat\min_{u,v}\solidcost_{\text{clean}}(H; u,v)=0
    }
  \end{align*}
\end{definition}
Note that this definition is not the same as in~\cite{HR20}, although the underlying ideas are the same.

With this definition, we get
\begin{corollary}\label{cor:cost-is-distance2}
  Let $G$ be a planar graph, let $u,v$ be vertices in $G$, and let
  $H\in\Emb(G)$. Then
  \begin{align*}
    \flipdist_\tau(H, \EmbGood(G; u,v)) &= \solidcost_\tau(H; u,v)
    \\
    \flipdist_\tau(H, \EmbGood(G)) &= \min_{u,v}\solidcost_\tau(H; u,v)
  \end{align*}
\end{corollary}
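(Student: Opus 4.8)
The plan is to copy the proof of Corollary~\ref{cor:cost-is-distance} almost verbatim, with $\solidcost$ in place of $\critcost$ and $\EmbGood(G;u,v)$ in place of $\Emb(G;u,v)$, and then read the second identity off the first. I would first observe that if $G\cup(u,v)$ is not planar then every embedding of $G$ has $\solidcost_{\text{clean}}(\cdot;u,v)=\infty$, so $\EmbGood(G;u,v)=\emptyset$ and both sides of the first identity are $+\infty$ (a minimum over an empty set); so assume $G\cup(u,v)$ is planar. For the direction $\flipdist_\tau(H,\EmbGood(G;u,v))\ge\solidcost_\tau(H;u,v)$, choose $H'\in\EmbGood(G;u,v)$ minimizing $\flipdist_\tau(H,H')$ and fix a clean path $H=H_0,\dots,H_m=H'$ realizing this value (for $\tau\in\set{\text{sep},\text{P}}$, a clean path with the fewest type-$\tau$ flips; for $\tau=\text{clean}$, a shortest clean path). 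By definition $\solidcost_{\text{clean}}(H';u,v)=0$, hence $\solidcost_\tau(H';u,v)=0$ by Lemma~\ref{lem:costs-nonneg}. Walking the path, Lemma~\ref{lem:costs-delta} says each flip changes $\solidcost_\tau$ by at most $1$, and a flip not of type $\tau$ leaves $\solidcost_\tau$ unchanged (vacuous for $\tau=\text{clean}$; for $\tau\in\set{\text{sep},\text{P}}$ this is part of what it means for $\solidcost_\tau$ to count only type-$\tau$ flips, established in Section~\ref{sec:define-costs}). Telescoping, $\solidcost_\tau(H;u,v)=\solidcost_\tau(H;u,v)-\solidcost_\tau(H';u,v)$ is at most the number of type-$\tau$ flips on the path, which is $\flipdist_\tau(H,H')=\flipdist_\tau(H,\EmbGood(G;u,v))$.

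For the reverse inequality I would iterate Lemma~\ref{lem:costs-exist-decreasing-flip}. When $\tau=\text{clean}$: as long as the current embedding $\widehat H\notin\EmbGood(G;u,v)$, i.e.\ $\solidcost_{\text{clean}}(\widehat H;u,v)>0$, there is a clean flip strictly decreasing $\solidcost_{\text{clean}}$, hence (Lemma~\ref{lem:costs-delta}) decreasing it by exactly $1$; after $\solidcost_{\text{clean}}(H;u,v)$ flips we reach $\EmbGood(G;u,v)$ along a clean path, so $\flipdist_{\text{clean}}(H,\EmbGood(G;u,v))\le\solidcost_{\text{clean}}(H;u,v)$. When $\tau\in\set{\text{sep},\text{P}}$ we must additionally guarantee that exactly $\solidcost_\tau(H;u,v)$ of the flips on this path are of type $\tau$: whenever $\solidcost_\tau(\widehat H;u,v)>0$ we take the cost-decreasing flip to be of type $\tau$ (it then necessarily decreases $\solidcost_\tau$ by $1$), while once $\solidcost_\tau$ has reached $0$ the remaining flips needed to drive $\solidcost_{\text{clean}}$ down to $0$ are chosen not of type $\tau$, so $\solidcost_\tau$ stays at $0$. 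That such choices are always available — essentially a strengthening of Lemma~\ref{lem:costs-exist-decreasing-flip} asserting that one can decrease $\solidcost_{\text{clean}}$ without increasing $\solidcost_{\text{sep}}$ or $\solidcost_{\text{P}}$, and can decrease $\solidcost_\tau$ by a type-$\tau$ flip whenever $\solidcost_\tau>0$ — is exactly the content of the detailed cost definitions in Section~\ref{sec:define-costs}, and is the \emph{main obstacle} here (as it already is for Corollary~\ref{cor:cost-is-distance}).

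Finally the second identity is immediate from the first: since $\EmbGood(G)=\bigcup_{u,v}\EmbGood(G;u,v)$ and the distance to a set is the minimum of the distances to its points, $\flipdist_\tau(H,\EmbGood(G))=\min_{u,v}\flipdist_\tau(H,\EmbGood(G;u,v))=\min_{u,v}\solidcost_\tau(H;u,v)$, where in the infeasible case a pair $u,v$ simply contributes $\EmbGood(G;u,v)=\emptyset$ and $\solidcost_\tau(H;u,v)=\infty$ on both sides.
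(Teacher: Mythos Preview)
Your overall approach mirrors the paper's (terse) proof: ``$\geq$'' from Lemmas~\ref{lem:costs-nonneg} and~\ref{lem:costs-delta} (plus Lemma~\ref{lem:fliptypes-unchanged} for $\tau\in\set{\text{sep},\text{P}}$), ``$\leq$'' from Lemma~\ref{lem:costs-exist-decreasing-flip}, and the second identity from $\EmbGood(G)=\bigcup_{u,v}\EmbGood(G;u,v)$. That part is fine, and your flagging of the extra care needed for $\tau\in\set{\text{sep},\text{P}}$ is honest.

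There is, however, a genuine error in your case split. You claim that if $G\cup(u,v)$ is nonplanar then $\solidcost_{\text{clean}}(\cdot;u,v)=\infty$ and $\EmbGood(G;u,v)=\emptyset$. This is false. Unlike Corollary~\ref{cor:cost-is-distance}, which explicitly assumes $G\cup(u,v)$ planar because $\Emb(G;u,v)$ can be empty, Corollary~\ref{cor:cost-is-distance2} has no such hypothesis precisely because $\EmbGood(G;u,v)$ is \emph{always} nonempty. By Property~\ref{it:struts-planar}, $G\cup\solidstruts(G;u,v)$ is planar regardless of whether $G\cup(u,v)$ is; every strut $(x,y)$ therefore has $\Emb(G;x,y)\neq\emptyset$, each term $\flipdist_\tau(H,\Emb(G;x,y))$ is finite, and $\solidcost_\tau(H;u,v)$ is a finite sum of finite numbers. (Indeed, when $G\cup(u,v)$ is nonplanar, $(u,v)$ itself is not among the struts; see Properties~\ref{it:struts-admissible} and~\ref{it:struts-nonadmissible}.) Lemma~\ref{lem:costs-exist-decreasing-flip} then shows $\EmbGood(G;u,v)\neq\emptyset$. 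You repeat the same mistake in your last sentence about the second identity. The fix is simply to drop the case split: your argument ``for the planar case'' works verbatim for all pairs $u,v$, and nothing in it ever uses planarity of $G\cup(u,v)$.
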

\begin{proof}
  ``$\geq$'' follows from Lemmas~\ref{lem:costs-nonneg}
  and~\ref{lem:costs-delta}, and ``$\leq$'' follows from
  Lemma~\ref{lem:costs-exist-decreasing-flip}.
\end{proof}

The reason we call these embeddings good is the following property
\begin{lemma}\label{lem:embgood-logn}
  Given a planar graph $G$ with vertices $u,v$ and any
  $H\in\EmbGood(G)$, then
  \begin{align*}
    \flipdist_{\text{clean}}(H, \EmbGood(G; u,v)) &\in \OO(\log n)
  \end{align*}
\end{lemma}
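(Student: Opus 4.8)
The plan is to reduce the statement, via Corollary~\ref{cor:cost-is-distance2}, to bounding $\solidcost_{\text{clean}}(H;u,v)$ for $H\in\EmbGood(G)$; and for that it suffices to exhibit, for each pair $u,v$ with $G\cup(u,v)$ planar, a single \emph{clean} path of length $\OO(\log n)$ in $\Emb(G)$ from $H$ to some $H'$ with $\solidcost_{\text{clean}}(H';u,v)=0$, i.e.\ $H'\in\EmbGood(G;u,v)$. (When $G\cup(u,v)$ is not planar there is nothing to prove.) Everything rests on the concrete description of $\solidcost_{\text{clean}}$ that Section~\ref{sec:define-costs} will provide, adapting the notion of \cite{HR20}: this cost is read off from the balanced combined heavy-path decomposition of the BC trees and the per-block SPQR trees of $G$ (built in the style of \cite{DBLP:journals/siamcomp/BentST85}, with the heavy paths reflecting the weights of all hanging blocks), so that $\solidcost_{\text{clean}}(K;x,y)=0$ holds exactly when $(x,y)$ is insertable in $K$ and, at every articulation point and every separation pair ($S$- or $P$-node) interior to a heavy path, the local rotation of $K$ is \emph{favourable} — the (possibly virtual) edges leading to the two ends of that heavy path are consecutive in the rotation — except at the $\OO(1)$ ``turning'' nodes lying on each heavy path that the $x$--$y$ tree path enters or leaves at an interior vertex, where the rotation must instead favour the direction of that path.

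The construction I would carry out is the following. First, use $H\in\EmbGood(G)$ to fix a witnessing pair $u_0,v_0$ with $\solidcost_{\text{clean}}(H;u_0,v_0)=0$, so that $H$ is already favourable at every heavy-path-interior node except at the turning nodes of the $u_0$--$v_0$ tree path. Next, look at the tree path $\rho$ between the nodes carrying $u$ and $v$ in the combined decomposition, and split it into maximal segments each lying inside a single heavy path. By balance of the decomposition (every root-to-node path has $\OO(\log n)$ light edges, and the combined tree has size $\OO(n)$), $\rho$ meets only $\OO(\log n)$ heavy paths, and on all but $\OO(\log n)$ of them $\rho$ runs straight from one end of the heavy path to the other. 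The two structural facts I would then invoke — and which I expect to be the real work, done in Section~\ref{sec:define-costs} — are: \emph{(i)} on a \emph{pass-through} heavy path the favourable-for-the-heavy-path rotation is already exactly the rotation that routes $(u,v)$ through it, so those heavy paths (of which $G$ may have $\Theta(n)$) cost no flips; and \emph{(ii)} changing the rotation at a single turning node, from favouring one pair of edges to favouring another, is achievable by $\OO(1)$ clean flips local to that node — an articulation flip at a $C$-node, or a clean SR- or P-flip at an $S$- or $P$-node — and, since in the pre-split trees of Section~\ref{par:presplit} each node is interior to a unique heavy path, all such flips act at pairwise distinct nodes and may be applied in any order without interfering.

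Putting this together: I spend $\OO(1)$ clean flips at each of the $\OO(\log n)$ turning nodes of $\rho$ to make the rotation favour $\rho$'s direction, and $\OO(1)$ clean flips at each of the $\OO(\log n)$ turning nodes of $u_0,v_0$ that are not turning nodes of $\rho$ to restore the favourable-for-the-heavy-path rotation there; this is $\OO(\log n)$ clean flips in total, yielding an embedding $H'$ that is favourable at every heavy-path-interior node off $\rho$, favours $\rho$'s direction at $\rho$'s turning nodes, and admits $(u,v)$, hence has $\solidcost_{\text{clean}}(H';u,v)=0$. Therefore $\flipdist_{\text{clean}}(H,\EmbGood(G;u,v))\le\flipdist_{\text{clean}}(H,H')=\OO(\log n)$. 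I expect the main obstacle to be precisely fact~(i) together with the fiddly part of~(ii): verifying that a pass-through heavy path genuinely requires no adjustment is where the careful definition of ``favourable'' earns its keep, and the awkward case of~(ii) is the interface between the BC-tree level and the SPQR-tree levels when $u$ and $v$ lie in different blocks, where both an articulation flip and a separation flip may be needed and one must check they stay clean and local after pre-splitting; the cross-component case (no single tree path) is handled by running the argument in each component towards its root and joining through $\OO(1)$ top-level articulation flips.
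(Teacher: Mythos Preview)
Your overall strategy---reduce via Corollary~\ref{cor:cost-is-distance2} to bounding $\solidcost_{\text{clean}}(H;u,v)$, then use that the heavy-path decompositions for $(u_0,v_0)$ and for $(u,v)$ differ only at $\OO(\log n)$ ``turning'' locations---is exactly the paper's approach. The paper packages this as Property~\ref{it:struts-logn} of the strut sets: going from the pre-split BC/SPQR trees with respect to $(u_0,v_0)$ to those with respect to $(u,v)$ takes $\OO(\log n)$ elementary solid/dashed changes (citing~\cite{HR20}), each affecting at most two solid paths and hence at most two struts, so $\abs{\solidcost_{\text{clean}}(H;u,v)-\solidcost_{\text{clean}}(H;u_0,v_0)}\in\OO(\log n)$; then Lemma~\ref{lem:costs-exist-decreasing-flip} supplies the $\OO(\log n)$ flips. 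Your direct geometric description of pass-through paths and turning nodes is the same argument unpacked.

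There is one genuine gap. Your parenthetical ``when $G\cup(u,v)$ is not planar there is nothing to prove'' is wrong: the lemma is stated for arbitrary $u,v$, and $\EmbGood(G;u,v)$ is defined via $\solidcost_{\text{clean}}(\cdot;u,v)=0$, which by Property~\ref{it:struts-planar} is nonempty even when $(u,v)$ itself is not insertable. In that case the critical struts are not $\set{(u,v)}$ but the more elaborate set from Property~\ref{it:struts-nonadmissible} (per-block struts plus maximal planar-subpath struts), and your informal characterization of $\solidcost_{\text{clean}}=0$ as ``$(x,y)$ insertable plus favourable rotations'' does not cover this. The paper's strut-based argument handles both cases uniformly because it never looks at $(u,v)$ directly, only at how the \emph{solid-path structure} changes; your write-up would need to drop the planarity assumption and argue at that level of abstraction.

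A smaller imprecision: your description of $\solidcost_{\text{clean}}=0$ omits the role of $R$-nodes (in particular cross $R$-nodes, which break a solid SPQR path into several struts). This does not affect the $\OO(\log n)$ count, but it means your ``favourable at every $S$- or $P$-node'' characterization is not literally equivalent to the paper's definition.
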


\begin{corollary}\label{cor:embgood-sandwich}
  Given a planar graph $G$ with vertices $u,v$ and any
  $H\in\Emb(G)$, then
  \begin{align*}
    \flipdist_\tau(H, \EmbGood(G))
    \leq 
    \flipdist_\tau(H, \EmbGood(G; u,v))
    \leq
    \flipdist_\tau(H, \EmbGood(G)) + \OO(\log n)
  \end{align*}
\end{corollary}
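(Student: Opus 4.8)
The plan is to treat the two inequalities separately; both follow quickly from facts already in hand, the quantitative content being carried entirely by Lemma~\ref{lem:embgood-logn}.

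For the left inequality I would just note that $\EmbGood(G;u,v)\subseteq\EmbGood(G)$, immediately from the definition $\EmbGood(G)=\bigcup_{u',v'}\EmbGood(G;u',v')$. Since $\flipdist_\tau(H,S)=\min_{H'\in S}\flipdist_\tau(H,H')$ is non-increasing under enlarging $S$, this gives $\flipdist_\tau(H,\EmbGood(G))\le\flipdist_\tau(H,\EmbGood(G;u,v))$ with no further work. (Equivalently, by Corollary~\ref{cor:cost-is-distance2} this is just $\min_{u',v'}\solidcost_\tau(H;u',v')\le\solidcost_\tau(H;u,v)$.)

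For the right inequality I would choose a witness $H^\star\in\EmbGood(G)$ attaining $\flipdist_\tau(H,H^\star)=\flipdist_\tau(H,\EmbGood(G))$; this minimum exists because $\Emb(G)$ is finite. Applying Lemma~\ref{lem:embgood-logn} with $H^\star$ (which lies in $\EmbGood(G)$) in place of $H$ yields $\flipdist_{\text{clean}}(H^\star,\EmbGood(G;u,v))\in\OO(\log n)$, so there is some $H'\in\EmbGood(G;u,v)$ with $\flipdist_{\text{clean}}(H^\star,H')\in\OO(\log n)$. A clean path is in particular a path, so it contains at most (its length) flips of type $\tau$; hence $\flipdist_\tau(H^\star,H')\le\flipdist_{\text{clean}}(H^\star,H')$ (this is also encoded by the last two displayed chains of Lemma~\ref{lem:costs-nonneg} via Corollary~\ref{cor:cost-is-distance2}). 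Finally, since $\flipdist_\tau$ is a metric or pseudometric and thus satisfies the triangle inequality,
\begin{align*}
  \flipdist_\tau(H,\EmbGood(G;u,v))
  \;\le\; \flipdist_\tau(H,H')
  \;\le\; \flipdist_\tau(H,H^\star)+\flipdist_\tau(H^\star,H')
  \;\le\; \flipdist_\tau(H,\EmbGood(G))+\OO(\log n),
\end{align*}
as required.

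I do not expect a genuine obstacle: the statement is a triangle-inequality ``sandwich'' around Lemma~\ref{lem:embgood-logn}. The only points needing a sentence of care are (i) the transfer $\flipdist_\tau\le\flipdist_{\text{clean}}$, so that the $\OO(\log n)$ bound on the clean distance also bounds $\flipdist_\tau$; (ii) that the minima defining $\flipdist_\tau(\cdot,S)$ are attained, which holds because $\Emb(G)$ is finite; and (iii) that each $\flipdist_\tau$ obeys the triangle inequality, which is exactly the preceding Observation that $\flipdist_{\text{clean}}$ is a metric and $\flipdist_{\text{sep}},\flipdist_{\text{P}}$ are pseudometrics.
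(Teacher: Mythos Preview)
Your proof is correct and follows essentially the same route as the paper: the first inequality from the containment $\EmbGood(G;u,v)\subseteq\EmbGood(G)$, and the second by choosing a minimizer in $\EmbGood(G)$, invoking Lemma~\ref{lem:embgood-logn}, passing from $\flipdist_{\text{clean}}$ down to $\flipdist_\tau$, and finishing with the triangle inequality. The paper phrases the $\flipdist_\tau\le\flipdist_{\text{clean}}$ step via Corollary~\ref{cor:cost-is-distance2} and Lemma~\ref{lem:costs-nonneg}, whereas you argue it directly from the definitions; both are fine.
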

\begin{proof}
  The first inequality follows from $\EmbGood(G)\supseteq\EmbGood(G;
  u,v)$. For the second, let $H'\in\EmbGood(G)$ minimize
  $\flipdist_\tau(H,H')$. By Corollary~\ref{cor:cost-is-distance2} and
  Lemmas~\ref{lem:costs-nonneg} and~\ref{lem:embgood-logn},
  \begin{align*}
    \flipdist_\tau(H',\EmbGood(G;u,v)) \leq
    \flipdist_{\text{clean}}(H',\EmbGood(G;u,v)) \in \OO(\log n)
  \end{align*}
  and the result follows by the triangle inequality.
\end{proof}
\begin{lemma}\label{lem:emb-insert}
  Given a planar graph $G$ with vertices $u,v$, such that $G\cup(u,v)$
  is planar, and any $H\in\Emb(G; u,v)$. Then
  \begin{align*}
    \solidcost_\tau(H; u,v) = \solidcost_\tau(H\cup(u,v); u,v)
  \end{align*}
\end{lemma}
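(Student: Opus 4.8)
The plan is to show that adding the edge $(u,v)$ to an embedding $H$ that already admits it changes neither the relevant flip structure nor the cost functions. First I would observe that since $H\in\Emb(G;u,v)$, the vertices $u$ and $v$ share a common face $f$ in $H$, and $H\cup(u,v)$ is simply the embedding obtained by routing the new edge through $f$, splitting $f$ into two faces $f_1,f_2$. The key structural point is that this operation does not create or destroy any articulation points or separation pairs that are relevant to \emph{critical} flips for $u,v$: the new edge $(u,v)$ lies on a cycle together with any tree path between $u$ and $v$, so it cannot itself be a bridge, and no new cut vertex or $2$-cut can separate $u$ from $v$ (any such cut in $G\cup(u,v)$ would already separate $u$ from $v$ in $G$, contradicting that they are on a common face — or more carefully, must be argued via the SPQR/BC structure). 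Hence the set of critical flips for $u,v$, and the clean/sep/P classification of each, is in natural bijection before and after the insertion.

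Next I would make this precise by exhibiting a correspondence between clean paths in $\Emb(G)$ and clean paths in $\Emb(G\cup(u,v))$ that preserves the count of $\tau$-flips, at least for paths starting at $H$ (resp. $H\cup(u,v)$) and ending in $\EmbGood$. Concretely: given a clean flip $\sigma$ applicable to some embedding $K$ of $G$ reachable from $H$, if $\sigma$ is not critical for $u,v$ then $u,v$ remain cofacial and $\sigma$ lifts to a clean flip of the same type on $K\cup(u,v)$; if $\sigma$ is critical, it would separate $u$ from $v$, which is exactly the kind of flip we want to avoid on the path to a good embedding accommodating $(u,v)$, and Lemma~\ref{lem:costs-delta} tells us such flips only increase $\critcost$ (hence are never on a shortest witnessing path). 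In the reverse direction, any flip on an embedding of $G\cup(u,v)$ projects to a flip on $G$ by deleting the edge, of the same type. This bijection, combined with Corollary~\ref{cor:cost-is-distance2} rewriting both sides as $\flipdist_\tau(\cdot,\EmbGood(\cdot;u,v))$, should give the equality: the minimum over witnessing paths is the same on both sides. I would also need that $\EmbGood(G;u,v)$ corresponds to $\EmbGood(G\cup(u,v);u,v)$ under $K\mapsto K\cup(u,v)$, which again follows from $\solidcost_{\text{clean}}$ being defined in terms of the same (corresponded) flip structure and the endpoint condition ``$=0$''.

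The main obstacle I expect is verifying the bijection at the level of the \emph{definitions} of $\critcost_\tau$ and $\solidcost_\tau$ that are only given in Section~\ref{sec:define-costs}, rather than at the level of the axiomatic Lemmas~\ref{lem:costs-nonneg}--\ref{lem:costs-exist-decreasing-flip} stated here; since those costs are defined via the SPQR/BC tree structure of $G$, I must confirm that the SPQR and BC trees of $G$ and of $G\cup(u,v)$ restrict to ``the same'' trees after the portions separating $u$ from $v$ are contracted or merged — in particular that inserting $(u,v)$ can only \emph{merge} tree nodes along the $u$--$v$ path (a path of S-nodes collapsing, cut vertices on the $u$--$v$ path ceasing to be articulation points), and that none of these merged nodes host a critical flip in $H$ precisely because $H$ was already embedded favorably for $(u,v)$. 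Handling the boundary cases — where the shared face $f$ touches $u$ or $v$ multiple times, or where $u=v$ would be degenerate, or where $(u,v)$ is a parallel edge already present — will require some care but should be routine once the SPQR-tree picture is set up. A cleaner alternative, which I would try first, is to prove the inequality ``$\le$'' by lifting an optimal path from $H$ to $\EmbGood(G;u,v)$ (noting it can be taken to avoid critical flips, hence to keep $u,v$ cofacial, hence to lift), and ``$\ge$'' by projecting an optimal path from $H\cup(u,v)$ down via edge deletion; this avoids reopening the Section~\ref{sec:define-costs} definitions entirely and uses only the tools already in hand.
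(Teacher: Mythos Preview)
Your approach diverges from the paper's and, in the form you describe, contains a circularity. The paper's proof is entirely at the level of the strut-sum definition from Section~\ref{sec:define-costs}: by Property~\ref{it:struts-insert}, $\solidstruts(G;u,v)$ and $\solidstruts(G\cup(u,v);u,v)$ differ only in the single strut $(u,v)$, whose term $\flipdist_\tau(H,\Emb(G;u,v))$ vanishes because $H\in\Emb(G;u,v)$; then Property~\ref{it:struts-independent} is invoked to say that for every other strut $(x,y)$ the term $\flipdist_\tau(H,\Emb(G;x,y))$ equals $\flipdist_\tau(H\cup(u,v),\Emb(G\cup(u,v);x,y))$. So the equality of the two sums is immediate.

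Your ``cleaner alternative'' of lifting and projecting optimal paths and appealing to Corollary~\ref{cor:cost-is-distance2} needs, at the end, that the lifted or projected endpoint lands in the correct $\EmbGood$ set: you need $K\in\EmbGood(G;u,v)\iff K\cup(u,v)\in\EmbGood(G\cup(u,v);u,v)$. But unwinding the definition, this is exactly the statement $\solidcost_{\text{clean}}(K;u,v)=0\iff\solidcost_{\text{clean}}(K\cup(u,v);u,v)=0$, i.e.\ the $\tau=\text{clean}$ case of the lemma you are proving. Your remark that this ``follows from $\solidcost_{\text{clean}}$ being defined in terms of the same flip structure'' is not enough: the flip graph correspondence you set up relates embeddings and flips, but $\solidcost$ is defined as a sum over \emph{struts}, and you have not established that the strut sets (or the per-strut distances) agree without going back into Section~\ref{sec:define-costs}. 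Once you do reopen those definitions, the argument collapses to the paper's: use Property~\ref{it:struts-insert} to compare strut sets and Property~\ref{it:struts-independent} (or~\ref{it:struts-planar}) to see that inserting $(u,v)$ does not change admissibility of any other strut. So your path-lifting machinery is a detour around a two-line computation with the strut definition; it does not ``avoid reopening Section~\ref{sec:define-costs}'' as you hoped.
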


Everything so far has been stated in terms of clean flips. Most of our
results do not depend on this, due to the following lemma (Proved in
Section~\ref{sec:struts-properties}).
\begin{lemma}\label{lem:dirty-ok}
  A dirty separation flip corresponds to a clean separation flip and
  at most $4$ articulation flips. At most one of these $5$ flips
  change $\solidcost_\tau$ or $\critcost_\tau$.
\end{lemma}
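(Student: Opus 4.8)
The plan is to handle the two assertions in turn. For the \emph{structural} claim, fix a dirty separation flip $\sigma$ at $s,t$ that flips a subgraph $S$ — attached to the rest of $G$ only at $s$ and $t$ — by reversing a contiguous block $B_s$ of the rotation at $s$ and a block $B_t$ at $t$. Each block has two ends, and ``dirty'' means at least one of these (at most) four end-edges is not biconnected, i.e.\ is a bridge. For each bad end, I take the \emph{maximal} prefix (resp.\ suffix) of $B_s$ (resp.\ $B_t$) consisting of non-biconnected edges; together with the subgraphs pendant on them, this forms a subgraph $P_i$ attached to $G$ at the single vertex $s$ (resp.\ $t$), occupying a contiguous arc of the rotation there. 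Reversing $B_s$ together with the mirroring of $S$ moves each such arc to a different corner at $s$, reversed and mirrored — this is exactly an articulation flip at $s$; symmetrically at $t$. Performing these at most four articulation flips $\alpha_1,\dots,\alpha_4$ (some vacuous) leaves the flip of the \emph{core} $S' = S\setminus(P_1\cup\dots\cup P_4)$, whose blocks at $s,t$ now begin and end with biconnected edges by maximality, hence a clean separation flip $\sigma'$ (in the degenerate case where $S'$ has no edges at $s$ or at $t$, it degenerates to a single further articulation flip, which only helps the count of $5$). I would then verify, by direct bookkeeping of the rotation systems and the induced mirrorings, that for a suitable order and orientation $\sigma=\alpha_{i_1}\circ\cdots\circ\alpha_{i_4}\circ\sigma'$.

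For the \emph{cost} claim, I invoke Lemma~\ref{lem:costs-delta}: each of these five flips changes $\critcost_\tau(\cdot;u,v)$ (and likewise $\solidcost_\tau$, since for a critical flip $\Delta\solidcost_\tau=\Delta\critcost_\tau$, and a non-critical flip changes neither) only if it is \emph{critical}, i.e.\ exactly one of $u,v$ lies in its flipped piece. The five pieces $P_1,\dots,P_4,S'$ are pairwise disjoint apart from $s,t$, with union the interior of $S$, and exactly one of them — $S'$ — is not a bridge-attached pendant. If at most one of $u,v$ lies in $S$, then at most one piece contains exactly one of $u,v$, so at most one of the five flips is critical and the claim is immediate. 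The remaining case is that $u,v$ both lie in $S$ but in distinct pieces $X\ni u$, $Y\ni v$; then $\sigma$ itself is non-critical, so $\Delta\critcost_\tau(\sigma)=0$, and only the flips of $X$ and $Y$ among the five are critical. Since the other three contribute $0$, the two $\Delta\critcost_\tau$-values sum to $0$ along the decomposition path, so it remains to rule out that one strictly increases and the other strictly decreases $\critcost_\tau$; here at most one of $X,Y$ is the core $S'$, the other being a bridge-attached pendant $P_i$.

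The main obstacle is precisely this last step: showing that, for the pendant-moving articulation flips relevant in that case, the move cannot change $\critcost_\tau$ on its own, or at any rate that the two candidate moves cannot have opposite-sign effects. This needs the precise strut-based definitions of $\critcost_\tau$ and $\solidcost_\tau$ from Section~\ref{sec:define-costs}, which is why the proof is deferred to Section~\ref{sec:struts-properties}. The clean way to organise it — and what I expect the deferred proof to do — is to express the effect of a dirty flip on the cost in terms of a single ``strut'', so that the five-flip decomposition merely redistributes that one unit of change onto exactly one of the five flips, matching the bound $\Delta\critcost_\tau(\sigma)\in\{-1,0,1\}$ already available for $\sigma$ as a whole; one then chooses the order of $\alpha_1,\dots,\alpha_4,\sigma'$ so that whatever change occurs happens at a single step. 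The $\solidcost_\tau$ half of the statement then follows from the ``$\Delta\solidcost_\tau=\Delta\critcost_\tau$ for critical flips'' part of Lemma~\ref{lem:costs-delta}.
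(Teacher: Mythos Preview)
Your structural decomposition (peel off maximal non-biconnected prefixes and suffixes at each of $s,t$ as articulation flips, leaving a clean core) is essentially right, and in fact the paper's proof does not argue this part either—it takes the decomposition as evident and spends all its effort on the cost claim.

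For the cost claim, however, the paper's route is quite different from yours and sidesteps your hard case entirely. Rather than casing on where $u,v$ lie among the pieces $P_1,\dots,P_4,S'$ and invoking criticality via Lemma~\ref{lem:costs-delta}, the paper cases on whether the \emph{clean} separation flip $\sigma'$ changes the cost, and exploits the two-level strut construction (BC-level versus SPQR-level) directly. If $\sigma'$ changes cost, then $\sigma'$ must be internal to some solid path $\beta$ in an SPQR tree; the strut endpoints on $\beta$ are then disjoint from $\{s,t\}$, and the paper argues this forces $s,t$ not to be internal to any solid BC path—so the articulation flips at $s,t$ cannot touch any BC-level strut and hence cannot change cost. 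If $\sigma'$ does not change cost, then at most two of the four $\alpha_i$ can involve neighbouring blocks on a solid BC path, and when two do the net effect on whether those blocks share a face is nil, so at most one $\alpha_i$ changes cost. The underlying point is a separation of levels: separation flips act on SPQR-path struts, articulation flips act on BC-path struts, and the pre-split construction prevents the pair $\{s,t\}$ from being simultaneously ``live'' at both levels.

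Your expected resolution—localizing the change to a single strut—is right in spirit, but the mechanism is this two-level structural argument, not a cancellation analysis between two critical component flips or a choice of ordering among the five. In particular, you never need to rule out a $+1/-1$ pair: once you know which level the potential change lives on, the other four flips are automatically inert.
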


\begin{theorem}\label{thm:lazygreedy}
  Let $p,q,r\in\mathbb{N}_0$ be nonnegative integer constants.  Let
  $\mathcal{A}$ be a lazy greedy algorithm for maintaining a planar
  embedding with the following behavior:
  \begin{itemize}
  \item $\mathcal{A}$ does no flips during edge deletion; and
  \item during the attempted insertion of the edge $(u,v)$ into an
    embedded graph $H$, $\mathcal{A}$ only uses critical flips; and
  \item this sequence of flips can be divided into \emph{steps} of at
    most $r$ flips, such that each step (except possibly the last)
    decreases the following potential by at least $1$:
    \begin{align*}
      \critcost_{\text{clean}}(H;u,v) + p\cdot\critcost_{\text{sep}}(H;u,v) +
      q\cdot\critcost_{\text{P}}(H;u,v)
    \end{align*}
  \end{itemize}
  Then, $\mathcal{A}$ uses amortized $\OO(\log n)$ steps per
  attempted edge insertion.
\end{theorem}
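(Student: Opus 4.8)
The plan is a potential-function argument in the spirit of Brodal and Fagerberg's analysis of bounded-outdegree orientation: the potential measures how far the current embedding is from the good embeddings of the current graph. For a planar graph $G$ and $H\in\Emb(G)$ I would use
\[
\Phi_G(H) := \flipdist_{\text{clean}}(H,\EmbGood(G)) + p\cdot\flipdist_{\text{sep}}(H,\EmbGood(G)) + q\cdot\flipdist_{\text{P}}(H,\EmbGood(G)),
\]
which by Corollary~\ref{cor:cost-is-distance2} equals $\min_{u,v}\solidcost_{\text{clean}}(H;u,v)+p\min_{u,v}\solidcost_{\text{sep}}(H;u,v)+q\min_{u,v}\solidcost_{\text{P}}(H;u,v)$. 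This is nonnegative, finite (since $\EmbGood(G)\neq\emptyset$ and $\Emb(G)$ is connected under clean flips), zero on the edgeless graph, and $\OO(n)$ in general by the $\OO(n)$ bound on flip distance. The goal is to show that every attempted insertion costs at most $\Phi_{\mathrm{before}}-\Phi_{\mathrm{after}}+\OO(\log n)$ steps while every deletion has $\Phi_{\mathrm{after}}\le\Phi_{\mathrm{before}}$; summing along the update sequence then telescopes, so the total number of steps is at most $\Phi_{\mathrm{initial}}$ plus $\OO(\log n)$ per attempted insertion, which gives the claim for an empty start ($\Phi_{\mathrm{initial}}=0$) and, using $\Phi_{\mathrm{initial}}=\OO(n)$, over sequences of length $\Omega(n/\log n)$.

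Consider one attempted insertion of $(u,v)$ into an embedding $H_0$. Using Lemma~\ref{lem:dirty-ok} I may assume all of $\mathcal{A}$'s flips are clean: a dirty flip decomposes into at most five clean flips, only one of which changes $\solidcost_\tau$ or $\critcost_\tau$, so this does not change the number of steps, the cost deltas, or the per-step decrease used below. If $G\cup(u,v)$ is non-planar then $\Emb(G;u,v)=\emptyset$, the step-decrease hypothesis can never be met, so $\mathcal{A}$ performs at most one step; hence assume $G\cup(u,v)$ is planar. Put $V(H):=\critcost_{\text{clean}}(H;u,v)+p\critcost_{\text{sep}}(H;u,v)+q\critcost_{\text{P}}(H;u,v)$; by Lemma~\ref{lem:costs-nonneg} $V\geq 0$, and since each non-final step drops $V$ by at least $1$ we get $\#\mathrm{steps}\le V(H_0)+1$. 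Every flip $\mathcal{A}$ performs is critical for $(u,v)$, so by Lemma~\ref{lem:costs-delta} it changes $\solidcost_\tau(\cdot;u,v)$ by exactly $\Delta\critcost_\tau(\cdot;u,v)$; telescoping along $\mathcal{A}$'s flips and using that the final embedding $H_k$ lies in $\Emb(G;u,v)$ so $\critcost_\tau(H_k;u,v)=0$ (Lemma~\ref{lem:costs-nonneg}) gives for every $\tau$
\[
\solidcost_\tau(H_k;u,v)=\solidcost_\tau(H_0;u,v)-\critcost_\tau(H_0;u,v).
\]

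Next I pass from $\solidcost(\cdot;u,v)$ to $\Phi$. Because $\EmbGood(G)\supseteq\EmbGood(G;u,v)$, for every $H$ we have $\Phi_G(H)\le\solidcost_{\text{clean}}(H;u,v)+p\solidcost_{\text{sep}}(H;u,v)+q\solidcost_{\text{P}}(H;u,v)$, and Corollary~\ref{cor:embgood-sandwich} gives $\solidcost_\tau(H_0;u,v)\le\flipdist_\tau(H_0,\EmbGood(G))+\OO(\log n)$ for each $\tau$, so the same right-hand side evaluated at $H_0$ is $\le\Phi_G(H_0)+\OO(\log n)$. Substituting the displayed identity and $\#\mathrm{steps}\le V(H_0)+1$ into these bounds yields $\#\mathrm{steps}+\Phi_G(H_k)\le\Phi_G(H_0)+\OO(\log n)$. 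When $\mathcal{A}$ finally inserts $(u,v)$, producing $G'=G\cup(u,v)$ and embedding $H_k\cup(u,v)$, Lemma~\ref{lem:emb-insert} gives $\solidcost_\tau(H_k\cup(u,v);u,v)=\solidcost_\tau(H_k;u,v)$, and applying Corollaries~\ref{cor:cost-is-distance2} and~\ref{cor:embgood-sandwich} inside $G'$ gives $\Phi_{G'}(H_k\cup(u,v))\le\Phi_G(H_k)+\OO(\log n)$; combining this with $\#\mathrm{steps}+\Phi_G(H_k)\le\Phi_G(H_0)+\OO(\log n)$ gives $\#\mathrm{steps}\le\Phi_{\mathrm{before}}-\Phi_{\mathrm{after}}+\OO(\log n)$ for the operation. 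For a deletion $\mathcal{A}$ does nothing, so it is enough that removing an edge $e$ and leaving it out of the rotation system does not increase $\Phi$; I would obtain this from the (deferred) facts that a good embedding of $G$ restricts to a good embedding of $G-e$ and that clean, separation, and P flips of $G$ restrict to clean, separation, and (P-or-weaker) flips of $G-e$, so each term $\flipdist_\tau(\cdot,\EmbGood(\cdot))$ can only decrease.

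The step I expect to be the real obstacle is the reconciliation of the two cost notions: $\mathcal{A}$'s work is naturally measured against $\Emb(G;u,v)$, where every flip it makes is critical and Lemma~\ref{lem:costs-delta} applies verbatim, whereas the cross-update potential has to be distance to the pair-\emph{independent} set $\EmbGood(G)$ so that it stays meaningful from one update to the next. Moving between these costs an $\OO(\log n)$ slack (Lemma~\ref{lem:embgood-logn} / Corollary~\ref{cor:embgood-sandwich}), and the delicate point is to organize the bookkeeping --- in particular, to use the displayed identity to track $\Phi$ \emph{exactly} through $\mathcal{A}$'s flips rather than with flip-by-flip estimates --- so that the slack is incurred $\OO(1)$ times per update and does not compound over a long run of flips. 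A secondary, more technical obstacle is making the deletion step rigorous, which genuinely needs the concrete definitions of $\EmbGood$ and of the flip types and is therefore left to Section~\ref{sec:define-costs}; the rest is telescoping and constant-factor absorption.
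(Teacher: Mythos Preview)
Your overall architecture matches the paper's: define a pair-independent potential $\Phi_G(H)$ measuring distance to $\EmbGood(G)$, sandwich it against the pair-specific $\solidcost(\cdot;u,v)$ via Corollary~\ref{cor:embgood-sandwich}, and telescope using Lemma~\ref{lem:costs-delta}. Your planar-insertion bookkeeping is fine and in fact slightly cleaner than the paper's (you track $\sum_\tau c_\tau\,\solidcost_\tau$ directly rather than the combined metric $\flipdist_{\text{alg}}$). But two steps are genuinely wrong.

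\paragraph{Non-planar attempted insertions.} Your claim that ``the step-decrease hypothesis can never be met, so $\mathcal{A}$ performs at most one step'' is false. The potential in the theorem is $\critcost_{\text{clean}}+p\,\critcost_{\text{sep}}+q\,\critcost_{\text{P}}$, and $\critcost_\tau(H;u,v)$ is a sum of $\flipdist_\tau(H,\Emb(G;x,y))$ over \emph{struts} $(x,y)\in\critstruts(G;u,v)$; by Property~\ref{it:struts-planar} each $G\cup(x,y)$ is planar, so every term is finite and can be decreased. The algorithm may therefore take many steps even when $G\cup(u,v)$ is non-planar (this is exactly what happens in the paper's \textproc{do-separation-flips}). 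The fix is immediate with your own tools: your identity $\Delta\solidcost_\tau=\Delta\critcost_\tau$ still holds flip-by-flip, and the hypothesis still gives $V(H_0)-V(H_k)\ge k-1$, so $\Phi_G(H_k)\le\Phi_G(H_0)+\OO(\log n)-(k-1)$; since no edge is inserted, that is already the required inequality for this operation.

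\paragraph{Deletion.} You assert $\Phi$ does not increase, relying on ``a good embedding of $G$ restricts to a good embedding of $G-e$''. This is not established anywhere and there is no reason to expect it: $\EmbGood$ is defined via $\solidstruts$, which depends on the heavy-path decomposition of the pre-split BC/SPQR trees, and deleting an edge can rebalance those trees and change many struts at once. The paper does \emph{not} claim monotonicity under deletion; it shows deletion increases $\Phi$ by at most $\OO(\log n)$, using only lemmas already in hand. With $e=(u,v)$,
\[
\Phi_{G-e}(H-e)\ \le\ \sum_\tau c_\tau\,\solidcost_\tau(H-e;u,v)\ =\ \sum_\tau c_\tau\,\solidcost_\tau(H;u,v)\ \le\ \Phi_G(H)+\OO(\log n),
\]
the equality being Lemma~\ref{lem:emb-insert} applied with $G-e$ in the role of $G$, and the outer inequalities being Corollary~\ref{cor:cost-is-distance2} and Corollary~\ref{cor:embgood-sandwich}. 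That extra $\OO(\log n)$ is then charged to the insertion that created $e$ (deletions are bounded by insertions). Replace your deletion paragraph with this and the proof closes; you do not need any new ``restriction'' facts from Section~\ref{sec:define-costs}.
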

\begin{proof}
  Let $G$ be a planar graph with vertices $u,v$, such that
  $G\cup(u,v)$ is planar, let $H_0\in\Emb(G)$ be the embedding before
  inserting $(u,v)$ and let $H_1,\ldots,H_k\in\Emb(G)$ be the embedded
  graphs after each ``step'' until finally $H_k\in\Emb(G; u,v)$.

  Define $\flipdist_{\text{alg}} = \flipdist_{\text{clean}} +
  p\cdot\flipdist_{\text{sep}} + q\cdot\flipdist_{\text{P}}$. Then
  $\flipdist_{\text{alg}}$ is a metric on $\Emb(G)$, and
  for each $H\in\Emb(G)$ we can define
  \begin{align*}
    \Phi(H)
    &= \flipdist_{\text{alg}}(H, \EmbGood(G))
    \\
    \Phi(H; u,v) &= \flipdist_{\text{alg}}(H, \EmbGood(G; u,v))
  \end{align*}
  By Corollary~\ref{cor:embgood-sandwich} $\Phi(H)\leq\Phi(H;
  u,v)\leq\Phi(H)+\OO(\log n)$.

  By assumption, $\flipdist_{\text{alg}}(H, \Emb(G; u,v))$ is strictly
  decreasing in each step, and by Lemma~\ref{lem:costs-delta},
  $\Phi(H; u,v)$ decreases by exactly the same amount. In particular,
  after $k$ steps it has decreased by at least $k$, so $\Phi(H_k;
  u,v)\leq \Phi(H_0; u,v) - k \leq \Phi(H_0) + \OO(\log n) - k$.
  By Lemma~\ref{lem:emb-insert}, $\Phi(H_k\cup(u,v);
  u,v)=\Phi(H_k; u,v)$, so
  \begin{align*}
    \Phi(H_k\cup(u,v))
    \leq \Phi(H_k\cup(u,v); u,v)
    = \Phi(H_k; u,v)
    \leq \Phi(H_0; u,v) - k
    \leq \Phi(H_0) + \OO(\log n) - k
  \end{align*}
  The same argument holds when an attempted insert stops after $k$
  steps because $G\cup(u,v)$ is not planar. 
  Since the potential $\Phi(H)$ increases by at most $\OO(\log n)$ and drops by
  at least the number of steps used, the amortized number of steps for
  each attempted insert is $\OO(\log n)$.

  For deletion it is even simpler, as
  \begin{align*}
    \Phi(H-(u,v))
    \leq \Phi(H-(u,v); u,v)
    = \Phi(H; u,v)
    \leq \Phi(H) + \OO(\log n)
  \end{align*}
  Thus, each deletion increases the potential by $\OO(\log
  n)$. However, as we start with an empty edge set, the number of
  deletions is upper bounded by the number of insertions, and so each
  edge can instead pay a cost of $\OO(\log n)$ steps when inserted to
  cover its own future deletion. In other words, deletions are
  essentially free.
\end{proof}

As a direct consequence, our main lemma holds.
\lemlazy*
\begin{proof}
Set $p=q=0$ and $r=1$ in the lemma above, and the result follows. 
\end{proof}

\begin{figure}
  \centering
\begin{tikzpicture}[
  embedding/.style = {
    circle,
    draw,
    fill,
    inner sep = 2pt,
  },
  plotlabel/.style={
    postaction={
      decorate,
      transform shape,
      decoration={pre length=1pt,post length=1pt,markings,mark=at position 0.5 with \node #1;},
    },
  },
  ]

  \node[embedding,label={180:$X_0$}] (X0) at (-1,5) {};
  \node[embedding,label={180:$X_k$}] (Xk) at (-1,3) {};
  \node[embedding,label={180:$X_k\cup(u,v)$}] (Gp) at (-1,2) {};

  \node[embedding,fill=white,label={0:$H\in\EmbGood(G)$}] (H0) at (8,5) {};
  \node[embedding,fill=white] (Y0) at (7.875,3.25) {};
  \node[embedding,label={0:$Y\in\EmbGood(G;u,v)$}] (Y) at (8,3) {};
  \node[embedding,label={0:$Y'\in\EmbGood(G\cup(u,v);u,v)$}] (Yp) at (8,2) {};
  \node[embedding,label={0:$H'\in\EmbGood(G\cup(u,v))$}] (Hp) at (8,0) {};

  \draw (X0) -- (H0) node[midway,above] {$\Phi(X_0)$};
  \draw plot[smooth] coordinates {(X0.south) (0,4) (7,4) (Y0.north)} [
    plotlabel={[above]{$\Phi(X_0; u,v)\leq\Phi(X_0)+\OO(\log n)$}}
  ];

  \draw (X0) -- (Xk) node[midway,left] {$k$};
  \draw (Xk) -- (Y) node[midway,above] {$\Phi(X_k; u,v)\leq\Phi(X_0; u,v)-k$};
  \draw (H0) -- (Y0) node[midway,right] {$\OO(\log n)$};
  \draw (Xk) -- (Gp) node[midway,left] {$0$};
  \node at (4,2.5) {$\Phi(X_k\cup(u,v); u,v)=\Phi(X_k; u,v)$};
  \draw (Y) -- (Yp) node[midway,right] {$0$};
  \draw (Gp) -- (Yp);
  \draw (Yp) -- (Hp) node[midway,right] {$\OO(\log n)$};
  \draw plot[smooth] coordinates {(Gp.south) (0,1) (7,1) (Hp.north)} [
    plotlabel={[above]{$\Phi(X_k\cup(u,v))\leq\Phi(X_k\cup(u,v); u,v)$}}
  ];

  \draw[dotted] (-4,3.5) -- (10,3.5);
  \node at (-4,5) {$\Emb(G)$};
  \node at (-4,3) {$\Emb(G; u,v)$};
  \node at (-4,0) {$\Emb(G\cup(u,v))$};

  \draw[dotted] (-5,2.5) -- (11,2.5);

  \draw[dotted] (9,4) ellipse (2.75cm and 1.5cm);
  \draw[dotted] (11,3) arc (0:180:2cm and .5cm);
  \draw[dotted] (7,2) arc (180:360:2cm and .5cm);
  \draw[dotted] (9,1) ellipse (2.75cm and 1.5cm);

\end{tikzpicture}
   \caption{\label{fig:lazygreedy-proof}Illustration of the proof of Theorem~\ref{thm:lazygreedy}.}
\end{figure}

Our algorithm may not decrease $\critcost$ in every step. To simplify
the description of how and when it changes, we associate each type of
flip with one of the $\critcost_\tau$ as in Corollary~\ref{cor:lazygreedy-simple}:
\begin{itemize}
\item A P flip is associated with $\critcost_{\text{P}}$
\item An SR flip is associated with $\critcost_{\text{sep}}$
\item An articulation flip is associated with $\critcost_{\text{clean}}$
\end{itemize}
A flip is potential-decreasing/potential-neutral/potential-increasing
if it changes its associated cost by $-1/0/1$ respectively.

Using the following lemma (proved in Section~\ref{sec:define-costs})
\begin{lemma}\label{lem:fliptypes-unchanged}
  Any SR flip or articulation flip leaves $\critcost_{\text{P}}$ and
  $\solidcost_{\text{P}}$ unchanged, and any articulation flip leaves
  $\critcost_{\text{sep}}$ and $\solidcost_{\text{sep}}$ unchanged.
\end{lemma}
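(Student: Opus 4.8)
The plan is to reduce both assertions to statements about the $\solidcost$ functions, which we can handle via the distance characterisation of Corollary~\ref{cor:cost-is-distance2}, and then read off the $\critcost$ statements from Lemma~\ref{lem:costs-delta}. For the reduction: suppose we have shown that an SR flip or articulation flip $\sigma\colon H\to H'$ leaves $\solidcost_{\text{P}}(\cdot;u,v)$ unchanged, and that an articulation flip leaves $\solidcost_{\text{sep}}(\cdot;u,v)$ unchanged. If such a flip nonetheless changed $\critcost_{\text{P}}$ (resp.\ $\critcost_{\text{sep}}$), then by Lemma~\ref{lem:costs-delta} it would be a critical flip, and, being critical, would satisfy $\Delta\solidcost_\tau=\Delta\critcost_\tau\neq 0$, contradicting $\Delta\solidcost_\tau=0$. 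Hence it suffices to prove the two $\solidcost$ statements.

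The clean flips are the easy case. By Corollary~\ref{cor:cost-is-distance2}, $\solidcost_\tau(\cdot;u,v)=\flipdist_\tau(\cdot,\EmbGood(G;u,v))$, and $\flipdist_\tau$ is a pseudometric; so for any single flip $\sigma\colon H\to H'$ with $\flipdist_\tau(H,H')=0$, the triangle inequality gives $\solidcost_\tau(H;u,v)=\solidcost_\tau(H';u,v)$. A clean SR flip, or a clean articulation flip, is a single clean flip that is not a P~flip, so the length-one path $H\to H'$ witnesses $\flipdist_{\text{P}}(H,H')=0$; and a clean articulation flip is a single clean flip that is not a separation flip, so it witnesses $\flipdist_{\text{sep}}(H,H')=0$. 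Since all articulation flips are clean, this settles the lemma for articulation flips and for clean SR flips.

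For \emph{dirty} SR flips I would invoke Lemma~\ref{lem:dirty-ok}: the flip decomposes as one clean separation flip $\sigma'$ together with at most four articulation flips, and, taking $\tau=\text{P}$, at most one of these five flips changes $\solidcost_{\text{P}}$ or $\critcost_{\text{P}}$. The at most four articulation flips are clean and therefore change neither quantity, as shown above; so the net change equals the change caused by $\sigma'$ alone, and it remains to check that $\sigma'$ is cost-neutral for $\solidcost_{\text{P}}$ (hence, by the reduction, for $\critcost_{\text{P}}$). If $\sigma'$ is again an SR flip this is immediate from the clean case.

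The step I expect to be the main obstacle is precisely this last point. The subtlety is that the procedure cleaning up a dirty flip may reassign pendant subtrees hanging off the separation pair from one side of the cut to the other, so the extracted clean separation flip $\sigma'$ need not literally be an SR flip; what one really needs is that such reassignments neither create, destroy, nor re-classify any P-type obstruction. I would therefore carry out this verification at the level of the explicit strut-based definitions of $\critcost_{\text{P}}$ and $\solidcost_{\text{P}}$ to be given in Section~\ref{sec:define-costs}, in tandem with the proof of Lemma~\ref{lem:dirty-ok} in Section~\ref{sec:struts-properties}. The analogous claim about $\solidcost_{\text{sep}}$ is strictly easier, since there are no dirty articulation flips and the clean case already covers it.
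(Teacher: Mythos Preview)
Your argument is correct, but it takes a more circuitous route than the paper, and the obstacle you flag at the end is not actually an obstacle.

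The paper's proof works directly from the strut-based definitions in Section~\ref{sec:define-costs}: both $\critcost_\tau(H;u,v)$ and $\solidcost_\tau(H;u,v)$ are sums of the form $\sum_{(x,y)}\flipdist_\tau(H,\Emb(G;x,y))$, just over different sets of struts. Once one knows that an SR or articulation flip $H\to H'$ satisfies $\flipdist_{\text{P}}(H,H')=0$ (respectively, that an articulation flip satisfies $\flipdist_{\text{sep}}(H,H')=0$), the triangle inequality applied term by term handles both cost functions at once. Your reduction of $\critcost$ to $\solidcost$ via Lemma~\ref{lem:costs-delta} is valid but unnecessary: the very same $\flipdist_\tau(H,H')=0$ observation you use for $\solidcost$ (via Corollary~\ref{cor:cost-is-distance2}) applies verbatim to each summand of $\critcost$.

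As for the dirty SR flip: the clean separation flip $\sigma'$ produced by Lemma~\ref{lem:dirty-ok} is at the same pair $\set{s,t}$, and the pendants stripped off by the accompanying articulation flips hang at $s$ or $t$ but lie outside the biconnected component containing both. Since the $\set{s,t}$-separation classes of Definition~\ref{def:separationpair} live entirely inside that biconnected component, removing pendants cannot change the number of separation classes on either side of the cut. Hence if the dirty flip was SR (one side a single class), so is $\sigma'$, and the clean path witnessing $\flipdist_{\text{P}}(H,H')=0$ exists. Your proposed detour through the explicit strut definitions is therefore not needed.
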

we can simplify Theorem~\ref{thm:lazygreedy} to.

\begin{corollary}\label{cor:lazygreedy-simple}
  Let $\mathcal{A}$ be a lazy greedy algorithm for planar embeddings
  that does no flips during edge deletion, and that during attempted
  edge insertion only uses critical flips such that
  \begin{itemize}
  \item Each flip (except possibly the last) is potential-decreasing
    or potential-neutral.
  \item For some constant $r$, no sequence of $r$ consecutive flips
    are potential-neutral.
  \end{itemize}
  Then $\mathcal{A}$ uses amortized $\OO(\log n)$ steps per attempted
  edge insertion.
\end{corollary}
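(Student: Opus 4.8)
The plan is to reduce Corollary~\ref{cor:lazygreedy-simple} to Theorem~\ref{thm:lazygreedy} by choosing the right constants $p,q,r$ and verifying that the hypotheses of the corollary imply the hypotheses of the theorem. The natural choice is to take $r$ to be the same constant $r$ from the corollary's second bullet, and to choose $p$ and $q$ large enough that a single potential-decreasing SR flip or P flip still decreases the combined potential $\critcost_{\text{clean}} + p\cdot\critcost_{\text{sep}} + q\cdot\critcost_{\text{P}}$. Concretely I would set $q$ large relative to $1$ (to dominate the possible $+1$ swing in $\critcost_{\text{clean}}$) and $p$ large relative to $1$ as well; since Lemma~\ref{lem:fliptypes-unchanged} tells us that an SR flip does not touch $\critcost_{\text{P}}$ and an articulation flip touches neither $\critcost_{\text{sep}}$ nor $\critcost_{\text{P}}$, cross-terms are controlled and small constants such as $p=2$, $q=4$ will do. (Any $p\ge 2$, $q\ge 2p+1$ works.)

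The key steps, in order, are as follows. First, I would recall the association of flip types with cost functions: a P flip with $\critcost_{\text{P}}$, an SR flip with $\critcost_{\text{sep}}$, an articulation flip with $\critcost_{\text{clean}}$, and the definition of potential-decreasing/neutral/increasing in terms of the change $-1/0/+1$ of the associated cost. Second, using Lemma~\ref{lem:fliptypes-unchanged} together with Lemma~\ref{lem:costs-delta}, I would show that each single flip changes the combined potential $\critcost_{\text{clean}} + p\cdot\critcost_{\text{sep}} + q\cdot\critcost_{\text{P}}$ by a controlled amount: an articulation flip changes it by $\Delta\critcost_{\text{clean}}\in\{-1,0,1\}$; an SR flip changes it by $p\cdot\Delta\critcost_{\text{sep}}$ plus possibly an uncontrolled change in $\critcost_{\text{clean}}$ — but here I must be careful, since an SR flip could in principle also change $\critcost_{\text{clean}}$. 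The resolution is that by Lemma~\ref{lem:costs-delta} any single flip changes each $\critcost_\tau$ by at most $1$, so an SR flip changes the combined potential by at most $p\cdot\Delta\critcost_{\text{sep}} + 1$; when the flip is potential-decreasing (so $\Delta\critcost_{\text{sep}}=-1$) this is at most $-p+1 \le -1$ for $p\ge 2$, and when it is potential-neutral it is at most $+1$. Similarly a potential-decreasing P flip changes the combined potential by at most $q\cdot(-1)+p\cdot 1 + 1 \le -1$ for $q\ge p+2$, using that a P flip changes $\critcost_{\text{sep}}$ and $\critcost_{\text{clean}}$ each by at most $1$ (and in fact by Lemma~\ref{lem:costs-nonneg} $\critcost_{\text{P}}\le\critcost_{\text{sep}}\le\critcost_{\text{clean}}$, which can be used to tighten the bookkeeping if desired). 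Third, I would group the algorithm's flips into steps of at most $r$ consecutive flips as in the corollary's hypothesis: since no $r$ consecutive flips are all potential-neutral, each block of $r$ flips contains at least one potential-decreasing flip; combined with the per-flip bounds just established, each such step decreases the combined potential by at least $1$ (the one potential-decreasing flip gives $\le -1$, and the remaining $\le r-1$ neutral flips each contribute $\le 0$ to their own associated cost — here again one uses that neutral means the associated cost is unchanged, and Lemma~\ref{lem:fliptypes-unchanged} to see the non-associated costs of those neutral flips don't blow things up within the step, or simply absorb the $O(r)$ slack into the constant since $r$ is a constant). Fourth, having verified the three bullets of Theorem~\ref{thm:lazygreedy}, I invoke it to conclude that $\mathcal{A}$ uses amortized $\OO(\log n)$ steps per attempted edge insertion.

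The main obstacle I anticipate is the bookkeeping in the third step: showing that an entire step of up to $r$ flips decreases the combined potential by at least $1$, not just that it is bounded. A potential-neutral SR or P flip is neutral for its \emph{associated} cost but can still move $\critcost_{\text{clean}}$ by $\pm 1$, so naively a step of $r-1$ neutral flips followed by one decreasing flip could have the decreasing flip's $-1$ gain wiped out by earlier $+1$ swings in $\critcost_{\text{clean}}$. The clean way around this is to note that $r$ is a \emph{constant}, so it suffices for the step-potential to decrease by at least $1$ only for the purpose of Theorem~\ref{thm:lazygreedy}'s hypothesis — and Theorem~\ref{thm:lazygreedy} already tolerates $r$ flips per step; so one should choose $p,q$ so that the single potential-decreasing flip in each step overwhelms the total worst-case $O(r)$ swing from the other flips in that same step, i.e. take $p > r$ and $q > p\cdot r + r$ (all still constants). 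Then each step strictly decreases $\critcost_{\text{clean}} + p\cdot\critcost_{\text{sep}} + q\cdot\critcost_{\text{P}}$ by at least $1$, the hypotheses of Theorem~\ref{thm:lazygreedy} hold verbatim with these constants, and the corollary follows.
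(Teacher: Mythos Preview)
Your proposal is correct and takes essentially the same approach as the paper: apply Theorem~\ref{thm:lazygreedy} with $r$ as given and $p,q$ chosen large enough in terms of $r$ (the paper uses $p=r+1$, $q=(r+1)^2$, which is exactly your final suggestion $p>r$, $q>pr+r$), invoking Lemma~\ref{lem:fliptypes-unchanged} to control the cross-terms between flip types. You are more explicit than the paper about the bookkeeping obstacle and correctly discard your initial ($r$-independent) constants in favor of ones growing with $r$, but the overall structure and the key lemma are identical to the paper's one-line proof.
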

\begin{proof}
  Simply use $p=r+1,q=r^2+2r+1,r=r$ in Theorem~\ref{thm:lazygreedy},
  and note that Lemma~\ref{lem:fliptypes-unchanged} guarantees the
  resulting potential is strictly decreasing in each round of at most
  $r$ flips.
\end{proof}

\section{A Greedy Flip-Finding Algorithm}\label{sec:findflips}

We use the data structure from~\cite{DBLP:journals/mst/HolmR17} to
represent the current embedding. This structure maintains
\emph{interdigitating spanning trees} (also known as the tree co-tree
decomposition) for the primal and dual graphs under flips, admissible
edge insertions, and edge deletions in worst case $\OO(\log^2 n)$ time
per operation.  In particular it supports the $\Call{linkable}{u,v}$
operation, which in worst case $\OO(\log^2 n)$ time either determines
that $u$ and $v$ has no face in common and returns ``no'', or returns
some pair of corners $((u,f),(v,f))$ where $f$ is a common face.

Furthermore, the structure allows for a mark-and-search operation, in which a constant number of faces may be ``marked'', and vertices along a path on the spanning tree that are incident to all marked faces may be sought after in $\OO(\log ^2 n)$ time. (Dually, one may mark vertices and search for faces in the same time, ie. $\OO(\log^2 n)$.)

\subsection{Algorithm overview}

We want to use this structure to search for the flips needed to insert
a new edge $(u,v)$ that is not admissible in the current
embedding. For simplicity, we will present an algorithm that fits the
framework in Corollary~\ref{cor:lazygreedy-simple}, rather than
insisting on finding a shortest sequence of clean flips. This is
sufficient to get amortized $\OO(\log n)$ flips, and with $\OO(\log^2
n)$ overhead per flip, amortized $\OO(\log^3 n)$ time for (attempted)
edge insertion. If desired, the algorithm can be made to detect if it
has made non-optimal flips and backtrack to use an optimal sequence of
flips without affecting the asymptotic amortized running time.

At the highest possible level of abstraction, our algorithm is just
the \textproc{multi-flip-linkable} routine from
Algorithm~\ref{alg:multi-flip-linkable}. In the following we will go
into more detail and provide detailed proofs. We will assume full
knowledge of how to use the mark-and-search features
from~\cite{DBLP:journals/mst/HolmR17} to e.g.\@ search a path in the
dual tree for the first face containing a given pair of vertices.

\begin{algorithm}[htb!]
  \caption{}
  \label{alg:multi-flip-linkable}
  \begin{algorithmic}[1]

    \Function{multi-flip-linkable}{$u,v$}

      \State $u'\gets u$

      \While{$u'\neq v$}

        \If{$u',v$ biconnected}

          \State $v'\gets v$

        \Else{}

          \State $v'\gets$ first articulation point on $u'\cdots v$.

        \EndIf

        \If{not \Call{do-separation-flips}{$u',v'$}}\label{line:call-do-sep}
           \Comment{Do all separation flips needed in $B_{i+1}$.}

           \State \Return ``no''

        \EndIf

        \State\label{line:call-do-art}
        $\Call{do-articulation-flips}{u,u',v',v}$
        \Comment{Do articulation flips required by $B_{i+1}$.}

        \LeftComment{Now $u$ shares a face with $v'$, and (if $v'\neq
        v$) with at least one edge in $B_{i+2}$.}

        \State\label{line:next-block} $u'\gets
        \Call{find-next-flip-block}{u,u',v',v}$
        \Comment{Skip to next relevant $a_i$}

      \EndWhile

      \State \Return ``yes''
      \Comment{$u,v$ are now in same face}

    \EndFunction

  \end{algorithmic}
\end{algorithm}

Let $a_1,\ldots,a_{k-1}$ be the articulation points on $u\cdots v$, and
let $a_0=u$ and $a_k=v$. For $1\leq i\leq k$ let $B_i$ be the
biconnected component (or bridge) containing $a_{i-1}\cdots a_i$.
Our algorithm ``cleans up'' this path by sweeping from $a_0=u$ to
$a_k=v$. At all times the algorithm keeps track of a latest
articulation point $u'=a_i$ seen on $u\cdots v$ (initially $u'=a_0=u$)
such that either $u'=u$ or $(u,u')$ is admissible in the current
embedding. We will further maintain the invariant that (unless
$i=k$) there is a common face of $u$ and $a_i$ that contains at
least one edge from $B_{i+1}$.
In the round where $u'=a_i$ the algorithm sets $v'=a_{i+1}$ and does
the following:
\begin{enumerate}
\item It finds and applies all separation flips in $B_{i+1}$ needed to
  make $(a_i,a_{i+1})$ admissible, or detects (after some number of
  flips) that $B_{i+1}\cup(a_i,a_{i+1})$ --- and therefore $G\cup(u,v)$
  --- is nonplanar.

\item It finds at most one articulation flip at $u'$ and at most one
  articulation flip at $v'$, such that afterwards $u$ shares a face
  with $v'=a_{i+1}$, and (if $v'\neq v$) with at least one edge from
  $B_{i+2}$.

\item It finds the first $a_j$ with $j\geq i+1$ such that either: the
  next iteration of the loop finds at least one flip; or no more flips
  are needed and $a_j=v$. It then sets $u'\gets a_j$.
\end{enumerate}
The algorithm stops when $u'=v$. By our invariant $(u,v)$ is
admissible if it reaches this point.

\begin{lemma}\label{lem:multiflip-iterations}
  If $G\cup(u,v)$ is planar and $H_0\in\Emb(G)$, this algorithm finds
  a sequence of graphs $H_1,\ldots,H_k\in\Emb(G)$ such that
  $H_k\in\Emb(G;u,v)$. The main loop performs $\OO(k)$ iterations.
\end{lemma}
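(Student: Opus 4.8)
The plan is to run an induction over the iterations of the main loop in \textproc{multi-flip-linkable}, carried by a single invariant about the frontier vertex $u'$, and to read off both the correctness claim and the iteration bound from it; the specifications of the three subroutines \textproc{do-separation-flips}, \textproc{do-articulation-flips}, and \textproc{find-next-flip-block} (established in the rest of Section~\ref{sec:findflips}) are taken as given. The invariant I would maintain at the top of each iteration is: $u'=a_i$ for some $0\le i\le k$; the current embedding $H\in\Emb(G)$ has been obtained from $H_0$ by a (possibly empty) sequence of flips; either $u'=u$ or $(u,u')$ is admissible in $H$; and, if $i<k$, then $u$ and $a_i$ lie on a common face of $H$ that is incident to at least one edge of $B_{i+1}$. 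The base case $i=0$, $u'=u=a_0$, $H=H_0$ holds because $u\in V(B_1)$, so $u$ is incident to an edge of $B_1$ in every embedding.

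Next I would show one iteration preserves the invariant. Assume it holds with $u'=a_i$ and $i<k$ (for $i=k$ the loop exits). The body sets $v'=a_{i+1}$. Since $G\cup(u,v)$ is planar, $B_{i+1}\cup(a_i,a_{i+1})$ is planar --- this is the block-wise decomposition of insertability along the BC-tree path, i.e.\ exactly the implication used in the opposite direction when the algorithm reports nonplanarity --- so \textproc{do-separation-flips}$(a_i,a_{i+1})$ does not return ``no'' and, by its specification, performs separation flips inside $B_{i+1}$ after which $a_i$ and $a_{i+1}$ share a face of the embedding restricted to $B_{i+1}$. Then \textproc{do-articulation-flips}$(u,u',v',v)$ performs at most one articulation flip at $a_i$ and at most one at $a_{i+1}$ --- legal because the invariant supplies a face of $u$ at $a_i$ and we now have a face of $a_i$ shared with $a_{i+1}$ inside $B_{i+1}$ --- producing an embedding in which $u$ and $v'=a_{i+1}$ share a face, which, when $v'\ne v$, is incident to an edge of $B_{i+2}$. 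Finally \textproc{find-next-flip-block}$(u,u',v',v)$ returns the least $a_j$ with $j\ge i+1$ such that either $a_j=v$, or the next iteration (with $u'=a_j$) performs at least one flip, and it leaves the embedding unchanged while re-establishing the invariant for the new index $i'=j$. Thus the invariant is maintained and the index strictly increases.

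Since the index strictly increases and ranges over $\{0,\dots,k\}$, the loop terminates after at most $k$ iterations with $u'=a_k=v$; the invariant (case $i=k$) then says $(u,v)$ is admissible in the current embedding, which is therefore an element of $\Emb(G;u,v)$ reached from $H_0$ through the intermediate embeddings produced during the loop, all of which lie in $\Emb(G)$ since flips do not change the underlying graph. This gives the $\OO(k)$ bound directly when $k$ denotes the number of blocks $B_i$ on $u\cdots v$; and if one instead wishes to bound the iteration count by the number of flips performed, the \textproc{find-next-flip-block} guarantee shows that every iteration but possibly the last performs at least one flip, so the number of iterations is at most one more than the total number of flips.

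The loop-level argument above is a short induction; the real work is in the subroutine specifications it invokes, and the most delicate is \textproc{find-next-flip-block}: it must skip a whole run of blocks $B_{i+2},\dots,B_j$ that currently need no flips, advance $u'$ to $a_j$, certify that these blocks genuinely need no flips in the present embedding, and rebuild the witnessing common face of $u$ and $a_j$ meeting $B_{j+1}$ --- all without performing any flip and within the mark-and-search budget. Establishing this, together with the precise guarantees of \textproc{do-separation-flips} and \textproc{do-articulation-flips} and the block-wise insertability characterization, is where the effort lies, and I expect it to be the main obstacle.
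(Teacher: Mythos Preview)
Your proposal follows essentially the same approach as the paper: maintain the stated invariant on $u'=a_i$, invoke the subroutine specifications for \textproc{do-separation-flips}, \textproc{do-articulation-flips}, and \textproc{find-next-flip-block}, and conclude both correctness and the iteration bound.

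One small slip: you write that ``every iteration but possibly the last performs at least one flip.'' The exception is on the wrong end. It is the \emph{first} iteration that may perform zero flips, since no prior call to \textproc{find-next-flip-block} has set it up; every later iteration starts at a $u'$ chosen precisely so that a flip is needed, hence does at least one flip (including the last, if it is not also the first). The conclusion---iterations $\le$ flips $+1$---is unaffected, but the justification should be corrected.
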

\begin{proof}
  If $G\cup(u,v)$ is planar, then for every block $B_{i+1}$ there
  exists some (possibly empty) set of separation flips such that
  $H\cup(a_i,a_{i+1})$ is planar. Thus, after
  line~\ref{line:call-do-sep} we know that $a_i$ and $a_{i+1}$ share a
  face. And by our invariant, we also know that $u=a_0$ and $u'=a_i$
  share a face, incident to at least one edge from $B_{i+1}$.

  Now the call to \textproc{do-articulation-flips} in
  line~\ref{line:call-do-art} uses at most $2$ articulation flips to
  update the invariant so $u$ shares a face with $v'$ and (if $v\neq
  v'$) with at least one edge in $B_{i+2}$.

  Finally, the call to \textproc{find-next-flip-block} in
  line~\ref{line:next-block}, updates $u'$ to the largest $a_j$ so the
  invariant still holds. In particular, either a separation flip is
  needed in $B_{j+1}$, or an articulation flip is needed in $a_{j+1}$
  before the face shared by $u=a_0$ and $a_{j+1}$ is incident to an
  edge in $B_{j+2}$.

  Thus, in every iteration after (possibly) the first, at least one
  flip is performed. Thus if we stop after $k$ flips, the number of
  iterations is at most $k+1$.

  We stop with $H\not\in\Emb(G;u,v)$ only if there is a block $B_{i+1}$ where
  \Call{do-articulation-flips}{$a_i,a_{i+1}$} returns ``no'' because we
  have detected that $G\cup(u,v)$ is nonplanar.

  Otherwise we keep making progress, and will eventually have $u'=v$.
  By our invariant $(u,v)$ are now in the same face, and thus
  $H_k\in\Emb(G;u,v)$.
\end{proof}

\FloatBarrier
\subsection{\textproc{find-next-flip-block}}
The simplest part of our algorithm is the
\textproc{find-next-flip-block} function in
Algorithm~\ref{alg:find-next-flip-block} we use to move to the next
``interesting'' articulation point, or to vertex $v$ if we are done.

By interesting is meant the following: it is an articulation point
$a_j$ on the BC-path from $u$ to $v$ such that a flip in either $a_j$,
$B_{j+1}$, or $a_{j+1}$ is necessary in order to bring $u$ and $v$ to
the same face.

\begin{algorithm}[htb!]
  \caption{}
  \label{alg:find-next-flip-block}
  \begin{algorithmic}[1]

    \Function{find-next-flip-block}{$u,u',v',v$}

        \LeftComment{$u$ shares a face with $v'$, and
        (if $v'\neq v$) with at least one edge in $B_{i+2}$.}

        \If{$v'\neq v$}
          \State $f_u,c_u^1,c_u^2\gets\Call{find-bounding-face}{u,v',v}$
          \Comment{$f_u$ is incident to $u$.}
          \State $f_v,c_v^1,c_v^2\gets\Call{find-bounding-face}{v,v',u}$
          \If{$f_u=f_v$}
            \If{$v$ incident to $f_u$}
              \State \Return $v$
            \EndIf
            \State\label{line:find-next-art} \Return last internal node on $u\cdots v$ touching
            $f_u$ on both sides.
          \EndIf
        \EndIf
        \State \Return $v'$
    \EndFunction

    \Statex

    \Function{find-bounding-face}{$u,a,v$}

      \State $c_u\gets$ any corner incident to $u$; $c_v\gets$ any
      corner incident to $v$

      \State $f\gets$ first face on the dual path $c_u\cdots c_v$
      touching $a$ on both sides.

      \State $c_L,c_R\gets$ first corners on left and right side of
      $c_u\cdots c_v$ that are incident to both $a$ and $f$.

      \State \Return $f,c_L,c_R$

    \EndFunction

  \end{algorithmic}
\end{algorithm}

\begin{lemma}\label{lem:next-flip}
  If $a_0$ and $a_{i+1}$ share a face containing at least one edge
  from $B_{i+2}$, then in worst case $\OO(\log^2 n)$ time
  $\Call{find-next-flip-block}{a_0,a_i,a_{i+1},a_k}$ either returns
  the last $a_j$ ($i<j<k$) such that $a_0$ and $a_j$ share a face
  containing at least one edge from $B_{j+1}$; or $a_k$ if $a_0$ and
  $a_k$ share a face.
\end{lemma}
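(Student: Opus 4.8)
The plan is to establish the two halves of the claim separately: the $\OO(\log^2 n)$ running time, and the correctness of the value returned. For the running time, note that \Call{find-next-flip-block}{} performs only $\OO(1)$ calls to \Call{find-bounding-face}{}, one incidence test ``$v$ incident to $f_u$'', and one search for the last internal node on $u\cdots v$ touching $f_u$ on both sides; each of the latter two is a single mark-and-search on the structure of~\cite{DBLP:journals/mst/HolmR17} (mark the face, search the primal tree path from $u$ to $v$), costing $\OO(\log^2 n)$. Each call to \Call{find-bounding-face}{} is itself implemented by $\OO(1)$ mark-and-search operations: one to find the first face $f$ on the dual path $c_u\cdots c_v$ touching the marked vertex $a$ on both sides, and two more to locate the bounding corners $c_L,c_R$. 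Hence the whole routine runs in $\OO(\log^2 n)$.

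For correctness the structural core is a precise description of $f_u$. By the precondition there is a face $F$ incident to $a_0=u$ and to $v'=a_{i+1}$ that contains an edge of $B_{i+2}$; such a face has a corner at $a_{i+1}$ bracketed by an edge of $B_{i+1}$ and an edge of $B_{i+2}$, and one argues that $F$ is precisely the face computed as $f_u$, so in particular $f_u$ is incident to $u$. The key monotonicity fact is that the set of articulation points $a_j$ with $j\ge i+1$ incident to $f_u$ is a contiguous prefix $\{a_{i+1},\dots,a_{j^\ast}\}$ of the path: if $i+1\le m\le j$ then $a_m$ lies on every $u$--$a_j$ path, and a face incident to both $u$ and $a_j$ must be incident to $a_m$. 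Walking the boundary of $f_u$ between consecutive articulation points (resp.\ around $a_{j^\ast}$) shows that $f_u$ contains an edge of $B_{j+1}$ for every $i+1\le j\le j^\ast$, so $a_0$ and $a_j$ share the face $f_u$ with an edge of $B_{j+1}$ for all such $j$, while $f_u$ is not incident to $a_{j^\ast+1}$. Finally, a vertex incident to $f_u$ \emph{on both sides} of the primal tree path $u\cdots v$ is a cut vertex of $G$ separating $u$ from $v$ (a face incident to a vertex via two distinct corners makes that vertex a cut vertex, and the ``both sides'' condition forces the separation to split $u$ from $v$), hence it is one of the $a_j$, and the largest such is $a_{j^\ast}$; so the node returned in line~\ref{line:find-next-art} is $a_{j^\ast}$.

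Correctness then follows from a case analysis of the three return statements. If $v'=v$, then $i+1=k$, there is no internal $a_j$, and $a_0$ and $a_k=v'$ share $f_u$, so returning $v$ realizes the ``or''-clause of the lemma. If $v'\ne v$, $f_u=f_v$, and $v$ is incident to $f_u$, then $f_u$ is a common face of $u$ and $v$, so $j^\ast=k$ and returning $v$ is again correct. If $v'\ne v$, $f_u=f_v$, but $v$ is not incident to $f_u$, then $j^\ast<k$ and the algorithm returns $a_{j^\ast}$: $a_0$ and $a_{j^\ast}$ share $f_u$ with an edge of $B_{j^\ast+1}$ (here $f_u=f_v$ is used to know that $f_u$ reaches $a_{j^\ast}$ from the $v$-side through $B_{j^\ast+1}$), and no later $a_j$ qualifies because any face incident to both $u$ and $a_j$ with $j>i+1$ is an $a_{i+1}$-straddling face incident to $u$, hence equals $f_u$, which does not reach $a_j$. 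If $v'\ne v$ and $f_u\ne f_v$, then $j^\ast=i+1$ (if $f_u$ reached $a_{i+2}$ it would also be the bounding face seen from the $v$-side, forcing $f_v=f_u$), so the last qualifying articulation point is $v'=a_{i+1}$, which is returned; that $a_0,a_{i+1}$ qualify is exactly the precondition.

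The step I expect to be the main obstacle is justifying the ``canonicity'' of \Call{find-bounding-face}{}: that ``the first face on the dual path $c_u\cdots c_v$ touching $a$ on both sides'' does not depend on the arbitrary choices of $c_u$ and $c_v$ and equals the unique bounding face determined by the embedding and the articulation structure, together with the companion claim that $f_u=f_v$ holds if and only if the bounding face of $u$ at $v'$ reaches the deepest relevant articulation point from the $v$-side as well. Pinning this down requires unpacking exactly what the mark-and-search of~\cite{DBLP:journals/mst/HolmR17} returns, and leaning on the invariant maintained by \textproc{multi-flip-linkable} (that $u$ and $a_i$ already share a face containing an edge of $B_{i+1}$, and that \textproc{do-articulation-flips} has made the two path-blocks rotation-adjacent at $a_{i+1}$) to exclude degenerate configurations — for instance $B_{j^\ast+1}$ being a bridge, or several blocks meeting at an articulation point — in which the ``$v$-side edge of $f_u$ at $a_{j^\ast}$'' might otherwise fail to lie in $B_{j^\ast+1}$.
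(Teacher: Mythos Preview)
The running-time argument is fine and matches the paper's. The correctness argument, however, has a real gap: your definition of $j^\ast$ as ``the last articulation point $a_j$ incident to $f_u$'' is too weak, and several claims built on it are false.

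Consider the situation of the paper's Figure~\ref{fig:next-flip-2}: $a_{i+2}$ \emph{is} incident to $f_u$, but the path $u\cdots v$ touches $f_u$ at $a_{i+2}$ on only \emph{one} side (both path-edges at $a_{i+2}$ lie in the same cyclic arc relative to the $f_u$-corners there). Here your $j^\ast$ would be $i+2$, but (i) $f_u$ contains no edge of $B_{i+3}$, so your claim ``$f_u$ contains an edge of $B_{j+1}$ for every $i{+}1\le j\le j^\ast$'' fails at $j=j^\ast$; (ii) the ``both sides'' search in line~\ref{line:find-next-art} returns $a_{i+1}$, not $a_{i+2}$, so your claim that it returns $a_{j^\ast}$ is wrong; and (iii) $a_{i+1}$ is in fact the correct answer, since no face incident to $a_0$ contains an edge of $B_{i+3}$. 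The one-sided incidence is precisely the signal that an articulation flip at $a_{i+2}$ will be needed, and it is why the algorithm tests ``both sides'' rather than mere incidence. Your final paragraph brushes against this (the ``$v$-side edge at $a_{j^\ast}$'' possibly not lying in $B_{j^\ast+1}$), but treats it as a degenerate configuration to be excluded rather than the central distinction the algorithm is built around.

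The fix is to redefine $j^\ast$ as the last $a_j$ incident to $f_u$ \emph{on both sides} of $u\cdots v$; then your prefix and boundary-walk arguments do go through, because two $f_u$-corners at $a_j$ on opposite sides of the path force an edge of $B_{j+1}$ between them. The paper avoids the pitfall by case-splitting directly on how $a_{i+2}$ meets $f$ (not incident / incident on one side / incident on both sides / $v$ already in $f$), which surfaces the one-sided case explicitly from the start rather than hiding it inside a general $j^\ast$.
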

\begin{proof}
First note that the running time is $\OO(\log ^2 n)$ because the dominating subroutine is the mark-and-search algorithm from~\cite{DBLP:journals/mst/HolmR17} called a constant number of times in $\Call{Find-bounding-face}{\ldots}$, and once on line~\ref{line:find-next-art}.

For correctness,
assume $a=a_{i+1}$ is an articulation point separating $u=a_0$ from $v=a_k$. Then there is at least one articulation point in the dual graph incident to $a$, and all such articulation points lie on a path in the dual tree. By assumption, $u$ lies in the bounding face of $a_i$ and $a_{i+1}$, that is, in the face $f$ returned by $\Call{Find-bounding-face}{u,a_i,a_{i+1}}$. 
Now there are four basic cases: 
\begin{enumerate}
\item\label{case:next-flip-1} If $a_{i+2}$ does not lie in $f$, then we have indeed reached a block where flips are necessary, and the algorithm returns $a_{i+1}$ as desired. (See Figure~\ref{fig:next-flip-1})
\item\label{case:next-flip-2} On the other hand, if $a_{i+2}$ lies in $f$ but only has corners incident to $f$ on one side, then we are in a case where a flip in $a_{i+2}$ is necessary to bring $u$ to the same face as $v$, and the algorithm returns $a_{i+1}$ as desired. (See Figure~\ref{fig:next-flip-2})
\item\label{case:next-flip-3} Thirdly, it may be the case that $a_{i+2}$ is incident to $f$ on both sides of the path, in which case no flips in $a_{i+1},B_{i+1}$, or $a_{i+2}$ are necessary. In this case there is a non-trivial segment of articulation points $a_{i+1},a_{i+2},\ldots$ that lie in $f$. Our algorithm will now return the last such $a_j$ incident to $f$ on both sides of the path. Here, we have two sub-cases. If $a_{j+1}$ is not incident to $f$, this indicates that either we have reached a point $a_j$ where an articulation flip is needed, or, we have reached a block $B_j$ where flips are needed. On the other hand, if $a_{j+1}$ is incident to $f$ but only on one side, we are in the case where an articulation flip in $a_{j+1}$ is needed to bring $u$ and $v$ to the same face. In both cases, our algorithm returns $a_j$ as desired. (See Figure~\ref{fig:next-flip-3})
\item\label{case:next-flip-4} Finally, we may be in the case where $v$ lies in the same face as $u$ and we are done, but in this case, the face $f$ must be the shared face: Namely, since $u$ and $v$ are separated by at least one articulation point $a$ in the primal graph with $a$ incident to some face $f$ on both sides of the tree-path from $u$ to $v$, then there is a $2$-cycle through $f$ and $a$ in the vertex-face graph separating $u$ from $v$, and thus, $f$ must be the unique face shared by all articulation points on any path $u\cdots v$. (See Figure~\ref{fig:next-flip-4})\qedhere
\end{enumerate}
\end{proof}

\begin{figure}[H]
  \centering
\begin{tikzpicture}[
  vertex/.style = {
    circle, draw, fill=white,
    inner sep = 2pt,
  },
  block/.style = {
  },
]
  \node[vertex, label={$a_0$}] (a0) at (0,0) {};
  \node[block] (B1) at (1,0) {$B_1$};
  \node[vertex, label={$a_1$}] (a1) at (2,0) {};

  \node[vertex, label={$a_i$}] (ai) at (3,0) {};
  \node[block] (Bi1) at (4,0) {$B_{i+1}$};
  \node[vertex, label={$a_{i+1}$}] (ai1) at (5,0) {};
  \node[block] (Bi2) at (6,0) {$B_{i+2}$};
  \node[vertex, label={$a_{i+2}$}] (ai2) at (7,0) {};

  \begin{pgfonlayer}{background}
    \draw[fill=gray!20] (a0) to[bend right=30] (a1.center) to[bend right=30] (a0);
    \draw[thick,dotted] (a1) -- (ai);
    \draw[fill=gray!20] (ai) to[bend right=30] (ai1.center) to[bend right=30] (ai);

    \draw[fill=gray!20] plot[smooth,tension=1] coordinates {(ai1) (7,1) (8,0) (7,-1) (ai1)};

  \end{pgfonlayer}
\end{tikzpicture}
   \caption{Lemma~\ref{lem:next-flip} case~\ref{case:next-flip-1}:
    $a_{i+2}$ does not lie in $f$. %
    $\protect\Call{find-next-flip-block}{a_0,a_i,a_{i+1},a_k}$ returns
    $a_{i+1}$.}
  \label{fig:next-flip-1}
\end{figure}
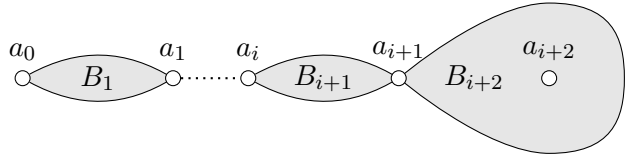

\begin{figure}[H]
  \centering
\begin{tikzpicture}[
  vertex/.style = {
    circle, draw, fill=white,
    inner sep = 2pt,
  },
  block/.style = {
  },
]
  \node[vertex, label={$a_0$}] (a0) at (0,0) {};
  \node[block] (B1) at (1,0) {$B_1$};
  \node[vertex, label={$a_1$}] (a1) at (2,0) {};

  \node[vertex, label={$a_i$}] (ai) at (3,0) {};
  \node[block] (Bi1) at (4,0) {$B_{i+1}$};
  \node[vertex, label={$a_{i+1}$}] (ai1) at (5,0) {};
  \node[block] (Bi2) at (5.6,0) {$B_{i+2}$};
  \node[vertex, label={right:$a_{i+2}$}] (ai2) at (8,0) {};

  \node[vertex, label={$a_{i+3}$}] (ai3) at (7,0) {};

  \begin{pgfonlayer}{background}
    \draw[fill=gray!20] (a0) to[bend right=30] (a1.center) to[bend right=30] (a0);
    \draw[thick,dotted] (a1) -- (ai);
    \draw[fill=gray!20] (ai) to[bend right=30] (ai1.center) to[bend right=30] (ai);

    \draw[fill=gray!20] plot[smooth,tension=1] coordinates {(ai1) (6.5,1) (8,0) (6.5,-1) (ai1)};

    \draw[fill=white] plot[smooth,tension=1] coordinates {(ai2) (6.75,.5) (6,0) (6.75,-.5) (ai2)};

    \draw[fill=gray!20] (ai2) to[bend right=30] (ai3.center) to[bend right=30] (ai2);

  \end{pgfonlayer}
\end{tikzpicture}
   \caption{Lemma~\ref{lem:next-flip} case~\ref{case:next-flip-2}: the
    path $a_0\cdots a_k$ touches $f$ on only one side in
    $a_{i+2}$. %
    $\protect\Call{find-next-flip-block}{a_0,a_i,a_{i+1},a_k}$ returns
    $a_{i+1}$.}
  \label{fig:next-flip-2}
\end{figure}
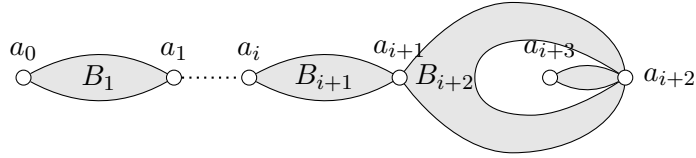

\begin{figure}[H]
  \centering
\begin{tikzpicture}[
  vertex/.style = {
    circle, draw, fill=white,
    inner sep = 2pt,
  },
  block/.style = {
  },
]
  \node[vertex, label={$a_0$}] (a0) at (0,0) {};
  \node[block] (B1) at (1,0) {$B_1$};
  \node[vertex, label={$a_1$}] (a1) at (2,0) {};

  \node[vertex, label={$a_i$}] (ai) at (3,0) {};
  \node[block] (Bi1) at (4,0) {$B_{i+1}$};
  \node[vertex, label={$a_{i+1}$}] (ai1) at (5,0) {};
  \node[block] (Bi2) at (6,0) {$B_{i+2}$};
  \node[vertex, label={$a_{i+2}$}] (ai2) at (7,0) {};

  \node[vertex, label={$a_{j-1}$}] (ajm) at (8,0) {};
  \node[block] (Bj) at (9,0) {$B_j$};
  \node[vertex, label={$a_j$}] (aj) at (10,0) {};
  \node[block] (Bj1) at (11,0) {$B_{j+1}$};

  \node[vertex, label={$a_k$}] (ak) at (12,0) {};

  \begin{pgfonlayer}{background}
    \draw[fill=gray!20] (a0) to[bend right=30] (a1.center) to[bend right=30] (a0);
    \draw[thick,dotted] (a1) -- (ai);
    \draw[fill=gray!20] (ai) to[bend right=30] (ai1.center) to[bend right=30] (ai);
    \draw[fill=gray!20] (ai1) to[bend right=30] (ai2.center) to[bend right=30] (ai1);

    \draw[thick,dotted] (ai2) -- (ajm);
    \draw[fill=gray!20] (ajm) to[bend right=30] (aj.center) to[bend right=30] (ajm);

    \draw[fill=gray!20] plot[smooth,tension=1] coordinates {(aj) (12,1) (13,0) (12,-1) (aj)};

  \end{pgfonlayer}
\end{tikzpicture}
   \caption{Lemma~\ref{lem:next-flip} case~\ref{case:next-flip-3}:
    $\protect\Call{find-next-flip-block}{a_0,a_i,a_{i+1},a_k}$ returns
    $a_j$ with $i+1<j<k$.}
  \label{fig:next-flip-3}
\end{figure}

\begin{figure}[H]
  \centering
\begin{tikzpicture}[
  vertex/.style = {
    circle, draw, fill=white,
    inner sep = 2pt,
  },
  block/.style = {
  },
]
  \node[vertex, label={$a_0$}] (a0) at (0,0) {};
  \node[block] (B1) at (1,0) {$B_1$};
  \node[vertex, label={$a_1$}] (a1) at (2,0) {};

  \node[vertex, label={$a_i$}] (ai) at (3,0) {};
  \node[block] (Bi1) at (4,0) {$B_{i+1}$};
  \node[vertex, label={$a_{i+1}$}] (ai1) at (5,0) {};
  \node[block] (Bi2) at (6,0) {$B_{i+2}$};
  \node[vertex, label={$a_{i+2}$}] (ai2) at (7,0) {};

  \node[vertex, label={$a_{k-1}$}] (akm) at (8,0) {};
  \node[block] (Bk) at (9,0) {$B_k$};
  \node[vertex, label={$a_k$}] (ak) at (10,0) {};

  \begin{pgfonlayer}{background}
    \draw[fill=gray!20] (a0) to[bend right=30] (a1.center) to[bend right=30] (a0);
    \draw[thick,dotted] (a1) -- (ai);
    \draw[fill=gray!20] (ai) to[bend right=30] (ai1.center) to[bend right=30] (ai);
    \draw[fill=gray!20] (ai1) to[bend right=30] (ai2.center) to[bend right=30] (ai1);

    \draw[thick,dotted] (ai2) -- (akm);
    \draw[fill=gray!20] (akm) to[bend right=30] (ak.center) to[bend right=30] (akm);

  \end{pgfonlayer}
\end{tikzpicture}
   \caption{Lemma~\ref{lem:next-flip} case~\ref{case:next-flip-4}:
    $\protect\Call{find-next-flip-block}{a_0,a_i,a_{i+1},a_k}$ returns
    $a_k$.}
  \label{fig:next-flip-4}
\end{figure}

\FloatBarrier
\subsection{\textproc{do-articulation-flips}}

The next piece of our algorithm is the \textproc{do-articulation-flips}
function in Algorithm~\ref{alg:do-art}.

\begin{algorithm}[htb!]
  \caption{}
  \label{alg:do-art}
  \begin{algorithmic}[1]
    \Require $u',v'$ linkable%
    \Function{do-articulation-flips}{$u,u',v',v$}
      \If{$u=u'$}
        \If{$v'=v$}
          \LeftComment{Nothing to do}
        \Else
          \State $f_v,c_v^1,c_v^2\gets \Call{find-bounding-face}{v,v',u}$
          \If{$f_v$ not incident to $u'$}
            \State $c_u,c_v\gets\Call{linkable}{u',v'}$
            \State\label{line:aflip-v1} \Call{articulation-flip}{$c_v^1,c_v^2,c_v$}
          \EndIf
        \EndIf
      \Else
        \If{$v'=v$}
          \State $f_u,c_u^1,c_u^2\gets \Call{find-bounding-face}{u,u',v}$
          \If{$f_u$ not incident to $v'$}
            \State $c_u,c_v\gets\Call{linkable}{u',v'}$
            \State\label{line:aflip-u1} \Call{articulation-flip}{$c_u^1,c_u^2,c_u$}
          \EndIf
        \Else{ $u\neq u'$ and $v'\neq v$}
          \State $f_v,c_v^1,c_v^2\gets \Call{find-bounding-face}{v,v',u}$
          \State $f_u,c_u^1,c_u^2\gets \Call{find-bounding-face}{u,u',v}$
          \If{$f_u=f_v$}
            \LeftComment{Nothing to do}
          \ElsIf{$u'$ incident to $f_v$}
            \State $c_u\gets$ any corner between $u'$ and $f_v$
            \State\label{line:aflip-u2} \Call{articulation-flip}{$c_u^1,c_u^2,c_u$}
          \ElsIf{$v'$ incident to $f_u$}
            \State $c_v\gets$ any corner between $v'$ and $f_u$
            \State\label{line:aflip-v2} \Call{articulation-flip}{$c_v^1,c_v^2,c_v$}
          \Else{}
            \State $c_u,c_v\gets\Call{linkable}{u',v'}$
            \State\label{line:aflip-uv1} \Call{articulation-flip}{$c_u^1,c_u^2,c_u$}
            \State\label{line:aflip-uv2} \Call{articulation-flip}{$c_v^1,c_v^2,c_v$}
          \EndIf
        \EndIf
      \EndIf
    \EndFunction

  \end{algorithmic}
\end{algorithm}

\begin{lemma}\label{lem:do-art}
  If $a_0$ and $a_i$ share a face containing at least one edge from
  $B_{i+1}$, and $a_i$ and $a_{i+1}$ share a face, then in $\OO(\log^2 n)$
  time $\Call{do-articulation-flips}{a_0,a_i,a_{i+1},a_k}$ does at
  most one articulation flip at each of $a_i$ and $a_{i+1}$, each of
  which are critical. Afterwards, $a_0$ shares a face with $a_{i+1}$,
  and (if $i\leq k-2$) with at least one edge in $B_{i+2}$.
\end{lemma}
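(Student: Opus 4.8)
The plan is to verify the claimed postcondition by a case analysis that exactly mirrors the branching structure of Algorithm~\ref{alg:do-art}, using the correctness of \textproc{find-bounding-face} (analogous to what was established inside the proof of Lemma~\ref{lem:next-flip}) and of \textproc{linkable} as black boxes. The running-time claim is immediate: \textproc{do-articulation-flips} calls \textproc{find-bounding-face}, \textproc{linkable}, and \textproc{articulation-flip} each a constant number of times, and each of these costs $\OO(\log^2 n)$ by the properties of the data structure of~\cite{DBLP:journals/mst/HolmR17}; so the whole routine runs in $\OO(\log^2 n)$ time.

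For correctness I would set up the following notation: by hypothesis $a_0$ and $u'=a_i$ share a face incident to an edge of $B_{i+1}$, and $u'=a_i$ and $v'=a_{i+1}$ share a face. Let $f_u$ be the face returned by $\Call{find-bounding-face}{u,u',v}$ (when $u\neq u'$) and $f_v$ the face returned by $\Call{find-bounding-face}{v,v',u}$ (when $v'\neq v$). The key structural fact, borrowed from the argument in Lemma~\ref{lem:next-flip}, is that $f_u$ is the \emph{unique} face incident to $u'$ on both sides of the tree-path that separates $u$ from everything past $u'$ toward $v$; dually for $f_v$. In particular, after a single articulation flip at $u'$ taking the edges of $B_{i+1}$ (equivalently, the side not containing $u$) into the corner $c_u$ produced by \textproc{linkable}, the face $f_u$ becomes incident to $v'$ as well; similarly a single articulation flip at $v'$ into a corner between $v'$ and $f_u$ makes $f_u$ incident to an edge of $B_{i+2}$. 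I would then walk through each branch of the algorithm: (i) $u=u'$: if also $v'=v$ nothing is needed since $u'$ and $v'$ already share a face; otherwise if $f_v$ is already incident to $u'=u$ we are done, and if not, the single flip at $v'$ on line~\ref{line:aflip-v1} moves the edges adjacent to $v'$ so that $f_v$ becomes incident to $u$. (ii) $v'=v$: symmetric, using line~\ref{line:aflip-u1}. (iii) $u\neq u'$ and $v'\neq v$: if $f_u=f_v$ we are already done; else if $u'$ is incident to $f_v$, one flip at $u'$ (line~\ref{line:aflip-u2}) rotates its $B_{i+1}$-side into $f_v$, making $u$ and $v'$ share the face $f_v$; symmetrically if $v'$ is incident to $f_u$ (line~\ref{line:aflip-v2}); otherwise neither is incident to the other's bounding face, and \textproc{linkable} gives a common face $g$ of $u'$ and $v'$, into which one flip at $u'$ and one at $v'$ (lines~\ref{line:aflip-uv1}--\ref{line:aflip-uv2}) both rotate the relevant sides, so that afterwards $u$, the flipped $B_{i+1}$, $u'$, $v'$, and $B_{i+2}$ all lie around $g$. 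In every branch at most one flip is done at each of $a_i$ and $a_{i+1}$, establishing the first part of the postcondition; and in every branch the flipped side at $u'$ is exactly the side containing the path onward toward $v$ (hence containing $u$'s complement), which is precisely the condition for the flip to be critical for the pair $u,v$, giving the second part. Finally, after the flips, $f_u$ (or $g$) is incident to $u$ and to $v'$; and when $i\le k-2$ the corner choices were made so that this face is also incident to an edge of $B_{i+2}$, re-establishing the loop invariant for the next iteration.

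I expect the main obstacle to be the bookkeeping in case (iii): one must argue that the two articulation flips at $u'$ and at $v'$ genuinely do not interfere (they touch disjoint rotation systems, at different vertices, so they commute), that after both flips the \emph{same} face is simultaneously incident to $u$, $v'$, and an edge of $B_{i+2}$, and that the flip at $u'$ does not destroy the property that $u$ and $u'$ still share a face — here is where one uses that the subsequence being rotated is exactly the $B_{i+1}$-side, so $u$'s corner and the corner toward $v'$ end up adjacent around the target face while $u$'s other incidences are untouched. A secondary subtlety is that \textproc{find-bounding-face} is only guaranteed to return the correct separating face when the relevant vertex actually is an articulation point separating the two endpoints; I would note that this is exactly guaranteed by the setup (the $a_i$ are the articulation points on $u\cdots v$, so $a_i$ separates $a_0$ from $a_k$ whenever $0<i<k$), and handle the boundary cases $u=u'$ and $v'=v$ separately as the algorithm does. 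Everything else is a routine unrolling of the pseudocode against the specification of the underlying mark-and-search primitive.
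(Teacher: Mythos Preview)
Your overall approach---running time from a constant number of $\OO(\log^2 n)$ subroutine calls, criticality from the flipped subgraph containing exactly one of $u,v$, and a case analysis mirroring the branches of Algorithm~\ref{alg:do-art}---is exactly the paper's approach.

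There is, however, a concrete error in your description of \emph{which} subgraph each articulation flip moves. You write that the flip at $u'$ ``tak[es] the edges of $B_{i+1}$ (equivalently, the side not containing $u$) into the corner $c_u$'', and later that ``the flipped side at $u'$ is exactly the side containing the path onward toward $v$''. This is backwards. The corners $c_u^1,c_u^2$ returned by $\Call{find-bounding-face}{u,u',v}$ bound the component at $u'$ that contains $u$ (that is, $B_1\cup\cdots\cup B_i$), and $\Call{articulation-flip}{c_u^1,c_u^2,c_u}$ moves \emph{that} side into the target corner. Symmetrically, the flip at $v'$ moves $B_{i+2}\cup\cdots\cup B_k$, the side containing $v$. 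The paper states this directly: ``$B_i$ (and hence $u$) is flipped into a face containing $v'$'' and ``$B_{i+2}$ is flipped into a face containing $u$''.

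This reversal does not affect your criticality claim (either way exactly one of $u,v$ is moved), but it breaks your postcondition argument. Under your description, moving the $B_{i+1}$-side at $u'$ into $c_u$ would leave $u$ sitting in $f_u$, and your assertion that ``the face $f_u$ becomes incident to $v'$ as well'' simply does not follow. Under the correct description, $u$ itself is transported into the target face (either $f_v$ in line~\ref{line:aflip-u2}, or the face found by $\Call{linkable}{u',v'}$ in lines~\ref{line:aflip-uv1}--\ref{line:aflip-uv2}), and that face already contains $v'$ and, when $i\le k-2$, an edge of $B_{i+2}$; the postcondition is then immediate. Once you correct which side is flipped, the rest of your case analysis goes through and matches the paper.
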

\begin{proof}
  The running time of $\Call{do-articulation-flips}{\cdots}$ is
  $\OO(\log^2 n)$ because it does a constant number of calls to
  $\Call{find-bounding-face}{\cdots}$, $\Call{linkable}{\cdots}$, and
  $\Call{articulation-flip}{\cdots}$, which each take worst case
  $\OO(\log^2 n)$ time.
  The number and location of the articulation flips done is likewise
  clear from the definition, and since each of the flips move exactly
  one of $u$ and $v$, any flips done are critical.
  \begin{itemize}
  \item If no articulation flips are done: $u$ and $v$ are already in
    the same face. (See Figure~\ref{fig:do-art-1})

  \item If an articulation flip is done at $v'=a_{i+1}$ but not at
    $u'=a_i$: $B_{i+2}$ is flipped into a face containing $u$ in
    line~\ref{line:aflip-v1} if $u=u'$, or a face containing $u'$ and
    $B_i$ (and hence $u$) in line~\ref{line:aflip-v2} otherwise. (See
    Figure~\ref{fig:do-art-2})

  \item If an articulation flip is done at $u'=a_i$ but not at
    $v'=a_{i+1}$: $B_i$ (and hence $u$) is flipped into a face
    containing $v'$ in line~\ref{line:aflip-u1} if $v'=v$, or a face
    containing $v'$ and $B_{i+2}$ in line~\ref{line:aflip-u2} otherwise.
    (See Figure~\ref{fig:do-art-3})

  \item If articulation flips are done at both $u'=a_i$ and
    $v'=a_{i+1}$: Both $B_i$ (and hence $u$) and $B_{i+2}$ are flipped
    into the same face in
    lines~\ref{line:aflip-uv1}--\ref{line:aflip-uv2}.
    (See Figure~\ref{fig:do-art-4})
  \end{itemize}
  In each case the postcondition is satisfied.
\end{proof}

\begin{figure}[H]
  \centering
\begin{tikzpicture}[
  vertex/.style = {
    circle, draw, fill=white,
    inner sep = 2pt,
  },
  block/.style = {
  },
]
  \node[vertex, label={$a_0$}] (a0) at (0.5,0) {};
  \node[vertex, label={$a_{i-1}$}] (aim) at (1.5,0) {};
  \node[block] (Bi) at (2.25,0) {$B_i$};

  \node[vertex, label={$a_i$}] (ai) at (3,0) {};
  \node[block] (Bi1) at (5,0) {$B_{i+1}$};
  \node[vertex, label={$a_{i+1}$}] (ai1) at (7,0) {};
  \node[block] (Bi2) at (7.75,0) {$B_{i+2}$};
  \node[vertex, label={$a_{i+2}$}] (ai2) at (8.5,0) {};

  \begin{pgfonlayer}{background}

    \draw[fill=gray!20] (ai) to[bend right=15] (ai1.center) to[bend right=15] (ai);

    \draw[thick,dotted] (a0) -- (aim);

    \draw[fill=gray!20] (aim) to[bend right=30] (ai.center) to[bend right=30] (aim);
    \draw[fill=gray!20] (ai1) to[bend right=30] (ai2.center) to[bend right=30] (ai1);

  \end{pgfonlayer}
\end{tikzpicture}
   \caption{
    $\protect\Call{do-articulation-flips}{a_0,a_i,a_{i+1},a_k}$: If
    $f_u=f_v$ there is nothing to do. }
  \label{fig:do-art-1}
\end{figure}
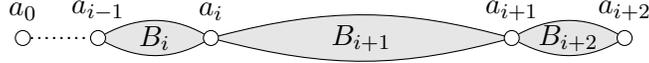
\begin{figure}[H]
  \centering
\begin{tikzpicture}[
  vertex/.style = {
    circle, draw, fill=white,
    inner sep = 2pt,
  },
  block/.style = {
  },
]

  \node[vertex, label={$a_0$}] (a0) at (5.5,.5) {};
  \node[vertex, label={$a_{i-1}$}] (aim) at (4.5,.45) {};
  \node[block] (Bi) at (3.75,.2) {$B_i$};

  \node[vertex, label={$a_i$}] (ai) at (3,0) {};
  \node[block] (Bi1) at (7,-.75) {$B_{i+1}$};
  \node[vertex, label={$a_{i+1}$}] (ai1) at (7,0) {};
  \node[block] (Bi2) at (7.75,0) {$B_{i+2}$};
  \node[vertex, label={$a_{i+2}$}] (ai2) at (8.5,0) {};

  \node[below=.1 of a0] {$f_u$};
  \node[right=.7 of ai2] {$f_v$};

  \begin{pgfonlayer}{background}

    \draw[fill=gray!20] plot[smooth cycle,tension=.6] coordinates {
      (6.5,1.35) (3.75,1.25) (2.75,0) (3.75,-1.25)
      (6.5,-1.35)
      (9.25,-1.25) (10.25,0) (9.25,1.25) };

    \draw[fill=white] (ai) to[bend right=60] (ai1.center) to[bend right=60] (ai);
    \draw[fill=gray!20] (ai) to[bend right=40] (ai1.center) to[bend left=20] (ai);

    \draw[fill=white] plot[smooth,tension=1] coordinates {(ai1) (9,1) (10,0) (9,-1) (ai1)};

    \draw[thick,dotted] (a0) -- (aim);

    \draw[fill=gray!20] (aim) to[bend right=30] (ai.center) to[bend right=30] (aim);
    \draw[fill=gray!20] (ai1) to[bend right=30] (ai2.center) to[bend right=30] (ai1);

    \draw[fill=gray!20] plot[smooth,tension=1] coordinates {(ai1) (8.5,.65) (9.25,0) (8.5,-.65) (ai1)};

  \end{pgfonlayer}
\end{tikzpicture}
   \caption{
    $\protect\Call{do-articulation-flips}{a_0,a_i,a_{i+1},a_k}$: When
    $f_u\neq f_v$ and $a_{i+1}\in f_u$, we may use any corner between
    $f_u$ and $a_{i+1}$ to flip $B_{i+2}$ into.}
  \label{fig:do-art-2}
\end{figure}
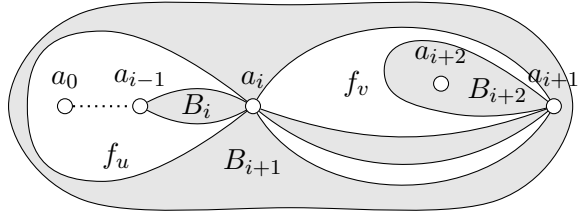
\begin{figure}[H]
  \centering
\begin{tikzpicture}[
  vertex/.style = {
    circle, draw, fill=white,
    inner sep = 2pt,
  },
  block/.style = {
  },
]

  \node[vertex, label={$a_0$}] (a0) at (0.5,0) {};
  \node[vertex, label={$a_{i-1}$}] (aim) at (1.5,0) {};
  \node[block] (Bi) at (2.25,0) {$B_i$};

  \node[vertex, label={$a_i$}] (ai) at (3,0) {};
  \node[block] (Bi1) at (3,-.75) {$B_{i+1}$};
  \node[vertex, label={$a_{i+1}$}] (ai1) at (7,0) {};
  \node[block] (Bi2) at (6.25,.2) {$B_{i+2}$};
  \node[vertex, label={$a_{i+2}$}] (ai2) at (5.5,.3) {};

  \node[below right=.4 of a0] {$f_u$};
  \node[left=.7 of ai2] {$f_v$};

  \begin{pgfonlayer}{background}

    \draw[fill=gray!20] plot[smooth cycle,tension=.6] coordinates {
      (3.5,1.35) (0.75,1.25) (-.25,0) (0.75,-1.25)
      (3.5,-1.35)
      (6.25,-1.25) (7.25,0) (6.25,1.25) };

    \draw[fill=white] plot[smooth,tension=1] coordinates {(ai) (1,1) (0,0) (1,-1) (ai)};

    \draw[fill=white] (ai) to[bend right=60] (ai1.center) to[bend right=60] (ai);
    \draw[fill=gray!20] (ai) to[bend right=40] (ai1.center) to[bend left=20] (ai);

    \draw[thick,dotted] (a0) -- (aim);

    \draw[fill=gray!20] (aim) to[bend right=30] (ai.center) to[bend right=30] (aim);

    \draw[fill=gray!20] plot[smooth,tension=1] coordinates {(ai1) (5.5,.85) (4.75,.4) (5.65,-.1) (ai1)};

  \end{pgfonlayer}
\end{tikzpicture}
   \caption{
    $\protect\Call{do-articulation-flips}{a_0,a_i,a_{i+1},a_k}$: When
    $f_u\neq f_v$ and $a_i\in f_v$, we use any corner between
    $f_v$ and $a_i$ to flip $B_i$ into.}
  \label{fig:do-art-3}
\end{figure}
\begin{figure}[H]
  \centering
\begin{tikzpicture}[
  vertex/.style = {
    circle, draw, fill=white,
    inner sep = 2pt,
  },
  block/.style = {
  },
]

  \node[vertex, label={$a_0$}] (a0) at (0.5,0) {};
  \node[vertex, label={$a_{i-1}$}] (aim) at (1.5,0) {};
  \node[block] (Bi) at (2.25,0) {$B_i$};

  \node[vertex, label={$a_i$}] (ai) at (3,0) {};
  \node[block] (Bi1) at (3,-.75) {$B_{i+1}$};
  \node[vertex, label={$a_{i+1}$}] (ai1) at (7,0) {};
  \node[block] (Bi2) at (7.75,0) {$B_{i+2}$};
  \node[vertex, label={$a_{i+2}$}] (ai2) at (8.5,0) {};

  \node[below right=.4 of a0] {$f_u$};
  \node[right=.7 of ai2] {$f_v$};

  \draw[dashed] (ai) to[bend left=30] (ai1);

  \begin{pgfonlayer}{background}

    \draw[fill=gray!20] plot[smooth cycle,tension=.6] coordinates {
      (5,1.35) (0.75,1.25) (-.25,0) (0.75,-1.25)
      (5,-1.35)
      (9.25,-1.25) (10.25,0) (9.25,1.25) };

    \draw[fill=white] plot[smooth,tension=1] coordinates {(ai) (1,1) (0,0) (1,-1) (ai)};

    \draw[fill=white] (ai) to[bend right=60] (ai1.center) to[bend right=60] (ai);
    \draw[fill=gray!20] (ai) to[bend right=10] (ai1.center) to[bend right=10] (ai);

    \draw[fill=white] plot[smooth,tension=1] coordinates {(ai1) (9,1) (10,0) (9,-1) (ai1)};

    \draw[thick,dotted] (a0) -- (aim);

    \draw[fill=gray!20] (aim) to[bend right=30] (ai.center) to[bend right=30] (aim);

    \draw[fill=gray!20] plot[smooth,tension=1] coordinates {(ai1) (8.5,.65) (9.25,0) (8.5,-.65) (ai1)};

  \end{pgfonlayer}
\end{tikzpicture}
   \caption{
    $\protect\Call{do-articulation-flips}{a_0,a_i,a_{i+1},a_k}$: When
    $a_{i+1}\not\in f_u$ and $a_i\not\in f_v$, use a
    $\protect\Call{linkable}{a_i,a_{i+1}}$ query to find the corners
    where $(a_i,a_{i+1})$ could be inserted, and flip into those corners.}
  \label{fig:do-art-4}
\end{figure}
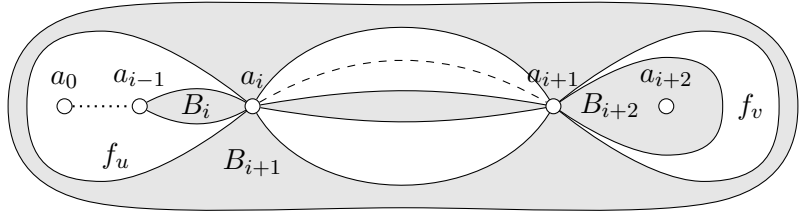

\begin{lemma}\label{lem:art-flip-pattern}
  During execution of $\Call{multi-flip-linkable}{u,v}$, the pattern
  of articulation flips is as follows:

  \begin{itemize}
  \item If $\Call{do-articulation-flips}{a_0,a_i,a_{i+1},a_k}$ does a
    flip at $a_i$, it is potential-decreasing.

  \item If $\Call{do-articulation-flips}{a_0,a_i,a_{i+1},a_k}$ does a
    flip at $a_{i+1}$ that is not potential-decreasing, then it is
    potential-neutral, the following
    $\Call{find-next-flip-block}{a_0,a_i,a_{i+1},a_k}$ returns
    $a_{i+1}$, and in the next iteration either:
    \begin{itemize}
    \item $\Call{do-separation-flips}{a_{i+1},a_{i+2}}$ returns ``no''; or
    \item $\Call{do-separation-flips}{a_{i+1},a_{i+2}}$ does at least
      one flip; or
    \item $\Call{do-articulation-flips}{a_0,a_{i+1},a_{i+2},a_k}$ does
      a potential-decreasing articulation flip at $a_{i+1}$.
    \end{itemize}
  \end{itemize}
\end{lemma}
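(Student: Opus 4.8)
I would prove the lemma by classifying each articulation flip that a call $\Call{do-articulation-flips}{a_0,a_i,a_{i+1},a_k}$ may perform according to which rotation arc at the articulation point --- equivalently, which hanging subtree --- it relocates, and then reading off the sign of $\Delta\critcost_{\text{clean}}(H;u,v)$ from the loop invariant. Recall that $\mathcal{A}$'s articulation flips are the ones associated with $\critcost_{\text{clean}}$, so by Lemma~\ref{lem:costs-delta} a critical articulation flip changes $\critcost_{\text{clean}}(H;u,v)$ by $-1$, $0$, or $+1$ and is potential-decreasing, potential-neutral, or potential-increasing accordingly; and by Lemma~\ref{lem:do-art} every flip such a call performs is critical for $u,v$. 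So the only work is to rule out the wrong signs. Throughout I use the invariant under which the call is made: when the main loop reaches the round with $u'=a_i$, the pair $(u,a_i)$ is admissible and $u$ lies in a face $F$ incident to an edge of $B_{i+1}$ --- hence $F$ is a face of $B_{i+1}$ into which the $u$-side subtree of $a_i$ hangs --- and the preceding $\Call{do-separation-flips}{a_i,a_{i+1}}$ has just made $a_i$ and $a_{i+1}$ share a face of $B_{i+1}$.

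\textbf{A flip at $a_i$} (lines~\ref{line:aflip-u1}, \ref{line:aflip-u2}, \ref{line:aflip-uv1}) moves the $u$-side subtree of $a_i$, which contains $u$ and not $v$, out of $F$ and into a face of $B_{i+1}$ incident to both $a_i$ and $a_{i+1}$ (Figures~\ref{fig:do-art-3} and~\ref{fig:do-art-4}). I would argue this move is \emph{forced}: because $a_i$ separates $u$ from $v$, in every embedding that admits $(u,v)$ the $u$-side subtree of $a_i$ must hang into a face of the block $B_{i+1}$ incident to $a_{i+1}$, whereas --- in exactly the branches where a flip is done --- it currently hangs into $F$, which is not such a face. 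Hence this repositioning is needed on the way to any embedding admitting $(u,v)$; our single flip achieves it and alters nothing around the other articulation points, so (by Corollary~\ref{cor:cost-is-distance} and the accounting deferred to the crux below) $\critcost_{\text{clean}}(H;u,v)$ was positive and drops by exactly $1$, and the flip is potential-decreasing.

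\textbf{A flip at $a_{i+1}$} (lines~\ref{line:aflip-v1}, \ref{line:aflip-v2}, \ref{line:aflip-uv2}) relocates the $B_{i+2}$-side subtree of $a_{i+1}$, which contains $v$ but not $u$, into a face already incident to $u$ --- either the face $f_u$ shared by $u$ and $a_{i+1}$, or the common face of $a_i,a_{i+1}$ into which $B_i$ (hence $u$) was just flipped (Figures~\ref{fig:do-art-2} and~\ref{fig:do-art-4}). Since it only moves $v$'s subtree into a face $u$ already sees, it touches no arrangement other than the one at $a_{i+1}$ and cannot turn a correct arrangement incorrect, so $\Delta\critcost_{\text{clean}}\in\{-1,0\}$; if it is not $-1$ it is potential-neutral. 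Moreover, had the flip made $u$ and $a_{i+2}$ share a face incident to an edge of $B_{i+3}$, it would have resolved the arrangement at $a_{i+1}$ and been potential-decreasing; so in the neutral case $u$ does not share such a face with $a_{i+2}$, while (by the postcondition of Lemma~\ref{lem:do-art}) $u$ does share a face with $a_{i+1}$ incident to an edge of $B_{i+2}$. Lemma~\ref{lem:next-flip} then shows that the following $\Call{find-next-flip-block}{a_0,a_i,a_{i+1},a_k}$ returns exactly $a_{i+1}$, so in the next iteration $u'=a_{i+1}$ and $v'=a_{i+2}$, and precisely one of three things happens: $\Call{do-separation-flips}{a_{i+1},a_{i+2}}$ returns ``no'' (if $G\cup(u,v)$ is nonplanar through $B_{i+2}$); or it performs at least one separation flip (if $B_{i+2}$ must be reshaped); or $B_{i+2}$ needs no separation flip and $\Call{do-articulation-flips}{a_0,a_{i+1},a_{i+2},a_k}$ re-places the $B_{i+2}$-side subtree into the globally correct face --- which, by the ``flip at $a_i$'' argument applied one block further along, is a potential-decreasing articulation flip at $a_{i+1}$.

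The crux, and the step I expect to be the main obstacle, is making the words \emph{forced}, ``cannot turn a correct arrangement incorrect'', and ``would have resolved the arrangement at $a_{i+1}$'' quantitative --- that is, turning the purely topological facts about where the $u$-side and $B_{i+2}$-side subtrees hang into the exact equalities $\Delta\critcost_{\text{clean}}=-1$ and $\Delta\critcost_{\text{clean}}=0$. This needs the explicit description of $\critcost_{\text{clean}}$ as a sum of per-block and per-articulation-point contributions developed in Section~\ref{sec:define-costs}, together with a branch-by-branch verification against Figures~\ref{fig:do-art-1}--\ref{fig:do-art-4}; by contrast, the criticality of each flip and the bound $\abs{\Delta\critcost_{\text{clean}}}\le 1$ are immediate from Lemmas~\ref{lem:do-art} and~\ref{lem:costs-delta}. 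The case where $G\cup(u,v)$ is nonplanar requires no separate treatment, since every such call precedes the detection step and all facts used above are local to the prefix $B_1,\dots,B_{i+2}$.
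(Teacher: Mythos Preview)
Your plan follows essentially the same line as the paper's: argue the flip at $a_i$ is forced (hence potential-decreasing), and for the flip at $a_{i+1}$ rule out $\Delta=+1$, then trace the consequences of the neutral case through \textproc{find-next-flip-block} and the next iteration. The paper differs mainly in concreteness for the second bullet: instead of your abstract claim that moving $v$'s subtree into a face $u$ already sees ``cannot turn a correct arrangement incorrect'', it names the unique geometric obstruction---a side block $B'$ hanging at $a_{i+1}$ and incident to both sides of $u\cdots v$ (Figure~\ref{fig:neutral-art-flip})---and shows the neutral case arises exactly when the first flip could have been skipped in favor of placing $B_i$ directly into the face bounded by $B'$. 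That concrete picture immediately yields that $a_0$ does not share a face with $a_{i+2}$ after the neutral flip, which is what forces the second flip at $a_{i+1}$ in your third alternative; your contrapositive route gets there too but leaves more to the reader. One small slip in your description of that third alternative: the new call's flip at $u'=a_{i+1}$ moves the \emph{$u$-side} subtree (via the corners $c_u^1,c_u^2$ from $\Call{find-bounding-face}{u,a_{i+1},v}$), not the $B_{i+2}$-side; your appeal to the first bullet ``one block further along'' is nonetheless the right justification.
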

\begin{proof}
  If $\Call{do-articulation-flips}{a_0,a_i,a_{i+1},a_k}$ does a flip
  at $u'=a_i$, it is because $u$ and $v'$ does not yet share a face
  and at least one flip at $a_i$ is needed to bring them together. By
  our invariants, $u$ does share a face with both $a_i$ and at least
  one edge of $B_{i+1}$. Observe that no amount of clean separation
  flips can help bringing them together, and thus any clean path in
  $\Emb(G)$ from $H$ to $H'\in\Emb(G;u,v)$ contains at least one
  separation flip at $a_i$. In particular, any shortest path contains
  exactly one such flip, and thus this flip is potential-decreasing.

  On the other hand, if
  $\Call{do-articulation-flips}{a_0,a_i,a_{i+1},a_k}$ does a flip at
  $u'=a_{i+1}$ because some other block $B'$ is incident to $a_{i+1}$
  on both sides of $u\cdots v$, then it can happen that (after all
  separation flips in $B_{i+2}$ are done) $u$ is still not sharing a
  face with $a_{i+2}$ even though $a_{i+1}$ is (See
  Figure~\ref{fig:neutral-art-flip}). In this case, a second
  articulation flip is needed at $a_{i+1}$ (which will be
  potential-decreasing).  However, if $a_i$ shares a face with $B'$,
  the first flip at $a_{i+1}$ could have been skipped, and we would
  reach the invariant state with one fewer flips.  In this case, and
  only this case, the first flip the algorithm does at $a_{i+1}$ is
  potential-neutral. Now observe that if we do such a
  potential-neutral flip, the following
  $\Call{find-next-flip-block}{a_0,a_i,a_{i+1},a_k}$ must return
  $a_{i+1}$, because either $a_{i+1}$ does not share a face with
  $a_{i+2}$ and $\Call{do-separation-flips}{a_{i+1},a_{i+2}}$ will
  behave as described, or $a_{i+1}$ does share a face with $a_{i+2}$
  but $a_0$ does not share a face with $a_{i+2}$ so the second
  articulation flip at $a_{i+1}$ is done as described.
\end{proof}

\begin{figure}[H]
  \centering
\begin{tikzpicture}[
  vertex/.style = {
    circle, draw, fill=white,
    inner sep = 2pt,
  },
  block/.style = {
  },
]
  \node[vertex, label={$a_0$}] (a0) at (0,0) {};
  \node[block] (B1) at (1,0) {$B_1$};
  \node[vertex, label={$a_1$}] (a1) at (2,0) {};

  \node[vertex] at (3.5,0) {};

  \node[vertex, label={$a_i$}] (ai) at (5,0) {};
  \node[block] (Bi1) at (6,0) {$B_{i+1}$};
  \node[vertex, label={$a_{i+1}$}] (ai1) at (7,0) {};

  \node[block] (Bi2) at (9.15,-.7) {$B_{i+2}$};

  \node[vertex, label={right:$a_{i+2}$}] (ai2) at (9,0) {};

  \node[block] (Bp) at (10.75,0) {$B'$};

  \begin{pgfonlayer}{background}
    \draw[fill=gray!20] (a0) to[bend right=30] (a1.center) to[bend right=30] (a0);
    \draw[thick,dotted] (a1) -- (ai);
    \draw[fill=gray!20] (ai) to[bend right=30] (ai1.center) to[bend right=30] (ai);

    \draw[fill=gray!20] plot[smooth,tension=1] coordinates {(ai1) (9,1.75)
      (11,0) (9,-1.75) (ai1)};
    \draw[fill=white] plot[smooth,tension=1] coordinates {(ai1) (9,1.5)
      (10.5,0) (9,-1.5) (ai1)};

    \draw[fill=gray!20] plot[smooth,tension=1] coordinates {(ai1) (9,1) (10,0) (9,-1) (ai1)};

    \draw[fill=white] plot[smooth,tension=1] coordinates {(ai1) (8.5,.5) (ai2) (8.5,-.5) (ai1)};

  \end{pgfonlayer}
\end{tikzpicture}
   \caption{Example where \textproc{do-articulation-flips} makes a
    potential-neutral flip at $a_{i+1}$, moving $B_{i+2}$ into the outer face.
    With or without this flip, we still need to flip $B_1\cdots
    B_{i+1}$ into a face containing $a_{i+2}$.}
  \label{fig:neutral-art-flip}
\end{figure}
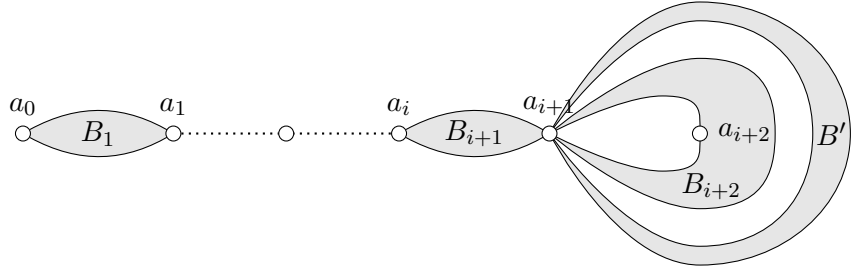

\FloatBarrier
\subsection{If $u,v$ are biconnected (\textproc{do-separation-flips})}
\label{subsec:do-sep}

The remaining piece of our algorithm, and by far the most complicated, consists of the \textproc{do-separation-flips} function in Algorithm~\ref{alg:do-sep}, and its subroutines.

The general idea in $\Call{do-separation-flips}{u,v}$ is to repeatedly
find and apply a separation flip $\sigma$ (uniquely described by a
tuple of $4$ corners forming a $4$-cycle in the vertex-face graph),
such that
\begin{itemize}
\item The vertices incident to $\sigma$ form a separation pair
  $\set{s,t}$, separating $u$ from $v$.
\item $u$ is incident to at least one of the faces $f_u,f_v$ incident
  to $\sigma$.
\item $\sigma$ partitions (the edges of) $H$ into two subgraphs
  $H_u,H_v$, with $u\in H_u\setminus\set{s,t}$ and $v\in
  H_v\setminus\set{s,t}$.
\end{itemize}

We call such a flip a \emph{$u$-flip} (w.r.t.\@ $v$), and the
corresponding $H_u$ a \emph{$u$-flip-component} (w.r.t.\@ $v$).  We
call a $u$-flip \emph{maximal} if it maximizes the size (e.g. edges
plus vertices) of $H_u$.

A given $u$-flip-component $H_u$ remains a $u$-flip-component if we
flip it, so we may require that each step flips a strictly larger
subgraph than the previous. If no strictly larger $H_u$ exists
and $u$ and $v$ still do not share a face, we conclude that
$G\cup(u,v)$ is nonplanar and stop.

\begin{algorithm}[htb!]
  \caption{}
  \label{alg:do-sep}
  \begin{algorithmic}[1]

    \Function{do-separation-flips}{$u,v$}
        \State $s\gets 0$
        \While{not \Call{linkable}{$u,v$}}
          \Comment{Separation flip needed in block bounded by $u,v$}
          \State $s',\sigma\gets \Call{find-first-separation-flip}{u,v}$
          \If{$s'\leq s$}
            \Return ``no''
          \EndIf
          \State Execute separation flip $\sigma$
          \State $s\gets s'$
        \EndWhile
        \State \Return ``yes''
    \EndFunction

  \end{algorithmic}
\end{algorithm}

\begin{lemma}\label{lem:do-sep}
  Assume that $\Call{find-first-separation-flip}{u,v}$ runs in worst case
  $\OO(\log^2 n)$ time, and in each step:
  \begin{itemize}
  \item If $G\cup(u,v)$ is planar it finds a maximal $u$-flip.
  \item If $G\cup(u,v)$ is non-planar it either:
    \begin{itemize}
    \item finds a maximal $u$-flip; or
    \item finds a $u$-flip such that the \emph{next}
      $u$-flip-component found has the same size; or
    \item finds no $u$-flip.
    \end{itemize}
  \end{itemize}
  Then $\Call{do-separation-flips}{a_i,a_{i+1}}$ does only critical
  separation flips in worst case $\OO(\log^2 n)$ time per flip, and:
  \begin{itemize}
  \item If $G\cup(u,v)$ is planar, every flip is potential-decreasing.
  \item If $G\cup(u,v)$ is nonplanar every flip except the last is
    potential-decreasing.
  \end{itemize}
\end{lemma}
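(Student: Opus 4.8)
The plan is to reduce the statement to the two hypotheses on \textproc{find-first-separation-flip} together with the cost-function properties of Lemmas~\ref{lem:costs-nonneg}, \ref{lem:costs-delta}, and~\ref{lem:fliptypes-unchanged}. The easy parts come first. Each iteration of the \texttt{while} loop performs one $\Call{linkable}{u,v}$ query (worst case $\OO(\log^2 n)$ by~\cite{DBLP:journals/mst/HolmR17}), one \textproc{find-first-separation-flip} call ($\OO(\log^2 n)$ by hypothesis), and executes at most one flip ($\OO(\log^2 n)$ in the structure of~\cite{DBLP:journals/mst/HolmR17}); the number of iterations exceeds the number of executed flips by at most one, so the cost is $\OO(\log^2 n)$ per flip (and $\OO(\log^2 n)$ total if no flip is done). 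That each executed flip is a critical separation flip is immediate from the definition of a $u$-flip: it is a separation flip at a pair $\set{s,t}$ whose flipped side is one of $H_u,H_v$ and hence contains exactly one of $u,v$.

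For the potential claims, write $H=H_0\to H_1\to\cdots\to H_m$ for the embeddings produced by the executed flips $\sigma_1,\dots,\sigma_m$ (let $H_u^{(j)}$ be the $u$-flip-component flipped in step $j$), associate $\sigma_j$ with $\tau_j=\text{P}$ or $\tau_j=\text{sep}$ according to whether it is a P flip or an SR flip, and recall that $\sigma_j$ is potential-decreasing iff $\critcost_{\tau_j}$ drops by exactly $1$. Since the loop stops the instant the reported component size fails to strictly increase, the sizes $\abs{H_u^{(1)}}<\cdots<\abs{H_u^{(m)}}$ strictly increase, and a $u$-flip-component remains a $u$-flip-component after being flipped. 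The heart of the proof is a structural fact to be supplied in Section~\ref{sec:define-costs}: \emph{a maximal $u$-flip in an embedding in which $u$ and $v$ are not co-facial is potential-decreasing}; moreover, a \emph{largest-possible} $u$-flip-component either already exhibits a common face of $u$ and $v$ or certifies non-planarity, and the borderline ``next-component-has-equal-size'' flip of the second hypothesis is at worst potential-neutral.

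Granting this, the planar case is immediate: by the first hypothesis every executed flip is a maximal $u$-flip, hence potential-decreasing; and the loop terminates with $\Call{linkable}{u,v}$ returning yes, because the strictly increasing, bounded sequence of component sizes must stabilise, a stabilised maximal $u$-flip-component in the planar branch is a largest-possible one, and such a component is co-facial for $u,v$ by the structural fact, so $H_m\in\Emb(G;u,v)$. For the non-planar case the loop exits via \textproc{find-first-separation-flip} reporting $s'\le s$ (or finding no $u$-flip) and returning ``no''. By the second hypothesis each executed $\sigma_j$ is a maximal $u$-flip unless it is the equal-successor-size flip, and such a flip forces the very next call to report an equal size and hence termination without a further flip, so it can occur only as $\sigma_m$; thus $\sigma_1,\dots,\sigma_{m-1}$ are maximal $u$-flips and potential-decreasing, and $\sigma_m$ is either maximal (potential-decreasing) or the exceptional flip (potential-neutral), which is exactly ``every flip except the last is potential-decreasing''. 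In this branch $\critcost_\tau(H;u,v)$ must be read as the combinatorial obstruction count of Section~\ref{sec:define-costs}, which is defined even when $(u,v)$ is not insertable; $\flipdist$-identities such as Corollary~\ref{cor:cost-is-distance} are neither available nor needed here.

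The main obstacle is precisely the structural fact invoked above --- that the size-greedy choice of $u$-flip never overshoots, i.e.\@ that the $j$-th maximal $u$-flip strictly decreases $\critcost_{\text{P}}$ or $\critcost_{\text{sep}}$, together with the companion statements about largest-possible components and about the equal-successor flip. Once it is in hand, Lemma~\ref{lem:fliptypes-unchanged} (which prevents SR flips and articulation flips from perturbing the $\critcost_{\text{P}}$ bookkeeping, and articulation flips from perturbing $\critcost_{\text{sep}}$) reduces everything else to counting and to termination of the \texttt{while} loop. Proving the structural fact itself requires matching maximal $u$-flip-components against the struts/obstructions that underlie the definitions of $\critcost_{\text{P}}$ and $\critcost_{\text{sep}}$, which is the real content deferred to Section~\ref{sec:define-costs}.
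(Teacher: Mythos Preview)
Your outline matches the paper's structure for the easy parts (running time, criticality, loop-termination bookkeeping in the non-planar branch), but you have black-boxed exactly the step the paper actually proves in place. You write that ``a maximal $u$-flip in an embedding in which $u$ and $v$ are not co-facial is potential-decreasing'' and defer this to Section~\ref{sec:define-costs}. Two problems: first, Section~\ref{sec:define-costs} does not prove that statement --- it defines the struts and establishes Properties~\ref{it:struts-planar}--\ref{it:struts-nonadmissible}, from which you would still have to derive the claim; second, the claim as you stated it is slightly false. A maximal SR flip whose ``next'' SPQR node on the $u,v$-critical path is a cross $R$ node need not decrease $\critcost_{\text{sep}}$; what saves the argument is that such a flip is necessarily the last one executed (no larger $H_u$ exists afterwards and the next call reports equal size). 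The paper handles this directly: if the maximal $u$-flip is a P flip, it brings the two critical-path neighbours of the involved P node into the same face, which drops $\critcost_{\text{P}}$ by one; if it is an SR flip, it is potential-decreasing unless the first SPQR node not in $H_u$ on the critical path is a cross $R$ node, in which case $G\cup(u,v)$ is nonplanar and no larger $H_u$ exists, so this is the final flip. That SPQR-level reasoning is the content of the lemma, not something to be imported.

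A couple of smaller issues. Your termination argument in the planar branch (``the strictly increasing, bounded sequence of component sizes must stabilise'') is garbled --- strictly increasing sequences do not stabilise; what you want is that $\critcost_{\text{sep}}$ strictly decreases at every step and is bounded below by $0$, so eventually $u,v$ share a face and \textproc{linkable} returns yes. And your assertion that the equal-successor-size flip is ``at worst potential-neutral'' is neither proved nor needed: the lemma only claims nothing about the last flip, so you can drop that clause entirely.
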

\begin{proof}
  The running time is worst case $\OO(\log^2 n)$ per flip, by our
  assumption, and because the $\Call{linkable}{u,v}$ query and
  actually executing the flip takes $\OO(\log^2 n)$ in the underlying
  data structure from~\cite{DBLP:journals/mst/HolmR17}.
  By definition, any $u$-flip is a critical flip, so all the flips
  performed are critical.
  By assumption, each tuple of corners $\sigma$ that we consider
  (except the last) bound a maximal $u$-flip-component. If the flip
  described by $\sigma$ is a P flip, there may be two possible choices
  if the node containing $u$ on the $u,v$-critical path in the SPQR
  tree is an S node. However, either choice is potential-decreasing,
  as it brings the two neighbors to the involved P node that are on
  the critical path together.  If the flip described by $\sigma$ is
  not a P flip, it is potential-decreasing unless the first node $X$
  that is not included in $H_u$ on the $u,v$-critical path in the SPQR
  tree is an $R$ node that is \emph{cross} (i.e.\@ the virtual edges
  in $\Gamma(X)$ corresponding to the path do not share a face). If
  $X$ is cross, $G\cup(u,v)$ is nonplanar, and the flip will be the
  last we execute, since no larger $H_u$ exists after the flip.
\end{proof}

Let $B=(V_B,E_B)$ be the biconnected component of $G$ that contains
$u,v$, and suppose $u,v$ do not share a face in $G$ (otherwise $u$ and
$v$ would be linkable and $\Call{find-first-separation-flip}{u,v}$
would not be called).  Consider the $u,v$-critical path $X_1\cdots
X_k$ with $u\in X_1,v\in X_k$ in the SPQR tree for $B$.  Observe that
if $k=1$ our assumption that $u,v$ do not share a face means that
$X_1$ is an R node, and no flip exists that can make $u$ and $v$
linkable.  Assume therefore that $k>1$. Then $X_1$ and $X_k$ are
distinct, and by definition of $u,v$-critical path none of them are P
nodes.

To aid in our discussion, we need to name certain subsets of the edges
of $E_B$ based on their relationship with the SPQR nodes on the
$u,v$-critical path $X_1\cdots X_k$ in the SPQR tree for $B$.
\begin{definition}\label{def:E-partition}
  Let $e_u$ and $e_v$ be the edges incident to $u$ and $v$ on the primal
  spanning tree path $u\cdots v$. For each $i$ with $1\leq i\leq k$
  define:
  \begin{align*}
    E_{<i} &:=
    \begin{cases}
      \emptyset
      & \text{if $i=1$}
      \\
      \parbox[t]{9em}{the separation class of $e_u$ w.r.t. $X_{i-1}\cap X_i$}
      & \text{otherwise}
    \end{cases}
    &
    E_{\geq i} &:= E_B\setminus E_{<i}
    \\
    E_{>i} &:=
    \begin{cases}
      \emptyset & \text{if $i=k$}
      \\
      \parbox[t]{9em}{the separation class of $e_v$ w.r.t. $X_i\cap X_{i+1}$}
      & \text{otherwise}
    \end{cases}
    &
    E_{\leq i} &:= E_B\setminus E_{>i}
    \\
    E_i &:= E_{\leq i}\cap E_{\geq i}%
    &
    E_{\neq i} &:= E_B\setminus E_i
  \end{align*}
  Now $E_1,\ldots,E_k$ is a partition of $E_B$, so for each $e\in E_B$
  there is a unique \emph{index} $i=\idx(e)$ such that $e\in E_i$.
  Furthermore, for each $1\leq i\leq k$ the set $E_i$ is associated with
  the node $X_i$
\end{definition}

Since $u$ and $v$ are biconnected, there exists two internally
vertex-disjoint paths between $u$ and $v$. Let $p^s$, $p^t$ be an
arbitrary pair of internally vertex-disjoint paths from $u$ to $v$.
We will use $p^s$ and $p^t$ to define some further concepts, and then
argue (e.g. in Lemmas~\ref{lem:define-r-firstcross}
and~\ref{lem:fu-blocking-welldefined}) that these definitions do
not depend on the particular choice of $p^s,p^t$.

For $1\leq i\leq k-1$ let $\set{s_i,t_i}=X_i\cap X_{i+1}$ such that
$s_i\in p^s$ and $t_i\in p^t$, and let $s_0=t_0=u$ and $s_k=t_k=v$.
Let $f_u$ be any face maximizing the maximum $i$ such that $f_u$
contains all of $u$, $s_i$, and $t_i$. Note that the candidates for
$f_u$ do not depend on the specific choice of $p^s$ and $p^t$, but
only on the structure of the SPQR tree and the current embedding.
Together, $p^s\cup p^t$ form a simple cycle in $G$ which we call a
\emph{$u,v$-critical cycle}.  This cycle partitions the plane into two
regions.  Call the region containing $f_u$ the \emph{$f_u$-side}
region and the other the \emph{anti-$f_u$-side} region.

For each node $X_i$ on the $u,v$-critical path in the SPQR tree, any
$u,v$-critical cycle corresponds to a unique cycle in $\Gamma(X_i)$,
and the partition into $f_u$-side and anti-$f_u$-side regions carry
over into $\Gamma(X_i)$.
\begin{itemize}
\item If $X_i$ is a P node, we say that $X_i$ is \emph{$f_u$-blocking}
  if the $f_u$-side region of $\Gamma(X_i)$ contains any edges, and
  \emph{anti-$f_u$-blocking} if the anti-$f_u$-side region contains
  any edges.
\item If $X_i$ is an R node, we say that $X_i$ is
  \emph{$f_u$-blocking} (resp.\@ \emph{anti-$f_u$-blocking}) if the
  $f_u$-side (resp. anti-$f_u$-side) region of $\Gamma(X_i)$ does not
  contain a face incident to all of $s_{i-1},t_{i-1},s_i,t_i$. (Note
  that this holds even for $i=1$ and $i=k$).
\item If $X_i$ is an S node it is neither $f_u$-blocking nor
  anti-$f_u$-blocking.
\end{itemize}
A node that is both $f_u$-blocking and anti-$f_u$-blocking is simply
called \emph{blocking}. A blocking R node is also called a
\emph{cross} node.

\begin{lemma}\label{lem:define-r-firstcross}
  If $G\cup(u,v)$ is planar let $r=k$, otherwise let $r$ be the
  minimum index such that $X_r$ is a cross node.  Then $r$ is well-defined
  and depends only on $G$ and the vertices $u,v$.
\end{lemma}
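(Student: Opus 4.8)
The plan is to reduce the statement to two essentially independent claims. \textbf{(a)} The integer produced by the recipe does not depend on the auxiliary choices made along the way, namely the choice of internally vertex-disjoint paths $p^s,p^t$ and the choice of the face $f_u$. \textbf{(b)} When $G\cup(u,v)$ is non-planar the set of cross nodes on the $u,v$-critical path is non-empty (so the minimum defining $r$ ranges over a non-empty set), while when $G\cup(u,v)$ is planar there is no cross node on that path (so the two clauses of the definition are mutually consistent). Note that the biconnected component $B$ containing $u$ and $v$ is determined by $G$ and $\set{u,v}$ alone, hence so is its SPQR tree and the $u,v$-critical path $X_1\cdots X_k$ together with all of the separation pairs $X_{i-1}\cap X_i$; granting (a) and (b), $r$ is then visibly a function of $G$ and $u,v$ only.

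For (a), recall that only R nodes can be cross, and an R-node skeleton $\Gamma(X_i)$ is triconnected, hence has a planar embedding unique up to reflection. The four vertices at issue are $\set{s_{i-1},t_{i-1}}=X_{i-1}\cap X_i$ and $\set{s_i,t_i}=X_i\cap X_{i+1}$ (using $s_0=t_0=u$, $s_k=t_k=v$); these sets are canonical, the only ambiguity being which element is labelled $s$ versus $t$, which plays no role below. As noted in the text, the $u,v$-critical cycle $p^s\cup p^t$ corresponds to a unique cycle $C_i$ in $\Gamma(X_i)$, and $C_i$ passes through all four of $s_{i-1},t_{i-1},s_i,t_i$; since the edges of $C_i$ are part of the drawing, every face of $\Gamma(X_i)$ lies strictly on one of the two sides of $C_i$. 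Hence ``$X_i$ is cross'', i.e.\ neither side of $C_i$ contains a face incident to all of $s_{i-1},t_{i-1},s_i,t_i$, is equivalent to the cycle-free, choice-free statement ``no face of $\Gamma(X_i)$ is incident to all of $s_{i-1},t_{i-1},s_i,t_i$''. This last property refers only to $\Gamma(X_i)$ and the four canonical vertices and is invariant under reflection, so it — and therefore $r$ — is independent of $p^s,p^t$ and $f_u$.

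For (b), this is the standard characterisation that inserting $(u,v)$ into the biconnected planar graph $B$ preserves planarity if and only if no node on the $u,v$-critical path is cross (cf.\ the SPQR-tree treatment of incremental planarity in \cite{DBLP:journals/algorithmica/BattistaT96}). Concretely, faces of $B$ correspond to faces of the skeletons after expanding virtual edges, and a face of $B$ incident to both $u$ and $v$ must, at each $X_i$ on the path, meet a face of $\Gamma(X_i)$ incident to all of $s_{i-1},t_{i-1},s_i,t_i$; thus a single cross node on the path rules out any embedding of $B$ with $u,v$ cofacial, giving ``$G\cup(u,v)$ planar $\Rightarrow$ no cross node''. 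Conversely, if no $X_i$ on the path is cross, then at each R node we may fix the reflection and at each P node the cyclic order so that one face survives incident to all four contact vertices, lying on a consistent side all along the path; gluing these local choices yields an embedding of $B$, hence of $G$, in which $u$ and $v$ share a face, so $G\cup(u,v)$ is planar. Consequently $r$ is well-defined.

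The main obstacle is the converse half of (b): propagating the ``open corridor'' consistently from $u$'s side to $v$'s side, in particular handling the recursion — that $u$ lies deep inside the subgraph replacing the virtual edge $e_{<i}$ of $\Gamma(X_i)$ rather than being a vertex of $\Gamma(X_i)$ itself — and the interaction of reflection choices at successive R nodes with the cyclic-order choices at intervening P and S nodes. A secondary point needing care is making fully rigorous that ``neither side of $C_i$ has a good face'' literally coincides with ``no face of $\Gamma(X_i)$ is incident to all four vertices'', i.e.\ that $C_i$ is a simple cycle genuinely bipartitioning the face set of $\Gamma(X_i)$; this follows from the already-noted fact that $p^s\cup p^t$ corresponds to a unique cycle in each $\Gamma(X_i)$.
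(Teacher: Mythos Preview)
Your proof is correct, and it takes a genuinely different route from the paper's.

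The paper dispatches the planar case as trivial and, for the non-planar case, argues via Kuratowski: $G\cup(u,v)$ contains a $K_5$ or $K_{3,3}$ subdivision through $(u,v)$, which must sit inside a single R~node of the SPQR tree of $G\cup(u,v)$; deleting $(u,v)$ splits that R~node into the $u,v$-critical path $X_1\cdots X_k$, and at least one of the resulting R~nodes has the four contact vertices $(X_{i-1}\cap X_i)\cup(X_i\cap X_{i+1})$ not cofacial in $\Gamma(X_i)$, hence is cross. The paper's proof is thus existence-only and uses the four-vertex cofaciality condition directly, leaving the equivalence with the $f_u$-blocking definition of ``cross'' (and hence the claimed independence from $p^s,p^t,f_u$ and the embedding) implicit.

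You instead make that equivalence explicit in part~(a): since an R-node skeleton is triconnected, its face set is canonical up to reflection, and ``both sides of $C_i$ lack a good face'' collapses to ``no face of $\Gamma(X_i)$ contains all four contact vertices'', a statement that mentions none of the auxiliary data. For part~(b) you invoke the standard SPQR-based incremental planarity test of Di~Battista--Tamassia rather than Kuratowski. What this buys you is a cleaner justification of the ``depends only on $G$ and $u,v$'' clause, and you also get the converse (planar $\Rightarrow$ no cross node) for free, which the paper does not prove but also does not need. What the paper's approach buys is brevity and avoidance of the constructive ``corridor'' direction you flag as the main obstacle: Kuratowski gives the obstruction in one shot without having to thread consistent reflection and cyclic-order choices along the path.
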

\begin{proof}
  This is trivial if $G\cup(u,v)$ is planar, so suppose not. Then
  $G\cup(u,v)$ contains either a $K_5$ subdivision containing $(u,v)$
  or a $K_{3,3}$ subdivision containing $(u,v)$. In particular, the
  SPQR tree for $G\cup(u,v)$ contains an $R$ node whose skeleton graph
  contains such a subdivision.  When deleting $(u,v)$ from
  $G\cup(u,v)$, this $R$ node splits into the $u,v$-critical path
  $X_1\cdots X_k$ in $G$. This path contains at least one $R$ node
  containing a $K_4$ subdivision, and since $G\cup(u,v)$ is nonplanar,
  for at least one such R node $X_i$ the vertices
  $\set{s_{i-1},t_{i-1},s_i,t_i}=(X_{i-1}\cap X_i)\cup(X_i\cap
  X_{i+1})$ do not all share a face in $\Gamma(X_i)$.
\end{proof}

\begin{lemma}\label{lem:fu-blocking-welldefined}
  Let $r$ be defined as in Lemma~\ref{lem:define-r-firstcross}.  If
  $u$ and $v$ do not share a face, then for $1\leq i\leq r$ whether
  $X_i$ is (anti-)$f_u$-blocking depends only on the choice of $f_u$
  and the current embedding, and is independent of the particular
  choice of $u,v$-critical cycle.
\end{lemma}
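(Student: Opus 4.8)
The plan is to fix a choice of $f_u$ at the outset \textemdash{} its candidate faces are determined by the SPQR tree of $B$ and the current embedding, as already observed \textemdash{} and to prove that, with $f_u$ fixed, the $f_u$-/anti-$f_u$-blocking status of each $X_i$ with $i\le r$ is the same for \emph{every} $u,v$-critical cycle. I would do this by isolating, inside each skeleton $\Gamma(X_i)$, one piece of ``boundary data'' that is visibly cycle-independent, and expressing the blocking predicate through it. Concretely: any $u,v$-critical cycle $C$ must pass through the separation pairs $X_{i-1}\cap X_i=\{s_{i-1},t_{i-1}\}$ and $X_i\cap X_{i+1}=\{s_i,t_i\}$, and the part of $C$ on the $u$-side (resp.\ $v$-side) of the relevant pair is represented in $\Gamma(X_i)$ by the single critical virtual edge $e_{i-1}$ joining $s_{i-1},t_{i-1}$ (resp.\ $e_i$ joining $s_i,t_i$). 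Hence the image $C_i$ of $C$ in $\Gamma(X_i)$ always contains both $e_{i-1}$ and $e_i$, so the two faces of $\Gamma(X_i)$ incident to $e_{i-1}$ lie in the two distinct regions cut out by $C_i$; ``the face of $\Gamma(X_i)$ on the $f_u$-side of $e_{i-1}$'' is thus well-defined once $C$ is chosen, and I call it $\Phi_{i-1}(C)$ (with $\Phi_i(C)$ defined symmetrically at $e_i$). The plan then has two halves: \textbf{(i)} rewrite the blocking predicate for $X_i$ purely in terms of the fixed embedding of $\Gamma(X_i)$ and $\Phi_{i-1}(C)$; and \textbf{(ii)} prove that $\Phi_{i-1}(C)$ (hence $\Phi_i(C)$) does not depend on $C$.

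For (i) I would split on the type of $X_i$. If $X_i$ is an S node, the forced shape of $C_i$ gives $C_i=\Gamma(X_i)$, so no region contains an edge of $\Gamma(X_i)$ and $X_i$ is neither $f_u$-blocking nor anti-$f_u$-blocking, whatever $C$ is. If $X_i$ is a P node, then $\Gamma(X_i)$ has only the two vertices $s_{i-1}=s_i$ and $t_{i-1}=t_i$, and $C_i$ is forced to be the $2$-cycle $e_{i-1}\cup e_i$; the $f_u$-side region is the bundle of remaining parallel edges on the side of $e_{i-1}$ selected by $\Phi_{i-1}(C)$, and $X_i$ is $f_u$-blocking iff that bundle is non-empty \textemdash{} a statement about the (fixed) rotation at $s_{i-1}$ and $\Phi_{i-1}(C)$ alone. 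If $X_i$ is an R node, then $\Gamma(X_i)$ is $3$-connected, hence simple with minimum degree $3$; I would first record the standard fact that in such a graph a face incident to two adjacent vertices must contain the joining edge (else an incident vertex has degree $2$, or the two vertices form a $2$-separator), so that any face incident to all of $s_{i-1},t_{i-1},s_i,t_i$ contains both $e_{i-1}$ and $e_i$ and there is at most one such face $F^\star$. If $F^\star$ does not exist, $X_i$ is both $f_u$- and anti-$f_u$-blocking, i.e.\ a cross node, forcing $i=r$, and the value $(\text{true},\text{true})$ is cycle-independent. If $F^\star$ exists, then since it contains $e_{i-1}$ and lies in a single region of $C_i$, the node $X_i$ is $f_u$-blocking iff $\Phi_{i-1}(C)\ne F^\star$ and anti-$f_u$-blocking iff $\Phi_{i-1}(C)=F^\star$ (and consistently $\Phi_{i-1}(C)=F^\star\iff\Phi_i(C)=F^\star$, because $F^\star$ also contains $e_i$). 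In every case the predicate is read off from the fixed skeleton embedding and $\Phi_{i-1}(C)$, as wanted.

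For (ii) \textemdash{} the crux \textemdash{} let $j$ be the largest index with $f_u$ incident to both $s_j$ and $t_j$; then $0\le j<r$ (if $f_u$ were also incident to $s_r,t_r$ then, since $f_u$ is a simple face-cycle of the biconnected block $B$ and each $\{s_m,t_m\}$ with $m<r$ separates $u$ from $\{s_r,t_r\}$, $f_u$ would be incident to all of $s_{r-1},t_{r-1},s_r,t_r$ \textemdash{} impossible when $X_r$ is cross, and impossible in the planar case $r=k$ since then $s_r=t_r=v$ and $u,v$ do not share a face). For $i\le j$, $f_u$ straddles $\{s_{i-1},t_{i-1}\}$ and, under the standard correspondence between the embedding of $G$ and the induced embeddings of the skeletons, maps to the face of $\Gamma(X_i)$ incident to all four of $s_{i-1},t_{i-1},s_i,t_i$, i.e.\ to $F^\star$; hence $\Phi_{i-1}(C)=F^\star$ for every $C$, and $X_i$ is anti-$f_u$-blocking but not $f_u$-blocking. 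For $j<i\le r$, $f_u$ does not straddle $\{s_{i-1},t_{i-1}\}$, so it lies wholly inside the $u$-side expansion of $e_{i-1}$; which of the two faces of $\Gamma(X_i)$ incident to $e_{i-1}$ this expansion ``opens into'' at $f_u$ depends only on the embedding of the $u$-side of $\{s_{i-1},t_{i-1}\}$ and the placement of that pair on its boundary, with no dependence on $C$. I would make this precise by induction along the critical path, using that faces on corresponding sides of the two virtual edges corresponding to an SPQR-tree edge $X_{i-1}X_i$ are identified by the embedding alone: this matches $\Phi_{i-1}(C)$ with the $f_u$-side face of $C_{i-1}$ at the mated edge in $\Gamma(X_{i-1})$, and the hypothesis $i\le r$ is exactly what guarantees the chain of identifications never passes a cross node (where $F^\star$ is absent and ``which side'' ceases to be a local choice). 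This gives $\Phi_{i-1}(C)$ independent of $C$, completing the plan.

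The routine parts are the reduction, the S- and P-node cases, and the ``at most one $F^\star$'' observation. The step I expect to be the main obstacle is (ii): pinning down the correspondence between faces of $G$ and faces of the skeletons, verifying that the $f_u$-side is transported consistently along it for all $i\le r$, and in particular handling the indices $i>j$, where $f_u$ is not itself a face of $\Gamma(X_i)$ but only lives inside the expansion of a virtual edge.
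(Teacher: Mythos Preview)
Your proposal is correct and follows the same overall skeleton as the paper's proof: both argue node-by-node along the $u,v$-critical SPQR path, dispatching S nodes trivially, observing that for P nodes the induced cycle $C_i=e_{i-1}\cup e_i$ is forced, and for non-cross R nodes invoking the unique face $F^\star$ of $\Gamma(X_i)$ incident to all four of $s_{i-1},t_{i-1},s_i,t_i$.

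Where you differ is in how you justify the R-node step. The paper simply asserts that $F^\star$ lies in the $f_u$-side region of one critical cycle iff it does for another, and in fact flags this with a \verb|\todo{why?}| marker. Your route via the invariant $\Phi_{i-1}(C)$ (the $f_u$-side face of $\Gamma(X_i)$ at the virtual edge $e_{i-1}$) and its inductive transport along the path is precisely what closes that gap: since $F^\star$ is incident to \emph{both} critical virtual edges, knowing which of the two faces at $e_{i-2}$ is the $f_u$-side determines which of the two faces at the mated virtual edge is the $f_u$-side, independently of how $C$ threads through $\Gamma(X_{i-1})$; the gluing of virtual edges then carries this to $\Gamma(X_i)$. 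Working with the virtual edges rather than the $s/t$ vertex labels also lets you sidestep the paper's preliminary paragraph about $s_i,t_i$ being determined up to a global swap. The cost of your approach is that the base case and the face-correspondence between $G$ and the skeletons (your acknowledged ``main obstacle'') need to be spelled out carefully, but the plan you give for anchoring at index $j$ and inducting outward is sound.
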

\begin{proof}
  \todo{Do we even need to argue that the $s_i$ and $t_i$ are
    unique?}For any S or P node $X_i$ with $2\leq i\leq k$, the
  vertices $s_i$ and $t_i$ are completely determined by $s_{i-1}$ and
  $t_{i-1}$, and do not depend on the particular choice of
  $p^s,p^t$. The same is true for every R node $X_i$ with $2\leq i\leq
  r-1$, since by definition of $r$ these are not both $f_u$- and
  anti-$f_u$-blocking. Thus $s_2,\ldots,s_{r-1}$ and
  $t_2,\ldots,t_{r-1}$ are completely determined by $s_1,t_1$. There
  are only two possible ways to map $\set{s_1,t_1}$ to the two
  vertices in $X_1\cap X_2$, so $s_1,\ldots,s_{r-1}$ and
  $t_1,\ldots,t_{r-1}$ are uniquely defined up to an exchange of every
  $s$ and $t$ with its opposite.
  Now consider any two $u,v$-critical cycles $C_1,C_2$, and a node
  $X_i$ with $1\leq i\leq r$.
  \begin{itemize}
  \item If $X_i$ is an S node, it is by definition neiher $f_u$- nor
    anti-$f_u$-blocking, and this does not depend on the choice of
    critical cycle.

  \item If $X_i$ is a $P$ node, then $1<i<r$ and the cycle in
    $\Gamma(X_i)$ does not depend on which critical cycle is used,
    only on which neighbors $X_i$ has on the $u,v$-critical path in
    the SPQR tree. Thus the definition of $f_u$-side and
    anti-$f_u$-side region, and hence $X_i$s status as $f_u$- or
    anti-$f_u$-blocking does not depend on the choice of critical
    cycle, but only on the choice of $f_u$ and the current embedding.

  \item If $X_i$ is an $R$ node, and is not cross there is a unique
    face of $\Gamma(X_i)$ that contains all of
    $s_{i-1},s_i,t_{i-1},t_i$, and this face \todo{why?}is in the
    $f_u$-side region of $C_1$ if and only if it is in the $f_u$-side
    region of $C_2$. Thus, $X_i$ is $f_u$- or anti-$f_u$-blocking with
    respect to $C_1$ if and only it is with respect to $C_2$.

  \item If $X_i$ is a cross node, this does not depend on the choice
    of $u,v$-critical cycle but only on $G,u,v$. By definition $X_i$
    is both $f_u$- and anti-$f_u$-blocking in both $C_1$ and
    $C_2$.\qedhere
  \end{itemize}
\end{proof}

\begin{observation}\label{obs:define-b-firstblock}
  Suppose $u$ and $v$ do not share a face, let $f_u$ be given, and let
  $r$ be defined as in Lemma~\ref{lem:define-r-firstcross}.  Let
  $b\in\set{1,\ldots,r}$ be the minimum index such that $X_b$ is
  $f_u$-blocking.  Then $b$ depends only on $u,v$, the current
  embedding, and the choice of $f_u$, and is independent of the
  particular choice of $u,v$-critical cycle.
\end{observation}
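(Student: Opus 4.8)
The plan is to derive this directly from Lemmas~\ref{lem:define-r-firstcross} and~\ref{lem:fu-blocking-welldefined}. Let $S$ be the set of indices $i\in\set{1,\dots,r}$ such that $X_i$ is $f_u$-blocking, so that $b=\min S$ by definition. By Lemma~\ref{lem:define-r-firstcross} the index $r$ is well-defined and depends only on $G$ and the vertices $u,v$, and by Lemma~\ref{lem:fu-blocking-welldefined}, for every $i$ with $1\le i\le r$ the property that $X_i$ is $f_u$-blocking depends only on the choice of $f_u$ and the current embedding, and not on the chosen $u,v$-critical cycle. Hence the whole set $S$ depends only on $u,v$, the current embedding, and $f_u$, and therefore so does $b=\min S$ --- provided $S\neq\emptyset$.

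So the only real content is that $S\neq\emptyset$, i.e.\ that at least one $X_i$ with $i\le r$ is $f_u$-blocking. If $G\cup(u,v)$ is nonplanar, then $X_r$ is a cross node by definition, and a cross node is both $f_u$- and anti-$f_u$-blocking (independently of the critical cycle, as in the proof of Lemma~\ref{lem:fu-blocking-welldefined}), so $r\in S$. If $G\cup(u,v)$ is planar then $r=k$, and I would argue by contradiction: assume no $X_i$ with $1\le i\le k$ is $f_u$-blocking. Then for each $i$ the $f_u$-side region of $\Gamma(X_i)$ contains a face $g_i$ incident to all of $s_{i-1},t_{i-1},s_i,t_i$: for an S node, $g_i$ is the face on the $f_u$-side of its cycle; for a P node, ``not $f_u$-blocking'' means the $f_u$-side region is edge-free, so it is a single face $g_i$; and for an R node this is exactly the definition of ``not $f_u$-blocking''. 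Re-expanding the virtual edges $s_it_i$ to reassemble $B$ from its skeletons, the faces $g_1,\dots,g_k$ merge into a single face of $B$ incident to both $u=s_0$ and $v=s_k$, contradicting the hypothesis that $u$ and $v$ do not share a face. Hence $S\neq\emptyset$ and $b$ is well-defined.

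Most of the argument is bookkeeping on top of the two cited lemmas. The one step that needs genuine care --- and which I expect to be the main obstacle --- is the gluing in the planar case: one must verify that the $f_u$-side faces $g_i$ and $g_{i+1}$ of consecutive skeletons are actually identified when the virtual edge $s_it_i$ is expanded, in particular that each $g_i$ has the virtual edges $s_{i-1}t_{i-1}$ and $s_it_i$ on its boundary rather than merely touching their endpoints, so that the common choice of $f_u$ makes the regions line up into one face and not two. This should follow from the fact that the $u,v$-critical cycle in each $\Gamma(X_i)$ runs along both virtual edges $s_{i-1}t_{i-1}$ and $s_it_i$ --- so these edges bound the $f_u$-side region of $\Gamma(X_i)$ --- together with $2$-connectedness of the skeletons, which forces each face boundary to be a simple cycle.
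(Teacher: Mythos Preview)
Your proposal is correct and matches the paper's approach: the paper states this as an \emph{Observation} with no explicit proof, relying on it being an immediate consequence of Lemmas~\ref{lem:define-r-firstcross} and~\ref{lem:fu-blocking-welldefined}, which is precisely your first paragraph.

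Your second and third paragraphs go beyond what the paper does: you additionally verify that the set $S$ is nonempty, i.e.\ that such a $b$ actually exists. The paper simply takes this for granted (and immediately after the observation starts reasoning about the cases $b=1$ versus $b>1$). Your argument for nonemptiness is sound, and your identification of the gluing step as the only place requiring care is accurate --- the key point being that the induced cycle in each $\Gamma(X_i)$ uses the virtual edges $(s_{i-1},t_{i-1})$ and $(s_i,t_i)$, so the $f_u$-side face $g_i$ is genuinely bounded by those virtual edges and the identifications line up. This extra rigor is welcome but not something the paper itself supplies.
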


Since we assume $k>1$, if $b=1$ then by definition $X_1$ is cross and no $u$-flip exists. For the remainder of this section, we will therefore assume that $b>1$ (and thus $1<b\leq r\leq k$).

In each step of our algorithm, to maximize the size of $H_u$ we must
find the face $f_u$ and the separation pair
$\set{s_{b-1},t_{b-1}}$. In addition, we must find a second face
$f_v$, such that $H_u$ corresponds to the region on one side of the
$4$-cycle $f_u,s_{b-1},f_v,t_{b-1}$ in the vertex-face graph.

\FloatBarrier
\subsubsection{\textproc{choose-best-flip}}

The first thing we observe is that if we can somehow guess (or
compute) the faces $f_u$ and $f_v$ bounding (a candidate to) the
maximal $u$-flip-component, then the function
$\Call{choose-best-flip}{u,v,f_u,f_v}$ from
Algorithm~\ref{alg:choose-best-flip} can compute the size and corners
of the proposed flip.  In addition, if a given pair of faces are
``obviously not'' bounding the maximal $u$-flip-component (either
because no $u$-flip-component is bounded by these faces, or because a
larger component can be easily found), this function can detect and
report it. \todo{ER: obviously not ... do you mean not bounding a flip at
  all?}\todo{JH: Either that, or not locally maximal as defined next.}
\begin{definition}\label{def:locally-maximal}
  Define a $u$-flip-component $H_u$ to be \emph{locally maximal} if,
  given the corners $\sigma$, and the separation pair $\set{s_j,t_j}$,
  and faces $f_u',f_v'$ incident to $\sigma$:
  \begin{itemize}
  \item No other $u$-flip-component bounded by $f_u',f_v'$ is larger
    than $H_u$; and
  \item No other $u$-flip-component incident to $s_j,t_j$ and $f_u'$
    is larger than $H_u$.
  \end{itemize}
\end{definition}
Any maximal $u$-flip-component is also locally maximal, so any pair of faces that do \emph{not} bound a locally maximal $u$-flip-component can be rejected.
\begin{lemma}\label{lem:locally-maximal}
  Given vertices $u,v$ that do not share a face and a $4$-tuple of
  corners $\sigma$ we can in worst case $\OO(\log^2 n)$ time determine
  if $\sigma$ bounds a locally maximal $u$-flip-component.
\end{lemma}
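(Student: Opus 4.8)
The plan is to reduce the task to a constant number of mark-and-search queries in the data structure of~\cite{DBLP:journals/mst/HolmR17}. First I would unpack what the $4$-tuple $\sigma$ encodes: four corners $(s,f_u'),(f_u',t),(t,f_v'),(f_v',s)$ forming a $4$-cycle in the vertex-face graph, where $\{s,t\}$ is the purported separation pair and $f_u',f_v'$ the two purported bounding faces. I would first verify, in $\OO(\log^2 n)$ time, that these four corners genuinely exist and form a $4$-cycle — i.e.\@ that $s,t$ are both incident to both $f_u'$ and $f_v'$ — since otherwise $\sigma$ bounds nothing. Cutting the vertex-face graph along this $4$-cycle splits the embedding into two pieces; I would identify which piece is the candidate $H_u$ by checking that $u$ lies on the same side as (one fixed corner of) $f_u'$ and that $v$ lies on the other side, and I would compute $|H_u|$ (edges plus vertices) using a weighted-path / subtree-size query in the interdigitating-tree structure. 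This already establishes that $\sigma$ bounds \emph{a} $u$-flip-component; what remains is the two local-maximality conditions of Definition~\ref{def:locally-maximal}.

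For the first condition — no larger $u$-flip-component bounded by the same pair $f_u',f_v'$ — I would observe that any $u$-flip-component bounded by $f_u',f_v'$ is cut out by a $4$-cycle $f_u',s',f_v',t'$ through the \emph{same} two faces, so $s',t'$ range over the (ordered) set of vertices incident to both $f_u'$ and $f_v'$. These common vertices all lie on a single path in the primal spanning tree (dually to the fact that $f_u',f_v'$ lie on a path in the cotree), and the sizes of the resulting components are monotone along that path. Hence the maximal one is obtained by taking $s',t'$ to be the pair of common vertices that are extreme on the $u$-side; one mark-and-search call (mark $f_u'$ and $f_v'$, search the relevant spanning-tree path for the last vertex touching both) locates them, after which a size query tells us whether that component strictly exceeds $H_u$. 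If it does, reject; otherwise the first condition holds.

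For the second condition — no larger $u$-flip-component incident to $s,t$ and $f_u'$ — the separation pair $\{s,t\}$ is now fixed but the second face is free. A $u$-flip-component incident to $s,t$ and $f_u'$ is cut out by a $4$-cycle $f_u',s,f',t$ where $f'$ ranges over faces incident to both $s$ and $t$; dually these faces lie on a path in the cotree, and again the component sizes are monotone along that path, so the maximum is at the extreme face on the $u$-side. One dual mark-and-search call (mark $s$ and $t$, search the cotree path for the extreme face incident to both) plus a size query decides this case. Combining: $\sigma$ bounds a locally maximal $u$-flip-component iff all of these constantly many checks pass, and since each of the mark-and-search, \Call{linkable}{\cdot}, and size/path queries runs in worst case $\OO(\log^2 n)$, so does the whole test.

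The main obstacle I anticipate is \emph{not} the running time but getting the monotonicity-and-extremality claims exactly right: I must argue carefully that along the spanning-tree path of vertices common to $f_u',f_v'$ (resp.\@ the cotree path of faces common to $s,t$) the sizes of the induced $u$-flip-components really are monotone, so that ``maximal'' coincides with ``extreme endpoint on the $u$-side'' and a single mark-and-search suffices rather than a search for a maximum. This is where I would lean on the SPQR-tree picture — the common vertices correspond to the separation pairs $X_j\cap X_{j+1}$ along a contiguous stretch of the $u,v$-critical path, and moving to an extreme one strictly enlarges the set of indices $i$ with $E_i\subseteq H_u$ — and I would need to rule out degenerate cases (e.g.\@ when $f_u'$ or $f_v'$ is incident to $u$ or $v$ itself, or when $s$ or $t$ equals $u$ or $v$, i.e.\@ the articulation-like boundary cases handled by $s_0=t_0=u$, $s_k=t_k=v$) before invoking the search.
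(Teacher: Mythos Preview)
Your plan is essentially on the right track and uses the same primitives as the paper (mark-and-search in the primal for condition~1, in the dual for condition~2), but the paper's execution is simpler in a way that sidesteps exactly the obstacle you flag at the end.

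For condition~1 the paper does \emph{not} locate an extreme common vertex and compare sizes. It takes the first primal edge on the dual tree path $f_u'\cdots f_v'$, forms the fundamental cycle $C$ it closes, observes that $C\cap H_v$ is a path from $s$ to $t$, and simply asks whether any \emph{internal} vertex of that path is incident to both $f_u'$ and $f_v'$. If yes, $H_u$ is not locally maximal; if no, it is. This is a single existence query, and no monotonicity or size comparison is needed. (Incidentally, this fundamental-cycle construction is also the honest justification for your claim that the common vertices ``lie on a single path in the primal spanning tree''; your parenthetical ``dually to the fact that $f_u',f_v'$ lie on a path in the cotree'' is not an argument for that.)

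For condition~2 the paper dualizes: it forms the dual fundamental cycle $\dual{C}$ closed by the dual of the first edge on the primal path $s\cdots t$, walks from $f_v'$ toward $f_u'$ along $\dual{C}\cap\dual{H_v}$ to the first face $f''$ incident to both $s$ and $t$, forms the new $4$-tuple $\sigma'$, and then checks whether $v$ still lies in the resulting $H_v'$. Again this is a containment check, not a size comparison. Your version (find the extreme face, compare sizes) would also work, but you would need to argue carefully that the ``extreme'' face you find still leaves $v$ on the correct side---otherwise the larger piece is not a $u$-flip-component at all. The paper's containment test handles this directly.

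So: correct approach, same tools, but you are making your own life harder. Replacing the size-comparison-plus-monotonicity step with the paper's existence/containment tests removes the obstacle you identified and shortens the argument.
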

\begin{proof}
  Let $f_u,f_v$ be the faces, and $\set{s_x,s_y}$ the separation pair
  incident to $\sigma$.  If $u\not\in f_u$ or
  $\set{u,v}\cap\set{s_x,s_y}\neq\emptyset$, $\sigma$ does not bound a
  $u$-flip-component.

  Let $H_u\ni u,H_v$ be the subgraphs of $H$ on the two sides of
  $\sigma$.  If $v\not\in H_v$ then $\sigma$ does not bound a
  $u$-flip-component, otherwise $H_u$ is a $u$-flip-component and we
  must determine if it is locally maximal.

  Let $C$ be the fundamental cycle in $H$ closed by the first primal
  edge on the dual path $f_u\cdots f_v$. Now $C\cap H_v$ is a path
  from $s_x$ to $s_y$. If any internal node on this path is incident
  to both $f_u$ and $f_v$, then $H_u$ is not locally maximal.

  Let $\dual{e}$ be the dual of the first edge on the primal tree path
  $s_x\cdots s_y$, and let $\dual{C}$ be the fundamental cycle closed
  by $\dual{e}$ in the dual tree.  Then $\sigma$ cuts $\dual{C}$ into
  an $f_u\cdots f_v$ path through each of $\dual{H_u}$ and
  $\dual{H_v}$. Let $f_v'$ be the first face on the path from $f_v$ to
  $f_u$ in $\dual{C}\cap \dual{H_v}$ that is incident to both $s_x$
  and $s_y$, and let $\sigma'$ be any $4$ corners between
  $(f_v,s_x),(f_v,s_y),(f_v',s_x),(f_v',s_y)$.  Let $H_u'\ni u,H_v'$
  be the subgraphs of $H$ on the two sides of $\sigma'$. If $v\not\in
  H_v'$ then $H_u$ is not locally maximal.

  Otherwise $H_u$ is locally maximal.  Each step of this test can be
  done in worst case $\OO(\log^2 n)$ time using the data structure
  from~\cite{DBLP:journals/mst/HolmR17}.
\end{proof}

\begin{algorithm}[htb!]
  \caption{}
  \label{alg:choose-best-flip}
  \begin{algorithmic}[1]

    \Function{choose-best-flip}{$u,v,f_u,f_v$}

    \LeftComment{Determine maximum $u$-flip in $f_u,f_v$}

    \If{$u\in f_u$ and $u\not\in f_v$}

      \State $e\gets$ the first primal edge on the dual tree path from
      $f_u$ to $f_v$.

      \State $C\gets$ fundamental cycle of $e$

      \If{$\exists w\in C$ incident to both $f_u$ and $f_v$}

        \State $p_v\gets\pi_C(v)$; $p_x,p_y\gets$ neighbors of $p_v$ on $C$

        \For{$(x,y)\in\set{(p_x,p_v),(p_v,p_y)}$}

          \State $s_x,s_y\gets$ vertices on $C$ incident to $f_u$ and
          $f_v$ closest to $(x,y)$ in each direction

          \State $c_x^u,c_y^u\gets$ corners at $s_x,s_y$ nearest to
          $(x,y)$ on the $f_u$ side

          \State $c_x^v,c_y^v\gets$ corners at $s_x,s_y$ nearest to
          $(x,y)$ on the $f_v$ side

          \State $\sigma\gets(c_x^u,c_y^u,c_y^v,c_x^v)$

          \If{$\sigma$ bounds a locally maximal
            $u$-flip-component}\Comment{$\sigma$ is a valid
            $u,v$-flip}

            \State $s\gets$ size of $u$-flip-component of $\sigma$

            \State \Return $(s,\sigma)$

          \EndIf

        \EndFor

      \EndIf

    \EndIf

    \State \Return $(0,\bot)$

    \EndFunction

  \end{algorithmic}
\end{algorithm}

\begin{lemma}\label{lem:choose-best-flip}\sloppy
  Let $f_u,f_v$ be faces. If $f_u,f_v$ are incident to the corners
  $\sigma$ bounding a locally maximal $u$-flip-component $H_u$, then
  $\Call{choose-best-flip}{u,v,f_u,f_v}$ in
  Algorithm~\ref{alg:choose-best-flip} returns a tuple $(s,\sigma)$
  where $s$ is the size of $H_u$. Otherwise $(0,\bot)$ is returned.
  In either case, the total time is worst case $\OO(\log^2
  n)$.\todo{ER: måske standardisere vores brug af "sigmaen har fire
    hjørner".}  \todo{JH: Jeg har prøvet, se starten af
    sektion~\ref{subsec:do-sep}}
\end{lemma}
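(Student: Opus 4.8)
The plan is to separate the claim into its running-time part and its correctness part. For the running time, note that $\textproc{choose-best-flip}$ performs only a constant number of primitive operations on the tree--cotree structure of~\cite{DBLP:journals/mst/HolmR17}: one mark-and-search to find the first primal edge $e$ on the dual tree path $f_u\cdots f_v$ and to obtain a representation of its fundamental cycle $C$; one projection $\pi_C(v)$; a constant number of incidence tests between vertices of $C$ and the faces $f_u,f_v$, together with walks outward along $C$ to the nearest vertices incident to both faces; a constant amount of corner navigation; and — at most twice, once for each choice of $(x,y)$ — the locally-maximal test, which costs $\OO(\log^2 n)$ by Lemma~\ref{lem:locally-maximal}. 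Since each primitive is $\OO(\log^2 n)$ and the loop iterates at most twice, the total is $\OO(\log^2 n)$ regardless of which branch is taken.

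For correctness I would first dispose of the early exits. If $u\notin f_u$ or $u\in f_v$ then, under the convention that $f_u$ names the face of $\sigma$ incident to $u$ and using that $u,v$ do not share a face (so a genuine $u$-flip must move $u$ off $f_v$), no locally maximal $u$-flip-component is bounded by $(f_u,f_v)$, and $(0,\bot)$ is the correct output. If $u\in f_u$, $u\notin f_v$, but no vertex of $C$ is incident to both $f_u$ and $f_v$, then — appealing to the key geometric fact below that the separation pair of any $u$-flip-component bounded by $(f_u,f_v)$ lies on $C$ — again no such component exists and $(0,\bot)$ is correct. The same conclusion applies if both choices of $(x,y)$ fail the locally-maximal test.

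The heart of the argument is the affirmative case: assume $(f_u,f_v)$ bounds a locally maximal $u$-flip-component $H_u$ with separation pair $\set{s_x^\ast,s_y^\ast}$ and corners $\sigma^\ast$. I would establish the following chain. (i)~The dual tree path $f_u\cdots f_v$ must cross the edge cut separating $H_u$ from $H_v$, and its \emph{first} crossing edge $e$ has a fundamental cycle $C$ passing through both $s_x^\ast$ and $s_y^\ast$; moreover $C\cap H_v$ is a path from $s_x^\ast$ to $s_y^\ast$. (ii)~The projection $p_v=\pi_C(v)$ lies on this $H_v$-side path, so one of the two cycle edges at $p_v$, namely $(p_x,p_v)$ or $(p_v,p_y)$, points into $H_v\cap C$. (iii)~For that choice of $(x,y)$, walking outward along $C$ in each direction to the nearest vertex incident to both $f_u$ and $f_v$ reaches exactly $s_x^\ast$ and $s_y^\ast$: condition~(1) of Definition~\ref{def:locally-maximal} is precisely what rules out a closer such pair giving a strictly larger component, so the pair is pinned down. (iv)~The four corners the algorithm then assembles describe the flip of $H_u$ and agree with $\sigma^\ast$ up to the immaterial choice among parallel corners at $s_x^\ast,s_y^\ast$; hence the locally-maximal test of Lemma~\ref{lem:locally-maximal} succeeds and the algorithm returns $(s,\sigma)$ with $s=|H_u|$.

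The main obstacle is steps (i)--(iii): proving that the separation pair of \emph{every} locally maximal $u$-flip-component bounded by the two prescribed faces lies on the fundamental cycle $C$ of the first primal edge of the dual path $f_u\cdots f_v$, and that this pair is ``pinched'' around $\pi_C(v)$ so that exactly one of the two probed cycle edges, followed outward to the nearest doubly-incident vertices, recovers it. This is where the interplay between the tree--cotree decomposition, fundamental cycles, and the separation-pair/SPQR structure must be used carefully; everything else — the guard cases, the corner bookkeeping, and the constant-time accounting — is routine.
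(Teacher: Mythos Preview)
Your approach is essentially that of the paper, and the running-time argument, the handling of the guard cases, and steps~(ii)--(iv) all match. The one real issue is your step~(i): the ``edge cut separating $H_u$ from $H_v$'' does not exist---$H_u$ and $H_v$ are separated by the \emph{vertex} pair $\{s_x^\ast,s_y^\ast\}$, not by any edge cut---and the algorithm takes the first primal edge on the dual path, not a ``first crossing edge'' of such a cut. So the mechanism you propose for putting $s_x^\ast,s_y^\ast$ on $C$ does not work as stated.

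The fix, which is exactly what the paper uses, is simpler than you anticipate. Since $e$ is a (primal non-tree) edge on the dual tree path $f_u\cdots f_v$, removing the dual edge $e^\ast$ disconnects $f_u$ from $f_v$ in the cotree; by tree--cotree duality this means the fundamental cycle $C$ of $e$ has $f_u$ and $f_v$ on opposite sides. Hence \emph{every} vertex incident to both $f_u$ and $f_v$ lies on $C$, and in particular $s_x^\ast,s_y^\ast\in C$. No reference to $H_u,H_v$ or to which edge is ``first'' is needed for this step. With this in hand, your (ii)--(iv) go through as written and coincide with the paper's argument: at least one edge of $C$ at $\pi_C(v)$ is in the $v$-side, walking outward to the nearest vertices incident to both faces recovers $\{s_x^\ast,s_y^\ast\}$ (local maximality rules out anything closer yielding a larger $u$-component), and Lemma~\ref{lem:locally-maximal} certifies the result.
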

\begin{proof}
  If $f_u,f_v$ are valid then by definition $u\in f_u$ and $u\not\in
  f_v$ (and thus $f_u\neq f_v$).  Let $e$ be any primal edge on the
  dual tree path from $f_u$ to $f_v$, and consider the fundamental
  cycle $C$ closed by $e$.  By construction, $C$ has $f_u$ and $f_v$
  on opposite sides, so any common vertex of $f_u$ and $f_v$ must be
  on $C$. In particular, we must have $s_i,t_i\in C$, and if no vertex
  in $C$ is incident to both $f_u$ and $f_v$ then $f_u$ and $f_v$ do
  not bound a valid $u$-flip-component.  We want the flip-component
  containing $u$ to be as large as possible, which means the
  flip-component containing $v$ must be as small as possible.  Let
  $p_v=\pi_C(v)$ and let $p_x,p_y$ be the neighbors of $p_v$ on
  $C$. Then at least one of the edges $\set{(p_x,p_v),(p_v,p_y)}$ is
  in the flip-component of $v$.  Suppose $(x,y)$ is this edge, then
  the separation pair $\set{s_x,s_y}$ we want will consist of the first
  vertex incident to both $f_u$ and $f_v$ in each direction on $C$
  away from $(x,y)$.  If $G$ is not biconnected, there may be multiple
  corners between each of $s_x,s_y$ and each of $f_u,f_v$.  To truly
  minimize the size of the flip-component of $v$, we must choose the
  nearest $4$ corners to $(x,y)$.
  This gives us $4$ corners $(c_x^u,c_y^u,c_y^v,c_x^v)$, and by
  Lemma~\ref{lem:locally-maximal} we can test if these bound a locally
  maximal $u$-flip-component and compute its size in worst case
  $\OO(\log^2 n)$ time. Note that if both candidates for $(x,y)$ are
  valid, they will give the same answer, so it is ok to stop as soon
  as we find one that is valid.
  If none of the two candidates for $(x,y)$ give a valid flip, then
  $f_u,f_v$ did not bound a locally maximal $u$-flip-component, and we
  return $(0,\bot)$.
\end{proof}

\FloatBarrier
\subsubsection{\textproc{find-single-flip-candidates}}

Our task is thus to compute (candidates for) $f_u$ and $f_v$.
In~\cite{DBLP:journals/mst/HolmR17} we considered the special case
where $G\cup(u,v)$ is planar and a single flip is needed to admit
$(u,v)$.  The function \Call{find-single-flip-candidates}{} in
Algorithm~\ref{alg:find-single} is essentially the same as is
described in~\cite{DBLP:journals/mst/HolmR17}, but instrumented to
return a bit more information which will become important later.

\begin{algorithm}[htb!]
  \caption{}
  \label{alg:find-single}
  \begin{algorithmic}[1]

    \Function{find-single-flip-candidates}{$u,v$}

    \State $e_u,e_v\gets$ the first and last edge on the primal tree
    path from $u$ to $v$.

    \State $u_L,u_R\gets$ left and right face indident to the first
    edge on the primal tree path from $u$ to $v$.

    \State $v_L,v_R\gets$ left and right face indident to the first
    edge on the primal tree path from $v$ to $u$.

    \State $f_u^L\gets\meet(u_L,u_R,v_L)$; $f_u^R\gets\meet(u_L,u_R,v_R)$

    \State $f_v^L\gets\meet(v_L,v_R,u_L)$; $f_v^R\gets\meet(v_L,v_R,u_R)$

    \State result $\gets\emptyset$

    \If{$f_u^L\neq f_u^R$}

      \LeftComment{Handles case of a fundamental cycle through $u$ and
        $v$.}\label{line:single-11-start}

      \State $e\gets$ primal edge corresponding to first edge on dual
      tree path $f_u^L\cdots f_u^R$

      \State $C\gets$ fundamental cycle closed by $e$

      \State $e_u^1,e_u^2,e_v^1,e_v^2\gets$ the edges incident to $u,v$ on $C$

      \For{$f\in\set{u_L,u_R}$}

        \State $f_u^1,f_u^2,f_v^1,f_v^2\gets$ faces incident to
        $e_u^1,e_u^2,e_v^1,e_v^2$ on the same side of $C$ as $f$.

        \State $\bar{f_u^1},\bar{f_u^2},\bar{f_v^1},\bar{f_v^2}\gets$
        faces incident to $e_u^1,e_u^2,e_v^1,e_v^2$ on the opposite side
        of $C$ from $f$.

        \State result $\gets$ result $\cup\set{ (
          \meet(f_u^1,f_u^2,f_v^1),
          \meet(\bar{f_v^1},\bar{f_v^2},\bar{f_u^1}),
          C,
          e_u,
          e_v
          ) }$

      \EndFor\label{line:single-11-end}

    \Else{} $f_u^L=f_u^R$ and $f_v^L=f_v^R$

       \State $f_u^\star\gets f_u^L$; $f_v^\star\gets f_v^L$

       \LeftComment{Handles case of a fundamental separating cycle
         through $u$ but none through $u,v$.}\label{line:single-10-start}

       \For{$f\in\set{u_L,u_R}\setminus\set{f_u^\star}$}

         \State $e\gets$ first primal edge on dual tree path from
         $f_u^\star$ to $f$

         \State $C\gets$ fundamental cycle closed by $e$

         \State $e_u^1,e_u^2,e_v^1,e_v^2\gets$ the edges incident to
         $u,\pi_C(v)$ on $C$

         \For{$e_v'\in\set{e_v^1,e_v^2}$}

           \For{$f_v'\in$ faces incident to $e_v'$}

             \State $f_u^1,f_u^2\gets$ faces incident to $e_u^1,e_u^2$
             on the same side of $C$ as $f_v'$.

             \State result $\gets$ result $\cup\set{ (
               \meet(f_u^1,f_u^2,f_v'),
               f_v^\star,
               C,
               e_u,
               e_v'
               ) }$

           \EndFor

         \EndFor

       \EndFor\label{line:single-10-end}

       \LeftComment{Handles case of a fundamental separating cycle
         through $v$ but none through $u,v$.}\label{line:single-01-start}

       \For{$f\in\set{v_L,v_R}\setminus\set{f_v^\star}$}

         \State $e\gets$ first primal edge on dual tree path from
         $f_v^\star$ to $f$

         \State $C\gets$ fundamental cycle closed by $e$

         \State $e_u^1,e_u^2,e_v^1,e_v^2\gets$ the edges incident to
         $\pi_C(u),v$ on $C$

         \For{$e_u'\in\set{e_u^1,e_u^2}$}

           \For{$f_u'\in$ faces incident to $e_u'$}

             \State $f_v^1,f_v^2\gets$ faces incident to $e_v^1,e_v^2$
             on the same side of $C$ as $f_u'$.

             \State result $\gets$ result $\cup\set{ (
               f_u^\star,
               \meet(f_v^1,f_v^2,f_u'),
               C,
               e_u',
               e_v
               ) }$

           \EndFor

         \EndFor

       \EndFor\label{line:single-01-end}

       \LeftComment{Handles case of no fundamental separating cycle
         through $u$ or $v$.}\label{line:single-00-start}

       \If{$f_u^\star\neq f_v^\star$}

         \State $e\gets$ first primal edge on dual tree path from
         $f_u^\star$ to $f_v^\star$

         \State $C\gets$ fundamental cycle closed by $e$

         \State $e_u^1,e_u^2,e_v^1,e_v^2\gets$ the edges incident to
         $\pi_C(u),\pi_C(v)$ on $C$

         \For{$e_u'\in\set{e_u^1,e_u^2}$}

           \For{$e_v'\in\set{e_v^1,e_v^2}$}

             \State result $\gets$ result $\cup\set{ (
               f_u^\star,
               f_v^\star,
               C,
               e_u',
               e_v'
               ) }$

           \EndFor

         \EndFor\label{line:single-00-end}

      \EndIf

    \EndIf

    \State \Return result

    \EndFunction

  \end{algorithmic}
\end{algorithm}

\begin{definition}\label{def:goodcycle}
  A fundamental cycle $C$ is \emph{separating} if $f_u$ and $f_v$ are
  on opposite sides of $C$.  If $X_b$ is the first $f_u$-blocking node
  on $X_1\ldots X_r$ then a fundamental cycle $C$ is \emph{good} when
  either $C$ is separating or $C$ intersects both $E_{\leq b}$ and
  $E_{>b}$.
\end{definition}

\begin{definition}\label{def:candidate-tuple}\sloppy
  A tuple $(f_u',f_v',C,e_u',e_v')$ is a \emph{candidate tuple} if
  $f_u'$ and $f_v'$ are faces, $C$ is a fundamental cycle, and
  $e_u',e_v'\in C$.
  A candidate tuple is \emph{good} if $f_u' = f_u$, $C$ is good as
  defined in Definition~\ref{def:goodcycle}, $\idx(e_u')=\min_{e\in
    C}\idx(e)$, and $\idx(e_v')=\max_{e\in C}\idx(e)$.
  A good candidate tuple is \emph{correct} if furthermore $f_v' = f_v$.
\end{definition}

\begin{lemma}\label{lem:find-single-time}
  $\Call{find-single-flip-candidates}{u,v}$ from
  Algorithm~\ref{alg:find-single} runs in worst case $\OO(\log^2 n)$
  time and returns a set of at most $20$ candidate tuples.
\end{lemma}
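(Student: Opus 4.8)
The plan is to prove both halves of the statement by direct inspection of Algorithm~\ref{alg:find-single}, exploiting that every loop in it ranges over a set whose size is an absolute constant, and that every primitive it calls is one of the $\OO(\log^2 n)$-time operations of the interdigitating-trees structure of~\cite{DBLP:journals/mst/HolmR17} recalled in Section~\ref{sec:prelim}.

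For the cardinality bound I would walk through the two mutually exclusive branches. In the branch taken when $f_u^L\neq f_u^R$ (lines~\ref{line:single-11-start}--\ref{line:single-11-end}) the only loop is over $f\in\set{u_L,u_R}$ and each iteration appends a single tuple, so this branch produces at most $2$ tuples. In the branch taken when $f_u^L=f_u^R$ and $f_v^L=f_v^R$ the loop ranges are bounded using two facts: every non-loop edge is incident to exactly two faces, so each ``\,for $f_v'\in$ faces incident to $e_v'$\,''-style loop has at most $2$ iterations; and for an edge $e$ with incident faces $g,g'$ the dual-tree path from $g$ to $g'$ is the single dual edge, hence $\meet(g,g',h)\in\set{g,g'}$ for every face $h$, so $f_u^\star\in\set{u_L,u_R}$ and $f_v^\star\in\set{v_L,v_R}$, and therefore $\set{u_L,u_R}\setminus\set{f_u^\star}$ and $\set{v_L,v_R}\setminus\set{f_v^\star}$ each have at most one element. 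Tallying: lines~\ref{line:single-10-start}--\ref{line:single-10-end} contribute at most $1\cdot2\cdot2=4$ tuples, the symmetric block~\ref{line:single-01-start}--\ref{line:single-01-end} at most $4$, and block~\ref{line:single-00-start}--\ref{line:single-00-end} at most $2\cdot2=4$; so this branch contributes at most $12$. Hence the returned set has at most $\max\set{2,12}\le20$ tuples, the slack to $20$ absorbing any duplicates or implementation-level extra emissions in the multigraph setting.

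For the running time I would note that, beyond $\OO(1)$ arithmetic and set bookkeeping, the algorithm only ever: reads the first/last edge of a primal tree path; looks up the two faces incident to a fixed edge; computes $\meet$ of three tree nodes; forms the fundamental cycle $C$ closed by a single non-tree edge and reads off its $\OO(1)$ edges incident to a fixed vertex or to a projection $\pi_C(\cdot)$ (and $\pi_C(x)=\meet(x,y,z)$ for the non-tree edge $(y,z)$ closing $C$, as noted in the Preliminaries, so this is just another $\meet$); and walks the first edge of a dual tree path. Each of these is a single top-tree/link-cut query or one call to the mark-and-search primitive of~\cite{DBLP:journals/mst/HolmR17}, costing worst case $\OO(\log^2 n)$; since all loops have constant bounds, only $\OO(1)$ of these operations are executed, giving the claimed $\OO(\log^2 n)$ total.

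The part I expect to require the most care is not the counting but justifying that each of the ``structural'' primitives appearing in the pseudocode --- ``the edges incident to $u$ on $C$'', ``the faces incident to $e_u^1,e_u^2$ on the same side of $C$ as $f$'', and the various $\meet$ and $\pi_C$ computations --- is genuinely realizable as an $\OO(\log^2 n)$ operation in the data structure of~\cite{DBLP:journals/mst/HolmR17}; I would dispatch this by pointing to the mark-and-search interface described in Section~\ref{sec:prelim} and to the identity $\pi_C(x)=\meet(x,y,z)$, rather than re-deriving those primitives here.
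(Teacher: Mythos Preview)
Your approach is essentially the same as the paper's --- direct inspection of the algorithm, counting the loop iterations, and listing the primitives needed --- but your cardinality argument contains a genuine error.

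You claim that for an edge $e$ with incident faces $g,g'$, the dual-tree path from $g$ to $g'$ is the single dual edge, and hence $\meet(g,g',h)\in\set{g,g'}$. This is false for the edges $e_u,e_v$ in question: these are the first and last edges on the \emph{primal tree} path $u\cdots v$, so they are primal tree edges, and in the tree--cotree decomposition the dual of a primal tree edge is a \emph{non-tree} edge of the dual. Thus the dual-tree path $u_L\cdots u_R$ is in general long, and $f_u^\star=\meet(u_L,u_R,v_L)$ need not lie in $\set{u_L,u_R}$. Consequently your bound $\abs{\set{u_L,u_R}\setminus\set{f_u^\star}}\leq 1$ is unjustified, and your count of $1\cdot2\cdot2=4$ for each of the two symmetric blocks should be $2\cdot2\cdot2=8$. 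The correct tally in the else branch is $8+8+4=20$, which is exactly the bound the lemma claims (and what the paper obtains); your claimed $12$ is not established.

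On the running time, your argument is fine though slightly conservative: the paper in fact asserts each listed primitive runs in worst case $\OO(\log n)$ (not $\OO(\log^2 n)$) using the structure from~\cite{DBLP:journals/mst/HolmR17}, but either way the $\OO(\log^2 n)$ bound in the lemma follows.
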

\begin{proof}
  The result set starts empty and changes only by addition of
  elements, and simple counting shows that either exactly $2$, or at
  most $2\cdot2\cdot2+2\cdot2\cdot2+2\cdot2=20$ elements are added,
  and by construction each element added is trivially a candidate
  tuple.

  For each element added, we use a constant number of elementary
  operations each taking worst case constant time, and a constant
  number of the following operations:
  \begin{itemize}
  \item Find the first or last edge on a path in the primal or dual tree.
  \item Find the faces incident to an edge in the primal tree.
  \item Compute the $\meet$ of $3$ faces in the dual tree.
  \item Determine if two faces are the same.
  \item Find the primal edge corresponding to an edge in the dual tree.
  \item Given a fundamental cycle $C$ and a vertex $w\in C$, find the
    edges on $C$ incident to $w$.
  \item Given a fundamental cycle $C$ and faces $f_1,f_2$ determine
    if they are on the same side of $C$.
  \item Given a fundamental cycle $C$ and a vertex $w\not\in C$, find the
    projection $\pi_C(w)\in C$.
  \end{itemize}
  Using the data structure from~\cite{DBLP:journals/mst/HolmR17}, each
  of these operations can be done in worst case $\OO(\log n)$ time.
\end{proof}

\begin{lemma}\label{lem:find-single-findsflip}
  If $G\cup(u,v)$ is planar and only one flip is needed at $f_u,f_v$
  to admit $(u,v)$, then at least one
  $(f_u',f_v',\cdot,\cdot,\cdot)\in\Call{find-single-flip-candidates}{u,v}$
  from Algorithm~\ref{alg:find-single} is correct.
\end{lemma}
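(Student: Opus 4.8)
The plan is to reuse the correctness argument for the single-flip linkability routine of~\cite{DBLP:journals/mst/HolmR17}, which already gives that one of the returned pairs $(f_u',f_v')$ equals $(f_u,f_v)$ when a single flip suffices, and then to check separately that the additional entries Algorithm~\ref{alg:find-single} now records alongside each face pair --- a fundamental cycle $C$ together with two edges $e_u',e_v'$ of $C$ --- satisfy the extra requirements of Definition~\ref{def:candidate-tuple}. For the translation of the hypothesis: let $X_1\cdots X_k$ be the $u,v$-critical path in the SPQR tree of the block containing $u,v$, with $u\in X_1$, $v\in X_k$, and let $X_b$ be the first $f_u$-blocking node (so $1<b\le k$). ``A single flip at $f_u,f_v$ suffices'' says that $X_b$ carries the whole obstruction: writing $\set{s,t}=X_{b-1}\cap X_b$, the face $f_u$ is incident to $u,s,t$ (the face singled out by the maximality in its definition), the face $f_v$ is incident to $v,s,t$, the two faces lie on opposite sides of every $u,v$-critical cycle, and flipping the $u$-flip-component bounded by the $4$-cycle $(f_u,s,f_v,t)$ in the vertex--face graph makes $u$ and $v$ share a face; in particular $s$ and $t$ lie on every fundamental cycle of the current spanning tree separating $f_u$ from $f_v$.

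Next I would run the case analysis mirroring the branching of Algorithm~\ref{alg:find-single}: using the faces $u_L,u_R,v_L,v_R$ incident to the first primal-tree edges out of $u$ and out of $v$ along $u\cdots v$, the algorithm distinguishes a fundamental cycle through both $u$ and $v$ with $f_u,f_v$ on opposite sides (the test $f_u^L\neq f_u^R$), a separating fundamental cycle through $u$ alone, one through $v$ alone, and none of these. In the branch that applies I would show: (i) the side of the cycle $C$ built there that contains $f_u$ is one of the values of the loop variable $f$, hence is tried; (ii) on that side the closest face of $C$ incident to both $s$ and $t$ is exactly $f_u$ --- this uses the maximality in the definition of $f_u$ --- and the $\meet$-expression in the algorithm returns precisely that face, as the projection of the recorded incident faces onto the relevant dual-tree path, and symmetrically the opposite side yields $f_v$ (or $f_v=f_v^\star$, resp.\ $f_u=f_u^\star$, is read off directly when $C$ meets only one of $u,v$); hence (iii) the tuple $(f_u,f_v,C,e_u',e_v')$ is added to \texttt{result}.

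It then remains to see this tuple is \emph{correct}. The edges $e_u',e_v'$ returned are the edges of $C$ incident to $u$ (or $\pi_C(u)$) and to $v$ (or $\pi_C(v)$); since the non-tree edge of $C$ closes a tree path running through $X_1,\dots,X_k$ in order, while $u\in X_1$ and $v\in X_k$, the function $\idx$ on $E(C)$ is minimized at an edge incident to $\pi_C(u)$ and maximized at one incident to $\pi_C(v)$, and one checks the algorithm always selects the edge at the right vertex, so $\idx(e_u')=\min_{e\in C}\idx(e)$ and $\idx(e_v')=\max_{e\in C}\idx(e)$. For goodness of $C$: if $C$ is separating there is nothing to do, and otherwise $C$ runs through both $u$ and $v$, hence contains an edge of $E_1\subseteq E_{\le b}$ and one of $E_k\subseteq E_{>b}$ (using $1<b<k$, with the degenerate case $b=k$ forcing the cycle produced in that branch to be separating), so $C$ is good by Definition~\ref{def:goodcycle}.

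The step I expect to be the main obstacle is (ii): there are several sub-cases, each with up to two candidate sides, two incident edges and two incident faces, and for each one must confirm that the specific combination of $\meet$'s and ``same side of $C$'' tests in Algorithm~\ref{alg:find-single} recovers the pair $(f_u,f_v)$ itself and not a neighbouring face. Since this is essentially the content already established in~\cite{DBLP:journals/mst/HolmR17} for single-flip linkability, I would lean on that result and write out in detail only the genuinely new parts --- the identification of the good cycle $C$ and of the $\idx$-extremal edges $e_u',e_v'$.
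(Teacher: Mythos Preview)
Your proposal is correct and follows essentially the same route as the paper: a four-way case analysis on whether a fundamental separating cycle passes through both, one, or neither of $u,v$, in each branch identifying the side of the produced cycle $C$ on which $f_u$ (and, under the single-flip hypothesis, $f_v$) is recovered by the algorithm's $\meet$ expression, together with the verification that the recorded edges realise the extremal indices on $C$. One small remark: your dichotomy ``$C$ is separating, or else $C$ runs through both $u$ and $v$'' is in fact vacuous here --- under the single-flip hypothesis every cycle the algorithm produces in all four branches is separating (since $f_u$ and $f_v$ are found on opposite sides), which is exactly what the paper checks branch by branch; your ``otherwise'' case never arises, so goodness of $C$ is immediate.
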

\begin{proof}
  We consider the $4$ cases for how the fundamental separating cycles
  may relate to $u$ and $v$, and show that in each case we find at
  least one correct candidate tuple.
  \begin{description}\sloppy
  \item[If some fundamental separating cycle contains both $u$ and
    $v$,] then any such cycle $C$ separates $u_L$ from $u_R$ and $v_L$
    from $v_R$.  In particular a primal nontree edge $e$ closes such a
    cycle if and only if it is on both $u_L\cdots u_R$ and $v_L\cdots
    v_R$.  But then we also have $f_u^L\neq f_u^R$ and $e\in
    f_u^L\cdots f_u^R$. We therefore execute
    Line~\ref{line:single-11-start}--\ref{line:single-11-end} exactly
    when such a separating cycle exists, and $C$ is such a separating
    cycle.
    By definition, $e_u,e_u^1,e_u^2\in E_1$ and $e_v,e_v^1,e_v^2\in
    E_k$, and the cycle $C$ separates $u_L$ from $u_R$, so there
    exists an $f\in\set{u_L,u_R}$ that is on the same side of $C$ as
    $f_u$.  For this value of $f$, we must have $f_u\in f_u^1\cdots
    f_u^2$, $f_u\in f_u^1\cdots f_v^1$, and $f_u\in f_u^2\cdots
    f_v^1$, and thus $f_u=\meet(f_u^1,f_u^2,f_v^1)$ and
    $(\meet(f_u^1,f_u^2,f_v^1),\cdot,C,e_u,e_v)$ is a good candidate
    tuple.
    Similarly, if only one flip is needed to admit $(u,v)$ then $f_v$
    is on the opposite side of $C$ from $f$, and we must have that
    $f_v\in \bar{f_v^1}\cdots \bar{f_v^2}$, $f_v\in
    \bar{f_v^1}\cdots \bar{f_u^1}$, and $f_v\in \bar{f_v^2}\cdots
    \bar{f_u^1}$, and thus
    $f_v=\meet(\bar{f_v^1},\bar{f_v^2},\bar{f_u^1})$ and
    $(\meet(f_u^1,f_u^2,f_v^1), \meet(\bar{f_v^1},\bar{f_v^2},
    \bar{f_u^1}),C,e_u,e_v)$ is a correct candidate tuple.

  \item[If no fundamental separating cycle contains both $u$ and $v$,
    but some contains $u$,] then $f_u^L=f_u^R$ and $f_v^L=f_v^R$ and
    Line~\ref{line:single-10-start}--\ref{line:single-10-end} gets
    executed. Any such cycle must separate either $u_L$ or $u_R$ from
    the rest of $\set{u_L,u_R,v_L,v_R}$, and in particular it must
    separate $u_L$ or $u_R$ from $f_u^\star$.  In this case there must
    exist an $f\in\set{u_L,u_R}\setminus\set{f_u^\star}$ that is on
    the opposite side of such a cycle from $f_u^\star$, and in fact
    the first primal edge on the dual tree path from $f_u^\star$ to
    $f$ closes such a separating cycle $C$.
    By definition, $e_u,e_u^1,e_u^2\in E_1$ and there exists
    $e_v'\in\set{e_v^1,e_v^2}$ with $\idx(e_v')=\max_{e\in C}\idx(e)$.
    Exactly one face $f_v'$ incident to $e_v'$ is on the same side of
    $C$ as $f_u$.  Given this face $f_v'$ and the corresponding faces
    $f_u^1,f_u^2$ incident to $e_u^1,e_u^2$ on the same side of $C$ as
    $f_v'$, we have $f_u\in f_u^1\cdots f_u^2$, $f_u\in f_u^1\cdots
    f_v'$, and $f_u\in f_u^2\cdots f_v'$, and thus
    $f_u=\meet(f_u^1,f_u^2,f_v')$ and
    $(\meet(f_u^1,f_u^2,f_v'),\cdot,C,e_u,e_v')$ is a good candidate
    tuple.
    If only one flip is needed to admit $(u,v)$ then $f_v$ is on all
    of $v_L\cdots v_R$, $v_L\cdots f_u^\star$, $v_R\cdots f_u^\star$,
    and thus $f_v=f_v^\star$ and
    $(\meet(f_u^1,f_u^2,f_v'),f_v^\star,C,e_u,e_v')$ is a correct
    candidate tuple.

  \item[If no fundamental separating cycle contains both $u$ and $v$,
    but some contains $v$,] then $f_u^L=f_u^R$ and $f_v^L=f_v^R$ and
    Line~\ref{line:single-01-start}--\ref{line:single-01-end} gets
    executed. Any such cycle must separate either $v_L$ or $v_R$ from
    the rest of $\set{u_L,u_R,v_L,v_R}$, and in particular it must
    separate $v_L$ or $v_R$ from $f_v^\star$.  In this case there must
    exist an $f\in\set{v_L,v_R}\setminus\set{f_v^\star}$ that is on
    the opposite side of such a cycle from $f_v^\star$, and in fact
    the first primal edge on the dual tree path from $f_v^\star$ to
    $f$ closes such a separating cycle $C$.
    By definition, $e_v,e_v^1,e_v^2\in E_k$ and there exists
    $e_u'\in\set{e_u^1,e_u^2}$ with $\idx(e_u')=\min_{e\in C}\idx(e)$.
    In this case, $f_u$ is on all of $u_L\cdots u_R$, $u_L\cdots
    f_v^\star$, $u_R\cdots f_v^\star$, and thus $f_u=f_u^\star$ and
    $(f_u^\star,\cdot,C,e_u',e_v)$ is a good candidate tuple.
    Exactly one face $f_u'$ incident to $e_u'$ is on the same side of
    $C$ as $f_v$.  Given this face $f_u'$ and the corresponding faces
    $f_v^1,f_v^2$ incident to $e_v^1,e_v^2$ on the same side of $C$ as
    $f_u'$, we have $f_v\in f_v^1\cdots f_v^2$,
    $f_v\in f_v^1\cdots f_u'$, and $f_v\in f_v^2\cdots f_u'$, and
    thus $f_v=\meet(f_v^1,f_v^2,f_u')$ and
    $(f_u^\star,\meet(f_v^1,f_v^2,f_u'),C,e_u',e_v)$ is a correct candidate
    tuple.

  \item[If no fundamental separating cycle contains $u$ or $v$,] then
    $f_u^L=f_u^R$ and $f_v^L=f_v^R$ and
    Line~\ref{line:single-00-start}--\ref{line:single-00-end} gets
    executed. Assuming only one flip is needed to admit $(u,v)$, then
    $f_u$ and $f_v$ are simply the $f_u^\star$ and $f_v^\star$ found
    in the algorithm. Specifically, since $u$ and $v$ are not already
    linkable, $f_u^\star\neq f_v^\star$.  Since there exists
    $e_u'\in\set{e_u^1,e_u^2}$ with $\idx(e_u')=\min_{e\in C}\idx(e)$
    and $e_v'\in\set{e_v^1,e_v^2}$ with $\idx(e_v')=\max_{e\in
      C}\idx(e)$, at least one of the candidates is correct.
  \end{description}
\end{proof}

\begin{lemma}\label{lem:find-single-findsgood}
  If $X_{b}$ is the first $f_u$-blocking node on $X_1\ldots X_r$ and
  $1<b<r$, then at least one of the candidates
  $(f_u',\cdot,C,e_u',e_v')\in\Call{find-single-flip-candidates}{u,v}$
  from Algorithm~\ref{alg:find-single} is good.
\end{lemma}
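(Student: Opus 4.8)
The plan is to mirror the proof of Lemma~\ref{lem:find-single-findsflip} while stripping away its use of the hypothesis that $G\cup(u,v)$ is planar and accommodated by a single flip. That proof splits into four cases according to which fundamental separating cycles pass through $u$ and/or $v$ --- equivalently, according to which of the four branches of Algorithm~\ref{alg:find-single} executes, as determined by the tests $f_u^L\neq f_u^R$ and $f_u^\star\neq f_v^\star$ --- and in each case exhibits a concrete candidate tuple $(f_u',\cdot,C,e_u',e_v')$. For that tuple the proof establishes $f_u'=f_u$ (via identities such as $f_u\in f_u^1\cdots f_u^2$, $f_u\in f_u^1\cdots f_v'$, $f_u\in f_u^2\cdots f_v'$, so $f_u$ equals the stated $\meet$), $\idx(e_u')=\min_{e\in C}\idx(e)$, and $\idx(e_v')=\max_{e\in C}\idx(e)$. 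I would first observe that none of these three derivations uses planarity or the single-flip assumption: they invoke only the partition $E_1,\dots,E_k$ of Definition~\ref{def:E-partition} and the fact that $u_L,u_R$ (resp.\ $v_L,v_R$) are incident to an edge of $E_1$ (resp.\ $E_k$). Hence they carry over verbatim, and the only thing left to verify for goodness (Definition~\ref{def:candidate-tuple}) is that the exhibited cycle $C$ is good in the sense of Definition~\ref{def:goodcycle}, i.e.\ that $f_u,f_v$ lie on opposite sides of $C$, or that $C$ meets both $E_{\leq b}$ and $E_{>b}$.

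I would dispatch the easy case first: when the exhibited $C$ passes through both $u$ and $v$, it contains an edge incident to $u$ (of index $1\leq b$) and an edge incident to $v$ (of index $k$); since $1<b<r\leq k$ gives $b<k$, the cycle meets both $E_{\leq b}$ and $E_{>b}$ and is good. In the other branches $C$ passes through only $u$, only $v$, or neither; there one of the two containments is typically immediate (from $C$ meeting $E_1$ or $E_k$), and the remaining obligation is either the other containment or that $C$ separates $f_u$ from $f_v$.

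The \textbf{main obstacle} is this remaining obligation, where the key input is $b>1$ together with $b<r$. Because $X_b$ is the \emph{first} $f_u$-blocking node, the nodes $X_1,\dots,X_{b-1}$ are not $f_u$-blocking; unwinding the definitions, on the $f_u$-side the skeletons $\Gamma(X_1),\dots,\Gamma(X_{b-1})$ contribute nothing beyond the $u,v$-critical cycle itself (P-node $f_u$-side regions are empty, a non-$f_u$-blocking R-node has a face through $s_{i-1},t_{i-1},s_i,t_i$ in its $f_u$-side region, and S-nodes are transparent), so $f_u$ is a single face incident to $u=s_0=t_0$ and to $s_{b-1},t_{b-1}$, and no cycle using only edges of $E_{<b}$ can separate $f_u$ from the faces $u_L,u_R$ at $u$. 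With this in hand I would argue by contradiction: in the branch where $C$ passes through $u$ only, $C$ is by construction a fundamental cycle separating $f_u$ from one of $u_L,u_R$, so it cannot lie entirely inside $E_{<b}$; hence it carries an edge of index $\geq b$, and since any edge whose index class meets the boundary pair $X_b\cap X_{b+1}$ forces $C$ to continue into $E_{>b}$, we conclude $C$ meets $E_{>b}$ and is good. The branches where $C$ passes through $v$ only or through neither are symmetric in spirit, but the missing containment is then on the $E_{\leq b}$ side and $f_u$-blockingness is an asymmetric notion, so there I would instead fall back on proving $C$ separating: in those branches there is no fundamental separating cycle through $u$, which forces the algorithm's $f_u^\star$ to be exactly $f_u$, and $C$, being closed by the first primal edge on the dual path leaving $f_u^\star$ toward $v$, puts $f_u$ and the subsequently computed $f_v$ on opposite sides of $C$. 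I expect the bookkeeping --- tracking which containment is ``free'' in which branch, and reducing ``$C$ reaches $E_{>b}$'' to ``$C$ reaches past the blocking node $X_b$'' --- to be the fiddly part of the write-up.
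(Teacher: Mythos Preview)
Your overall case split matches the paper's, but there is a real gap in the last branch and unnecessary complication in the others.

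For the branches where some fundamental separating cycle passes through $u$ and/or $v$, the paper does not argue about $E_{\leq b}$ and $E_{>b}$ at all. It simply reuses the part of the proof of Lemma~\ref{lem:find-single-findsflip} that shows the constructed $C$ is itself a separating cycle; that part never invokes the single-flip hypothesis, so it carries over unchanged, and a separating cycle is good by the first clause of Definition~\ref{def:goodcycle}. Your detour through ``$C$ cannot lie in $E_{<b}$, hence reaches $E_{\geq b}$, hence continues into $E_{>b}$'' is not needed, and the step ``any edge whose index class meets $X_b\cap X_{b+1}$ forces $C$ to continue into $E_{>b}$'' is not justified as stated.

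The substantive gap is in the ``neither'' branch. You write that the first primal edge on the dual path from $f_u^\star$ toward $v$ closes a cycle putting $f_u$ and ``the subsequently computed $f_v$'' on opposite sides. The face the algorithm records here is $f_v^\star$, and indeed $C$ separates $f_u=f_u^\star$ from $f_v^\star$; but ``separating'' in Definition~\ref{def:goodcycle} means separating $f_u$ from the true target face $f_v$, and in general $f_v\neq f_v^\star$. The paper shows this fails precisely when $f_u^\star\in f_v\cdots f_v^\star$, and handles that subcase by a different argument: then the chosen edge $e$ lies in $E_b$, and the dual path $f_v\cdots f_v^\star$ cuts $E_{\leq b}$ so as to disconnect $\{s_{b-1},t_{b-1},s_b,t_b\}$ there, which forces the fundamental cycle closed by $e$ to pass through $E_{>b}$. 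Hence $C$ meets both $E_b\subseteq E_{\leq b}$ and $E_{>b}$ and is good via the second clause. This subcase is exactly the ``fiddly bookkeeping'' you anticipated, but it is not bookkeeping; it is a separate argument that your plan does not supply.
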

\begin{proof}
  Since, for every cycle $C$, we try all relevant edges near
  $\pi_C(u)$ and $\pi_C(v)$, we need only argue that $f_u'$ is $f_u$
  and $C$ is good; if one of the tuples contains a good cycle and the
  correct $f_u$, then one of the tuples will be good.

  If there exists a fundamental separating cycle through at least one
  of $u,v$ then (by the same arguments as in
  Lemma~\ref{lem:find-single-findsflip}) for some such cycle $C$,
  $(f_u,\cdot,C,e_u',e_v')$ is among the returned candidates.  Since
  $C$ is separating, it is by definition good, and thus this tuple is
  clearly good.

  Otherwise no fundamental separating cycle contains any of $u$ or
  $v$. Let $f_u^\star =f_u^L=f_u^R$ and $f_v^\star
  =f_v^L=f_v^R$. Then, because $1<b<r$, it must be the case that
  $f_u^\star\neq f_v^\star$. Thus, let $e$ be the first primal edge on
  the dual tree path from $f_u^\star$ to $f_v^\star$.
  Note that if $f_u^\star\not\in f_v\cdots f_v^\star$, then that edge
  $e$ closes a fundamental cycle that is separating and therefore
  good. Suppose therefore that $f_u^\star\in f_v\cdots
  f_v^\star$. Then $e\in E_{b}$. Furthermore the path $f_v\cdots
  f_v^\star$ cuts $E_{\leq b}$ such that not all of
  $s_{b-1},t_{b-1},s_{b}$, and $t_{b}$ are connected in $E_{\leq
    b}\setminus (f_v\cdots f_v^\star)$.  Thus, the fundamental cycle
  closed by any primal edge on $f_v\cdots f_v^\star$ must go through
  $E_{>b}$.  In particular the edge $e$ closes a cycle $C$
  intersecting $E_{>b}$. Thus $C$ is good, since it intersects both
  $E_{b}$ and $E_{>b}$-
\end{proof}

\FloatBarrier
\subsubsection{\textproc{find-first-separation-flip}}

\begin{algorithm}[htb!]
  \caption{}
  \label{alg:find-first-sep}
  \begin{algorithmic}[1]

    \Function{find-first-separation-flip}{$u,v$}

      \LeftComment{Assumes $u,v$ are biconnected and do not share a
      face.}

      \State candidates $\gets\Call{find-single-flip-candidates}{u,v}$

      \LeftComment{Test for single flip}

      \For{$(f_u,f_v,\cdot,\cdot,\cdot)\in$ candidates}

        \If{$v\in f_v$}

          \State $(s,\sigma)\gets\Call{choose-best-flip}{u,v,f_u,f_v}$

          \If{$s>0$}

            \State \Return $(s,\sigma)$

          \EndIf

        \EndIf

      \EndFor

      \LeftComment{Not single flip. If first flip exists, then first
      $f_u$-blocking node $X_{b}$ has $1<b<r$.}

      \State result $\gets(0,\bot)$

      \For{$(f_u,\cdot,C,e_u',e_v')\in$ candidates}

        \LeftComment{Assume $f_u$ is correct, $C$ is good, and
        $e_u',e_v'\in C$ are in the first/last $E_j$ with
        $E_j\cap C\neq\emptyset$.}\label{line:find-first-sep:assume-good}

        \Statex

        \LeftComment{Handle cases where $E_{>b}\cap C\neq\emptyset$.}

        \State $f_v'\gets$ face
        incident to $e_v'$ on same side of $C$ as $f_u$.

        \If{$f_u\neq f_v'$}\Comment{If $f_u=f_v'$ we are not in this
          case.}
        \label{line:find-first-sep:fvprime-test}

          \State $(x,y)\gets$ first primal edge on dual tree path
          $f_u\cdots f_v'$.

          \State\parbox[t]{\linewidth-\algorithmicindent*3}{
            \setlength{\abovedisplayskip}{-\baselineskip}
            \setlength{\jot}{0pt}
            \setlength{\belowdisplayskip}{0pt}
            \begin{flalign*}
              \text{result} \gets \max \{
              &\text{result},
              &&\\
              &\Call{find-sep-P11}{u,v,f_u,C,e_u',e_v',x,y},
              &&\tag*{$\triangleright$
                \textit{$C$ separating and P}
              }\\
              &\Call{find-sep-R11}{u,v,f_u,C,e_u',e_v',x,y},
              &&\tag*{$\triangleright$
                \textit{$C$ separating and R}
              }\\
              &\Call{find-sep-P10}{u,v,f_u,C,e_u',e_v',x,y},
              &&\tag*{$\triangleright$
                \textit{$C$ not separating and P}
              }\\
              &\Call{find-sep-R10}{u,v,f_u,C,e_u',e_v',x,y}
              &&\tag*{$\triangleright$
                \textit{$C$ not separating and R}
              }\\
              \}
            \end{flalign*}
          }

        \EndIf

        \Statex

        \LeftComment{Handle cases where $E_{>b}\cap C=\emptyset$.}

        \State $e_v^1,e_v^2\gets$ the edges on $C$ incident to $\pi_C(v)$

        \State $\bar{f_v^1},\bar{f_v^2}\gets$ the faces incident to
        $e_v^1,e_v^2$ on the opposite side of $C$ from $f_u$.

        \State $\bar{f_v^3}\gets$ any face incident to $v$ e.g.\@ $v_L$ or $v_R$

        \State $\bar{f_1}\gets\meet(\bar{f_v^1},\bar{f_v^2},\bar{f_v^3})$
        \label{line:find-first-sep:f1bar}

        \State result $\gets\max\set{
          \text{result},
          \Call{choose-best-flip}{u,v,f_u,\bar{f_1}}
        }$

        \For{$\bar{f_2}\in\set{\bar{f_v^1},\bar{f_v^2}}\setminus\set{\bar{f_1}}$}
        \label{line:find-first-sep:f2bar}
        \Comment{Otherwise $\bar{f_2}$ can not be the right face}

          \State $(x,y)\gets$ first primal edge on the
          dual tree path from $\bar{f_1}$ to $\bar{f_2}$

          \State\parbox[t]{\linewidth-\algorithmicindent*4}{
            \setlength{\abovedisplayskip}{-\baselineskip}
            \setlength{\jot}{0pt}
            \setlength{\belowdisplayskip}{0pt}
            \begin{flalign*}
              \text{result} \gets \max \{
              &\text{result},
              &&\\
              &\Call{find-sep-P0x}{u,v,f_u,C,e_u',e_v',x,y},
              &&\tag*{$\triangleright$
                \textit{P}
              }\\
              &\Call{find-sep-R01}{u,v,f_u,C,e_u',e_v',x,y}
              &&\tag*{$\triangleright$
                \textit{$C$ separating and R}
              }\\
              \}
              &&&\tag*{$\triangleright$
                \textit{In this case it is impossible to have $C$ not separating and R}
              }
            \end{flalign*}
          }

        \EndFor

      \EndFor

      \State \Return result

    \EndFunction

  \end{algorithmic}
\end{algorithm}

\begin{lemma}\label{lem:find-first-sep-11}
  Consider \Call{find-first-separation-flip}{} in
  Algorithm~\ref{alg:find-first-sep}. Suppose
  $(f_u,\cdot,C,e_u',e_v')$ is good in
  Line~\ref{line:find-first-sep:assume-good}, and $E_{>b}\cap
  C\neq\emptyset$ and $C$ is separating. Then in
  Line~\ref{line:find-first-sep:fvprime-test}, $f_u\neq f_v'$ and the
  first primal edge $(x,y)$ on the dual tree path from $f_u$ to $f_v'$
  is in $E_{b}$ and closes a good fundamental cycle $C'$ such that
  $C'\setminus C$ is a path $\pi_C(x)\cdots\pi_C(y)$ in
  $E_{b}$. Furthermore, if $X_{b}$ is a P node,
  $\set{\pi_C(x),\pi_C(y)}=\set{s_{b-1},t_{b-1}}$.
\end{lemma}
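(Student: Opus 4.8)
This lemma unwinds the three hypotheses — $(f_u,\cdot,C,e_u',e_v')$ good, $C$ separating, $E_{>b}\cap C\neq\emptyset$ — into statements about the embedded dual tree, so that the subsequent calls to \textproc{find-sep-P11} and \textproc{find-sep-R11} receive valid input. I would argue in terms of the edge partition of Definition~\ref{def:E-partition}, the notion of good cycle (Definition~\ref{def:goodcycle}), and the structure of $f_u$ and the separation pairs $\set{s_i,t_i}$ from Lemma~\ref{lem:fu-blocking-welldefined} and Observation~\ref{obs:define-b-firstblock}. First, goodness of the tuple gives $\idx(e_v')=\max_{e\in C}\idx(e)$, and with $E_{>b}\cap C\neq\emptyset$ this forces $\idx(e_v')>b$, so $e_v'\in E_{>b}$. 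Since we are in the case $b>1$, $X_{b-1}$ is not $f_u$-blocking, so $f_u$ is incident to both poles $s_{b-1},t_{b-1}$ of $X_{b-1}\cap X_b$; the face $f_v$ bounding a maximal $u$-flip is incident to the same pair, and since $C$ is separating, $f_u$ and $f_v$ lie on opposite sides of $C$, hence $s_{b-1},t_{b-1}\in C$.

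Next I would prove $f_u\neq f_v'$. Let $V_u$ and $V_v$ be the virtual edges of $\Gamma(X_b)$ corresponding to $X_{b-1}\cap X_b$ and $X_b\cap X_{b+1}$; the realisation of $V_v$ is exactly $E_{>b}$. In $\Gamma(X_b)$ the face carrying $f_u$ borders $V_u$ (it reaches $u$ through $V_u$), but it cannot also border $V_v$: for a P node $X_b$ that would make the $f_u$-side of $\Gamma(X_b)$ edge-free, and for an R node it would exhibit a face of the $f_u$-side incident to all of $s_{b-1},t_{b-1},s_b,t_b$ — both contradict $X_b$ being $f_u$-blocking. Hence $f_u$ is incident to no edge of $E_{>b}$, whereas $f_v'$ is incident to $e_v'\in E_{>b}$, so $f_u\neq f_v'$; the test in Line~\ref{line:find-first-sep:fvprime-test} passes and $(x,y)$ is the well-defined first primal edge on the dual-tree path $f_u\cdots f_v'$.

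Finally I would pin down $(x,y)$ and $C'$. Because $X_b$ is $f_u$-blocking, the face carrying $f_u$ is separated on its $f_u$-side from $V_v$ by at least one edge of $\Gamma(X_b)$; as $f_v'$ lies past $V_v$, the dual-tree path leaving $f_u$ toward $f_v'$ must cross a boundary edge realising that $f_u$-side edge of $\Gamma(X_b)$. Such an edge has index $b$, and — being the first edge of the realising path, which starts at a pole — is incident to $s_{b-1}$ or $t_{b-1}$; so $(x,y)\in E_b$ and, say, $x\in\set{s_{b-1},t_{b-1}}\subseteq C$, giving $\pi_C(x)=x$. Letting $C'$ be the fundamental cycle of $(x,y)$, the tree path $x\cdots y$ meets $C$ only along the arc from $\pi_C(x)$ to $\pi_C(y)$, so $C'\setminus C$ is the path $\pi_C(x)\cdots x\cdots y\cdots\pi_C(y)$; since $C$ covers $E_{<b}$ up to $s_{b-1},t_{b-1}$ and can re-enter the $X_b$-region only at those poles, this path stays in the skeleton region of $X_b$, i.e.\ in $E_b$, and $C'$ is good by Definition~\ref{def:goodcycle} (it meets $E_b\subseteq E_{\leq b}$ and, tracing which arc of $C$ survives in it, also meets $E_{>b}$, or else it is separating). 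When $X_b$ is a P node its skeleton has poles $s_{b-1}=s_b$ and $t_{b-1}=t_b$ and every edge of $E_b$ lies on a path between them inside $E_b$, so the path $C'\setminus C$ — crossing the first $f_u$-side parallel connection — runs from one pole to the other, giving $\set{\pi_C(x),\pi_C(y)}=\set{s_{b-1},t_{b-1}}$.

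The hard part will be making the last paragraph fully rigorous: showing that the first dual-tree step out of $f_u$ lands in $E_b$ rather than $E_{<b}$, that $C'\setminus C$ does not escape the $X_b$-region, that $C'$ is good, and — for P nodes — that its two endpoints are exactly $s_{b-1}$ and $t_{b-1}$. This is the same ``which side of which virtual edge'' bookkeeping that underlies Lemma~\ref{lem:fu-blocking-welldefined}; I would carry it out by examining the cycles that $C$ and $C'$ induce in $\Gamma(X_b)$ and invoking uniqueness of the induced separation pairs, taking care with the embedding-dependent cases (which side of $C$ each of $f_u,f_v',e_u',e_v'$ falls on).
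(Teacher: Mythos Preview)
Your first two paragraphs track the paper's argument: goodness gives $e_v'\in E_{>b}$, and since $X_b$ is $f_u$-blocking the face $f_u$ reaches no edge of $E_{>b}$, so $f_u\neq f_v'$.

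The third paragraph contains a genuine error. You assert that $(x,y)$, ``being the first edge of the realising path, which starts at a pole,'' is incident to $s_{b-1}$ or $t_{b-1}$, and hence $\pi_C(x)=x$. This fails for R nodes. The edge $(x,y)$ is the first primal edge on the \emph{dual} tree path $f_u\cdots f_v'$, so it is simply some edge on the boundary of $f_u$. In $\Gamma(X_b)$ the face carrying $f_u$ is bounded (apart from the virtual edge $V_u$) by a path of skeleton edges from $s_{b-1}$ to $t_{b-1}$; the edge $(x,y)$ may lie anywhere along the realisation of that path, in particular on an interior skeleton edge touching neither pole. So in general $x,y\notin C$ and $\pi_C(x)\neq x$, and your description of $C'\setminus C$ does not go through.

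The paper's route avoids this by exploiting an observation you do not use: since $C$ is a \emph{fundamental} cycle and $f_v'$ is by construction on the same side of $C$ as $f_u$, the entire dual tree path $f_u\cdots f_v'$ stays on that side of $C$. This is what pins $(x,y)$ to $E_b$. For the R-node case, the paper then argues that this dual path cuts $E_b$ so as to separate $s_{b-1},s_b$ from $t_{b-1},t_b$; hence the primal tree paths $x\cdots\pi_C(x)$ and $y\cdots\pi_C(y)$ lie in $E_b$, while the tree path $\pi_C(x)\cdots\pi_C(y)$ along $C$ must detour through $E_{<b}$ (so $C'$ is separating and contains $e_u'$) or through $E_{>b}$ (so $C'$ contains $e_v'$). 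That simultaneously yields $C'\setminus C\subseteq E_b$ and the goodness of $C'$, without ever claiming $x$ or $y$ sits on $C$. Your P-node argument is essentially right and matches the paper's: there $(x,y)$ lies in a $\set{s_{b-1},t_{b-1}}$-separation class disjoint from $C$, which forces $\set{\pi_C(x),\pi_C(y)}=\set{s_{b-1},t_{b-1}}$.
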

\begin{proof}
  Since $(f_u,\cdot,C,e_u',e_v')$ is good and $E_{>b}\cap
  C\neq\emptyset$ we specifically have $e_v'\in E_{>b}$. Since the
  first $f_u$-blocking node on $X_1\cdots X_k$ has index
  $b<\idx(e_v')$, that means $f_u\neq f_v'$.  Let $(x,y)$ be the
  first primal edge on the dual tree path from $f_u$ to $f_v'$. Since
  $C$ is a fundamental cycle, the path $f_u \cdots f_v'$ will stay on
  one side of $C$, thus $(x,y)\in E_{b}$ and
  \begin{itemize}
  \item if $X_{b}$ is a P node the dual tree path $f_u \cdots f_v'$
    cuts every separation class w.r.t.\@ $\set{s_{b-1},t_{b-1}}$ on the $f_u$
    side of $C$, separating $s_{b-1}=s_{b}$ from
    $t_{b-1}=t_{b}$. Thus, $(x,y)$ is in a different separation class
    w.r.t.\@ $\set{s_{b-1},t_{b-1}}$ from any edge on $C$ and we must have
    $\set{s_{b-1},t_{b-1}}=\set{\pi_C(x),\pi_X(y)}$. The cycle $C'$ closed by
    $(x,y)$ is either separating (and contains $e_u'\in E_{\leq
      b}$), or contains $e_v'\in E_{>b}$.
  \item if $X_{b}$ is an R node the dual tree path $f_u \cdots f_v'$
    cuts $E_{b}$, separating $s_{b-1}$ and $s_{b}$ from $t_{b-1}$ and
    $t_{b}$. Thus, the fundamental cycle $C'$ closed by $(x,y)$ must
    connect $x$ to $y$ via primal tree paths $x\cdots
    \pi_C(x)\subseteq E_{b}$ and $y\cdots \pi_C(y)\subseteq
    E_{b}$, and the primal tree path $\pi_C(x)\cdots \pi_C(y)$ which
    (assuming $X_{b}$ is not a cross node) must go via either
    $s_{b-1}\cdots t_{b-1}\subseteq E_{<b}$ (thus $C'$ is separating and
    contains $e_u'$) or $s_{b}\cdots t_{b}\subseteq E_{>b}$ (and
    contains $e_v'$).
  \end{itemize}
  In either case, $C'$ is good.
\end{proof}

\begin{lemma}\label{lem:find-first-sep-10}
  Consider \Call{find-first-separation-flip}{} in
  Algorithm~\ref{alg:find-first-sep}. Suppose
  $(f_u,\cdot,C,e_u',e_v')$ is good in
  Line~\ref{line:find-first-sep:assume-good}, and $E_{>b}\cap
  C\neq\emptyset$ and $C$ is not separating. Then in
  Line~\ref{line:find-first-sep:fvprime-test}, $f_u\neq f_v'$ and the
  first primal edge $(x,y)$ on the dual tree path from $f_u$ to $f_v'$
  is in $E_{\le b}$ and closes a good fundamental cycle $C'$
  intersecting $E_{<b}$.
\end{lemma}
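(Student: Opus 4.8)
The plan is to follow the proof of Lemma~\ref{lem:find-first-sep-11} almost verbatim, replacing the ``$C$ separating'' geometry by the other alternative of Definition~\ref{def:goodcycle} (that $C$ intersects both $E_{\leq b}$ and $E_{>b}$) and redoing the SPQR-index bookkeeping accordingly. First I would establish $f_u\neq f_v'$. Exactly as in Lemma~\ref{lem:find-first-sep-11}: since $(f_u,\cdot,C,e_u',e_v')$ is good and $E_{>b}\cap C\neq\emptyset$, the edge $e_v'$ realising $\idx(e_v')=\max_{e\in C}\idx(e)$ satisfies $\idx(e_v')>b$; since $X_b$ is the first $f_u$-blocking node, $f_u$ is incident only to edges of index at most $b$ and hence not to $e_v'$, whereas $f_v'$ is incident to $e_v'$ by definition, so $f_u\neq f_v'$.

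Next I would argue $(x,y)\in E_{\leq b}$ and that it closes a good cycle $C'$ intersecting $E_{<b}$. The dual tree edge taken first along the path $f_u\cdots f_v'$ is, by the tree-cotree decomposition, a primal non-tree edge incident to the face $f_u$; since every edge incident to $f_u$ has index at most $b$, we get $(x,y)\in E_{\leq b}$, and $(x,y)$ closes a fundamental cycle $C'$. Using that the faces on the $f_u$-side of $C$ form a subtree of the dual tree (so the whole dual path $f_u\cdots f_v'$ stays on that side --- the same fact used in Lemma~\ref{lem:find-first-sep-11}), I would locate $C'$ inside the skeletons $\Gamma(X_j)$ of the critical path. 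If $\idx(x,y)<b$ then $C'$ already meets $E_{<b}$; otherwise $(x,y)\in E_b$, so $C'$ uses $X_b$, and since removing the dual edge of $(x,y)$ separates $f_u$ from $f_v'$, the cycle $C'$ must separate $f_u$ --- which is incident to both $s_{b-1}$ and $t_{b-1}$ --- from a face incident to $e_v'\in E_{>b}$; this forces $C'$ to run through the separation pair $\{s_{b-1},t_{b-1}\}$ into the $u$-side, i.e.\@ into $E_{<b}$. Here the hypothesis that $C$ is \emph{not} separating, so that $f_v$ (the ``target side'') lies on the $f_u$-side of $C$ and $C$ genuinely straddles $\{s_b,t_b\}$, is what rules out the alternative configuration allowed in Lemma~\ref{lem:find-first-sep-11}. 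Goodness of $C'$ then follows, since $C'$ meets $E_{<b}\subseteq E_{\leq b}$ and, via the separation it induces, also $E_{>b}$ (or is itself separating).

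I expect this last step to be the main obstacle. The $f_u\neq f_v'$ claim and the membership $(x,y)\in E_{\leq b}$ are immediate once the relevant facts from Lemma~\ref{lem:find-first-sep-11} and the tree-cotree structure are in hand; the remaining work is in pinning down that $C'$ reaches all the way into $E_{<b}$ rather than merely into $E_b$, which requires combining the maximality defining $f_u$, the index of the first blocking node $X_b$, goodness of $C$, and the non-separating hypothesis, all tracked carefully through the SPQR decomposition.
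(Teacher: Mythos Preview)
You have the right ingredients but miss the one observation that makes the non-separating case short. You note that $f_v$ lies on the $f_u$-side of $C$; the paper exploits this directly: since $f_u$, $f_v$, and $f_v'$ are all on the same side of $C$, and since $\{f_u,f_v\}$ is a cut in the dual separating the $H_u$-faces from the $H_v$-faces, the dual tree path $f_u\cdots f_v'$ must pass through $f_v$. Hence the first edge $(x,y)$ on that path already closes a \emph{separating} fundamental cycle $C'$ (it separates $f_u$ from $f_v$), and goodness of $C'$ is immediate with no SPQR-index bookkeeping.

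Your argument for goodness is where the gap sits. You claim $C'$ is good because it meets $E_{<b}$ and, ``via the separation it induces, also $E_{>b}$ (or is itself separating)'' --- but neither disjunct is established. Knowing that $C'$ separates $f_u$ from $f_v'$ is not the same as separating $f_u$ from $f_v$, and nothing in your sketch forces $C'$ to reach $E_{>b}$; this is precisely the ``other branch'' from Lemma~\ref{lem:find-first-sep-11} that you say is ruled out but do not actually rule out. Once the $f_v$-on-the-path observation is in place, the residual claim that $C'$ meets $E_{<b}$ does go by the case split you outline: either $(x,y)\in E_{<b}$, or $(x,y)\in E_b$ and the dual path disconnects one of $x,y$ from $C$ inside $E_b$, forcing the primal tree path $x\cdots y$ through $s_{b-1}\cdots t_{b-1}\subseteq E_{<b}$.
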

\begin{proof}
  Since $f_u$ and $f_v'$ are on the same side of $C$, by construction,
  the dual tree path $f_u \cdots f_v'$ must contain $f_v$. But then,
  the first edge $(x,y)$ closes a separating fundamental cycle. If
  $(x,y)$ is in $E_{<b}$ we are done, so suppose $(x,y)\in E_{b}$
  which is the only remaining case. Then, either $x$ or $y$ is
  separated from $C$ in $E_{b}\setminus (f_u\cdots f_v')$. But then,
  the primal tree path $x\cdots y$ must go via $s_{b-1}\cdots
  t_{b-1}\subseteq E_{<b}$.
\end{proof}

\begin{lemma}\label{lem:find-first-sep-f1f2-x0}%
  Let $C$ be a good cycle that is not separating, let $e_1,e_2$ be the
  edges incident to $\pi_C(u)$ on $C$, and let $f_v^1,f_v^2$ be the
  faces incident to $e_1,e_2$ on the same side of $C$ as $f_u$, then
  $f_v^1\neq f_v^2$ and $f_v\in f_v^1\cdots f_v^2$.
\end{lemma}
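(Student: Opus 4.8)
The plan is to reason about the two closed disks into which the cycle $C$ splits the sphere. Since $C$ is good but not separating, Definition~\ref{def:goodcycle} gives that $C$ meets both $E_{\leq b}$ and $E_{>b}$, while $f_u$ and $f_v$ lie on the same side of $C$; write $D^{+}$ for that side (the ``$f_u$-side'') and $D^{-}$ for the other. Since $f_u$ is incident to $u$ and $f_u\subseteq D^{+}$, the vertex $u$ lies in $D^{+}$; we may assume $u\notin C$ (this is the relevant configuration, a candidate cycle through $u$ being handled separately), so $w:=\pi_C(u)=\meet(u,y,z)$, with $(y,z)$ the non-tree edge closing $C$, is a vertex of $C$ other than $u$, and the subtree $T_u$ of the spanning tree containing $u$ after deleting $C$'s tree edges is nontrivial, attaches to $C$ only at $w$, and lies in $D^{+}$. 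At $w$, the two cycle edges $e_1,e_2$ split the rotation into a $D^{+}$-arc and a $D^{-}$-arc, and the first edge of $T_u$ on the path $w\cdots u$ lies strictly inside the $D^{+}$-arc.

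First I would establish $f_v^1\neq f_v^2$. Because the first $T_u$-edge towards $u$ lies strictly between $e_1$ and $e_2$ in the $D^{+}$-arc, the corner of $w$ incident to $e_1$ on the $D^{+}$ side and the corner of $w$ incident to $e_2$ on the $D^{+}$ side are two distinct corners of $w$. Since $B$ is biconnected, every face of the embedding is bounded by a simple cycle, so no vertex occurs twice on a single face's boundary walk; hence two distinct corners at a common vertex always belong to distinct faces. As $f_v^1$ and $f_v^2$ are exactly the faces containing these two corners, $f_v^1\neq f_v^2$.

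For the claim $f_v\in f_v^1\cdots f_v^2$ I would use the description of $f_v$ as the second face of the maximal $u$-flip: it is incident to both $s_{b-1}$ and $t_{b-1}$, it lies in $D^{+}$, and it is the face incident to $s_{b-1},t_{b-1}$ for which the $u$-component is largest, i.e.\ it is ``farthest toward $v$''. Using that $C$ is good I would first determine where $w=\pi_C(u)$ sits relative to $\{s_{b-1},t_{b-1}\}$ --- on the $u$-side of this pair or inside the $X_b$-part --- so that $T_u$, and hence $f_v^1$ and $f_v^2$, lies in (or on the boundary of) the region cut out on the $u$-side by the $4$-cycle $f_u,s_{b-1},f_v,t_{b-1}$ in the vertex-face graph. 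The faces $f_v^1$ and $f_v^2$ are the two faces flanking $T_u$ at $w$ on the $D^{+}$ side; walking the boundary of that region, they appear on opposite sides of $f_v$, and $f_v$ is the only face through which the boundary reaches the $v$-side, via its corners at $s_{b-1}$ and at $t_{b-1}$. Hence every path in the dual tree from $f_v^1$ to $f_v^2$ must pass through $f_v$; concretely I expect to show the stronger identity $f_v=\meet(f_v^1,f_v^2,f)$ for an arbitrary face $f$ incident to $v$, from which $f_v\in f_v^1\cdots f_v^2$ is immediate. I expect this last part --- turning the informal picture ``$f_v^1$ and $f_v^2$ flank the $u$-subtree and $f_v$ is the gateway to $v$'' into a rigorous statement about the unique dual-tree path, in particular handling exactly where $C$ enters $E_b$ and $E_{>b}$ and where $\pi_C(u)$ lands --- to be the main obstacle.
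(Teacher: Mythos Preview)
Your high-level picture is right (a Jordan curve through $f_v$ that separates $f_v^1$ from $f_v^2$), but the concrete construction is missing, and the curve you propose does not do the job.

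First, you never extract the key structural consequence of ``good but not separating'': not only does $C$ meet $E_{\leq b}$ and $E_{>b}$, but in fact $C\cap E_{<b}=\emptyset$, so $C\cap E_b$ is a single path $s_b\cdots t_b$ and $\pi_C(u)$ lies on that path. (If $C$ met both $E_{<b}$ and $E_{>b}$ it would pass through $s_{b-1}$ and $t_{b-1}$ and hence be separating.) This also justifies your assumption $u\notin C$, which you currently make ad hoc.

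Second, the $4$-cycle $f_u,s_{b-1},f_v,t_{b-1}$ you propose does \emph{not} separate $f_v^1$ from $f_v^2$: both faces are incident to edges of $C\cap E_b$, hence lie on the $E_{\geq b}$-side of that cut, i.e.\ on the \emph{same} side of your $4$-cycle. So ``walking the boundary of that region'' cannot place $f_v^1$ and $f_v^2$ on opposite sides of $f_v$. The paper's proof instead builds a curve that actually enters $E_b$: let $s'\in\{s_{b-1},t_{b-1}\}$ be the vertex where the primal tree path $\pi_C(u)\cdots u$ first leaves $E_{\geq b}$, and take the path $s'\cdots\pi_C(u)\cdots t_b$ inside $C\cup T$ (tree edges to $\pi_C(u)$, then along $C\cap E_b$), closed by an arc through $f_v$ (both $s'$ and $t_b$ lie on $f_v$). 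This closed curve has $f_v^1$ and $f_v^2$ on opposite sides, and since its primal part lies in $C\cup T$ the dual-tree path $f_v^1\cdots f_v^2$ can only cross it at $f_v$.

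Finally, aiming for the stronger identity $f_v=\meet(f_v^1,f_v^2,f)$ is unnecessary here; the lemma only asks for $f_v\in f_v^1\cdots f_v^2$, and that is exactly what the Jordan-curve argument delivers once you have the right curve.
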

\begin{proof}
  Since $C$ is good and not separating, then by definition $C$
  intersects both $E_{\leq b}$ and $E_{>b}$ but not $E_{<b}$,
  and in particular $E_{b}\cap C$ consists of a path $s_{b}\cdots
  t_{b}$.
  Since $u,v$ are assumed to be biconnected, $f_v^1$ and $f_v^2$ are
  distinct. Furthermore let $s'$ be the first vertex on the primal
  tree path from $\pi_C(u)$ to $u$ such that the remaining path
  $s'\cdots u\subseteq E_{<b}$, then $s'\in\set{s_{b-1},t_{b-1}}$.

  Since $s_{b}$ and $t_{b}$ are distinct, at least one of them is
  different from $\pi_C(u)$, and thus at least one of $f_v^1$ and
  $f_v^2$ is incident to an edge on a path in $E_{b}\cap(C\cup T)$
  from $s_{b}$ or $t_{b}$ to $s'$.  Suppose without loss of
  generality that $t_{b}\neq\pi_C(u)$ and that $f_v^2$ is incident
  to the first edge on $\pi_C(u)\cdots t_{b}$.  Since $t_{b}\in
  f_v$ and $s'\in f_v$, the path $s'\cdots t_{b}$ in $C\cup T$ together
  with (an imaginary edge $(t_{b},s')$ through) $f_v$ form a closed
  curve separating $f_v^1$ from $f_v^2$. Since $s'\cdots t_{b}$ is in
  $C\cup T$, the path $f_v^1\cdots f_v^2$ does not cross it, and thus it
  must contain $f_v$.
\end{proof}

\begin{lemma}\label{lem:find-first-sep-f1f2-0x}%
  Let $C$ be a good cycle with $E_{>b}\cap C=\emptyset$, let
  $e_1,e_2$ be the edges incident to $\pi_C(v)$ on $C$, and let
  $\bar{f_v^1},\bar{f_v^2}$ be the faces incident to $e_1,e_2$ on the
  opposite side of $C$ from $f_u$, then $\bar{f_v^1}\neq\bar{f_v^2}$
  and $f_v\in\bar{f_v^1}\cdots\bar{f_v^2}$.
\end{lemma}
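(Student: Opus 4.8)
The plan is to prove this as the mirror image of Lemma~\ref{lem:find-first-sep-f1f2-x0}: everywhere that proof used $u$, the separating/non-separating dichotomy, the set $E_{>b}$, and the pair $\set{s_{b},t_{b}}$, here I would instead use $v$, the opposite dichotomy, $E_{<b}$, and the pair $\set{s_{b-1},t_{b-1}}$. First I would extract the structural consequences of the hypotheses. Since $C$ is good and $E_{>b}\cap C=\emptyset$, Definition~\ref{def:goodcycle} forces $C$ to be separating; hence $f_u$ and $f_v$ lie on opposite sides of $C$, so the faces $\bar{f_v^1},\bar{f_v^2}$ — defined on the side of $C$ opposite $f_u$ — lie on the same side of $C$ as $f_v$. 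Because $f_u$ and $f_v$ are both incident to each of $s_{b-1}$ and $t_{b-1}$ (the former by the choice of $f_u$ and of $b$, the latter because $f_v$ is the face completing the flip $4$-cycle $f_u,s_{b-1},f_v,t_{b-1}$, and, since in the regime where this lemma is applied $1<b<r$, $f_v$ is moreover incident to $s_{b}$ and $t_{b}$), any cycle separating $f_u$ from $f_v$ must pass through both $s_{b-1}$ and $t_{b-1}$. Combined with $E_{>b}\cap C=\emptyset$, this means $s_{b-1},t_{b-1}$ split $C$ into one arc contained in $E_{<b}$ and one arc $P\subseteq E_{b}$, and that $\pi_C(v)$ lies on $P$: the tree path from $v$ to $C$ leaves $E_{>b}$ through one of $s_{b},t_{b}$ and first meets $C$ while still inside $E_{b}$. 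Distinctness $\bar{f_v^1}\neq\bar{f_v^2}$ then follows from $u,v$ being biconnected, exactly as in Lemma~\ref{lem:find-first-sep-f1f2-x0}.

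For the containment $f_v\in\bar{f_v^1}\cdots\bar{f_v^2}$ I would run the same closed-curve argument as in Lemma~\ref{lem:find-first-sep-f1f2-x0}. Let $t'$ be the first vertex on the primal tree path from $\pi_C(v)$ to $v$ with $t'\cdots v\subseteq E_{>b}$; then $t'\in\set{s_{b},t_{b}}$, so $t'$ is incident to $f_v$. Let $e_1,e_2$ be the two edges of $C$ at $\pi_C(v)$; both lie on $P$, one on the sub-arc of $P$ from $\pi_C(v)$ towards $s_{b-1}$ and one on the sub-arc towards $t_{b-1}$. Choose the endpoint $w\in\set{s_{b-1},t_{b-1}}$ whose sub-arc of $P$ to $\pi_C(v)$ avoids the single non-tree edge of $C$, and form the closed curve $\gamma$ by following this tree-only sub-arc from $w$ to $\pi_C(v)$, then the tree path from $\pi_C(v)$ to $t'$, and finally an imaginary edge through $f_v$ from $t'$ back to $w$ (valid since both $w$ and $t'$ are incident to $f_v$). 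At $\pi_C(v)$ the curve $\gamma$ enters via one of $e_1,e_2$ and leaves via the tree edge towards $v$, so — by the same local wedge analysis as in Lemma~\ref{lem:find-first-sep-f1f2-x0} — it places the edge carrying $\bar{f_v^1}$ strictly on one side of $\gamma$ and the edge carrying $\bar{f_v^2}$ strictly on the other. The dual tree path $\bar{f_v^1}\cdots\bar{f_v^2}$ therefore crosses $\gamma$; since a dual tree path cannot cross primal tree edges, the crossing must occur at the imaginary edge through $f_v$, i.e.\@ $f_v$ lies on $\bar{f_v^1}\cdots\bar{f_v^2}$.

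The routine parts — the local wedge analysis at $\pi_C(v)$, the degenerate case $\pi_C(v)\in\set{s_{b-1},t_{b-1}}$, and the fact that $f_v$ is incident to all of $s_{b-1},t_{b-1},s_{b},t_{b}$ in the current embedding (which uses that $X_b$ is not a cross node) — all follow from the structure developed earlier in this section. The step I expect to be the main obstacle is the structural claim in the first paragraph: that a cycle separating $f_u$ from $f_v$ cannot lie entirely inside $E_{<b}$ but must contain an arc $P$ through $\Gamma(X_{b})$, and that $\pi_C(v)$ lands on that arc. This is exactly where the asymmetric hypothesis $E_{>b}\cap C=\emptyset$ (rather than a symmetric statement about $E_{<b}$) does the real work, and it has to be reconciled carefully with the definition of $b$ as the \emph{first} $f_u$-blocking node on $X_1\ldots X_r$.
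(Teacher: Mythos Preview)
Your proposal is correct and follows essentially the same approach as the paper's own proof: deduce that $C$ is separating, use biconnectedness for $\bar{f_v^1}\neq\bar{f_v^2}$, pick the vertex $t'\in\set{s_b,t_b}$ on the primal tree path $\pi_C(v)\cdots v$ (the paper calls this $s'$), and build a closed curve from one of $\set{s_{b-1},t_{b-1}}$ through $\pi_C(v)$ to $t'$ and back via an imaginary edge through $f_v$, concluding that the dual tree path $\bar{f_v^1}\cdots\bar{f_v^2}$ must cross at $f_v$. The only cosmetic difference is that you insist on choosing $w$ so the sub-arc of $C$ is tree-only, whereas the paper allows the path to lie in $C\cup T$ (implicitly using that both $\bar{f_v^i}$ are on the same side of $C$, so the dual tree path cannot cross the single non-tree edge of $C$ either); your version is marginally cleaner but not materially different.
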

\begin{proof}
  Since $C$ is good and does not intersect $E_{>b}$, we must have
  that $C$ is separating and $E_{b}\cap C$ contains a path
  $s_{b-1}\cdots t_{b-1}$.
  Since $u,v$ are assumed to be biconnected $\bar{f_v^1}$ and
  $\bar{f_v^2}$ are distinct.  Furthermore let $s'$ be the first
  vertex on the primal tree path from $\pi_C(v)$ to $v$ such that the
  remaining path $s'\cdots v\subseteq E_{>b}$, then
  $s'\in\set{s_{b},t_{b}}$.

  Since $s_{b-1}$ and $t_{b-1}$ are distinct, at least one of them is
  different from $\pi_C(v)$, and thus at least one of $\bar{f_v^1}$
  and $\bar{f_v^2}$ is incident to an edge on a path in
  $E_{b}\cap(C\cup T)$ from $s_{b-1}$ or $t_{b-1}$ to $s'$.  Suppose without
  loss of generality that $t_{b-1}\neq\pi_C(v)$ and that $\bar{f_v^2}$ is
  incident to the first edge on $\pi_C(v)\cdots t_{b-1}$.  Since $t_{b-1}\in
  f_v$ and $s'\in f_v$, the path $s'\cdots t_{b-1}$ in $C\cup T$ together
  with (an imaginary edge $(t_{b-1},s')$ through) $f_v$ forms a closed
  curve separating $\bar{f_v^1}$ from $\bar{f_v^2}$. Since $s'\cdots
  t_{b-1}$ is in $C\cup T$, the path $\bar{f_v^1}\cdots\bar{f_v^2}$ does not
  cross it and so must contain $f_v$.
\end{proof}

\begin{lemma}\label{lem:find-first-sep-0x}
  Consider \Call{find-first-separation-flip}{} in
  Algorithm~\ref{alg:find-first-sep}. If $(f_u,\cdot,C,e_u',e_v')$ is
  good in Line~\ref{line:find-first-sep:assume-good}, and
  $E_{>b}\cap C=\emptyset$. Then in
  Line~\ref{line:find-first-sep:f1bar} if $\bar{f_1}\neq f_v$ then for
  at least one
  $\bar{f_2}\in\set{\bar{f_v^1},\bar{f_v^2}}\setminus\set{\bar{f_1}}$,
  the first primal edge $(x,y)$ on the dual tree path from $\bar{f_1}$
  to $\bar{f_2}$ closes a good fundamental cycle $C'$ intersecting
  $E_{>b}$.
\end{lemma}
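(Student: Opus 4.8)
The plan is to follow the template of Lemmas~\ref{lem:find-first-sep-11} and~\ref{lem:find-first-sep-10}: first pin down the local structure of $C$ near $v$ using Lemma~\ref{lem:find-first-sep-f1f2-0x}, then locate the endpoint $\bar f_2$ for which $f_v$ lies on the dual path $\bar f_1\cdots\bar f_2$, and finally read off the fundamental cycle $C'$ closed by the first primal edge of that path and verify it is good and intersects $E_{>b}$.

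The first step is to invoke Lemma~\ref{lem:find-first-sep-f1f2-0x}. Since $(f_u,\cdot,C,e_u',e_v')$ is good and $E_{>b}\cap C=\emptyset$, that lemma gives that $C$ is separating, that $E_{b}\cap C$ is a path $s_{b-1}\cdots t_{b-1}$ (so $C\subseteq E_{\le b}$), and that $\bar f_v^1\neq\bar f_v^2$ with $f_v$ on the dual tree path $\bar f_v^1\cdots\bar f_v^2$. We are in the branch of Algorithm~\ref{alg:find-first-sep} that is reached only when a first flip exists, so by Lemma~\ref{lem:find-single-findsgood} we may assume $1<b<r\le k$; in particular $E_{>b}\neq\emptyset$, and $v$ lies strictly in the anti-$f_u$ region of the separating cycle $C$. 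By construction $\bar f_v^1,\bar f_v^2$ lie on the anti-$f_u$ side of $C$, and $\bar f_v^3$, being incident to $v$, lies there too; hence $\bar f_1=\meet(\bar f_v^1,\bar f_v^2,\bar f_v^3)$ and the whole dual tree path $\bar f_v^1\cdots\bar f_v^2$ lie on the anti-$f_u$ side of $C$.

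Next I would choose $\bar f_2$. Being a $\meet$ of the two endpoints of $\bar f_v^1\cdots\bar f_v^2$ with a third vertex, $\bar f_1$ lies on that path; since $f_v$ also lies on it and $f_v\neq\bar f_1$ by hypothesis, $f_v$ lies on exactly one of the two subpaths joining $\bar f_1$ to an endpoint of $\bar f_v^1\cdots\bar f_v^2$. Let $\bar f_2$ be that endpoint, so $\bar f_2\in\set{\bar f_v^1,\bar f_v^2}\setminus\set{\bar f_1}$ — precisely the range of the loop on Line~\ref{line:find-first-sep:f2bar} — and $f_v$ lies on the dual tree path $\bar f_1\cdots\bar f_2$ with $f_v\neq\bar f_1$. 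Therefore the first primal edge $(x,y)$ of the dual path $\bar f_1\cdots\bar f_2$ closes a fundamental cycle $C'$ that has $\bar f_1$ on one side and $\bar f_2$, and hence $f_v$, on the other side; moreover $(x,y)\notin C$, since the dual of the first edge of that path joins two faces on the anti-$f_u$ side of $C$, whereas the only cotree edge whose two faces lie on opposite sides of $C$ is the one closing $C$.

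The remaining step, which I expect to be the main obstacle, is to show that $C'$ is good and intersects $E_{>b}$. I would establish this by the same kind of case analysis as in Lemmas~\ref{lem:find-first-sep-11} and~\ref{lem:find-first-sep-10}, split on whether $X_b$ is a P node or an R node and on whether $\pi_C(v)$ is interior to $s_{b-1}\cdots t_{b-1}$ or one of its ends. The governing observation is that $\bar f_1$ sits on the ``$v$ side'', so the edge $(x,y)$ leaving $\bar f_1$ towards $\bar f_2$ lies in $E_{\ge b}$; since $C$ meets $E_{\ge b}$ only in the path $s_{b-1}\cdots t_{b-1}\subseteq E_b$, the cycle $C'$ cannot be confined to the $E_{>b}$ region without placing $\bar f_1$ and $f_v$ on the same side of $C'$, contradicting the previous paragraph. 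Hence $C'$ either also reaches $E_{\le b}$ — so it intersects both $E_{\le b}$ and $E_{>b}$ and is good by Definition~\ref{def:goodcycle} — or it separates $f_u$ from $f_v$, so it is separating and hence good; and in either case the primal tree path $x\cdots y$ must connect the two parts of the cut that the dual path induces at $\bar f_1$, which forces it through $\set{s_b,t_b}$ and thus through an edge of $E_{>b}$. Making this last point precise is where the bookkeeping is delicate: as in the proof of Lemma~\ref{lem:find-first-sep-f1f2-0x}, one builds a Jordan curve from a path in $C\cup T$ between some $s'\in\set{s_b,t_b}$ and a point of $f_v$ together with an imaginary chord through $f_v$, and argues that $C'$ must cross it. This is entirely parallel to the arguments already carried out for the cases $E_{>b}\cap C\neq\emptyset$, so I expect no essentially new difficulty beyond keeping track of the sides.
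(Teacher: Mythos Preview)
Your first two steps are correct and match the paper almost exactly: invoke Lemma~\ref{lem:find-first-sep-f1f2-0x} to get $\bar f_v^1\neq\bar f_v^2$ with $f_v\in\bar f_v^1\cdots\bar f_v^2$, observe that $\bar f_1=\meet(\bar f_v^1,\bar f_v^2,\bar f_v^3)$ lies on that path, and pick $\bar f_2$ to be the endpoint with $f_v\in\bar f_1\cdots\bar f_2$, so that the cycle $C'$ closed by the first primal edge $(x,y)$ on $\bar f_1\cdots\bar f_2$ separates $\bar f_1$ (and $\bar f_v^3$) from $f_v$.

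The divergence is in your final step. Your proposed case split on whether $X_b$ is a P or R node is unnecessary, and your ``governing observation'' that $(x,y)\in E_{\geq b}$ is not justified: the dual tree path $\bar f_1\cdots\bar f_2$ lies on the anti-$f_u$ side of $C$, but that side may well contain edges of $E_{\leq b}$, so the first primal edge crossed could lie anywhere. The paper instead cases on whether $(x,y)\in E_{\leq b}$ or $(x,y)\in E_{>b}$, and in each case uses a single clean observation: the relevant dual tree path (either $\bar f_v^3\cdots\bar f_1\cdots f_v$ when $(x,y)\in E_{\leq b}$, or $\bar f_v^1\cdots\bar f_v^2$ when $(x,y)\in E_{>b}$) cuts the corresponding region so as to separate $s_b$ from $t_b$. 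Hence the primal tree path $s_b\cdots t_b$, which lies in the complementary region ($E_{>b}$ in the first case, $E_{\leq b}$ in the second), must be part of $C'$. In both cases $C'$ therefore intersects both $E_{\leq b}$ and $E_{>b}$ and is good by Definition~\ref{def:goodcycle}. This is the missing idea: rather than Jordan-curve arguments with $f_v$, argue directly that the dual path disconnects $s_b$ from $t_b$ in one of $E_{\leq b}$ or $E_{>b}$, forcing the primal spanning tree to route $s_b\cdots t_b$ through the other.
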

\begin{proof}
  By Lemma~\ref{lem:find-first-sep-f1f2-0x}
  $\bar{f_v^1}\neq\bar{f_v^2}$ and
  $f_v\in\bar{f_v^1}\cdots\bar{f_v^2}$. Furthermore, assuming
  $v\not\in f_v$ also $f_v$ and $\bar{f_v^3}$ are distinct.
  If $\bar{f_1}=f_v$ we are done, so suppose $\bar{f_1}\neq f_v$. Then
  either $f_v\not\in\bar{f_v^1}\cdots\bar{f_1}$ or
  $f_v\not\in\bar{f_v^2}\cdots\bar{f_1}$. Choose $j\in\set{0,1}$ such
  that $f_v\not\in\overline{f_v^{1+j}}\cdots\bar{f_1}$, then the first
  primal edge $(x,y)$ on the dual tree path from $\bar{f_1}$ to
  $\overline{f_v^{2-j}}$ closes a cycle $C'$ separating
  $\overline{f_v^{1+j}}\cdots\bar{f_1}\cdots\bar{f_v^3}$ from
  $f_v\cdots\overline{f_v^{2-j}}$.

  If $(x,y)\in E_{\leq b}$ then the path
  $\bar{f_v^3}\cdots\bar{f_1}\cdots f_v$ cuts $E_{\leq b}$, separating
  $s_{b}$ from $t_{b}$.  But then the primal tree path from $s_{b}$ to
  $t_{b}$ is contained in $E_{>b}$, and is part of $C'$. Thus $C'$
  intersects both $E_{\leq b}$ and $E_{>b}$ and is therefore good.

  Otherwise $(x,y)\in E_{>b}$, and $\bar{f_v^1}\cdots\bar{f_v^2}$ cuts
  $E_{>b}$, separating $s_{b}$ from $t_{b}$.  But then the primal tree
  path from $s_{b}$ to $t_{b}$ is contained in $E_{\leq b}$, and is
  part of $C'$. Thus $C'$ intersects both $E_{\leq b}$ and $E_{>b}$
  and is therefore good.
\end{proof}

\begin{algorithm}[htb!]
  \caption{}
  \label{alg:find-sep-P11}
  \begin{algorithmic}[1]

    \Function{find-sep-P11}{$u,v,f_u,C,e_u',e_v',x,y$}

    \LeftComment{Handle cases where $X_{b}$ is a P node and
      $E_{>b}\cap C\neq\emptyset$ and $C$ is separating.}

    \State $\bar{f_u},\bar{f_v}\gets$ face incident to $e_u',e_v'$ on
    opposite side of $C$ from $f_u$.

    \State $p_x\gets\pi_C(x)$; $p_y\gets\pi_C(y)$

    \If{$p_x\neq p_y$}\Comment{If $p_x=p_y$ we are not in this case.}

      \State $f_v\gets$ first face on $\bar{f_v}\cdots \bar{f_u}$
      incident to both $p_x$ and $p_y$

      \State \Return $\Call{choose-best-flip}{u,v,f_u,f_v}$

    \EndIf

    \State \Return $(0,\bot)$

    \EndFunction

  \end{algorithmic}
\end{algorithm}

\begin{lemma}\label{lem:find-sep-P11}\sloppy
  If $(f_u,\cdot,C,e_u',e_v')$ is good, $E_{>b}\cap C\neq\emptyset$,
  $C$ is separating, $(x,y)$ closes a fundamental cycle $C'$ such that
  $C'\setminus C$ is a path $\pi_C(x)\cdots\pi_C(y)$ in $E_{b}$ with
  $\set{\pi_C(x),\pi_C(y)}=\set{s_{b-1},t_{b-1}}$, and $X_{b}$ is a P node,
  then $\Call{find-sep-P11}{u,v,f_u,C,e_u',e_v',x,y}$ in
  Algorithm~\ref{alg:find-sep-P11} returns the size and corners of a
  maximal $u$-flip.
\end{lemma}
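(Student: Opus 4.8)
The plan is to reduce the claim to Lemma~\ref{lem:choose-best-flip} by showing that the two faces $f_u$ and $f_v$ that \textproc{find-sep-P11} passes to \textproc{choose-best-flip} are incident to the corners of a \emph{maximal} $u$-flip, and that the algorithm actually reaches that call.

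First I would dispose of the early exit. Since the hypotheses give $\{\pi_C(x),\pi_C(y)\}=\{s_{b-1},t_{b-1}\}$ and a separation pair consists of two distinct vertices, $p_x:=\pi_C(x)\neq\pi_C(y)=:p_y$, and both lie on $C$; hence the test $p_x\neq p_y$ succeeds and the routine does not return $(0,\bot)$ prematurely.

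The core step is to pin down the maximal $u$-flip. Because $X_b$ is a $P$ node we have $X_{b-1}\cap X_b=X_b\cap X_{b+1}=\{s_{b-1},t_{b-1}\}$, so the separation classes of $E_B$ with respect to $\{s_{b-1},t_{b-1}\}$ are exactly the ``bundles'' at this pair: the class $E_{<b}$ (Definition~\ref{def:E-partition}) containing $u$, the class $E_{>b}$ containing $v$, and the remaining classes, which together form $E_b$ and each lie wholly on the $f_u$-side or wholly on the anti-$f_u$-side of the $u,v$-critical cycle. That $X_b$ is $f_u$-blocking says at least one remaining class is on the $f_u$-side. I claim the maximal $u$-flip is the flip at $\{s_{b-1},t_{b-1}\}$ whose $u$-flip-component is $H_u:=E_{<b}\cup(\text{the }f_u\text{-side classes in }E_b)$: reversing the corresponding contiguous arc of edges at $s_{b-1}$ and at $t_{b-1}$ brings the class containing $u$ adjacent to the class containing $v$ in the rotation, so it is a critical $u$-flip; and it is largest possible because a strictly larger arc at this pair would also contain $E_{>b}$ (contradicting that the flipped component misses $v$) or would have no bounding face incident to $u$, a $u$-flip at a pair inside $X_1\cdots X_{b-1}$ flips a proper subset of $E_{<b}$, and a $u$-flip at a separation pair strictly beyond $X_b$ would require a face incident to both $u$ and a vertex of $E_{>b}$, which cannot exist since $X_b$ is $f_u$-blocking for the reach-maximizing face $f_u$ and $u,v$ do not share a face. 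Thus $H_u$ is bounded by a $4$-cycle through $f_u$, $s_{b-1}$, some face $f_v^{\ast}$, and $t_{b-1}$, where $f_v^{\ast}$ lies on the opposite side of $C$ from $f_u$ (as $C$ is separating) and is the face incident to both $s_{b-1}$ and $t_{b-1}$ that separates $E_{>b}$ (hence $v$) from the anti-$f_u$-side classes. Now $\bar{f_v}$ is incident to $e_v'\in E_{>b}$ (so it sits at the $v$-end) and $\bar{f_u}$ is incident to $e_u'\in E_{\le b}$ (so it sits at the $u$-end), both on the opposite side of $C$ from $f_u$, and the dual-tree path $\bar{f_v}\cdots\bar{f_u}$ stays on that side; walking it from $\bar{f_v}$ toward $\bar{f_u}$ therefore meets $f_v^{\ast}$ as the first face incident to both $p_x=s_{b-1}$ and $p_y=t_{b-1}$ --- which is exactly the $f_v$ the algorithm computes.

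Finally I would invoke Lemma~\ref{lem:choose-best-flip}: $f_u$ and $f_v$ are incident to corners $\sigma$ bounding $H_u$, which being maximal is in particular locally maximal (Definition~\ref{def:locally-maximal}), so $\Call{choose-best-flip}{u,v,f_u,f_v}$ returns $(s,\sigma)$ with $s=|H_u|$, i.e.\ the size and corners of a maximal $u$-flip; the total running time is the $\OO(\log^2 n)$ of that call plus a constant number of $\meet$ and mark-and-search operations, each $\OO(\log^2 n)$. The main obstacle is the third paragraph: one must nail down the planar topology around the $P$-node separation pair --- that the separation classes split cleanly by the side of $C$, that the \emph{correct} $f_u$ handed over in the candidate tuple is genuinely a bounding face of the \emph{maximal} (not merely some) $u$-flip, and that the first face incident to both $p_x$ and $p_y$ along $\bar{f_v}\cdots\bar{f_u}$ is precisely $f_v^{\ast}$ rather than a face cutting off a smaller or invalid component.
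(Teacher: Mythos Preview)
Your overall plan matches the paper's: show that the face the routine computes is the correct second bounding face of the maximal $u$-flip, then invoke Lemma~\ref{lem:choose-best-flip}. But there are two issues.

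First, the sidedness in your description of $H_u$ is inverted. Since $f_u$ is by definition on the $f_u$-side of the critical cycle and is a gap adjacent to the class $E_{<b}$ (because $u\in f_u$ and $u\in E_{<b}$), it lies between $E_{<b}$ and the first $f_u$-side class of $E_b$. A contiguous arc bounded by $f_u$ that contains $E_{<b}$ therefore proceeds from $f_u$ through $E_{<b}$ into the \emph{anti}-$f_u$-side classes; it cannot also contain the $f_u$-side classes without swallowing the gap $f_u$ internally (so $f_u$ would no longer be a bounding face). Thus the maximal $H_u$ is $E_{<b}$ together with the anti-$f_u$-side classes, and the correct $f_v$ is the gap between $E_{>b}$ and the last anti-$f_u$-side class. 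This sign error propagates into your description of $f_v^{\ast}$ and your maximality argument.

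Second, and more importantly, the step you flag as ``the main obstacle'' is exactly where the paper supplies the argument you are missing. The paper does not go through an explicit SPQR-structural description of $H_u$ at all. Instead it argues topologically: since $C$ is separating, $\bar{f_u}$ and $\bar{f_v}$ lie on the same side of $C$ as $f_v$; the primal path $s_{b-1}\cdots t_{b-1}$ together with an imaginary edge $(s_{b-1},t_{b-1})$ drawn through $f_v$ forms a closed curve that separates $\bar{f_u}$ from $\bar{f_v}$, forcing $f_v\in\bar{f_u}\cdots\bar{f_v}$ in the dual tree. Then, because any face on this path incident to both $s_{b-1}$ and $t_{b-1}$ defines a valid flip, and the one closest to $\bar{f_v}$ leaves the smallest $v$-side (hence the largest $u$-side), the algorithm's ``first face on $\bar{f_v}\cdots\bar{f_u}$ incident to both $p_x$ and $p_y$'' is precisely the maximizing $f_v$. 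This closed-curve argument cleanly replaces your third paragraph and avoids the structural bookkeeping about which separation classes sit where.
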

\begin{proof}
  Since $C$ is separating, $f_v$ is on the opposite side of $C$ from
  $f_u$, and thus by construction $\bar{f_u}$ and $\bar{f_v}$ are on
  the same side of $C$ as $f_v$.  The path $s_{b-1}\cdots t_{b-1}$ together
  with (an imaginary edge $(s_{b-1},t_{b-1})$ through) $f_v$ forms a closed
  curve separating $\bar{f_u}$ from $\bar{f_v}$, so $f_v\in
  \bar{f_u}\cdots\bar{f_v}$.  Since $f_v$ maximizes the size of the
  $u$-flip-component, it must be the first face on this path that
  defines a valid flip, which it does if and only if it contains
  $s_{b-1},t_{b-1}$.
  Given the correct $f_v$, the rest follows from
  Lemma~\ref{lem:choose-best-flip}.
\end{proof}

\begin{algorithm}[htb!]
  \caption{}
  \label{alg:find-sep-P10}
  \begin{algorithmic}[1]

    \Function{find-sep-P10}{$u,v,f_u,C,e_u',e_v',x,y$}

    \LeftComment{Handle cases where $X_{b}$ is a P node and
      $E_{>b}\cap C\neq\emptyset$ and $C$ is not separating.}

    \State $f_v'\gets$ the face incodent to $e_v'$ on the same side of
    $C$ as $f_u$.

    \State $p_x\gets\pi_C(x)$; $p_y\gets\pi_C(y)$

    \If{$p_x\neq p_y$}\Comment{If $p_x=p_y$ we are not in this case.}

      \State $f_v\gets$ first face on $f_v'\cdots f_u$
      incident to both $p_x$ and $p_y$

      \State \Return $\Call{choose-best-flip}{u,v,f_u,f_v}$

    \EndIf

    \State \Return $(0,\bot)$

    \EndFunction

  \end{algorithmic}
\end{algorithm}

\begin{lemma}\label{lem:find-sep-P10}\sloppy
  If $(f_u,\cdot,C,e_u',e_v')$ is good, $E_{>b}\cap C\neq\emptyset$,
  $C$ is not separating, $(x,y)$ closes a good fundamental cycle
  intersecting $E_{<b}$, and $X_{b}$ is a P node, then
  $\Call{find-sep-P10}{u,v,f_u,C,e_u',e_v',x,y}$ in
  Algorithm~\ref{alg:find-sep-P10} returns the size and corners of a
  maximal $u$-flip.
\end{lemma}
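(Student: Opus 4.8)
This statement is the ``non-separating'' counterpart of Lemma~\ref{lem:find-sep-P11}, and I would prove it by the same template: the routine's only real job is to correctly identify the second bounding face $f_v$ of the maximal $u$-flip at the separation pair $\set{s_{b-1},t_{b-1}}$, after which $\Call{choose-best-flip}{u,v,f_u,f_v}$ returns its size and corners by Lemma~\ref{lem:choose-best-flip} (recall $f_u$ is already correct because the input tuple is good). So the plan is to (i) identify the separation pair, (ii) show the face the routine picks is the true $f_v$, and (iii) invoke Lemma~\ref{lem:choose-best-flip}.

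First I would show $\set{p_x,p_y}=\set{s_{b-1},t_{b-1}}$. Since the tuple is good with $E_{>b}\cap C\neq\emptyset$ and $C$ not separating, Lemma~\ref{lem:find-first-sep-f1f2-x0} (or the reasoning in its proof) gives that $C$ avoids $E_{<b}$ and meets $E_{b}$ in a path between the two poles $s_{b-1}=s_b$ and $t_{b-1}=t_b$ of the P node $X_{b}$. By hypothesis $(x,y)$ closes a good cycle $C'$ that meets $E_{<b}$; since $(x,y)$ is a single edge, the primal tree path $x\cdots y$ must carry the part of $C'$ lying in $E_{<b}$, so --- arguing as in Lemma~\ref{lem:find-first-sep-10}, splitting on whether $(x,y)\in E_{<b}$ or $(x,y)\in E_b$ --- this tree path runs $x\cdots s_{b-1}\cdots t_{b-1}\cdots y$ up to swapping $s$ and $t$, and $s_{b-1},t_{b-1}$ are the only vertices through which $E_{<b}$ attaches to the rest of $E_B$ and hence to $C$. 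Projecting $x$ and $y$ onto $C$ then yields the two distinct poles, so $p_x\neq p_y$ and we do not fall through to the $(0,\bot)$ branch.

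Next I would verify that the face the routine picks --- the first face on the dual tree path $f_v'\cdots f_u$ incident to both $p_x$ and $p_y$ --- is the true $f_v$. Since $C$ is not separating, $f_v$ lies on the same side of $C$ as $f_u$, and $f_v'$ (the face at $e_v'\in E_{>b}$ on that side) lies there too. The primal tree path between $s_{b-1}$ and $t_{b-1}$ through the $E_{>b}$ bundle, closed up through $f_v$ by an imaginary edge $(s_{b-1},t_{b-1})$, is a simple closed curve with $f_u$ and $f_v'$ on opposite sides, so $f_v$ lies on $f_v'\cdots f_u$. Maximizing $H_u$ is the same as minimizing $H_v$, so $f_v$ is precisely the first face along $f_v'\cdots f_u$ that admits a valid flip at $\set{s_{b-1},t_{b-1}}$, which --- exactly as in Lemma~\ref{lem:find-sep-P11} --- is the first one incident to both $s_{b-1}$ and $t_{b-1}$. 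Feeding $(f_u,f_v)$ into $\Call{choose-best-flip}{u,v,f_u,f_v}$ and applying Lemma~\ref{lem:choose-best-flip} then returns the size and corners of the maximal $u$-flip, and the $\OO(\log^2 n)$ bound follows since the routine uses only $\OO(1)$ $\meet$/projection/mark-and-search primitives from~\cite{DBLP:journals/mst/HolmR17} plus one $\Call{choose-best-flip}{}$ call.

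The step I expect to be the real obstacle is the first one: pinning down $\set{p_x,p_y}=\set{s_{b-1},t_{b-1}}$ and, with it, ruling out the $(0,\bot)$ branch. This needs careful bookkeeping of which parallel bundle of $X_b$ each of $C$, $C'$, $e_v'$, $f_v'$, and $(x,y)$ lives in, and of how $(x,y)$ was produced by the earlier lines of $\Call{find-first-separation-flip}{}$ (as the first dual-tree edge on $f_u\cdots f_v'$) --- in particular excluding the degenerate possibility that $(x,y)$ sits in the same $E_b$ bundle that $C$ uses. Once the bundle structure is sorted out, the closed-curve argument for $f_v$ and the appeal to Lemma~\ref{lem:choose-best-flip} are routine adaptations of the separating ($\textproc{find-sep-P11}$) case.
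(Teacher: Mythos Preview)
Your proposal is correct and follows the same template as the paper's proof: identify $\set{p_x,p_y}=\set{s_{b-1},t_{b-1}}$, show the routine's search along $f_v'\cdots f_u$ hits the true $f_v$, and finish with Lemma~\ref{lem:choose-best-flip}.

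The one place where you and the paper diverge is your step~(i), which you flag as the hard part. The paper dispatches it more directly than you anticipate. Since $C$ is good but not separating, $C\cap E_{<b}=\emptyset$, so $C$ contains both $s_{b-1},t_{b-1}$ and (being a fundamental cycle) the primal tree path $s_{b-1}\cdots t_{b-1}$. Now $C'$ is a good fundamental cycle meeting $E_{<b}$, hence also passes through $s_{b-1},t_{b-1}$, hence also contains that same tree path. But then the non-tree edge $(x,y)$ must lie on the \emph{other} arc of $C'$ between $s_{b-1}$ and $t_{b-1}$, which is the arc through $E_{<b}$; so $(x,y)\in E_{<b}$ outright, and $\set{\pi_C(x),\pi_C(y)}=\set{s_{b-1},t_{b-1}}$ follows immediately. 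There is no need to split on whether $(x,y)\in E_{<b}$ or $(x,y)\in E_b$, and the ``degenerate possibility that $(x,y)$ sits in the same $E_b$ bundle that $C$ uses'' simply cannot occur. Your approach via Lemma~\ref{lem:find-first-sep-10} would also work, but the shared-tree-path observation is the cleaner route.

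One minor imprecision: you write ``the primal tree path between $s_{b-1}$ and $t_{b-1}$ through the $E_{>b}$ bundle''. That tree path lies in $E_{\geq b}$ (since $C\cap E_{<b}=\emptyset$) but need not be in $E_{>b}$ specifically; the paper just uses the (unique) primal tree path $s_{b-1}\cdots t_{b-1}$ without committing to a bundle, and the closed-curve argument goes through regardless.
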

\begin{proof}
  Since $C$ is good but not separating, $C\cap E_{<b}=\emptyset$,
  $C\cap E_{b}\neq\emptyset$, and $C\cap
  E_{>b}\neq\emptyset$. Thus, $C$ contains $s_{b-1},t_{b-1}$ and since $C$
  is a fundamental cycle $C$ contains the tree path $s_{b-1}\cdots
  t_{b-1}$. Since $C'$ is a good fundamental cycle through $E_{<b}$ it
  must also contain that tree path, and we must have $(x,y)\in
  E_{<b}$ and thus $\set{\pi_C(x),\pi_C(y)}=\set{s_{b-1},t_{b-1}}$.

  Furthermore, since $C$ is not separating $f_v$ is on the same side
  of $C$ as $f_u$, and thus by construction $f_v'$ is on the same side
  of $C$ as $f_v$.  The path $s_{b-1}\cdots t_{b-1}$ together with (an
  imaginary edge $(s_{b-1},t_{b-1})$ through) $f_v$ forms a closed curve
  separating $f_u$ from $f_v'$, so $f_v\in f_u\cdots f_v'$.  Since
  $f_v$ maximizes the size of the $u$-flip-component, it must be the
  first face on this path that defines a valid flip, which it does if
  and only if it contains $s_{b-1},t_{b-1}$.
  Given the correct $f_v$, the rest follows from
  Lemma~\ref{lem:choose-best-flip}.
\end{proof}

\begin{algorithm}[htb!]
  \caption{}
  \label{alg:find-sep-R11}
  \begin{algorithmic}[1]

    \Function{find-sep-R11}{$u,v,f_u,C,e_u',e_v',x,y$}

    \LeftComment{Handle cases where $X_{b}$ is an R node and
      $E_{>b}\cap C\neq\emptyset$ and $C$ is separating.}

    \State result $\gets(0,\bot)$

    \State $\bar{f_u},\bar{f_v}\gets$ face incident to $e_u',e_v'$ on
    opposite side of $C$ from $f_u$.

    \State $p_x\gets\pi_C(x)$; $p_y\gets\pi_C(y)$

    \If{$p_x\neq p_y$}\Comment{If $p_x=p_y$ we are not in this case.}

      \State $e_x^1,e_x^2,e_y^1,e_y^2\gets$ edges on $C$ incident to
      $p_x,p_y$, with $e_x^1,e_y^1$ closest to $e_u'$

      \State $\bar{f_x^1},\bar{f_x^2},\bar{f_y^1},\bar{f_y^2}\gets$
      faces incident to $e_x^1,e_x^2,e_y^1,e_y^2$ on the same side of
      $C$ as $\bar{f_u}$

      \For{$\bar{f_x}\in\set{\bar{f_x^1},\bar{f_x^2}}$}

        \For{$\bar{f_y}\in\set{\bar{f_y^1},\bar{f_y^2}}$}

          \State $f_v\gets\meet(\bar{f_u},\bar{f_x},\bar{f_y})$

          \State result $\gets\max\set{
            \text{result},
            \Call{choose-best-flip}{u,v,f_u,f_v}
          }$

        \EndFor

      \EndFor

    \EndIf

    \State \Return result

    \EndFunction

  \end{algorithmic}
\end{algorithm}

\begin{lemma}\label{lem:find-sep-R-cycles-common}
  Suppose $X_{b}$ is an R node.  Let $C$ and $C'$ be good
  fundamental cycles such that $C\cup C'$ consists of $3$ internally
  vertex-disjoint paths $P_{<},P_{=},P_{>}$ between a pair of distinct
  vertices $p_x,p_y$, where $P_{<}\cap E_{<b}\neq\emptyset$,
  $P_{=}\subseteq E_{b}$, and $P_{>}\cap E_{>b}\neq\emptyset$.
  Let $C''=P_{<}\cup P_{>}$, let $e\in C''\setminus E_{b}$, and let
  $e_x^1,e_x^2,e_y^1,e_y^2$ be the edges incident to $p_x,p_y$ on
  $C''$.  Let $f,f_x^1,f_x^2,f_y^1,f_y^2$ be the faces incident to
  $e,e_x^1,e_x^2,e_y^1,e_y^2$ on the same side of $C''$ as $f_v$.
  Then there exists $f_x\in\set{f_x^1,f_x^2}$ and
  $f_y\in\set{f_y^1,f_y^2}$ such that $f_v=\meet(f,f_x,f_y)$.
\end{lemma}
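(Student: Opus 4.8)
The plan is a topological argument in the current planar drawing. First I would pin down exactly which face $f_v$ is. In the situation where the lemma is applied we have $C''=C$ — the arc $P_{=}=C'\setminus C$ lies in $E_b$, so the two remaining paths of the theta graph $P_{<}\cup P_{=}\cup P_{>}$ are precisely $P_{<}\cup P_{>}=C$ — and I will assume this throughout. Since $C$ is separating, $f_u$ and $f_v$ lie on opposite sides of $C''=C$; and since the bulge $P_{=}$ was peeled off $C'$ on the side of $f_u$, the $f_v$-side of $C''$ is a disk $D$ whose boundary is exactly $C''=P_{<}\cup P_{>}$. Now pass to the skeleton $\Gamma(X_b)$, which is $3$-connected, hence uniquely embedded, with virtual edges $s_{b-1}t_{b-1}$ and $s_bt_b$. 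The restriction of $f_u$ to $\Gamma(X_b)$ is the face on the $f_u$-side of the virtual edge $s_{b-1}t_{b-1}$; since $X_b$ is $f_u$-blocking, it is not incident to all of $s_{b-1},t_{b-1},s_b,t_b$. But $b<r$, so $X_b$ is not a cross node, hence not anti-$f_u$-blocking, so the anti-$f_u$-side of $\Gamma(X_b)$ \emph{does} contain a face incident to all four of $s_{b-1},t_{b-1},s_b,t_b$; and since in a $3$-connected plane graph an edge lies on exactly two faces, the only faces incident to both $s_{b-1}$ and $t_{b-1}$ are the two faces of the edge $s_{b-1}t_{b-1}$, so this face must be the anti-$f_u$-side face $F_2$ of that edge. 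Expanding virtual edges back into $G$, $f_v$ is the face of $D$ obtained from $F_2$; in particular $f_v$ touches $\partial D$ at $s_{b-1},t_{b-1}$ (which lie on $P_{<}$) and at $s_b,t_b$ (which lie on $P_{>}$), and it is the \emph{only} face of $D$ incident to all four of $s_{b-1},t_{b-1},s_b,t_b$.

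Next I would record the boundary structure. Since $P_{<}$ meets $E_{<b}$ while $P_{=}\subseteq E_b$, the path $P_{<}$ enters the subgraph $E_{<b}$ through one of $s_{b-1},t_{b-1}$ and leaves through the other; symmetrically $P_{>}$ passes through $s_b$ and $t_b$. Hence, reading around $\partial D=C''$, the distinguished vertices appear in the cyclic order $p_x,s_{b-1},t_{b-1},p_y,s_b,t_b$ (up to swapping the members of a pair, orientation, and the possible identification of $p_x$ or $p_y$ with one of the $s,t$ vertices), with $P_{<}\cap E_{<b}$ the sub-arc joining $s_{b-1}$ to $t_{b-1}$ and $P_{>}\cap E_{>b}$ the sub-arc joining $s_b$ to $t_b$. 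Every edge of $C''\setminus E_b$ lies strictly inside one of these two sub-arcs; by symmetry assume $e$ lies on the $s_{b-1}$-$t_{b-1}$ sub-arc of $P_{<}$. Finally, since $C''=C$ is a fundamental cycle — one non-tree edge — any dual-tree path between two faces of $D$ stays inside $D$; so $f,f_x^1,f_x^2,f_y^1,f_y^2$ and $f_v$ all lie in $D$, and all the dual-tree paths relevant below live in the cotree of the interior of $D$.

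For the meet itself, I would take $f_x\in\{f_x^1,f_x^2\}$ to be the face at $p_x$ incident to the $P_{<}$-edge of $C''$, and $f_y$ the analogous face at $p_y$, and show that $f_v$ lies on each of the three dual-tree paths $f_x\cdots f_y$, $f\cdots f_x$, and $f\cdots f_y$. Since in a tree the unique vertex common to all three pairwise paths of a triple is precisely their $\meet$, this yields $f_v=\meet(f,f_x,f_y)$. The idea is: the path $f_x\cdots f_y$ runs inside $D$ from the $P_{<}$-corner at $p_x$ to the $P_{<}$-corner at $p_y$, so it must separate the $s_{b-1}$-$t_{b-1}$ sub-arc of $\partial D$ from the rest of $\partial D$; as $f_v$ is the only face of $D$ touching $\partial D$ both on $P_{<}$ (at $s_{b-1},t_{b-1}$) and on $P_{>}$ (at $s_b,t_b$), it is forced onto this path. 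And $f$, being incident to an edge of $P_{<}$ on the far side of $s_{b-1}$ from $p_x$, lies in the part of $D$ that $f_v$ cuts off on the side of that sub-arc, while $f_x$ lies on the opposite side, so $f_v$ separates $f$ from $f_x$; the argument for $f\cdots f_y$ is the same using $t_{b-1}$. The remaining case, with $e$ on the $s_b$-$t_b$ sub-arc of $P_{>}$, is symmetric, with the roles of $(s_{b-1},t_{b-1})$ and $(s_b,t_b)$ exchanged.

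The hard part will be making the separation claims in the last step fully rigorous: one must check that it is exactly the $P_{<}$-side (resp.\ $P_{>}$-side) choice among $\{f_x^1,f_x^2\}$ and $\{f_y^1,f_y^2\}$ that routes all three dual-tree paths through $f_v$, arguing only from the disk topology of $D$ and the uniqueness of $f_v$ among faces of $D$ incident to all of $s_{b-1},t_{b-1},s_b,t_b$. A secondary point needing care is the first step's assertion that $F_2$ is incident to all four distinguished vertices, which relies on the precise definitions of $f_u$- and anti-$f_u$-blocking, on the hypothesis $b<r$, and on the two-faces-per-edge fact for $3$-connected plane graphs.
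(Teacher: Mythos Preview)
Your opening move---assuming $C''=C$ because ``$P_{=}=C'\setminus C$''---is where the argument breaks. The lemma is invoked three times in the paper (in the proofs for \textproc{find-sep-R11}, \textproc{find-sep-R10}, and \textproc{find-sep-R01}), and your assumption holds only in the first. In \textproc{find-sep-R10} the cycle $C$ is \emph{not} separating and avoids $E_{<b}$ entirely; there it is $P_{<}$, not $P_{=}$, that equals $C'\setminus C$, and $C''=P_{<}\cup P_{>}$ comes out as either $C'$ or the symmetric difference $C\triangle C'$. The situation for \textproc{find-sep-R01} is symmetric. In the $C\triangle C'$ case $C''$ carries \emph{both} non-tree edges and is not a fundamental cycle at all, so your claim that dual-tree paths between faces of $D$ stay inside $D$ is simply false, and the whole disk picture collapses. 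Even in the $C''=C$ case your choice of $f_x,f_y$ as the $P_{<}$-side faces can fail at the boundary: if, say, $p_x=s_{b-1}$, the $P_{<}$-edge at $p_x$ already lies in $E_{<b}$, putting $f_x$ on the same side as $f$ of the very curve you want to separate them with.

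The paper's proof sidesteps all of this by never restricting to one side of $C''$ and by choosing $e_x,e_y$ via membership in $E_b$ and $E_{\geq b}$ rather than via $P_{<}$ versus $P_{>}$. It then builds two explicit Jordan curves through $f_v$: the arc $C''\cap E_{<b}$ (from $s_{b-1}$ to $t_{b-1}$) closed by an arc through $f_v$, which separates $f$ from both $f_x$ and $f_y$; and the arc $s_{b-1}\cdots s_b$ in $E_b$ closed through $f_v$, which separates $f_x$ from $f_y$. Each of the three dual-tree paths must cross its curve, and the crossing is through $f_v$. This argument is uniform across all three applications and makes no use of $C''$ being fundamental.
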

\begin{proof}
  Suppose that $e\in E_{<b}$ (the case $e\in E_{>b}$ is
  symmetric).  Since $X_{b}$ is an R node, at least one of $s_{b-1}\neq
  s_{b}$ and $t_{b-1}\neq t_{b}$ holds.  We may assume without loss of
  generality that $s_{b-1}\neq s_{b}$ and that $p_x$ is on the
  $s_{b-1}\cdots s_{b}$ path in $C''\cap E_{b}$.  Then there exists at
  least one $e_x\in\set{e_x^1,e_x^2}\cap E_{b}$, and at least one
  $e_y\in\set{e_y^1,e_y^2}\cap E_{\geq b}$.  Let
  $f_x\in\set{f_x^1,f_x^2},f_y\in\set{f_y^1,f_y^2}$ be the
  corresponding faces.

  Now $f_v\in f\cdots f_x$ and $f_v\in f\cdots f_y$ because $C''\cap
  E_{<b}$ consists of a path $s_{b-1}\cdots t_{b-1}$ which together with (an
  imaginary edge $(s_{b-1},t_{b-1})$ through) $f_v$ forms a closed curve
  separating $f$ from $f_x$ and $f_y$.

  Similarly $f_v\in f_x\cdots f_y$ because the path $s_{b-1}\cdots
  s_{b}$ in $C\cap E_{b}$  together with (an
  imaginary edge $(s_{b-1},s_{b})$ through) $f_v$ forms a closed curve
  separating $f_x$ from $f_y$.

  Since $f_v$ is on all $3$ paths $f\cdots f_x$, $f\cdots f_y$, and
  $f_x\cdots f_y$, we have $f_v=\meet(f,f_x,f_y)$.
\end{proof}

\begin{lemma}\label{lem:find-sep-R-cycles-separate}
  Suppose $X_{b}$ is an R node.  Let $C_\leq\subseteq E_{\leq b}$
  and $C_\geq\subseteq E_{\geq b}$ be good fundamental cycles with
  at most one vertex in common.  Let $e\in (C_\leq\cup
  C_\geq)\setminus E_{b}$, and let $e_x^1,e_x^2,e_y^1,e_y^2$ be the
  edges incident to $p_x=\pi_{C_\leq}(v),p_y=\pi_{C_\geq}(u)$ on
  $C_\leq,C_\geq$.  Let $f,f_x^1,f_x^2,f_y^1,f_y^2$ be the faces
  incident to $e,e_x^1,e_x^2,e_y^1,e_y^2$ on the same side of
  $C_\leq,C_\geq$ as $f_v$.  Then there exists
  $f_x\in\set{f_x^1,f_x^2}$ and $f_y\in\set{f_y^1,f_y^2}$ such that
  $f_v=\meet(f,f_x,f_y)$.
\end{lemma}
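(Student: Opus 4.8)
The plan is to follow the template of the proof of Lemma~\ref{lem:find-sep-R-cycles-common}: I exhibit a choice of $f_x\in\set{f_x^1,f_x^2}$ and $f_y\in\set{f_y^1,f_y^2}$ for which the target face $f_v$ lies on each of the three dual-tree paths $f\cdots f_x$, $f\cdots f_y$, and $f_x\cdots f_y$; since the $\meet$ of three nodes is exactly the node common to all three connecting paths, this gives $f_v=\meet(f,f_x,f_y)$.

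First I fix the geometry of the two cycles. Since $C_\leq$ is good (Definition~\ref{def:goodcycle}) and contained in $E_{\leq b}$ it cannot meet $E_{>b}$, so it must be separating, and --- exactly as in the argument of Lemma~\ref{lem:find-first-sep-f1f2-0x} --- $C_\leq\cap E_{b}$ is a path from $s_{b-1}$ to $t_{b-1}$. As $C_\leq\subseteq E_{\leq b}$ and $C_\geq\subseteq E_{\geq b}$, the edge $e\in(C_\leq\cup C_\geq)\setminus E_{b}$ is either in $C_\leq\cap E_{<b}$ or in $C_\geq\cap E_{>b}$; these cases are symmetric (under exchanging $C_\leq\leftrightarrow C_\geq$, $u\leftrightarrow v$, $s_{b-1}\leftrightarrow s_{b}$, $t_{b-1}\leftrightarrow t_{b}$), so I assume $e\in C_\leq\cap E_{<b}$. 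Then $C_\leq$ crosses between $E_{<b}$ and $E_{b}$ only at $s_{b-1}$ and $t_{b-1}$, so $C_\leq$ is the union of the path $C_\leq\cap E_{<b}$ from $s_{b-1}$ to $t_{b-1}$ (containing $e$) and the path $C_\leq\cap E_{b}$ from $s_{b-1}$ to $t_{b-1}$. A symmetric analysis of $C_\geq$ shows that $C_\geq\cap E_{b}$ is a path from $s_{b}$ to $t_{b}$. Since $X_{b}$ is $f_u$-blocking and, by assumption, not a cross node, $f_v$ is the unique face of $\Gamma(X_{b})$ on the anti-$f_u$ side incident to all of $s_{b-1},t_{b-1},s_{b},t_{b}$; in particular $f_v$ is incident to each of these four vertices. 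Finally, the tree path from $v$ enters $\Gamma(X_{b})$ only through $\set{s_{b},t_{b}}$, while $E_{<b}$ attaches to the rest of the graph only at $\set{s_{b-1},t_{b-1}}\subseteq C_\leq$, so $p_x=\pi_{C_\leq}(v)$ must land on $C_\leq\cap E_{b}$; symmetrically $p_y=\pi_{C_\geq}(u)$ lands on $C_\geq\cap E_{b}$.

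Now I choose $f_x,f_y$ and verify the three incidences. Because $X_{b}$ is an R node, at least one of $s_{b-1}\neq s_{b}$ and $t_{b-1}\neq t_{b}$ holds; as in Lemma~\ref{lem:find-sep-R-cycles-common} I assume without loss of generality $s_{b-1}\neq s_{b}$. Since $p_x$ lies on $C_\leq\cap E_{b}$ and $p_y$ lies on $C_\geq\cap E_{b}$, at least one of $e_x^1,e_x^2$ is an edge of $C_\leq\cap E_{b}$ and at least one of $e_y^1,e_y^2$ is an edge of $C_\geq\cap E_{b}$; I take $f_x,f_y$ to be the faces incident to such edges on the same side of $C_\leq$, resp.\@ $C_\geq$, as $f_v$. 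Then $f_v\in f\cdots f_x$ and $f_v\in f\cdots f_y$, because the path $C_\leq\cap E_{<b}=s_{b-1}\cdots t_{b-1}$, together with an imaginary edge through $f_v$ (well defined since $f_v$ is incident to both $s_{b-1}$ and $t_{b-1}$), is a closed curve contained in $C_\leq\cup T\cup\set{f_v}$ that separates $f$ from each of $f_x,f_y$, so no dual-tree path from $f$ to either of them can avoid $f_v$. And $f_v\in f_x\cdots f_y$, because there is a path inside $\Gamma(X_{b})$ between two of $s_{b-1},t_{b-1},s_{b},t_{b}$ (assembled from the $E_{b}$-portions of $C_\leq$ and $C_\geq$) that, together with an imaginary edge through $f_v$, forms a closed curve separating $f_x$ from $f_y$ and meeting the dual tree only through $f_v$. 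Hence $f_v$ lies on all three connecting paths, and $f_v=\meet(f,f_x,f_y)$.

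The main obstacle is the side-bookkeeping in these closed-curve arguments: one must check that, for the chosen $f_x,f_y$, the curves really do separate the claimed faces --- in particular that $f$ lies strictly on the side of the first curve opposite $f_x$ and $f_y$ (using that $f$ is incident to an edge of $C_\leq\cap E_{<b}$ on the $f_v$-side of $C_\leq$, while $p_x,p_y$ sit on the $E_{b}$-portions), and that the second curve, joining two vertices incident to $f_v$, can be routed through $\Gamma(X_{b})$ so as to separate $f_x$ from $f_y$ and meet the dual tree only through $f_v$. This last routing is where the present ``share at most one vertex'' configuration genuinely differs from the ``share a separation pair'' configuration of Lemma~\ref{lem:find-sep-R-cycles-common}; once it is settled and $f_v$ is identified with the anti-$f_u$-side face incident to all four of $s_{b-1},t_{b-1},s_{b},t_{b}$, the three curve arguments are essentially those already used in Lemmas~\ref{lem:find-sep-R-cycles-common} and~\ref{lem:find-first-sep-f1f2-0x}.
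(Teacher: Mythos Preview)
Your setup and the two incidences $f_v\in f\cdots f_x$ and $f_v\in f\cdots f_y$ are handled exactly as in the paper: you correctly argue that $p_x$ lands on $C_\leq\cap E_b$, pick $e_x\in\set{e_x^1,e_x^2}\cap E_b$, and use the closed curve $(C_\leq\cap E_{<b})\cup\set{\text{imaginary }(s_{b-1},t_{b-1})\text{ through }f_v}$ to separate $f$ from both $f_x$ and $f_y$. So far this matches the paper's proof verbatim.

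The gap is precisely where you flag it: the third incidence $f_v\in f_x\cdots f_y$. Your attempt to assemble a separating curve ``inside $\Gamma(X_b)$'' from the $E_b$-portions of the two cycles is not carried out, and it is not clear it can be---the two paths $C_\leq\cap E_b$ and $C_\geq\cap E_b$ have endpoints $\set{s_{b-1},t_{b-1}}$ and $\set{s_b,t_b}$ respectively and share at most one vertex, so there is no evident single path through $f_v$ that separates an edge of the first from an edge of the second. More importantly, your choice of $f_y$ (the one corresponding to $e_y\in E_b$) may simply be the wrong member of $\set{f_y^1,f_y^2}$ for this incidence; nothing in your argument rules that out.

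The paper avoids this entirely by \emph{not} constraining $e_y$ to lie in $E_b$. Instead it applies Lemma~\ref{lem:find-first-sep-f1f2-x0} (or~\ref{lem:find-first-sep-f1f2-0x} in the symmetric case) to $C_\geq$ and $p_y=\pi_{C_\geq}(u)$ to obtain $f_y^1\neq f_y^2$ and $f_v\in f_y^1\cdots f_y^2$ directly. Then it uses the elementary tree fact that if $f_v$ lies on $f_y^1\cdots f_y^2$, then for any $f_x$ at least one of $f_x\cdots f_y^1$, $f_x\cdots f_y^2$ passes through $f_v$; this picks out the correct $f_y$ without any further geometric argument. That is the missing idea in your proof.
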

\begin{proof}
  Suppose that $e\in E_{<b}$ (the case $e\in E_{>b}$ is
  symmetric). Then there exists at least one
  $e_x\in\set{e_x^1,e_x^2}\cap E_{b}$. Let $f_x\in\set{f_x^1,f_x^2}$
  be the corresponding face.  By
  Lemma~\ref{lem:find-first-sep-f1f2-x0}
  or~\ref{lem:find-first-sep-f1f2-0x} $f_y^1\neq f_y^2$ and $f_v\in
  f_y^1\cdots f_y^2$, so there exists $f_y\in\set{f_y^1,f_y^2}$ such
  that $f_v\in f_x\cdots f_y$.

  Now $f_v\in f\cdots f_x$ and $f_v\in f\cdots f_y$ because $C_\leq\cap
  E_{<b}$ consists of a path $s_{b-1}\cdots t_{b-1}$ which together with (an
  imaginary edge $(s_{b-1},t_{b-1})$ through) $f_v$ forms a closed curve
  separating $f$ from $f_x$ and $f_y$.

  Since $f_v$ is on all $3$ paths $f\cdots f_x$, $f\cdots f_y$, and
  $f_x\cdots f_y$, we have $f_v=\meet(f,f_x,f_y)$.
\end{proof}

\begin{lemma}\label{lem:find-sep-R11}\sloppy
  If $(f_u,\cdot,C,e_u',e_v')$ is good, $E_{>b}\cap C\neq\emptyset$,
  $C$ is separating, $(x,y)$ closes a fundamental cycle $C'$ such that
  $C'\setminus C$ is a path $\pi_C(x)\cdots\pi_C(y)$ in $E_{b}$, and
  $X_{b}$ is an R node, then
  $\Call{find-sep-R11}{u,v,f_u,C,e_u',e_v',x,y}$ in
  Algorithm~\ref{alg:find-sep-R11} returns the size and corners of a
  maximal $u$-flip.
\end{lemma}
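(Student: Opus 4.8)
The plan is to show that $\Call{find-sep-R11}{u,v,f_u,C,e_u',e_v',x,y}$ reconstructs the second bounding face $f_v$ of the maximal $u$-flip-component and then reads off its size and corners with $\Call{choose-best-flip}{}$; the engine of the argument is Lemma~\ref{lem:find-sep-R-cycles-common}.

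First I would pin down how $C$ and $C'$ sit together. The edge $(x,y)$ closes the fundamental cycle $C'$, and by hypothesis $C'\setminus C$ is a path $P_{=}:=\pi_C(x)\cdots\pi_C(y)$ contained in $E_{b}$; hence $C$ and $C'$ share the complementary arc $A:=C\cap C'$ of $C$, and writing $B$ for the other arc of $C$, the union $C\cup C'$ consists of three internally vertex-disjoint paths $A$, $B$, $P_{=}$ between $p_x:=\pi_C(x)$ and $p_y:=\pi_C(y)$. Since $(x,y)$ is an edge on $P_{=}$, the endpoints $p_x,p_y$ are distinct, so the test $p_x\neq p_y$ in Algorithm~\ref{alg:find-sep-R11} succeeds. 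Using that the tuple is good (so $\idx(e_u')=\min_{e\in C}\idx(e)$ and $\idx(e_v')=\max_{e\in C}\idx(e)>b$, the inequality from $E_{>b}\cap C\neq\emptyset$), that $C$ is separating, and the two possible routes of the shared arc $A$ through the R node $X_{b}$ — via $s_{b-1}\cdots t_{b-1}\subseteq E_{<b}$ (so $C'$ is separating) or via $s_{b}\cdots t_{b}\subseteq E_{>b}$, as in the R-node case of the proof of Lemma~\ref{lem:find-first-sep-11} — I would identify $\set{A,B}=\set{P_{<},P_{>}}$ with $e_u'\in P_{<}\setminus E_{b}$, $e_v'\in P_{>}\cap E_{>b}$, and $P_{=}\subseteq E_{b}$, exactly matching the hypotheses of Lemma~\ref{lem:find-sep-R-cycles-common} with $C''=C$ and base edge $e:=e_u'\in C''\setminus E_{b}$.

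Then I would apply Lemma~\ref{lem:find-sep-R-cycles-common} verbatim. Its face $f$ is the face $\bar{f_u}$ incident to $e_u'$ on the side of $C$ opposite $f_u$, which is the $f_v$-side since $C$ separates $f_u$ from $f_v$; and $\bar{f_x^1},\bar{f_x^2},\bar{f_y^1},\bar{f_y^2}$ in Algorithm~\ref{alg:find-sep-R11} are precisely the faces of the lemma incident to the edges of $C$ at $p_x,p_y$ on that same side. The lemma then yields $\bar{f_x}\in\set{\bar{f_x^1},\bar{f_x^2}}$ and $\bar{f_y}\in\set{\bar{f_y^1},\bar{f_y^2}}$ with $f_v=\meet(\bar{f_u},\bar{f_x},\bar{f_y})$. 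The algorithm tries all four such pairs, so in one iteration the computed face equals the true $f_v$; for that iteration $\Call{choose-best-flip}{u,v,f_u,f_v}$ returns, by Lemma~\ref{lem:choose-best-flip}, the size and corners of the locally maximal $u$-flip-component bounded by $f_u,f_v$, which by the choice of $f_u$ and $\set{s_{b-1},t_{b-1}}$ and the $4$-cycle $f_u,s_{b-1},f_v,t_{b-1}$ (Observation~\ref{obs:define-b-firstblock} and the discussion after it) is in fact a maximal $u$-flip. Each other iteration returns $(0,\bot)$ or the size of some $u$-flip-component, which cannot exceed the maximum, so the largest flip found over the four iterations is a maximal $u$-flip, as claimed; the running time is $\OO(\log^2 n)$ since we make a constant number of $\OO(\log^2 n)$-time calls.

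The step I expect to be the main obstacle is the bookkeeping of the second paragraph: deciding which arc of $C$ is $P_{<}$ and which is $P_{>}$ in each of the two routes of $A$, and in particular confirming $\idx(e_u')<b$ — equivalently, that a separating good cycle whose maximum-index edge lies in $E_{>b}$ must also contain an $E_{<b}$ edge, which one sees because $f_u$, but not $f_v$, flanks the $E_{<b}$-blob attached at $\set{s_{b-1},t_{b-1}}$ — so that $e_u'$ is a legal base edge for Lemma~\ref{lem:find-sep-R-cycles-common}. Everything else is routine given Lemmas~\ref{lem:find-sep-R-cycles-common}, \ref{lem:choose-best-flip}, and~\ref{lem:find-first-sep-11}.
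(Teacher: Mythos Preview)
Your proposal is correct and follows essentially the same approach as the paper: reduce to Lemma~\ref{lem:find-sep-R-cycles-common} with $e=e_u'$, use that $C$ being separating puts $\bar{f_u}$ on the $f_v$-side, and finish with Lemma~\ref{lem:choose-best-flip}. The paper's proof is a two-sentence appeal to exactly these lemmas; you supply the bookkeeping it omits (that $p_x\neq p_y$, that the arcs of $C$ can be labeled $P_<,P_>$, and that $e_u'\in C\setminus E_b$), and your sketch of why a separating $C$ meeting $E_{>b}$ must also meet $E_{<b}$ is the right way to close that gap.
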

\begin{proof}
  By Lemma~\ref{lem:choose-best-flip} it is sufficient to show that at
  least one of the candidates to $f_v$ used in the calls to
  $\Call{choose-best-flip}{u,v,f_u,f_v}$ is correct.  This follows
  directly from Lemma~\ref{lem:find-sep-R-cycles-common} with $e=e_u'$
  and the fact that $\bar{f_u}$ is on the same side of $C$ as $f_v$.
\end{proof}

\begin{algorithm}[htb!]
  \caption{}
  \label{alg:find-sep-R10}
  \begin{algorithmic}[1]

    \Function{find-sep-R10}{$u,v,f_u,C,e_u',e_v',x,y$}

    \LeftComment{Handle cases where $X_{b}$ is an R node and
      $E_{>b}\cap C\neq\emptyset$ and $C$ is not separating.}

    \State result $\gets(0,\bot)$

    \State $f_v'\gets$ the face incodent to $e_v'$ on the same side of
    $C$ as $f_u$.

    \State $C'\gets$ fundamental cycle of $(x,y)$

    \State $p_x\gets\pi_C(x)$; $p_y\gets\pi_C(y)$

    \If{$p_x\neq p_y$} \Comment{The two cycles have common edges (case I)}

      \State $e_x^1,e_y^1\gets$ edges on $C'\setminus C$ incident to
      $p_x,p_y$

      \If{$e_u'\in C'$}

        \State $e_x^2,e_y^2\gets$ edges on $C\setminus C'$ incident to
        $p_x,p_y$

      \Else{ $e_u'\not\in C'$}

        \State $e_x^2,e_y^2\gets$ edges on $C\cap C'$ incident to
        $p_x,p_y$

      \EndIf

      \State $f_x^1,f_x^2,f_y^1,f_y^2\gets$
      faces incident to $e_x^1,e_x^2,e_y^1,e_y^2$ on the same side of
      $C,C'$ as $f_v'$

      \For{$f_x\in\set{f_x^1,f_x^2}$}

        \For{$f_y\in\set{f_y^1,f_y^2}$}

          \State $f_v\gets\meet(f_v',f_x,f_y)$

          \State result $\gets\max\set{
            \text{result},
            \Call{choose-best-flip}{u,v,f_u,f_v}
          }$

        \EndFor

      \EndFor

    \Else{ $p_x=p_y$} \Comment{The two cycles have no common edges (case X
      or H)}

      \State $e_u^1,e_u^2\gets$ edges incident to $\pi_{C'}(v)$ on $C'$

      \State $e_v^1,e_v^2\gets$ edges incident to $\pi_{C}(u)$ on $C$

      \State $f_u^1,f_u^2,f_v^1,f_v^2\gets$ faces incident to
      $e_u^1,e_u^2,e_v^1,e_v^2$ on the same side of $C,C'$ as $f_v'$

      \For{$f_x\in\set{f_u^1,f_u^2}$}

        \For{$f_y\in\set{f_v^1,f_v^2}$}

          \State $f_v\gets\meet(f_v',f_x,f_y)$

          \State result $\gets\max\set{
            \text{result},
            \Call{choose-best-flip}{u,v,f_u,f_v}
          }$

        \EndFor

      \EndFor

    \EndIf

    \State \Return result

    \EndFunction

  \end{algorithmic}
\end{algorithm}

\begin{lemma}\label{lem:find-sep-R10}\sloppy
  If $(f_u,\cdot,C,e_u',e_v')$ is good, $E_{>b}\cap C\neq\emptyset$,
  $C$ is not separating, $(x,y)$ closes a good fundamental cycle
  intersecting $E_{<b}$, and $X_{b}$ is an R node, then
  $\Call{find-sep-R10}{u,v,f_u,C,e_u',e_v',x,y}$ in
  Algorithm~\ref{alg:find-sep-R10} returns the size and corners of a
  maximal $u$-flip.
\end{lemma}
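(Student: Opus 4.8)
The plan is to reduce, exactly as in the proof of Lemma~\ref{lem:find-sep-R11}, to showing that one of the faces handed to $\Call{choose-best-flip}{u,v,f_u,f_v}$ is the true second face $f_v$ bounding the maximal $u$-flip; Lemma~\ref{lem:choose-best-flip} then converts the correct pair $(f_u,f_v)$ into the size and corners of a maximal $u$-flip. The running time is worst case $\OO(\log^2 n)$ since only $\OO(1)$ projections, $\meet$-computations, and calls to $\Call{choose-best-flip}{\cdots}$ are performed.

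First I would record the structural facts forced by the hypotheses. Since $C$ is good but not separating, $C\cap E_{<b}=\emptyset$, $C\cap E_b$ is a path containing the tree path $s_b\cdots t_b$, and $C\cap E_{>b}\neq\emptyset$; in particular $e_u'\in E_b$, $e_v'\in E_{>b}$, and $C\subseteq E_{\geq b}$. By hypothesis $(x,y)$ closes a good fundamental cycle $C'$ with $C'\cap E_{<b}\neq\emptyset$. Put $p_x=\pi_C(x)$, $p_y=\pi_C(y)$; the algorithm's test $p_x\neq p_y$ decides whether $C$ and $C'$ share an $E_b$-edge.

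If $p_x\neq p_y$ (case I), $C\cup C'$ splits into three internally vertex-disjoint $p_x$--$p_y$ paths: the shared path $P_{=}=C\cap C'\subseteq E_b$, a path $P_{<}\subseteq C'$ meeting $E_{<b}$, and a path $P_{>}\subseteq C$ meeting $E_{>b}$; the inner split on whether $e_u'\in C'$ is precisely what tells the algorithm whether the $C$-edges of $C''=P_{<}\cup P_{>}$ incident to $p_x,p_y$ are those of $C\setminus C'$ or of $C\cap C'$. Taking $e=e_v'\in C''\setminus E_b$ (so the chosen $f_v'$ is the face $f$ of Lemma~\ref{lem:find-sep-R-cycles-common}), that lemma gives $f_v=\meet(f_v',f_x,f_y)$ for some $f_x\in\{f_x^1,f_x^2\}$ and $f_y\in\{f_y^1,f_y^2\}$, all of which the algorithm enumerates. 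If $p_x=p_y$ (case X/H), $C$ and $C'$ meet in at most the single vertex $p_x$; then $C'$ cannot reach $E_{>b}$ (that would force an $E_b$-edge shared with $C$), so $C'\subseteq E_{\leq b}$ while $C\subseteq E_{\geq b}$, and Lemma~\ref{lem:find-sep-R-cycles-separate} applies with $C_{\leq}=C'$, $C_{\geq}=C$, $e=e_v'$, $\pi_{C_\leq}(v)=\pi_{C'}(v)$, $\pi_{C_\geq}(u)=\pi_C(u)$, again producing $f_v$ as one of the enumerated $\meet(f_v',f_x,f_y)$.

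The main obstacle I expect is the bookkeeping, not any geometric inequality: one must certify that the four edges $e_x^1,e_x^2,e_y^1,e_y^2$ and the faces built from them in each branch are literally the objects required by Lemmas~\ref{lem:find-sep-R-cycles-common} and~\ref{lem:find-sep-R-cycles-separate}. This is where the $e_u'\in C'$ versus $e_u'\notin C'$ distinction has to be justified, and where one must check that ``same side of $C,C'$ as $f_v'$'' coincides with ``same side as $f_v$'' (using that $C$ is non-separating, so $f_u$, $f_v$, and $f_v'$ lie on a common side of $C$, together with the orientation of $C'$ forced by its goodness). Once this alignment is verified, the cited lemmas supply correctness of the returned candidate.
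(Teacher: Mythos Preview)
Your proposal is correct and follows essentially the same approach as the paper: reduce via Lemma~\ref{lem:choose-best-flip} to exhibiting the true $f_v$ among the enumerated candidates, then split on $p_x\neq p_y$ versus $p_x=p_y$ and invoke Lemma~\ref{lem:find-sep-R-cycles-common} (resp.\ Lemma~\ref{lem:find-sep-R-cycles-separate}) with $e=e_v'$. Your write-up is in fact more careful than the paper's own proof, which simply cites the two lemmas without verifying the containments $C'\subseteq E_{\leq b}$, $C\subseteq E_{\geq b}$ in the $p_x=p_y$ case or spelling out why the $e_u'\in C'$ branch selects the correct arc of $C$ for $P_{>}$; your explicit flagging of this bookkeeping and your argument that $C'\cap E_{>b}\neq\emptyset$ would force $C'$ through both $s_b$ and $t_b$ (hence two common vertices with $C$) are exactly the missing justifications.
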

\begin{proof}
  By Lemma~\ref{lem:choose-best-flip} it is sufficient to show that at
  least one of the candidates to $f_v$ used in the calls to
  $\Call{choose-best-flip}{u,v,f_u,f_v}$ is correct.

  If $p_x\neq p_y$ then $C\cup C'$ consists of $3$ internally
  vertex-disjoint paths from $p_x$ to $p_y$, and by
  Lemma~\ref{lem:find-sep-R-cycles-common} with $e=e_v'$ there exists
  $f_x\in\set{f_x^1,f_x^2}$ and $f_y\in\set{f_y^1,f_y^2}$ such that
  $f_v=\meet(f_v',f_x,f_y)$.

  Otherwise $p_x=p_y$, and $C$ and $C'$ has at most one vertex in
  common, so by Lemma~\ref{lem:find-sep-R-cycles-separate} with
  $e=e_v'$ there exists $f_x\in\set{f_x^1,f_x^2}$ and
  $f_y\in\set{f_y^1,f_y^2}$ such that $f_v=\meet(f_v',f_x,f_y)$.
\end{proof}

\begin{algorithm}[htb!]
  \caption{}
  \label{alg:find-sep-P0x}
  \begin{algorithmic}[1]

    \Function{find-sep-P0x}{$u,v,f_u,C,e_u',e_v',x,y$}

    \LeftComment{Handle cases where $X_{b}$ is a P node and
      $E_{>b}\cap C=\emptyset$.}

    \State $\bar{f_u},\bar{f_v}\gets$ face incident to $e_u',e_v'$ on
    opposite side of $C$ from $f_u$.

    \State $C'\gets$ fundamental cycle of $(x,y)$

    \State $p_x\gets\pi_C(x)$; $p_y\gets\pi_C(y)$

    \If{$p_x\neq p_y$}\Comment{If $p_x=p_y$ we are not in this case.}

      \State $\bar{f}\gets$ the face incident to
      $(x,y)$ on the same side of $C'$ as $\bar{f_u}$

      \State $f_v\gets$ first face on $\bar{f}\cdots \bar{f_u}$
      incident to both $p_x$ and $p_y$

      \State \Return $\Call{choose-best-flip}{u,v,f_u,f_v}$

    \EndIf

    \State \Return $(0,\bot)$

    \EndFunction

  \end{algorithmic}
\end{algorithm}

\begin{lemma}\label{lem:find-sep-P0x}\sloppy
  If $(f_u,\cdot,C,e_u',e_v')$ is good, $E_{>b}\cap C=\emptyset$,
  $(x,y)$ closes a good fundamental cycle intersecting $E_{>b}$, and
  $X_{b}$ is a P node, then
  $\Call{find-sep-P0x}{u,v,f_u,C,e_u',e_v',x,y}$ in
  Algorithm~\ref{alg:find-sep-P0x} returns the size and corners of a
  maximal $u$-flip.
\end{lemma}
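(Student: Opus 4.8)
The plan is to mirror the structure of the proofs of Lemmas~\ref{lem:find-sep-P11} and~\ref{lem:find-sep-P10}: by Lemma~\ref{lem:choose-best-flip} it suffices to show that the face $f_v$ that \textproc{find-sep-P0x} feeds to \textproc{choose-best-flip} is the correct one, i.e.\@ that $f_u,f_v$ are incident to the corners of a maximal $u$-flip. Recall that, given $f_u$ and the first $f_u$-blocking node $X_b$, the maximal $u$-flip is bounded by the $4$-cycle $f_u,s_{b-1},f_v,t_{b-1}$, where $f_v$ is the face incident to both $s_{b-1}$ and $t_{b-1}$ on the far side of $f_u$ that cuts off the smallest component on the $v$-side.

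First I would pin down the shape of $C$. Since $(f_u,\cdot,C,e_u',e_v')$ is good and $E_{>b}\cap C=\emptyset$, Definition~\ref{def:goodcycle} forces $C$ to be separating, so $f_u$ and $f_v$ lie on opposite sides of $C$ and $\bar{f_u}$ lies on the $f_v$-side; moreover, exactly as in the proof of Lemma~\ref{lem:find-first-sep-f1f2-0x}, $C\cap E_b$ contains the tree path $s_{b-1}\cdots t_{b-1}$, so $s_{b-1},t_{b-1}\in C$. Because $X_b$ is a P node we have $s_{b-1}=s_b$ and $t_{b-1}=t_b$, so $\{s_{b-1},t_{b-1}\}$ is exactly the $2$-cut isolating the separation class $E_{>b}$ from the rest of $E_B$, and $E_b$ is a union of at least two $\{s_{b-1},t_{b-1}\}$-separation classes.

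Next I would show that $\{\pi_C(x),\pi_C(y)\}=\{s_{b-1},t_{b-1}\}$, so in particular $p_x\neq p_y$ and the algorithm does not take the $(0,\bot)$ branch. The cycle $C'$ closed by $(x,y)$ is good and meets $E_{>b}$; since $\{s_b,t_b\}$ separates $E_{>b}$ from everything else, $C'$ contains both $s_b$ and $t_b$, and the portion of $C'$ inside the $E_{>b}$-block is a path from $s_b$ to $t_b$. In the principal case the non-tree edge $(x,y)$ lies inside that block, so every tree path from $x$ (resp.\@ $y$) onto $C\subseteq E_{\le b}$ leaves the block through $s_b$ or $t_b$; since $s_b=s_{b-1}$ and $t_b=t_{b-1}$ are both on $C$, this gives $\pi_C(x),\pi_C(y)\in\{s_{b-1},t_{b-1}\}$, and the remaining case (where $(x,y)$ lies on the $E_{\le b}$-side of $C'$) follows from the goodness of $C'$ in the same way as in the proof of Lemma~\ref{lem:find-first-sep-0x}.

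The final and most delicate step — which I expect to be the main obstacle — is to show that $f_v$ is the first face incident to both $p_x$ and $p_y$ on the dual path $\bar f\cdots\bar{f_u}$. Here $\bar f$ is incident to $(x,y)$ on the $f_v$-side of $C$ (inside or adjacent to the $E_{>b}$-block) and $\bar{f_u}$ is incident to $e_u'$, whose index is minimal and hence $<b$, again on the $f_v$-side of $C$. The tree path $s_{b-1}\cdots t_{b-1}$ in $C\cap E_b$, together with an imaginary edge through $f_v$, is a closed curve that has $\bar f$ on its $E_{>b}$-side and $\bar{f_u}$ on its $E_{<b}$-side, so $f_v\in\bar f\cdots\bar{f_u}$; and because the maximal $u$-flip-component is the largest one, $f_v$ is precisely the first face on that path admitting a valid flip, i.e.\@ the first one incident to both $s_{b-1}$ and $t_{b-1}$ — exactly what the code computes. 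Invoking Lemma~\ref{lem:choose-best-flip} with this $f_v$ then yields the size and corners of the maximal $u$-flip. The care needed is all in the sidedness bookkeeping (verifying which region each of $\bar f$, $\bar{f_u}$, $f_v$ lands in, and handling the degenerate possibility $C\subseteq E_b$), just as in the proofs of Lemmas~\ref{lem:find-sep-P11} and~\ref{lem:find-first-sep-f1f2-0x}.
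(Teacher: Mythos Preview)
Your proposal is correct and follows essentially the same route as the paper's proof: show $C$ is separating and contains $s_{b-1},t_{b-1}$; show $\{\pi_C(x),\pi_C(y)\}=\{s_{b-1},t_{b-1}\}$; use the closed-curve argument to place $f_v$ on $\bar f\cdots\bar{f_u}$; conclude via Lemma~\ref{lem:choose-best-flip}.

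One simplification you are missing: your ``remaining case'' $(x,y)\in E_{\le b}$ does not occur, and the paper dispatches it in one line. Since $C\subseteq E_{\le b}$ is a fundamental cycle through $s_{b-1},t_{b-1}$, the tree path $s_{b-1}\cdots t_{b-1}$ lies in $E_{\le b}$. The fundamental cycle $C'$ meets $E_{>b}$ and (being a cycle through the cut $\{s_b,t_b\}=\{s_{b-1},t_{b-1}\}$) must also contain that same tree path; hence the non-tree edge $(x,y)$ is forced into $E_{>b}$. This also immediately gives $\pi_C(x)\neq\pi_C(y)$, since the tree path $x\cdots y$ then contains $s_{b-1}\cdots t_{b-1}$ as a subpath, so one projection is $s_{b-1}$ and the other $t_{b-1}$ --- a point your write-up asserts but does not quite justify. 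With that observation your case split and the appeal to Lemma~\ref{lem:find-first-sep-0x} disappear, and the remaining sidedness bookkeeping is exactly what you outlined.
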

\begin{proof}
  Since $C$ is good and $E_{>b}\cap C=\emptyset$, $C$ is separating
  and contains $s_{b-1},t_{b-1}$ as well as the tree path $s_{b-1}\cdots
  t_{b-1}$. Since $C'$ is a good fundamental cycle through $E_{>b}$ it
  must also contain that tree path, and we must have $(x,y)\in
  E_{>b}$ and thus $\set{\pi_C(x),\pi_C(y)}=\set{s_{b-1},t_{b-1}}$.

  Furthermore, since $C$ is separating $f_v$ is on the opposite side
  of $C$ from $f_u$, and thus by construction $\bar{f_u}$ and
  $\bar{f}$ are on the same side of $C$ as $f_v$.  The path $s_{b-1}\cdots
  t_{b-1}$ together with (an imaginary edge $(s_{b-1},t_{b-1})$ through) $f_v$
  forms a closed curve separating $\bar{f}$ from $\bar{f_u}$, so
  $f_v\in \bar{f}\cdots\bar{f_u}$.  Since $f_v$ maximizes the size of
  the $u$-flip-component, it must be the first face on this path that
  defines a valid flip, which it does if and only if it contains
  $s_{b-1},t_{b-1}$.
  Given the correct $f_v$, the rest follows from
  Lemma~\ref{lem:choose-best-flip}.
\end{proof}

\begin{algorithm}[htb!]
  \caption{}
  \label{alg:find-sep-R01}
  \begin{algorithmic}[1]

    \Function{find-sep-R01}{$u,v,f_u,C,e_u',e_v',x,y$}

    \LeftComment{Handle cases where $X_{b}$ is an R node and
      $E_{>b}\cap C=\emptyset$ and $C$ is separating.}

    \State result $\gets(0,\bot)$

    \State $\bar{f_u},\bar{f_v}\gets$ face incident to $e_u',e_v'$ on
    opposite side of $C$ from $f_u$.

    \State $C'\gets$ fundamental cycle of $(x,y)$

    \State $p_x\gets\pi_C(x)$; $p_y\gets\pi_C(y)$

    \If{$p_x\neq p_y$} \Comment{The two cycles have common edges (case I)}

      \If{$e_v'\in C'$}

        \State $e_x^1,e_y^1\gets$ edges on $C\setminus C'$ incident to
        $p_x,p_y$

      \Else{ $e_v'\not\in C'$}

        \State $e_x^1,e_y^1\gets$ edges on $C\cap C'$ incident to
        $p_x,p_y$

      \EndIf

      \State $e_x^2,e_y^2\gets$ edges on $C'\setminus C$ incident to
      $p_x,p_y$

      \State $\bar{f_x^1},\bar{f_x^2},\bar{f_y^1},\bar{f_y^2}\gets$
      faces incident to $e_x^1,e_x^2,e_y^1,e_y^2$ on the same side of
      $C,C'$ as $\bar{f_u}$

      \For{$\bar{f_x}\in\set{\bar{f_x^1},\bar{f_x^2}}$}

        \For{$\bar{f_y}\in\set{\bar{f_y^1},\bar{f_y^2}}$}

          \State $f_v\gets\meet(\bar{f_u},\bar{f_x},\bar{f_y})$

          \State result $\gets\max\set{
            \text{result},
            \Call{choose-best-flip}{u,v,f_u,f_v}
          }$

        \EndFor

      \EndFor

    \Else{ $p_x=p_y$} \Comment{The two cycles have no common edges (case X
      or H)}

      \State $e_u^1,e_u^2\gets$ edges incident to $\pi_{C}(v)$ on $C$

      \State $e_v^1,e_v^2\gets$ edges incident to $\pi_{C'}(u)$ on $C'$

      \State $\bar{f_u^1},\bar{f_u^2},\bar{f_v^1},\bar{f_v^2}\gets$
      faces incident to $e_u^1,e_u^2,e_v^1,e_v^2$ on the same side of
      $C,C'$ as $\bar{f_u}$

      \For{$\bar{f_x}\in\set{\bar{f_u^1},\bar{f_u^2}}$}

        \For{$\bar{f_y}\in\set{\bar{f_v^1},\bar{f_v^2}}$}

          \State $f_v\gets\meet(\bar{f_u},\bar{f_x},\bar{f_y})$

          \State result $\gets\max\set{
            \text{result},
            \Call{choose-best-flip}{u,v,f_u,f_v}
          }$

        \EndFor

      \EndFor

    \EndIf

    \State \Return result

    \EndFunction

  \end{algorithmic}
\end{algorithm}

\begin{lemma}\label{lem:find-sep-R01}\sloppy
  If $(f_u,\cdot,C,e_u',e_v')$ is good, $E_{>b}\cap C=\emptyset$,
  $E_{<b}\cap C\neq\emptyset$, $(x,y)$ closes a good fundamental
  cycle intersecting $E_{>b}$, and $X_{b}$ is an R node, then
  $\Call{find-sep-R01}{u,v,f_u,C,e_u',e_v',x,y}$ in
  Algorithm~\ref{alg:find-sep-R01} returns the size and corners of a
  maximal $u$-flip.
\end{lemma}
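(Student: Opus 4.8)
The plan is to follow the pattern of the proofs of Lemmas~\ref{lem:find-sep-R11} and~\ref{lem:find-sep-R10}: by Lemma~\ref{lem:choose-best-flip} it suffices to show that at least one of the faces $f_v$ passed to $\Call{choose-best-flip}{u,v,f_u,f_v}$ inside \textproc{find-sep-R01} equals the true second face bounding the maximal $u$-flip-component (the running-time claim, as for the sibling subroutines, is a routine constant number of $\OO(\log^2 n)$-time $\meet$/projection operations and $\Call{choose-best-flip}{u,v,f_u,f_v}$ calls, so I would not dwell on it). Since $C$ is good and $E_{>b}\cap C=\emptyset$, by Definition~\ref{def:goodcycle} $C$ is \emph{separating}, so $f_v$ lies on the opposite side of $C$ from $f_u$; this is exactly why \textproc{find-sep-R01} anchors its $\meet$ computations at $\bar{f_u}$, the face incident to $e_u'$ on the far side of $C$, with $e_u'\in E_{<b}$ now playing the role that $e_v'\in E_{>b}$ plays in \textproc{find-sep-R10}. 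In other words, the whole argument is the mirror image of the proof of Lemma~\ref{lem:find-sep-R10} under the exchange $E_{<b}\leftrightarrow E_{>b}$, $e_u'\leftrightarrow e_v'$, and the swap of the two sides of $C$.

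Concretely, I would split on whether $p_x=\pi_C(x)$ equals $p_y=\pi_C(y)$, exactly as the algorithm does. If $p_x\neq p_y$, then $C$ and $C'$ share the edge path $C\cap C'$, and $C\cup C'$ is the union of three internally vertex-disjoint paths between $p_x$ and $p_y$: the arc $C'\setminus C$, which carries the $E_{>b}$-edges of $C'$, and the two arcs of $C$, one contained in $E_{b}$ and the other carrying the $E_{<b}$-edges of $C$ (and hence $e_u'$). The case distinction on whether $e_v'\in C'$ correctly identifies which of $C\cap C'$ and $C\setminus C'$ is the $E_{b}$-arc, so that the four selected edges $e_x^1,e_x^2,e_y^1,e_y^2$ are precisely the edges at $p_x,p_y$ on $C''=(C'\setminus C)\cup P_{<}$, where $P_{<}$ is the $E_{<b}$-arc of $C$. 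Applying Lemma~\ref{lem:find-sep-R-cycles-common} with $e=e_u'$ (so that its face $\bar{f_u}$ is the ``$f$'' of that lemma) then produces $\bar{f_x}\in\set{\bar{f_x^1},\bar{f_x^2}}$ and $\bar{f_y}\in\set{\bar{f_y^1},\bar{f_y^2}}$ with $f_v=\meet(\bar{f_u},\bar{f_x},\bar{f_y})$, one of the candidates tried. If instead $p_x=p_y$, then $C$ and $C'$ have at most one vertex in common, $C\subseteq E_{\leq b}$ and $C'\subseteq E_{\geq b}$, and I would invoke Lemma~\ref{lem:find-sep-R-cycles-separate} with $C_\leq=C$, $C_\geq=C'$, and $e=e_u'\in E_{<b}$, obtaining $f_v=\meet(\bar{f_u},\bar{f_x},\bar{f_y})$ for one of the tried pairs $(\bar{f_x},\bar{f_y})\in\set{\bar{f_u^1},\bar{f_u^2}}\times\set{\bar{f_v^1},\bar{f_v^2}}$. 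In both cases $\Call{choose-best-flip}{u,v,f_u,f_v}$ is invoked with the correct $f_v$, and Lemma~\ref{lem:choose-best-flip} finishes the proof.

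The step I expect to be the main obstacle is the edge-bookkeeping in the $p_x\neq p_y$ branch: one must verify that the test on $e_v'\in C'$ really does decompose $C\cup C'$ into its three internally vertex-disjoint paths the right way --- in particular that the arc of $C$ carrying $e_u'$ is always the $E_{<b}$-arc and never the $E_{b}$-arc, which in turn rests on $p_x,p_y$ lying on (or at the endpoints of) the $E_b$-part of $C$ --- so that the edges $e_x^i,e_y^i$ chosen by the algorithm match those in the statement of Lemma~\ref{lem:find-sep-R-cycles-common} with no shift. A secondary point is the claim $C'\subseteq E_{\geq b}$ in the $p_x=p_y$ case; this should follow because $C$ already contains the whole tree path $s_{b-1}\cdots t_{b-1}$, so a good cycle $C'$ meeting both $E_{<b}$ and $E_{>b}$ would have to share more than one vertex with $C$, contradicting $p_x=p_y$. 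Everything else is a faithful transcription of the $R$-node argument already carried out for the non-separating case, so no genuinely new ideas should be required.
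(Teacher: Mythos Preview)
Your proposal is correct and takes essentially the same approach as the paper: reduce via Lemma~\ref{lem:choose-best-flip}, split on whether $p_x=p_y$, and invoke Lemma~\ref{lem:find-sep-R-cycles-common} (with $e=e_u'$) in the first case and Lemma~\ref{lem:find-sep-R-cycles-separate} (with $C_\leq=C$, $C_\geq=C'$, $e=e_u'$) in the second. The paper's own proof is in fact terser than yours and does not spell out the edge-bookkeeping in the $p_x\neq p_y$ branch or the justification that $C'\subseteq E_{\geq b}$ when $p_x=p_y$; the points you flag as obstacles are exactly the details the paper leaves implicit.
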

\begin{proof}
  By Lemma~\ref{lem:choose-best-flip} it is sufficient to show that at
  least one of the candidates to $f_v$ used in the calls to
  $\Call{choose-best-flip}{u,v,f_u,f_v}$ is correct.

  If $p_x\neq p_y$ then $C\cup C'$ consists of $3$ internally
  vertex-disjoint paths from $p_x$ to $p_y$, and by
  Lemma~\ref{lem:find-sep-R-cycles-common} with $e=e_u'$ there exists
  $\bar{f_x}\in\set{\bar{f_x^1},\bar{f_x^2}}$ and
  $\bar{f_y}\in\set{\bar{f_y^1},\bar{f_y^2}}$ such that
  $f_v=\meet(\bar{f_u},\bar{f_x},\bar{f_y})$.

  Otherwise $p_x=p_y$, and $C$ and $C'$ has at most one vertex in
  common, so by Lemma~\ref{lem:find-sep-R-cycles-separate} with
  $e=e_u'$ there exists $\bar{f_x}\in\set{\bar{f_x^1},\bar{f_x^2}}$
  and $\bar{f_y}\in\set{\bar{f_y^1},\bar{f_y^2}}$ such that
  $f_v=\meet(\bar{f_u},\bar{f_x},\bar{f_y})$.
\end{proof}

\begin{lemma}\label{lem:find-sep-R00}
  If $(f_u,\cdot,C,e_u',e_v')$ is good, $E_{>b}\cap C=\emptyset$,
  and $E_{<b}\cap C=\emptyset$, then $X_{b}$ is a P node.
\end{lemma}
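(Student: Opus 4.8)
The plan is a proof by contradiction on the type of $X_b$. To begin, $X_b$ cannot be an S node: by definition an S node is neither $f_u$-blocking nor anti-$f_u$-blocking, whereas $X_b$ is by hypothesis the first $f_u$-blocking node on $X_1\cdots X_r$. Since S, P, and R are the only possibilities, it therefore suffices to rule out $X_b$ being an R node.

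Next I would unpack the hypotheses. Since $E_{>b}\cap C=\emptyset$, the second alternative in Definition~\ref{def:goodcycle} fails, so goodness of $C$ forces $C$ to be \emph{separating}: $f_u$ and $f_v$ lie on opposite sides of $C$. And since $\{E_1,\dots,E_k\}$ partitions $E_B$ with $E_{<b}=E_1\cup\dots\cup E_{b-1}$ and $E_{>b}=E_{b+1}\cup\dots\cup E_k$, the two disjointness assumptions together give $C\subseteq E_b$, i.e.\ $\idx(e)=b$ for every $e\in C$ (consistent with the good-tuple conditions on $e_u',e_v'$). Recall also that in the part of the algorithm where this lemma is applied we have $1<b<r$, so $X_b$ is not a cross node and the companion face $f_v$ bounding the maximal $u$-flip via the $4$-cycle $(f_u,s_{b-1},f_v,t_{b-1})$ is well-defined.

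Now suppose $X_b$ is an R node and derive a contradiction. The set $E_b$ consists exactly of the real edges of the skeleton $\Gamma(X_b)$ together with the real edges lying behind its \emph{side} virtual edges; the virtual edges of $\Gamma(X_b)$ towards the critical-path neighbours $X_{b-1}$ and $X_{b+1}$ carry the real edges $E_{<b}$ and $E_{>b}$ respectively. Hence, contracting each subtree hanging off a side virtual edge back to that edge turns $C$ into a closed walk $\hat C$ in $\Gamma(X_b)$ using neither the virtual edge $\epsilon^-:=(s_{b-1},t_{b-1})$ nor $\epsilon^+:=(s_b,t_b)$, and this contraction keeps $f_u$ and $f_v$ on their respective sides of $C$. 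If $\hat C$ degenerates to a single side virtual edge --- that is, $C$ sits entirely inside one side subtree --- then $f_u$ (which meets $u$, lying behind $\epsilon^-$) and $f_v$ (which meets $s_{b-1},t_{b-1}$) both lie on the outer side of $C$, contradicting that $C$ is separating; so $\hat C$ is a genuine cycle of $\Gamma(X_b)$. By the choice of $f_u$ together with $X_b$ being the first $f_u$-blocking node and $b>1$, $f_u$ is incident to $u$, $s_{b-1}$, and $t_{b-1}$, so it projects to one of the two faces of $\Gamma(X_b)$ bordering $\epsilon^-$; likewise $f_v$, incident to $s_{b-1}$ and $t_{b-1}$, projects to a face bordering $\epsilon^-$. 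But $\Gamma(X_b)$ is $3$-connected, hence embedded uniquely up to reflection, and $\epsilon^-$ is an edge not on $\hat C$; as an arc not crossing $\hat C$ it lies in one of the two closed regions bounded by $\hat C$, so both faces bordering $\epsilon^-$ lie on that same side. Thus $f_u$ and $f_v$ lie on the same side of $\hat C$, hence of $C$, contradicting that $C$ is separating. So $X_b$ cannot be an R node, and therefore $X_b$ is a P node.

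The step I expect to be the main obstacle is the middle one: setting up the correspondence between $E_b$ and $\Gamma(X_b)$ carefully enough to justify that $C$ really projects to a (possibly degenerate) cycle $\hat C$ avoiding $\epsilon^-$ and $\epsilon^+$, that $f_u$ and $f_v$ survive this projection as distinct faces incident to $\epsilon^-$, and that sidedness with respect to $C$ is preserved --- together with the clean treatment of the degenerate case. Everything else is bookkeeping with definitions already in place.
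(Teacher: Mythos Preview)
Your proof is correct and supplies the details that the paper's own proof entirely elides: the paper simply writes ``Follows directly from the definition of $f_u,C,e_u',e_v'$ being correct'' (note it even says ``correct'' where the hypothesis is merely ``good''). Your contradiction argument via projection to $\Gamma(X_b)$ is precisely the natural way to unpack why this is ``direct'': once $C\subseteq E_b$ and $C$ is separating, the two faces $f_u,f_v$ bounding the maximal $u$-flip at $\{s_{b-1},t_{b-1}\}$ both sit next to the virtual edge $\epsilon^-$ in $\Gamma(X_b)$ (each is adjacent to edges on both sides of the flip, so neither is internal to $E_{<b}$), hence both lie on the same side of any cycle in $\Gamma(X_b)$ avoiding $\epsilon^-$.

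Two minor remarks. First, your appeal to $1<b<r$ is not actually needed: the projection argument goes through whether or not $X_b$ is cross, since you only use that $\hat C$ avoids $\epsilon^-$, not $\epsilon^+$. You can drop that clause and the lemma stands at the stated generality. Second, in the degenerate case the phrasing ``$\hat C$ degenerates to a single side virtual edge'' is slightly off---if $C$ lies entirely behind a side virtual edge $(a,b)$ then $\hat C$ has no edges at all---but your conclusion is still right: since $\Gamma(X_b)$ is simple and $3$-connected, $\{s_{b-1},t_{b-1}\}\neq\{a,b\}$, so at least one of $s_{b-1},t_{b-1}$ (and also $u$) lies strictly outside the side subtree, forcing both $f_u$ and $f_v$ to the outside of $C$.
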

\begin{proof}
  Follows directly from the definition of $f_u,C,e_u',e_v'$ being correct.
\end{proof}

\begin{lemma}\label{lem:find-first-sep}
  $\Call{find-first-separation-flip}{u,v}$ in
  Algorithm~\ref{alg:find-first-sep} runs in worst case $\OO(\log^2
  n)$ time, and:
  \begin{itemize}
  \item If $G\cup(u,v)$ is planar it always finds a maximal $u$-flip.
  \item If $G\cup(u,v)$ is non-planar it either:
    \begin{itemize}
    \item finds a maximal $u$-flip; or
    \item finds a $u$-flip $\sigma$ such that immediately calling
      $\Call{find-first-separation-flip}{u,v}$ again after executing
      $\sigma$ will return a $u$-flip $\sigma'$ of the same size; or
    \item finds no $u$-flip.
    \end{itemize}
  \end{itemize}
\end{lemma}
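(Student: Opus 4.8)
The plan is to assemble this lemma from the component lemmas already established for each subroutine, treating running time and correctness separately.

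For the running time I would invoke Lemma~\ref{lem:find-single-time}: the initial call $\Call{find-single-flip-candidates}{u,v}$ costs $\OO(\log^2 n)$ and returns at most $20$ candidate tuples. The algorithm then does $\OO(1)$ work per tuple: a constant number of elementary tree operations ($\meet$, $\pi_C$, first/last edge on a tree path, side tests), each $\OO(\log n)$ in the data structure of~\cite{DBLP:journals/mst/HolmR17}, a constant number of calls to $\Call{choose-best-flip}{\cdots}$ (each $\OO(\log^2 n)$ by Lemma~\ref{lem:choose-best-flip}), and a constant number of calls to $\Call{find-sep-P11}{\cdots},\ldots,\Call{find-sep-R01}{\cdots}$; inspecting Algorithms~\ref{alg:find-sep-P11}--\ref{alg:find-sep-R01}, each of those performs only $\OO(1)$ tree operations and $\OO(1)$ calls to $\Call{choose-best-flip}{\cdots}$, hence runs in $\OO(\log^2 n)$. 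Summing gives $\OO(\log^2 n)$ total.

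For correctness I would use the faces $f_u,f_v$, the cross index $r$ (Lemma~\ref{lem:define-r-firstcross}), and the first $f_u$-blocking index $b$ (Observation~\ref{obs:define-b-firstblock}), and assume $k>1$. If $b=1$ then $X_1$ is cross and no $u$-flip exists; since every $\sigma$ passed to $\Call{choose-best-flip}{\cdots}$ then fails to bound a $u$-flip-component, by the test in Lemma~\ref{lem:locally-maximal} every such call returns $(0,\bot)$, so the algorithm returns $(0,\bot)$, matching ``finds no $u$-flip''. So assume $b>1$. If a single flip at $f_u,f_v$ makes $(u,v)$ admissible, then $f_u,f_v$ bound the maximal $u$-flip-component; by Lemma~\ref{lem:find-single-findsflip} a correct tuple (i.e.\@ with $f_u'=f_u$ and $f_v'=f_v$) is among those returned, and since $v\in f_v$ the first loop of Algorithm~\ref{alg:find-first-sep} calls $\Call{choose-best-flip}{u,v,f_u,f_v}$, which by Lemma~\ref{lem:choose-best-flip} returns the size and corners of this maximal $u$-flip.

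If instead more than one flip is needed, then $1<b$ and we have $b\le r=k$ (if $G\cup(u,v)$ is planar) or $b\le r<k$ (if not). When $1<b<r$, Lemma~\ref{lem:find-single-findsgood} guarantees a good candidate tuple $(f_u,\cdot,C,e_u',e_v')$ among those returned, and I would split into cases on whether $E_{>b}\cap C=\emptyset$, whether $C$ is separating, and whether $X_b$ is a P or R node, matching the branches of the ``not single flip'' part of Algorithm~\ref{alg:find-first-sep}: Lemmas~\ref{lem:find-first-sep-11}, \ref{lem:find-first-sep-10}, and~\ref{lem:find-first-sep-0x} certify that in each case the edge $(x,y)$ the algorithm computes satisfies the hypotheses of exactly one of Lemmas~\ref{lem:find-sep-P11}, \ref{lem:find-sep-R11}, \ref{lem:find-sep-P10}, \ref{lem:find-sep-R10}, \ref{lem:find-sep-P0x}, \ref{lem:find-sep-R01} (with Lemma~\ref{lem:find-sep-R00} excluding the remaining combination), each of which returns the size and corners of a maximal $u$-flip; since the algorithm returns the maximum over all candidate tuples and branches, it returns a maximal $u$-flip. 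The case $b=r$ (the first $f_u$-blocking node is the cross node) I would handle directly: $\set{s_{b-1},t_{b-1}}$ together with $f_u$ and the canonical second face still bound a maximal $u$-flip-component, which is found by the same branch of the algorithm, and since $H_u$ remains a (still maximal) $u$-flip-component after the flip, re-invoking $\Call{find-first-separation-flip}{u,v}$ returns a $u$-flip of the same size, which is the relevant non-planar outcome. I expect the main obstacle to be precisely this bookkeeping around the cross node in the non-planar case --- verifying that a maximal flip is genuinely found even when Lemma~\ref{lem:find-single-findsgood} does not apply, that no strictly larger $u$-flip can appear after executing a maximal flip, and that the early return from the first loop cannot yield a locally-maximal-but-not-maximal flip; the rest is a mechanical assembly of the preceding lemmas.
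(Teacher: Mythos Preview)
Your running-time argument and the case analysis for $1<b<r$ are essentially the paper's proof: assemble Lemma~\ref{lem:find-single-findsgood} with Lemmas~\ref{lem:find-first-sep-11}, \ref{lem:find-first-sep-10}, \ref{lem:find-first-sep-0x} to feed the correct hypotheses into Lemmas~\ref{lem:find-sep-P11}--\ref{lem:find-sep-R01}, and take the maximum. That part is fine.

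The gap is your treatment of the non-planar case $b=r$. You propose to show that the algorithm still finds a \emph{maximal} $u$-flip at $\set{s_{b-1},t_{b-1}}$, and you flag this as ``the main obstacle.'' But this claim is false in general, and is not what the lemma asserts. When $b=r$, Lemma~\ref{lem:find-single-findsgood} does not apply, no candidate tuple need be good, and the algorithm may return a flip at some $\set{s_j,t_j}$ with $j<b-1$ that is only \emph{locally} maximal (Definition~\ref{def:locally-maximal}) for the wrong pair of faces. The paper does not try to rule this out; instead it argues that such a non-maximal flip $\sigma$ is self-correcting: since $\sigma$ came from \textproc{choose-best-flip} it is locally maximal, so after executing $\sigma$ the node $X_j$ becomes the new first $f_u$-blocking node and is \emph{not} anti-$f_u$-blocking. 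Now $b'=j<r$, the hypothesis of Lemma~\ref{lem:find-single-findsgood} is restored, and the next call provably finds the unique maximal $u$-flip --- which is exactly $\sigma^{-1}$, hence of the same size. That is the content of the second bullet in the non-planar case, and it is what makes \textproc{do-separation-flips} terminate correctly via the size comparison $s'\le s$.

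So you should abandon the attempt to prove maximality at $b=r$ and instead argue: any returned flip is locally maximal (by Lemma~\ref{lem:choose-best-flip}); if it is not globally maximal then after executing it the first $f_u$-blocking index strictly drops below $r$; hence the subsequent call falls into the already-handled $1<b<r$ regime and returns the inverse flip.
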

\begin{proof}
  If $G\cup(u,v)$ is planar and only one flip is needed to admit
  $(u,v)$, then by Lemma~\ref{lem:find-single-findsflip} our
  \Call{find-single-flip-candidates}{} algorithm will find a correct
  candidate tuple, and then by Lemma~\ref{lem:choose-best-flip}
  \Call{choose-best-flip}{} will select the corresponding maximal
  $u$-flip, and this will be returned.

  Otherwise, let $b$ be the minimum index of an $f_u$-blocking node.
  If $b=1$ then no $u$-flip exists and so no $u$-flip is found.  If
  $1<b<r$, by Lemma~\ref{lem:find-single-findsgood} at least one of
  the candidate tuples is good.  Let $(f_u,\cdot,C,e_u',e_v')$ be the
  good candidate tuple.  Let $E_1,\ldots,E_k$ be as in
  Definition~\ref{def:E-partition}. We will case by how $C$ intersects
  $E_{>b}$, and then by whether $C$ separates $f_u$ from $f_v$, and
  then, finally, by whether the $X_b$ is a P or an R node.
  \begin{itemize}
  \item If $C\cap E_{>b}\neq \emptyset$
    \begin{itemize}
    \item if $C$ is separating, then by
      Lemma~\ref{lem:find-first-sep-11}, $(x,y)$ closes a good fundamental
      cycle $C'$, such that $C'\setminus C$ is a path in $E_{b}$, and:
      \begin{itemize}
      \item If $X_{b}$ is a P node,
        $\set{s_i,t_i}=\set{\pi_C(x),\pi_C(y)}$ so by
        Lemma~\ref{lem:find-sep-P11}, we return the maximal $u$-flip.
      \item If $X_{b}$ is an R node, then by
        Lemma~\ref{lem:find-sep-R11}, we return the maximal $u$-flip.
      \end{itemize}
    \item if $C$ is not separating, then by
      Lemma~\ref{lem:find-first-sep-10}, $(x,y)$ closes a good
      fundamental cycle $C'$ intersecting $E_{<b}$, and:
      \begin{itemize}
      \item If $X_{b}$ is a P node, then by
        Lemma~\ref{lem:find-sep-P10}, we return the maximal $u$-flip.
      \item If $X_{b}$ is an R node, then by
        Lemma~\ref{lem:find-sep-R10}, we return the maximal $u$-flip.
      \end{itemize}
    \end{itemize}
  \item If $C\cap E_{>b} = \emptyset$ then by
    Lemma~\ref{lem:find-first-sep-0x} at least one of the $(x,y)$ we
    try closes a cycle $C'$ intersecting $E_{>b}$, and:
    \begin{itemize}
    \item If $X_{b}$ is a P node, then by
      Lemma~\ref{lem:find-sep-P0x}, we return the maximal $u$-flip.
    \item If $X_{b}$ is an R node, then, by
      Lemma~\ref{lem:find-sep-R00}, $C'$ is separating. Then, by
      Lemma~\ref{lem:find-sep-R01}, we return the maximal $u$-flip.
    \end{itemize}
  \end{itemize}
  Finally, if $b=r$ and $G\cup(u,v)$ is nonplanar then we may find a
  non-maximal $u$-flip $\sigma$. This flip will be in some separation
  pair $\set{s_j,t_j}$ with $j<b-1$, and since every flip is chosen by
  a call $\Call{choose-best-flip}{u,v,f_u,f_v'}$ to be maximal for
  $f_u,f_v'$, $X_j$ is not an S node.  But then after performing the
  flip, $X_j$ will be the first $f_u$-blocking node, and since
  $\sigma$ was locally maximal it will not be anti-$f_u$-blocking.  By
  the previous argument we are guaranteed that the next call to
  $\Call{find-first-separation-flip}{u,v}$ finds the unique (since
  $X_j$ is not anti-$f_u$-blocking) maximal $u$-flip, which is the
  inverse of $\sigma$ and therefore has the same size.
\end{proof}

\FloatBarrier

We are finally ready to prove a main theorem:
\mainalgplanar*
\begin{proof}
  Consider the function $\Call{multi-flip-linkable}{u,v}$ from
  Algorithm~\ref{alg:multi-flip-linkable}.  By
  Lemma~\ref{lem:find-first-sep} and Lemma~\ref{lem:do-sep} each
  $\Call{do-separation-flips}{u,v}$ runs in $\OO(\log^2 n)$ time per
  flip, and only does critical potential-decreasing flips (and at most
  one critical flip that is not potential-decreasing). Similarly, by
  Lemma~\ref{lem:do-art} each
  $\Call{do-articulation-flips}{u,u',v',v}$ runs in $\OO(\log^2 n)$
  time and only does critical potential-decreasing or
  potential-neutral flips. By Lemma~\ref{lem:art-flip-pattern} any
  such potential-neutral flip is immediately followed by either a
  potential-decreasing flip (either a separation flip or an
  articulation flip) or by the final flip. Thus by
  Corollary~\ref{cor:lazygreedy-simple} with $r=2$,
  $\Call{multi-flip-linkable}{u,v}$ does amortized $\OO(\log n)$
  flips.

  By Lemma~\ref{lem:next-flip}, each
  $\Call{find-next-flip-block}{u,u',v',v}$ call also runs in
  $\OO(\log^2 n)$ time. By Lemma~\ref{lem:multiflip-iterations} the
  main loop in $\Call{multi-flip-linkable}{u,v}$ iterates amortized
  $\OO(\log n)$ times, and we have shown that each iteration takes
  $\OO(\log^2 n)$ time (in addition to the time taken by the
  separation flips). Thus the total amortized time for
  $\Call{multi-flip-linkable}{u,v}$ is $\OO(\log^3 n)$.

  Using that, the remaining edge insertion and deletion is
  trivial. And the queries to the embedding are handled directly by
  the underlying data structure from~\cite{DBLP:journals/mst/HolmR17}.
\end{proof}

\section{Allowing non-planar insertions}\label{sec:reduction}

In \cite[p.12, proof of Corollary~1]{Eppstein:1996}, Eppstein et al. give a reduction from any data structure that maintains a planar graph subject to deletions and planarity-preserving insertions and answers queries to the planarity-compatibility of edges, to a data structure that allows the graph to be non-planar and furthermore maintains whether %
the graph is presently planar, at the same amortized time. 
The reduction uses the following simple and elegant argument: If some component of the graph is non-planar, keep a pile of not-yet inserted edges, and upon a deletion, add edges from the pile until either the pile is empty or a new planarity-blocking certificate edge is found. 

To maintain not only whether the graph is planar but maintain for each component whether it is planar, it becomes necessary to keep a pile of not-yet inserted edges for each nonplanar connected component. To maintain the connected components, we run an auxiliary fully-dynamic connectivity structure for the entire graph~\cite{Wulff-Nilsen16}, maintaining a spanning forest and the non-tree edges. We may mark the edges indicating whether they are inserted or not-yet inserted in the planar subgraph, and we may mark vertices indicating whether they are incident to not-yet inserted edges. When an edge deletion causes a non-planar component to break into two, a spanning tree breaks into two, say, $T_u$ and $T_v$. For each $i\in\set{u,v}$, we may efficiently find not-yet inserted edges incident to $T_i$, and insert them into the planar subgraph of the component spanned by $T_i$. We may continue until either all edges for that component are handled or until we find the first planarity violating edge. The method for efficiently finding not-yet inserted edges follows the exact same outline as the method for finding candidate replacement edges in the connectivity structure~\cite{Wulff-Nilsen16}. 
Thus, the time spent on each edge becomes $\OO(\log n / \log\log n)$ worst-case for finding it, plus $\OO(\log^3 n)$ amortized time for inserting it.
We have thus shown

\mainalggeneral*

\section{Defining critical-cost and solid-cost}\label{sec:define-costs}

The goal of this section is to properly define the two function
families $\critcost_\tau$ and $\solidcost_\tau$ mentioned earlier so
we can prove the claimed properties.

The general idea is for each function to define a set of
\emph{struts}, which are edges that can be inserted in $G$ without
violating planarity, and then measure the total number of flips needed
to accommodate all of them.

\begin{align*}
  \critcost_\tau(H; u,v) &= \sum_{(x,y)\in \critstruts(G; u,v)}\flipdist_\tau(H, \Emb(G; x,y))
  \\
  \solidcost_\tau(H; u,v) &= \sum_{(x,y)\in \solidstruts(G; u,v)}\flipdist_\tau(H, \Emb(G; x,y))
\end{align*}

\noindent{}We want our struts to have the following properties for any planar
graph $G$ with vertices $u,v$:
\begin{enumerate}[label={S\arabic*)}, ref={S\arabic*}]\sloppy
\item\label{it:struts-planar} $G\cup\solidstruts(G; u,v)$ is simple
  and planar.

\item\label{it:struts-independent} For any $H,H'\in\Emb(G)$ with
  $\flipdist_{\text{clean}}(H,H')=1$, there is at most one strut
  $(x,y)\in\solidstruts(G; u,v)$ such that $\flipdist_{\text{clean}}(H,\Emb(G;
  x,y))\neq\flipdist_{\text{clean}}(H',\Emb(G; x,y))$.

\item\label{it:struts-logn} If $H\in\Emb(G)$ admits $\solidstruts(G;
  u,v)$ then for any $u',v'$ there exists $H'\in\Emb(G)$ that admits
  $\solidstruts(G; u',v')$ such that
  $\flipdist_{\text{clean}}(H,H')\in\OO(\log n)$.

\item\label{it:struts-insert} If $G\cup(u,v)$ is simple and planar, then $\solidstruts(G;u,v)=\solidstruts(G\cup(u,v); u,v)\cup\set{(u,v)}$.

\item\label{it:struts-subset} $\critstruts(G;
  u,v)\subseteq\solidstruts(G; u,v)$.

\item\label{it:struts-existing} For $(u,v)\in G$, $\critstruts(G;
  u,v)=\emptyset$.

\item\label{it:struts-admissible} For $(u,v)\not\in G$,
  $\critstruts(G; u,v)=\set{(u,v)}$ if and only if $G\cup(u,v)$ is
  planar.

\item\label{it:struts-nonadmissible} If $G\cup(u,v)$ is nonplanar, and
  $v=a_0,B_1,a_1,\ldots,B_k,a_k=v$ are the endpoints, biconnected
  components/bridges, and articulation points on $u\cdots v$ in $G$, then
  \begin{enumerate}
  \item\label{it:struts-nonadmissible-biconn} For each $B_i$ such that
    $B_i\cup(a_{i-1},a_i)$ is nonplanar, $\critstruts(G; u,v)$
    contains a set $S_i$ of struts with both endpoints in $B_i$ such
    that $a_{i-1}$ and $a_i$ are triconnected in $B_i\cup S_i$ and
    $B_i\cup S_i$ is planar.
  \item\label{it:struts-nonadmissible-general} For each maximal
    subsequence $a_{\ell-1},B_\ell,a_\ell,\ldots,B_h,a_h$ such that
    $G\cup(a_{\ell-1},a_h)$ is planar, and such that at least one
    $B_i$ with $\ell\leq i\leq h$ is not a bridge, $\critstruts(G;
    u,v)$ contains a strut $(a_{\ell-1},a_h)$.
  \end{enumerate}

\end{enumerate}

\noindent{}Assuming our struts have these properties, we can now
prove our Lemmas about the costs.
\begin{proof}[Proof of Lemma~\ref{lem:costs-nonneg}]
  The inequality $\solidcost_\tau(G;u,v)\geq\critcost(G;u,v)\geq 0$
  follows from property~\ref{it:struts-subset} and the definition of
  $\solidcost_\tau$ and $\critcost_\tau$.
  The inequalities
  $\solidcost_{\text{clean}}\geq\solidcost_{\text{sep}}\geq\solidcost_{\text{P}}$
  and
  $\critcost_{\text{clean}}\geq\critcost_{\text{sep}}\geq\critcost_{\text{P}}$
  also follow, because in general
  $\flipdist_{\text{clean}}\geq\flipdist_{\text{sep}}\geq\flipdist_{\text{P}}$.
  Finally, if $G\cup(u,v)$ is planar, then by
  property~\ref{it:struts-admissible},
  $\critstruts(G;u,v)=\set{(u,v)}$ so
  $\critcost_{\text{clean}}(H;u,v)=\flipdist_{\text{clean}}(H;
  \Emb(G;u,v))$, which is $0$ if and only if $H\in\Emb(G;u,v)$.
\end{proof}

\begin{proof}[Proof of Lemma~\ref{lem:costs-delta}]
  Follows directly from properties~\ref{it:struts-independent}
  and~\ref{it:struts-subset}, and the definition of critical flip.
\end{proof}

\begin{proof}[Proof of Lemma~\ref{lem:costs-exist-decreasing-flip}]
  From the definition of $\critcost_\tau$ and $\solidcost_\tau$ as a
  sum over $\flipdist_\tau$, it is clear that if the cost is nonzero
  there exists a flip of type $\tau$ that decreases at least one of
  the terms. But by property~\ref{it:struts-independent}, this is then
  the only term that changes so this flip also decreases the sum.
\end{proof}

\begin{proof}[Proof of Lemma~\ref{lem:embgood-logn}]
  \sloppy By definition of $H\in\EmbGood(G)$, there exists vertices
  $u_{\min},v_{\min}$ such that
  $\solidcost(H;u_{\min},v_{\min})=0$. By definition of
  $\solidcost_{\text{clean}}$ that means $H$ admits
  $\solidstruts_{\text{clean}}(G;u_{\min},v_{\min})$. Then by
  Property~\ref{it:struts-logn} there exists $H'\in\Emb(G)$ that admits
  $\solidstruts(G;u,v)$ and has
  $\flipdist_{\text{clean}}(H,H')\in\OO(\log n)$.  Since $H'$ admits
  $\solidstruts(G;u,v)$, $H'\in\EmbGood(G;u,v)$, and thus
  $\flipdist_{\text{clean}}(H,\EmbGood(G;u,v)) \leq
  \flipdist_{\text{clean}}(H,H')\in\OO(\log n)$.
\end{proof}

\begin{proof}[Proof of Lemma~\ref{lem:emb-insert}]
  \sloppy By Property~\ref{it:struts-insert}, the only difference
  between $\solidstruts(G;u,v)$ and $\solidstruts(G\cup(u,v);u,v)$ is
  that the first includes the strut $(u,v)$. Since $H\in\Emb(G;u,v)$,
  $\flipdist_{\text{clean}}(H,\Emb(G;u,v))=0$ so removing this term
  does not change the sum.  By \todo{Not so
    fast...!}Property~\ref{it:struts-independent}, the possible flips
  for every other strut $(u',v')$ is unaffected, meaning that
  $\flipdist(H,\Emb(G;u',v'))=\flipdist(H\cup(u,v),
  \Emb(G\cup(u,v);u',v'))$.  In other words,
  $\solidcost(H;u,v)=\solidcost(H\cup(u,v);u,v)$.
\end{proof}

\begin{proof}[Proof of Lemma~\ref{lem:fliptypes-unchanged}]
  Follows directly from the definition of
  $\critcost_\tau$ and $\solidcost_\tau$ as a sum
  $\sum_{(x,y)}\flipdist_\tau(H;x,y)$, and the definition of
  $\flipdist_\tau$. By definition an SR flip can not change any
  $\flipdist_{\text{P}}$, and an articulation flip can not change any
  $\flipdist_{\text{P}}$ or $\flipdist_{\text{sep}}$.
\end{proof}

\subsection{Biconnected planar graphs}\label{sec:struts-biconn}

Let SPQR$(B; u,v)$ denote the solid paths in the pre-split SPQR tree for the biconnected component $B$ with critical vertices $u,v$.

Let $\beta$ be a solid path in SPQR$(B; u,v)$.  The \emph{relevant
  part} of $\beta$ is the maximal subpath that does not end in a P
node.
\paragraph{Single solid SPQR path, simple case}
If $\beta$ consists only of a P node, or the relevant part of
$\beta$ is only a single node, define the struts relevant for $\beta$ as:
\begin{align*}
  \struts(\beta) &=
  \begin{cases}
    \set{(u,v)} &\text{if $\beta$ is the critical path, $(u,v)\not\in B$, and $G\cup(u,v)$ is planar}
    \\
    \emptyset &\text{otherwise}
  \end{cases}
\end{align*}
\paragraph{Single solid SPQR path, general case}Otherwise let $X_1,\ldots,X_d$ be the relevant part of $\beta$.
For $1\leq j<d$ let $\set{s_j,t_j}=X_j\cap X_{j+1}$ be the separation
pair that separates $X_j$ from $X_{j+1}$.
For $1<j<d$ we call the node $X_j$ \emph{cross} if $X_j$ is an R node,
and the virtual edges $(s_{j-1},t_{j-1})$ and $(s_j,t_j)$ do not share
a face in $\Gamma(X_j)$. If $\beta$ is the critical path, we can
assume without loss of generality that $u\in X_1$ and $v\in X_d$, and
we say that $X_1$ (resp. $X_d$) is cross if it is an R node and $u$
(resp. $v$) does not share a face with $(s_1,t_1)$
(resp. $(s_{d-1},t_{d-1})$).

Let $\gamma=X_\ell,\ldots,X_h$ be a maximal subpath of
$X_1,\ldots,X_d$ such that $X_j$ is not a cross node for $\ell<j<h$.
Let $u_\gamma$ be the smallest-labelled vertex in $\Gamma(X_\ell)-\set{s_\ell,t_\ell}$
that shares a face with $(s_\ell,t_\ell)$
(counting $u$ as having label $-\infty$).
Similarly let $v_\gamma$ be the smallest-labelled vertex in
$\Gamma(X_h)-\set{s_{h-1},t_{h-1}}$ that shares a face with
$(s_{h-1},t_{h-1})$ (counting $v$ as having label $-\infty$).

We can now define the struts relevant for $\beta$ as
\begin{align*}
  \struts(\beta) &= \set{(u_\gamma,v_\gamma)\suchthat
    \text{$\gamma$ is such a maximal subpath}
  }
\end{align*}

\paragraph{Combining the struts for a SPQR tree}
\begin{align*}
  \critstruts(B; u,v) &= \bigcup_{\substack{\beta\in\text{SPQR}(B;u,v)\\\beta\text{ is critical}}}
  \struts(\beta)
  \\
  \offstruts(B; u,v) &= \bigcup_{\substack{\beta\in\text{SPQR}(B;u,v)\\\beta\text{ is not critical}}}
  \struts(\beta)
  \\
  \solidstruts(B; u,v) &= \critstruts(B; u,v) \cup \offstruts(B; u,v)
\end{align*}

\subsection{General planar graphs}\label{sec:struts-general}

Let BC$(G; u,v)$ denote the set of solid paths in the forest of
pre-split BC trees for a planar graph $G$ with critical vertices
$u,v$.

Let $\alpha$ be a solid path in BC$(G; u,v)$. The \emph{relevant part} of $\alpha$ is the maximal subpath that does not end in a C node.

\paragraph{Single solid BC path, simple case}If $\alpha$ consists only of a C node, define
\begin{align*}
 \critstruts(\alpha)=\emptyset
 \\
 \offstruts(\alpha)=\emptyset
 \\
 \solidstruts(\alpha)=\emptyset
\end{align*}
\paragraph{Single solid BC path, general case}In~\cite{HR20} we defined a set of struts for each such path and used it to choose a critical path in the SPQR tree for each biconnected component.  Since those struts might make the graph non-planar, we need a new set;
we want to substitute every non-planar strut with a family of ``maximal'' planar struts in the following sense.
Let $B_1,\ldots,B_k$ be the B nodes on
$\alpha$, for $1<i<k$ let $a_i=B_i\cap B_{i+1}$, and let $(a_0,a_k)$
be the strut defined for $\alpha$ in~\cite{HR20}. Note that if
$\alpha$ is the critical path in BC$(G; u,v)$ then we may assume
without loss of generality that $a_0=u$ and $a_k=v$.

\noindent{}Now define
\begin{align*}
  \critstruts(\alpha) &=
  \set*{
    \critstruts(B_i; a_{i-1},a_i)\suchthat
    \text{
      $G\cup(a_{i-1},a_i)$ is nonplanar
    }
  }
  \\
  &\quad\cup
  \set*{
    (a_{\ell-1},a_h)\suchthat
    \parbox[c]{8cm}{
      $(a_{\ell-1},a_h)\not\in G$ and\\
      $B_\ell,\ldots,B_h$ is a maximal subpath of $\alpha$ such that
      $G\cup(a_{\ell-1},a_h)$ is planar
    }
  }
  \\
  \offstruts(\alpha) &= \bigcup_{i=1}^k\offstruts(B_i; a_{i-1},a_i)
  \\
  \solidstruts(\alpha) &= \critstruts(\alpha) \cup \offstruts(\alpha)
\end{align*}
\paragraph{Combining the struts from a forest of BC trees}
Now define:
\begin{align*}
  \critstruts(G; u,v) &= \bigcup_{\substack{\alpha\in\text{BC}(G; u,v)\\\alpha\text{ is critical}}}\critstruts(\alpha)
  \\
  \offstruts(G; u,v) &=
  \left(
  \bigcup_{
    \substack{\alpha\in\text{BC}(G; u,v)\\\alpha\text{ is critical}}
  }\offstruts(\alpha)
  \right)
  \cup
  \left(
  \bigcup_{
    \substack{\alpha\in\text{BC}(G; u,v)\\\alpha\text{ is not critical}}
  }\solidstruts(\alpha)
  \right)
  \\
  \solidstruts(G; u,v) &= \critstruts(G; u,v) \cup \offstruts(G; u,v)
\end{align*}

\subsection{Proving the required properties}\label{sec:struts-properties}

\begin{lemma}
  The definition of $\critstruts(G;u,v)$ and $\solidstruts(G;u,v)$ in
  Section~\ref{sec:struts-biconn} and~\ref{sec:struts-general} have property~\ref{it:struts-planar}--\ref{it:struts-nonadmissible}.
\end{lemma}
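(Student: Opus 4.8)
The plan is to verify the eight properties one at a time, recycling the same structural picture of the pre-split BC and SPQR trees each time. Three of them are essentially bookkeeping: \ref{it:struts-subset} is immediate since $\solidstruts$ is by construction the union of $\critstruts$ and $\offstruts$; \ref{it:struts-existing} follows because when $(u,v)\in G$ the two vertices lie in a single block, the critical BC path is trivial, and every clause in the definition of $\critstruts(\alpha)$ and of $\struts(\beta)$ is guarded by a condition ($(u,v)\notin B$, or $G\cup(\text{strut})$ nonplanar, or the strut not already present) that fails; and \ref{it:struts-insert} follows by tracking how adding a planar edge $(u,v)$ alters the SPQR/BC trees: the change is confined to the critical path, which collapses, so every non-critical solid path — and hence every strut other than $(u,v)$ itself — is untouched.

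Next I would handle the two properties that pin down $\critstruts$, namely \ref{it:struts-admissible} and \ref{it:struts-nonadmissible}, by descending through the BC decomposition of the critical path $\alpha = a_0,B_1,a_1,\ldots,B_k,a_k$. If $G\cup(u,v)$ is planar then each block merges planarly with its neighbours along $\alpha$, so the only maximal planar subpath is all of $\alpha$ and no block is non-planar; the definition of $\critstruts(\alpha)$ then yields exactly $\set{(u,v)}$, which is \ref{it:struts-admissible}. If $G\cup(u,v)$ is non-planar, the definition of $\critstruts(\alpha)$ literally is the union of $\critstruts(B_i; a_{i-1},a_i)$ over the non-planar blocks together with one strut $(a_{\ell-1},a_h)$ per maximal subpath with $G\cup(a_{\ell-1},a_h)$ planar, which is \ref{it:struts-nonadmissible-general}; for \ref{it:struts-nonadmissible-biconn} I would show inside a non-planar block $B_i$ that the struts of its critical solid SPQR paths make $a_{i-1},a_i$ triconnected while keeping $B_i$ planar. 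The key sub-fact here is that for a maximal subpath $\gamma = X_\ell,\ldots,X_h$ with no interior cross node, adding $(u_\gamma,v_\gamma)$ routed through the common face of $(s_\ell,t_\ell)$ and $(s_{h-1},t_{h-1})$ contracts all S and P skeletons on $\gamma$ without destroying planarity, and doing this for every such $\gamma$ leaves exactly the cross R nodes, which is precisely what triconnects $a_{i-1}$ and $a_i$.

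With \ref{it:struts-nonadmissible} in hand, \ref{it:struts-planar} reduces to showing all of these struts can be added \emph{simultaneously}. Within one block $B$ I would argue this solid-path by solid-path: the struts of a single solid SPQR path are supported in pairwise disjoint skeletons along that path, so they are realised in one common embedding of $B$; struts belonging to different solid paths attach to disjoint portions of the SPQR tree, so their unions stay planar. Across blocks, the BC-level struts $(a_{\ell-1},a_h)$ join consecutive blocks through articulation points, and because they are taken over \emph{maximal} planar subpaths they do not overlap in a way that could create a forbidden minor — this non-overlap argument, showing the maximal subpaths tile the BC path so the struts are drawn in disjoint regions, is the step I expect to need the most care, and is also where simplicity (no parallel strut already present, guaranteed by the ``$\notin G$'' clauses) gets used.

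Finally I would establish the two robustness properties \ref{it:struts-independent} and \ref{it:struts-logn}, which is where the heavy-path machinery of \cite{HR20} does the work. For \ref{it:struts-independent}, a single clean flip takes place at one articulation point or separation pair, and a strut $(x,y)$ changes its distance to $\Emb(G; x,y)$ only if the flip is critical for $x,y$; by construction each such flip location is ``owned'' by at most one solid path and hence by at most one strut, so at most one term of the cost sum moves. For \ref{it:struts-logn}, changing the critical path from the $u,v$-version to the $u',v'$-version flips the critical/non-critical status of only $\OO(\log n)$ solid paths — one per light edge crossed in the balanced heavy-path decomposition — and repairing the struts of each such path costs $\OO(1)$ clean flips, exactly as in the good-embedding analysis of \cite{HR20}; summing gives the $\OO(\log n)$ bound.
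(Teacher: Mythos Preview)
Your proposal is correct and follows essentially the same route as the paper's proof: both verify the properties one at a time, both argue \ref{it:struts-independent} via ``each flip location is owned by at most one strut,'' both derive \ref{it:struts-logn} from the $\OO(\log n)$ heavy-path changes of \cite{HR20}, and both treat \ref{it:struts-subset}, \ref{it:struts-existing}, \ref{it:struts-admissible}, \ref{it:struts-nonadmissible} as direct consequences of the definitions. If anything, you give more detail than the paper does --- the paper's argument for \ref{it:struts-planar} is the single sentence ``inserting one strut can not prevent the insertion of another,'' whereas you explicitly sketch the disjoint-skeleton / maximal-subpath tiling argument; similarly, the paper dispatches \ref{it:struts-nonadmissible} as ``follows by simple inspection'' while you spell out why the $\gamma$-struts triconnect $a_{i-1},a_i$.
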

\begin{proof}\sloppy
  Every clean separation flip in $\Emb(G)$ corresponds to an edge or a
  P node in a SPQR tree for a biconnected component of $G$. The way
  the struts are chosen for biconnected graphs, means that no two
  struts ``cover'' the same edge or P node. Similarly, each possible
  articulation flip in $\Emb(G)$ correspond to a $C$ node, and no two
  struts cover the same C node. Thus any flip in $H\in\Emb(G)$ can
  affect $\flipdist_\tau(H,\Emb(G;x,y))$ for at most $1$ strut $(x,y)$
  proving Property~\ref{it:struts-independent}.

  Since each strut $(x,y)$ is chosen so $G\cup(x,y)$ is simple and
  planar, and inserting one strut can not prevent the insertion of
  another, $G\cup\solidstruts(G;u,v)$ is planar and
  Property~\ref{it:struts-planar} holds.

  In~\cite{HR20}, we showed that it takes only $\OO(\log n)$ simple
  operations (merging or splitting S,P and C nodes, and changing edges
  between solid and dashed) to get from the pre-split BC/SPQR trees
  for $G$ with respect to $u,v$ to the trees with respect to
  $u',v'$. Each of these operations affect only at most $2$ solid
  paths, and by our definition, at most one strut on each of these
  paths. The total change in $\solidcost_\tau$ for each of these
  $\OO(\log n)$ operations is at most a constant, so
  $\abs{\solidcost_\tau(H;u,v)-\solidcost_\tau(H;u',v')}\in\OO(\log
  n)$\todo{Maybe a separate Lemma?}. If $H$ admits
  $\solidstruts(G;u,v)$ then $\solidcost_{\text{clean}}(H;u,v)=0$, so
  $\solidcost_{\text{clean}}(H;u',v')\in\OO(\log n)$. For each strut
  $(x,y)\in\solidstruts(G;u',v')$ with
  $\flipdist_{\text{clean}}(H,\Emb(x,y))>0$ there exists some flip
  that will reduce this distance. After $\OO(\log n)$ such flips we
  arrive at a new embedding $H'\in\Emb(G)$ with
  $\solidcost_{\text{clean}}(H';u',v')=0$, which means $H'$ admits
  $\solidstruts(G;u',v')$. By construction
  $\flipdist_{\text{clean}}(H,H')\in\OO(\log n)$, so this proves
  Property~\ref{it:struts-logn}.

  Property~\ref{it:struts-subset} follows trivially from the
  definition of $\solidstruts(G;u,v)$ as the (disjoint) union of
  $\critstruts(G;u,v)$ and $\offstruts(G;u,v)$.

  Property~\ref{it:struts-existing} follows by noting that in the
  definition, every edge added as a strut is conditioned on the edge
  not being in the graph.

  Property~\ref{it:struts-admissible} can be seen by considering the
  definition of $\critstruts(\alpha)$. If $(u,v)\not\in G$ and
  $G\cup(u,v)$ is planar, then for the critical path
  $\alpha\in\text{BC}(G;u,v)$, the maximal subpath found in the
  definition is exactly the one that starts in $u$ and ends in $v$. So
  in this case, $\critstruts=\set{(u,v)}$.  If $G\cup(u,v)$ is not
  planar $(u,v)\not\in\solidstruts(G;u,v)$ by
  Property~\ref{it:struts-planar}, and hence
  $(u,v)\not\in\critstruts(G;u,v)$ by Property~\ref{it:struts-subset},
  and in particular $\critstruts(G;u,v)\neq\set{(u,v)}$.

  If $G\cup(u,v)$ is simple and planar, our definition guarantees that
  $(u,v)\in\solidstruts(G;u,v)$. By definition of the pre-split
  BC/SPQR trees, the only change to the solid paths when inserting
  $(u,v)$ is the contraction of the critical path in the BC tree, and
  in all the SPQR trees on the critical path.  All other solid paths
  in the pre-split BC/SPQR tree remain unchanged.  In particular
  $\offstruts(G;u,v)=\offstruts(G\cup(u,v);u,v)$, and together with
  Property~\ref{it:struts-existing} and~\ref{it:struts-admissible} and
  the fact that $\solidstruts(G;u,v)$ is the disjoint union of
  $\critstruts(G;u,v)$ and $\offstruts(G;u,v)$ this proves
  Property~\ref{it:struts-insert}.

  Property~\ref{it:struts-nonadmissible} follows by simple
  inspection. Case~\ref{it:struts-nonadmissible-biconn} follows from
  using maximal subpaths in the definition in
  Section~\ref{sec:struts-biconn}. Similarly,
  Case~\ref{it:struts-nonadmissible-general} follows from using
  maximal subpaths in the definition in Section~\ref{sec:struts-general}.
\end{proof}

\begin{proof}[Proof of Lemma~\ref{lem:dirty-ok}]
  If the clean separation flip at $s,t$ changes $\solidcost_\tau$ or
  $\critcost_\tau$, it must be internal to a unique solid path $\beta$
  in some SPQR tree. Since it is internal, even if $\beta$ is the
  critical path for some $x,y$, then
  $\set{x,y}\cap\set{s,t}=\emptyset$. Thus, $s,t$ are not internal to
  any solid path in the BC tree, and do not have any contribution to
  $\solidcost_\tau$ or $\critcost_\tau$ that can change.

  If the clean separation flip does not change $\solidcost_\tau$ or
  $\critcost_\tau$, it may be incident to the end of a critical path
  $m(x,y)$.  However, at most $2$ of the at most $4$ articulation
  flips contain neighboring blocks on the solid path in the BC tree.
  If $2$ of them do, then the flip does not change whether or not they
  share a face, and so $\solidcost_\tau$ and $\critcost_\tau$ are
  unchanged.  Otherwise at most $1$ of the at most $4$ articulation
  flips change $\solidcost_\tau$ or $\critcost_\tau$, as desired.
\end{proof}

\section{Conclusion}

We have given an amortized $\OO(\log ^3 n)$ time algorithm for updating whether a graph is still planar after the insertion or deletion of an edge. This is close but not equal to the theoretical lower bound of $\Omega(\log n)$~\cite{DBLP:conf/stoc/PatrascuD04}. An interesting open question is whether this time bound can be improved, or whether an algorithm with worst-case polylogarithmic update time exists. 

\paragraph{Acknowledgements.} The authors would like to thank Mikkel Thorup, Kristian de Lichtenberg, and Christian Wulff-Nilsen for their interest and encouragement.

\end{document}